%
%
%
%
%
%
%
%
%
%
%
%


\documentclass{fundam}


\setcounter{page}{1}
\publyear{22}
\papernumber{2102}
\volume{185}
\issue{1}



\usepackage{amsmath,amssymb}

\usepackage{url} 
\usepackage{graphicx}
\usepackage{tikz}
\usepackage{tikzsymbols}
\usetikzlibrary{automata, positioning, arrows}
\usetikzlibrary{patterns}
\usetikzlibrary{shapes.geometric} 

\usepackage[hidelinks]{hyperref}

\usepackage[ruled,linesnumbered]{algorithm2e}
\DontPrintSemicolon

\newenvironment{restate}[2]{\vspace{5pt}\noindent\textbf{#1~#2.}~}{\vspace{5pt}}



\newcommand{\tup}[1]{\mathbf{#1}}

\DeclareMathOperator{\arena}{\mathcal{A}}
\DeclareMathOperator{\dArena}{\arena_\di}
\DeclareMathOperator{\weight}{w}
\DeclareMathOperator{\edgeCost}{\delta_c}
\DeclareMathOperator{\edgePenal}{\delta_p}
\DeclareMathOperator{\dWeight}{\tup{w}}
\DeclareMathOperator{\dCost}{\tup{Cost}}
\DeclareMathOperator{\cost}{Cost}
\DeclareMathOperator{\wCost}{\overline{Cost}}
\newcommand{\di}{d}
\DeclareMathOperator{\targetSet}{F}

\DeclareMathOperator{\leqC}{\leq_{C}}
\DeclareMathOperator{\leqL}{\leq_{L}}
\DeclareMathOperator{\strictLessL}{<_{L}}

\DeclareMathOperator{\playerOne}{\mathcal{P}_1}
\DeclareMathOperator{\playerTwo}{\mathcal{P}_2}
\DeclareMathOperator{\playerI}{\mathcal{P}_i}

\DeclareMathOperator{\hist}{Hist}
\DeclareMathOperator{\play}{Plays}
\DeclareMathOperator{\last}{Last}
\DeclareMathOperator{\successor}{Succ}
\newcommand{\stratSet}[2]{\Sigma_{#1}^{#2}}

\DeclareMathOperator{\game}{\mathcal{G}}
\DeclareMathOperator{\dGame}{\game_\di}
\DeclareMathOperator{\initDGame}{(\dGame,v_0)}

\newcommand{\outcome}[3]{\langle #1, #2 \rangle_{#3}}
\newcommand{\multiOutcome}[2]{\langle #1 \rangle_{#2}}
\newcommand{\multiOutcomeHistory}[2]{\langle #1 \rangle^{\mathrm{H}}_{#2}}

\DeclareMathOperator{\ensureInt}{Ensure}
\newcommand{\ensure}[1]{\ensureInt_{\lesssim}(#1)}
\newcommand{\ensureC}[1]{\ensureInt_{\leqC}(#1)}
\newcommand{\ensureL}[1]{\ensureInt_{\leqL}(#1)}
\DeclareMathOperator{\minimal}{minimal}

\DeclareMathOperator{\paretoInt}{Pareto}
\newcommand{\pareto}[1]{\paretoInt(#1)}
\DeclareMathOperator{\valueInt}{Val}
\newcommand{\upValue}[1]{\overline{\valueInt}(#1)}
\newcommand{\iterEnsure}[2]{\ensureInt^{#1}(#2)}
\newcommand{\iterUpValue}[2]{\overline{\valueInt}^{#1}(#2)}

\newcommand{\lengthF}[1]{{|#1|}_{\targetSet}}

\DeclareMathOperator{\iterInt}{I}
\newcommand{\iter}[2]{\iterInt^{#1}(#2)}
\DeclareMathOperator{\firstOccInt}{Ind}
\newcommand{\firstOcc}[2]{\firstOccInt^{#1}_{#2}}

\newcommand{\tree}{\mathcal{T}}
\newcommand{\stratTree}[1]{\tree_{#1}}
\DeclareMathOperator{\depth}{depth}
\DeclareMathOperator{\height}{height}

\DeclareMathOperator{\cover}{\mathcal{C}}

\DeclareMathOperator{\N}{\mathbb{N}}
\DeclareMathOperator{\NInf}{\overline{\mathbb{N}}}

\newcommand{\IE}{\emph{i.e., }}
\DeclareMathOperator{\NP}{\textsc{NP}}
\DeclareMathOperator{\PSPACE}{\textsc{PSpace}}
\DeclareMathOperator{\Poly}{\textsc{PTime}}
\DeclareMathOperator{\APTime}{\textsc{APTime}}

\DeclareMathOperator{\varLogSet}{D}
\newcommand{\varLog}{t}
\DeclareMathOperator{\bigO}{\mathcal{O}}
\DeclareMathOperator{\maxAC}{|I_{max}|}
\DeclareMathOperator{\maxW}{W}
\newcommand{\evalFormula}[1]{[\![#1]\!]}

\DeclareMathOperator{\generator}{G}
\DeclareMathOperator{\generatorFormula}{gen}

\newcommand{\valExists}[1]{f^{\exists}_{#1}}
\newcommand{\valForall}[1]{f^{\forall}_{#1}}

\newcommand{\CE}{CE }

\DeclareMathOperator{\multiStrat}{\Theta}
\DeclareMathOperator{\multiActions}{A}
\DeclareMathOperator{\penal}{Penalty}

\DeclareMathOperator{\msEnsure}{MEnsure}
\DeclareMathOperator{\msPareto}{MPareto}
\DeclareMathOperator{\cVal}{cVal}
\DeclareMathOperator{\pVal}{pVal}
\newcommand{\msSet}[2]{\mathrm{M}_{#1}^{#2}}
\DeclareMathOperator{\extendedGame}{\mathcal{X}}
\DeclareMathOperator{\CP}{CP}
\DeclareMathOperator{\PC}{PC}

\begin{document}

\title{Multi-weighted Reachability Games and Their Application to Permissiveness\thanks{This paper is an extended version of~\cite{BrihayeG23}.}}


\author{Thomas Brihaye\thanks{Thomas Brihaye -- This work has been supported by the Fonds de
la Recherche Scientifique -- FNRS under Grant n° T.0027.21 (PDR RatBeCoSi)}\\
UMONS - Université de Mons \\
Mons, Belgium\\
thomas.brihaye{@}umons.ac.be
\and Aline Goeminne \thanks{Aline Goeminne -- Postdoctoral Researcher of the Fonds de la Recherche Scientifique -- FNRS.}\\
F.R.S.-FNRS \& UMONS - Université de Mons,\\ Mons, Belgium\\ aline.goeminne{@}umons.ac.be} 

\maketitle

\runninghead{T. Brihaye, A. Goeminne}{Multi-weighted Reachability Games and Their Application to Permissiveness}

\begin{abstract}
  We study two-player multi-weighted reachability games played on a finite directed graph, where an agent, called $\playerOne$, has several quantitative reachability objectives that he wants to optimize against an antagonistic environment, called $\playerTwo$.  In this setting, we ask what cost profiles $\playerOne$ can ensure regardless of the opponent's behavior. Cost profiles are compared thanks to: \emph{(i)} a lexicographic order that ensures the unicity of an upper value and \emph{(ii)} a componentwise order for which we consider the Pareto frontier. We synthesize \emph{(i)} lexico-optimal strategies and  \emph{(ii)} Pareto-optimal strategies. The strategies are obtained thanks to a fixpoint algorithm which also computes the upper value in polynomial time and the Pareto frontier in exponential time. The constrained existence problem is proved in $\Poly$ for the lexicographic order and $\PSPACE$-complete for the componentwise order. Finally, we show  how complexity results about permissiveness of multi-strategies in two-player quantitative reachability games can be derived from the results we obtained in the two-player multi-weighted reachability games setting.
\end{abstract}

\begin{keywords}
two-player games on graphs, multi-weighted reachability games, Pareto-optimal strategies, lexico-optimal strategies, permissiveness of multi-strategies
\end{keywords}


\section{Introduction}

\paragraph*{Two-player Games Played on Graphs.}\emph{Two-player zero-sum games played on graphs} are commonly used in the endeavor to \emph{synthesize} systems that are \emph{correct by construction}. In the two-player zero-sum setting the \emph{system} wants to achieve a given objective whatever the behavior of the \emph{environment}. This situation is modeled by a two-player game in which $\playerOne$ (resp. $\playerTwo$) represents the system (resp. the environment). Each vertex of the graph is owned by one player and they take turn by moving a token from vertex to vertex by following the graph edges. This behavior leads to an infinite sequence of vertices called a \emph{play}. The choice of a player's next move is dictated by his \emph{strategy}. 
In a quantitative setting, edges are equipped with a \emph{weight function} and a \emph{cost function} assigns a  cost to each play. This cost depends on the weights of the edges along the play. With this quantitative perspective, $\playerOne$ wants to \emph{minimize} the cost function. We say that $\playerOne$ can ensure a cost of $x$ if there exists a strategy of $\playerOne$ such that, whatever the strategy followed by $\playerTwo$, the corresponding cost is less than or equal to $x$. An interesting question is thus to determine what are the costs that can be ensured by $\playerOne$. In this document, 
 these costs are called the \emph{ensured values}. Other frequently studied questions are:  Given a threshold $x$, does there exist a strategy of $\playerOne$ that ensures a cost less than or equal to $x$? Is it possible to synthesize such a strategy, or even better, if it exists, a strategy that ensures the best ensured value, \IE an \emph{optimal strategy}?

A well-known studied quantitative objective is the one of \emph{quantitative reachability objective}. A player who wants to achieve such an objective has a subset of vertices, called \emph{target set}, that he wants to reach as quickly as possible. In terms of edge weights, that means that he wants to minimize the cumulative weights until a vertex of the target set is reached. In this setting it is proved that the best ensured value is computed in polynomial time and that optimal strategies exist and do not require memory~\cite{LaroussinieMO06}.

\paragraph{Multi-weighted Reachability Games.} Considering systems with only one cost to minimize may seem too restrictive. Indeed, $\playerOne$ may want to optimize different quantities while reaching his objective. Moreover, optimizing these different quantities may lead to antagonistic behaviors, for instance when a vehicle wants to reach his destination while minimizing both the delay and the energy consumption. This is the reason why in this paper, we study two-player multi-weighted reachability games,  where $\playerOne$ aims at reaching a target while minimizing several costs. In this setting each edge of the graph is  labeled by a $\di$-tuple of $\di$ natural numbers, one per quantity to minimize. Given a sequence of vertices in the game graph, the \emph{cost profile} of $\playerOne$ corresponds to the sum of the weights of the edges, component by component, until a given target set is reached. We consider the multi-dimensional counterpart of the previous studied problems: we wonder what cost profiles are ensured by $\playerOne$. 
 Thus $\playerOne$ needs to arbitrate the trade-off induced by the multi-dimensional setting. In order to do so, we consider two alternatives: the cost profiles can be compared either  via \emph{(i)} a lexicographic order that ranks the objectives \emph{a priori} and leads to a unique minimal ensured value; or via \emph{(ii)} a componentwise  order \footnote{Let $\tup{x} = (x_1,\ldots, x_d)$ and $\tup{y} = (y_1,\ldots, y_d)$, we say that $\tup{x}$ is componentwise smaller than $\tup{y}$ if and only if  for all $i \in \{ 1, \ldots, \di \}$, $x_i \leq y_i$}. In this second situation, $\playerOne$  takes his decision \emph{a posteriori} by choosing an element of the Pareto frontier (the set of minimal ensured values, which is not necessarily a singleton).

 \paragraph{Permissiveness of Multi-strategies.} Although multi-weighted reachability games raise questions that are interesting on their own right, they can be used to study \emph{robustness} of the behavior of $\playerOne$. 
We consider the simpler model of quantitative reachability games.
These games can be seen as multi-weighted reachability games whose edges are labeled with only one natural number.
In this setting, let us assume that  an optimal strategy for $\playerOne$ which allows him to reach his target set is synthesized. 
Unfortunately, if the next move dictated by the optimal strategy  is not available (due to a bug, for example), when $\playerOne$ has to play, the system is blocked from running. 
In order to overcome this lack of \emph{robustness} of the proposed solution concept, \emph{multi-strategies} which propose a set of next moves instead of a single move are considered.
In this way, when $\playerOne$ has to play, he has different possible choices.
With this point of view, we aim at synthesizing the \emph{most permissive multi-strategy}, that is, roughly speaking, the strategy that allows as many behaviors as possible for $\playerOne$ while ensuring certain constraints on the possible costs obtained.
In this paper, we extend the notion of permissiveness of multi-strategies, based on \emph{penalties}, already introduced in~\cite{BDMR09} for qualitative reachability games\footnote{In a qualitative reachability game, $\playerOne$ aims at reaching his target set, regardless of the cost involved.} to quantitative reachability games and we explain how to solve related problems thanks to multi-weighted reachability games.

\paragraph*{Paper Organization.} For sake of concision and clarity of the introduction, the contributions and related works related to permissiveness are provided in Section~\ref{section:permissiveMultiStrat}.
This is the reason why we pursue this section with contributions and related works only related to multi-weighted reachability games.
In Section~\ref{section:prelim}, we introduce all definitions and studied problems related to multi-weighted reachability games while in Section~\ref{section:ensuredValues} and Section~\ref{section:constrainedExistence} we show how to solve them.
Finally, in Section~\ref{section:permissiveMultiStrat}, we focus on permissive multi-strategies in quantitative reachability games by defining all the related concepts, the studied problems and by  explaining how to use multi-weighted reachability games to solve them. To make the paper easier to read, we have also chosen to place the most technical proofs in appendices and to provide only proof intuitions when formal proofs have been eluded.

Let us also mention that this paper is an extended version of~\cite{BrihayeG23}. The original paper dealt only with multi-weighted reachability games. Some of the intuitions of the proofs of results provided in Section~\ref{section:ensuredValues} and Section~\ref{section:constrainedExistence} have been added within these sections as well as formal proofs in Appendix~\ref{appendix:ensuredValues} and Appendix~\ref{app:constrainedExistence}. The entire content of Section~\ref{section:permissiveMultiStrat} concerning permissive multi-strategies in quantitative reachability games is new compared to the short version of this paper.

\paragraph*{Contributions w.r.t. Multi-weighted Reachability Games.}
Our contributions are threefold. First, in Section~\ref{section:algo}, given a two-player multi-weighted reachability game, independently of the order considered, we provide a fixpoint algorithm,  which computes the minimal cost profiles that can be ensured by $\playerOne$. In Section~\ref{section:complexity}, we study the time complexity of this algorithm, depending on the order considered. When considering the lexicographic order (resp. componentwise order), the algorithm runs in polynomial time (resp. exponential time). Moreover, if the number of dimensions is fixed, the computation of the Pareto frontier can be done in pseudo-polynomial time (polynomial if the weights of the game graph are encoded in unary). As a second contribution, in Section~\ref{section:correctStrat}, based on the fixpoint algorithm, we synthesize the optimal strategies (one per order considered). In particular, we show that positional strategies suffice when considering the lexicographic order, although memory is needed in the componentwise case. Finally, in Section~\ref{section:constrainedExistence}, we focus on the natural decision problem associated with our model: the constrained existence problem. Given a two-player multi-weighted reachability game and a cost profile $\tup{x}$, the answer to the constrained existence problem is positive when there exists a strategy of $\playerOne$ that ensures $\tup{x}$. In the lexicographic case, we show that the problem belongs to $\Poly$; although it turns to be $\PSPACE$-complete in the componentwise case.

\paragraph*{Related Work w.r.t. Multi-weighted Reachability Games.}
Up to our knowledge, and quite surprisingly, two-player multi-weighted reachability games, as defined in this paper, were not studied before. Nevertheless, a one-player variant known as multi-constrained routing is known to be $\NP$-complete (see~\cite{PuriT02} for example) . Both exact and approximate algorithms are, for example, provided in~\cite{PuriT02}. The time complexity of their exact algorithm matches our results since it runs in exponential time and they indicate that it is pseudo-polynomial if $\di = 2$.
The one-player setting is also studied in timed automata~\cite{LarsenR05}.

If we focus on two-player settings, another closely related model to multi-weighted reachability games is the one studied in~\cite{FijalkowH13}. The authors consider two-player generalized (qualitative) reachability games. In this setting $\playerOne$ wants to reach several target sets in any order but does not take into account the cost of achieving that purpose. They prove that deciding the winner in such a game is $\PSPACE$-complete. Moreover, they discuss the fact that winning strategies need memory. The memory is used in order to remember which target sets have already been reached. 
In our setting, we assume that there is only one target set   but that weights on edges are tuples and thus the costs to reach are aggregated component by component. 
 Memory is needed because we have to take into consideration the partial sum of weights up to now in order to make the proper choices in the future to ensure the required cost profile.  An example in which an exponential memory is needed is provided in~\cite[Section 3.3.1]{BrihayeGMR23}. Notice that if we would like to study the case where each dimension has its own target set, both types of memory would be needed.

If we consider other objectives than reachability, we can mention different works on multi-dimen\-tional \emph{energy} and \emph{mean-payoff} objectives~\cite{JuhlLR13,ChatterjeeDHR10,ChatterjeeRR12}. In~\cite{BrenguierR15}, they prove that the Pareto frontier in a multi-dimensional mean-payoff game is definable as a finite union of convex sets obtained from linear inequations. The authors also provide a $\Sigma_2^P$ algorithm to decide if this set intersects a convex set defined by linear inequations.

Lexicographic preferences are used in stochastic games with lexicographic (qualitative) reachabili\-ty-safety objectives~\cite{ChatterjeeKWW20}.
The authors prove that lexico-optimal strategies exist but require finite-memory in order to know on which dimensions the corresponding objective is satisfied or not. They also provide an algorithm to compute the best ensured value and compute lexico-optimal strategies thanks to different computations of optimal strategies in single-dimensional games. Finally, they show that deciding if the best ensured value is greater than or equal to a tuple $\tup{x}$ is $\PSPACE$-hard and in $\textsc{NExpTime} \cap \textsc{co-NExpTime}$. 


\section{Preliminaries}
\label{section:prelim}

\subsection{Two-Player Multi-weighted Reachability Games}
\label{section:prelimMultiReachGames}
\subsubsection*{Weighted Arena} We consider games that are played on an \emph{ (weighted) arena} by two players: $\playerOne$ and $\playerTwo$. An arena $\dArena$ is a tuple $(V_1,V_2,E, \dWeight)$ where \emph{(i)} $(V=V_1 \cup V_2, E)$ is a graph such that vertices $V_i$ for $i \in \{1,2\}$ are owned by $\playerI$ and $V_1 \cap V_2 = \emptyset$ and \emph{(ii)} $\dWeight: E \longrightarrow \N^\di$ is a weight function which assigns $\di$ natural numbers to each edge of the graph. The variable $\di$ is called the number of \emph{dimensions}. For all $1 \leq i \leq \di$, we denote by $\weight_i$, with $\weight_i : E \longrightarrow \N$, the projection of $\dWeight$ on the ith component, \IE for all $e \in E$, if $\dWeight(e) = (n_1,\ldots,n_\di)$ then, $\weight_i(e) = n_i$. We define $\maxW$ as the largest weight that can appear in the values of the weight function, \IE $\maxW = \max \{ \weight_i(e) \mid 1 \leq i \leq \di \text{ and } e \in E \}$.

Each time we consider a tuple $\tup{x} \in X^\di$ for some set $X$, we write it in bold and we denote the ith component of this tuple by $x_i$. Moreover, we abbreviate the tuples $(0,\ldots,0)$ and $(\infty,\ldots, \infty)$ by $\tup{0}$ and $\tup{\infty}$ respectively.

\subsubsection*{Plays and Histories} A \emph{play} (resp. \emph{history}) in $\dArena$ is an infinite (resp. finite) sequence of vertices consistent with the structure of the associated arena $\dArena$, \IE if $\rho = \rho_0\rho_1\ldots$ is a play then, for all $n \in \N$, $\rho_n \in V$ and $(\rho_n,\rho_{n+1}) \in E$.
A history may be formally defined in the same way. 
The set of plays (resp. histories) are denoted by $\play_{\dArena}$ (resp. $\hist_{\dArena}$). When the underlying arena is clear from the context we only write $\play$ (resp. $\hist$).
We also denote by $\hist_1$ the set of histories which end in a vertex owned by $\playerOne$, \IE $\hist_{1} = \{ h=h_0h_1\ldots h_n \mid h \in \hist \text{ and } h_n \in V_1 \}$.
For a given vertex $v \in V$, the sets $\play(v)$, $\hist(v)$, $\hist_1(v)$ denote the sets of plays or histories starting in $v$. Finally, for a history $h = h_0\ldots h_n$, the vertex $h_n$ is denoted by $\last(h)$ and $|h| = n$ is the length of $h$.

\subsubsection*{Multi-weighted Reachability Games} We consider \emph{multi-weighted reachability games} such that $\playerOne$ has a target set that he wants to reach from a given initial vertex. Moreover, crossing edges on the arena implies the increasing of the $\di$ cumulated costs for $\playerOne$. While in $1$-weighted reachability game $\playerOne$ aims at reaching his target set as soon as possible (minimizing his cost), in the general $\di$-weighted case he wants to find a \emph{trade-off} between the different components.

More formally, $\targetSet \subseteq V$ which is a subset of vertices that $\playerOne$ wants to reach is called the \emph{target set} of $\playerOne$.
The \emph{cost function}  $\dCost: \play \longrightarrow \NInf^\di$  of $\playerOne$ provides, given a play $\rho$, the cost of $\playerOne$ to reach his target set $\targetSet$ along $\rho$.\footnote{Where the following notation is used: $\NInf = \N \cup \{ \infty \}$}
This cost corresponds to the sum of the weight of the edges, component by component, until he reaches $\targetSet$ or is equal to $\infty$ for all components if it is never the case.
For all $1 \leq i \leq \di$, we denote by $\cost_i: \play \longrightarrow \NInf$, the projection of $\dCost$ on the ith component. Formally, for all $\rho= \rho_0\rho_1 \ldots \in \play$:

$$\cost_i(\rho) = \begin{cases} \sum_{n=0}^{\ell-1}w_i(\rho_n,\rho_{n+1}) & \text{ if } \ell \text{ is the least index such that } \rho_\ell \in \targetSet \\ \infty & \text{ otherwise } \end{cases}$$

and $\dCost(\rho) = (\cost_1(\rho), \ldots, \cost_\di(\rho))$ is called a \emph{cost profile}. 

 If $h = h_0\ldots h_\ell$ is a history, $\dCost(h)= \sum_{n=0}^{\ell-1} \dWeight(h_n,h_{n+1})$ is the accumulated costs, component by component, along the history. We assume that $\dCost(v) = \tup{0}$, for all $v \in V$.

\begin{definition}[Multi-weighted Reachability Game]
Given a target set $\targetSet \subseteq V$, the tuple $\dGame = (\dArena, \targetSet , \dCost)$ is called a $\di$-weighted reachability game, or more generally a multi-weighted reachability game.
\end{definition}

In a $\di$-weighted reachability game $\dGame = (\dArena, \targetSet, \dCost)$, an initial vertex $v_0 \in V$ is often fixed and the game $\initDGame$ is called an \emph{initialized multi-weighted reachability game}. A play (resp. history) of $\initDGame$ is a play (resp. history) of $\dArena$ starting in $v_0$.

In the rest of this document, for the sake of readability we write (initialized) game instead of (initialized) $\di$-weighted reachability game.

\begin{figure}
    \centering
    \begin{tikzpicture}
        \node[draw, inner sep=3pt] (v0) at (0,0){$v_0$};
        \node[draw, circle, inner sep=2pt] (v1) at (2,1){$v_1$};
        \node[draw, circle, inner sep=2pt] (v2) at (2,0){$v_2$};
        \node[draw, circle, inner sep=2pt] (v3) at (2,-1){$v_3$};
        \node[draw, circle, inner sep=2pt] (v4) at (4,0){$v_4$};
        \node[draw, circle, inner sep=2pt] (v5) at (4,-1){$v_5$};

        \draw[->,thick] (v0) to[bend left] node[fill=white, inner sep=0pt]{$(4,2)$} (v1);
        \draw[->,thick] (v0) to node[fill=white, inner sep=0pt]{$(2,4)$} (v2);
        \draw[->,thick] (v0) to[bend right]  (v3);
        \draw[->,thick] (v1) to  (v4);
        \draw[->,thick] (v2) to (v4);
        \draw[->,thick] (v3) to (v4);
        \draw[->,thick] (v3) to (v5);

        \node[draw, circle, inner sep=2pt] (v6) at (6,1){$v_6$};
        \node[draw, circle, inner sep=2pt] (v7) at (6,0){$v_7$};
        \node[draw, circle, inner sep=2pt] (v8) at (6,-1){$v_8$}; 
        \node[draw, circle, inner sep=2pt, accepting] (v9) at (8,0){$v_9$};
        \node[draw,  inner sep=3pt] (v10) at (8,-1){$v_{10}$};

        \draw[->,thick] (v4) to[bend left] node[fill=white, inner sep=0pt]{$(4,2)$} (v6);
        \draw[->,thick] (v4) to node[fill=white, inner sep=0pt]{$(2,4)$} (v7);
        \draw[->,thick] (v6) to[bend left] (v9);
        
        \draw[->,thick] (v7) to (v9);
        \draw[->,thick] (v5) to (v8);
        \draw[->,thick] (v8) to (v10);
        \draw[->,thick] (v10) to (v9);

        \draw[->,thick] (v8) to[bend right] (v5);
        \draw[->,thick] (v9) to[loop right] (v9);

        \draw[dotted] (-0.5,-0.5) -- (8.9,-0.5) -- (8.9,1.5)-- (-0.5,1.5)--(-0.5,-0.5);
        
    \end{tikzpicture}
    \caption{Example of the arena $\arena_2$ of a game $\game_2$. The target set is $\targetSet=\{ v_9 \}$ and the weight function is given by the label of the edges. Edges without a label have a weight of $(1,1)$. The dotted rectangle is a restriction of the arena specifically used in Example~\ref{ex:memoIsRequired}.}
    \label{fig:runningEx}
\end{figure}

\begin{example}
\label{ex:runningEx1}
We consider as a running example the game $\game_2$ such that its arena $\arena_2 = (V_1,V_2,E,\dWeight)$ is depicted in Figure~\ref{fig:runningEx}. In this example the set of vertices of $\playerOne$ (resp. $\playerTwo$) are depicted by rounded (resp. rectangular) vertices and the vertices that are part of the target set are doubly circled/framed. The weight function $\dWeight$ labels the corresponding edges. We follow those conventions all along this document. Here, $V_1 = \{v_1, v_2,v_3,v_4,v_5,v_6,v_7,v_8,v_9\}$, $V_2 = \{ v_0, v_{10} \}$, $\targetSet = \{ v_9 \}$ and, for example, $\dWeight(v_0,v_2) = (2,4)$. For all edges without label, we assume that the weight is $(1,1)$, \emph{e.g.,} $\dWeight(v_3,v_4) = (1,1)$. Do not pay attention to the dotted rectangle for the moment.

Let us now study the cost profiles of two different plays. First, the play $\rho = v_0v_1v_4v_6v_9^\omega$ has a cost profile of $\dCost(\rho) = (4,2)+(1,1)+(4,2)+(1,1) = (10,6)$ since $\rho$ visits $\targetSet$ in $v_9$. Moreover, $\cost_1(\rho) = 10$ and $\cost_2(\rho) = 6$. Second, the play $\rho' = v_0v_3(v_5v_8)^\omega$ has a cost profile of $(\infty, \infty)$ since it does not reach $\targetSet$. 

\end{example}

\subsubsection*{Strategies}  A \emph{strategy} of player $i$, $i \in \{1,2\}$, provides the next action of $\playerI$. Formally, a strategy of $\playerI$ from a vertex $v$ is a function $\sigma_i: \hist_i(v) \longrightarrow V$ such that for all $h \in \hist_i(v)$, $(\last(h),\sigma_i(h)) \in E$. We denote by $\stratSet{i}{v}$ the set of strategies of $\playerI$ from $v \in V$. Notice that in an initialized game $\initDGame$, unless we specify something else, we assume that the strategies are defined from $v_0$. 

Moreover, given two strategies $\sigma_1$ of $\playerOne$ and $\sigma_2$ of $\playerTwo$, there is only one play which is consistent with $(\sigma_1,\sigma_2)$ from $v_0$. This play is called the \emph{outcome} of $(\sigma_1,\sigma_2)$ from $v_0$ and is denoted by $\outcome{\sigma_1}{\sigma_2}{v_0}$.

We differentiate two classes of strategies: \emph{positional} strategies and \emph{finite-memory} strategies. A positional strategy $\sigma_i$ only depends on the last vertex of the history, \IE for all $h, h' \in \hist_i$, if $\last(h) = \last(h')$ then, $\sigma_i(h) = \sigma_i(h')$. It is finite-memory if it can be encoded by a finite-state machine.

\subsubsection*{Partial Orders}  Given two cost profiles $\tup{x}$ and $\tup{y}$ in $\NInf^\di$, $\playerOne$ should be able to decide which one is the most beneficial to him. In order to do so, we consider two \emph{partial orders} in the rest of this document: the \emph{componentwise} order and the \emph{lexicographic} order.

We recall some related definitions. A \emph{partial order} on $X$ is a binary relation ${\lesssim} \subseteq X \times X$ which is reflexive, antisymmetric and transitive. The \emph{strict partial} order $<$ associated with it is given by $x < y$ if and only if $x \lesssim y$ and $x \neq y$, for all $x,y \in X$. A partial order is called a \emph{total order} if and only if for all $x,y \in X$, $x \lesssim y$ or $y \lesssim x$. Given a set $X' \subseteq X$, the set of minimal elements of $X'$ with respect to $\lesssim$ is given by $\minimal(X') = \{ x \in X' \mid \text{ if } y \in X' \text{ and } y \lesssim x, \text{ then } x=y \}$. Moreover, the \emph{upward closure} of $X'$ with respect to $\lesssim$ is the set $\uparrow X' = \{ \tup{x} \in X \mid \exists \tup{y} \in X' \text{ st. } \tup{y} \lesssim \tup{x} \}$. A set $X'$ is said \text{upward closed} if $\uparrow X' = X'$.

In what follows we consider two partial orders on $\NInf^\di$. The \emph{lexicographic} order, denoted by $\leqL$, is defined as follows: for all $\tup{x},\tup{y} \in \NInf^\di$, $\tup{x} \leqL \tup{y}$ if and only if either \emph{(i)} $x_i = y_i$ for all $i \in \{1, \ldots, \di\}$ or \emph{(ii)} there exists $i \in \{ 1, \ldots, \di \}$ such that $x_i < y_i$ and for all $k < i$, $x_k = y_k$. The \emph{componentwise} order, denoted by $\leqC$, is defined as: for all $\tup{x},\tup{y} \in \NInf^\di$, $\tup{x} \leqC \tup{y}$ if and only if  for all $i \in \{ 1, \ldots, \di \}$, $x_i \leq y_i$. Although the lexicographic order is a total order, the componentwise order is not.

\subsection{Studied Problems}
 We are now able to introduce the different problems that are studied in this paper: the \emph{ensured values problem} and the \emph{constrained existence problem}.

\subsubsection{Ensured Values}

Given a game $\dGame$ and a vertex $v$, we define the \emph{ensured values} from $v$ as the cost profiles that $\playerOne$ can ensure from $v$ whatever the behavior of $\playerTwo$.
We denote the set of ensured values from $v$ by $\ensure{v}$, \IE $\ensure{v} = \{ \tup{x} \in \NInf^\di \mid \exists \sigma_1 \in \stratSet{1}{v}\text{ st. } \forall\sigma_2 \in \stratSet{2}{v} \, , \dCost(\outcome{\sigma_1}{\sigma_2}{v}) \lesssim \tup{x} \} $. Moreover, we say that a strategy $\sigma_1$ of $\playerOne$ from $v$ \emph{ensures} the cost profile $\tup{x} \in \NInf^\di$ if for all strategies $\sigma_2$ of $\playerTwo$ from $v$, we have that $\dCost(\outcome{\sigma_1}{\sigma_2}{v}) \lesssim \tup{x}$.

We denote by $\minimal(\ensure{v})$ the set of minimal elements of $\ensure{v}$ with respect to $\lesssim$. If  $\lesssim$ is the lexicographic order, the set of minimal elements of $\ensureL{v}$ with respect to $\leqL$ is a singleton, as $\leqL$ is a total order, and is called the \emph{upper value} from $v$. We denote it by $\upValue{v}$. On the other hand, if $\lesssim$ is the componentwise order, the set of minimal elements of $\ensureC{v}$ with respect to $\leqC$ is called the \emph{Pareto frontier} from $v$ and is denoted by $\pareto{v}$.

\begin{definition}[Ensured Values Problems]
Let $\initDGame$ be an initialized game. Depending on the partial order, we distinguish  two problems:
    \emph{(i)} \textbf{computation of the upper value}, $\upValue{v_0}$, and
    \emph{(ii)} \textbf{computation of the Pareto frontier}, $\pareto{v_0}$.   
\end{definition}

\begin{theorem}
Given an initialized game $\initDGame$,
\begin{enumerate}
    \item The upper value $\upValue{v_0}$ can be computed in polynomial time.\label{item:resEnsure1}
    \item The Pareto frontier can be computed in  exponential time.\label{item:resEnsure2}
    \item If $\di$ is fixed, the Pareto frontier can be computed in pseudo-polynomial time.\label{item:resEnsure3} 
\end{enumerate}
\end{theorem}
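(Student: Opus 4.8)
The plan is to set up a single value-iteration scheme and instantiate it differently for the two orders. In both cases I would introduce iterates $\iterUpValue{k}{v}$ (lexicographic) and $\iterEnsure{k}{v}$ (componentwise), where the index $k$ records that $\targetSet$ must be reached within at most $k$ steps: the value at a vertex of $\targetSet$ is $\tup{0}$, the value at a non-target vertex is initially $\tup{\infty}$ (resp. empty, since nothing can be ensured in zero steps), and one step of the operator replaces the value at a non-target vertex by the best combination over its outgoing edges. At a vertex of $\playerOne$ one takes the $\lesssim$-minimum over successors of $\dWeight(v,v') + (\text{value at } v')$, and at a vertex of $\playerTwo$ one takes the combination that is safe against every successor. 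For $\leqL$ this combination is a single vector: the lexicographic minimum at $\playerOne$-vertices and the lexicographic maximum at $\playerTwo$-vertices. For $\leqC$ the value at each vertex is an antichain (its Pareto frontier): $\playerOne$-vertices are handled by a union followed by $\minimal$, while $\playerTwo$-vertices are handled by intersecting the upward closures $\dWeight(v,v')+{\uparrow}\iterEnsure{k}{v'}$ over all successors and then taking $\minimal$. I would first prove by induction on $k$ that these iterates compute exactly what $\playerOne$ can ensure while reaching $\targetSet$ in at most $k$ steps, which links them to $\upValue{v_0}$ and $\pareto{v_0}$ in the limit.

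The two quantitative ingredients I need are a bound on the magnitude of the relevant profiles and a bound on the number of iterations, and for both I would use a loop-removal argument. Since all weights are non-negative, deleting a cycle from a history reaches $\targetSet$ no later and with a componentwise smaller cost; hence every minimal ensured profile is witnessed by outcomes that are simple paths, so it is reached within $|V|-1$ steps and each of its finite components is at most $(|V|-1)\maxW$. This gives two facts at once: the iterates stabilise after $\bigO(|V|)$ steps (so $\upValue{v_0}=\iterUpValue{|V|}{v_0}$ and $\pareto{v_0}=\minimal(\iterEnsure{|V|}{v_0})$), and every profile the algorithm manipulates lies in the finite universe $U=\{0,\dots,(|V|-1)\maxW,\infty\}^{\di}$. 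I expect this lemma to be the main obstacle, because making loop removal rigorous at the level of strategies (not just single plays) means turning it into a strategy transformation that works simultaneously against every strategy of $\playerTwo$.

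With these facts the complexity bounds reduce to counting. In the lexicographic case each iterate is a single vector of numbers of magnitude at most $(|V|-1)\maxW$, hence of polynomial bit-size; one iteration costs $\bigO(|E|\,\di)$ vector comparisons and there are $\bigO(|V|)$ iterations, giving the polynomial bound of item~1. In the componentwise case each iterate is an antichain in $U$, hence of size at most $|U|=((|V|-1)\maxW+2)^{\di}$; a $\playerOne$-vertex costs $\mathrm{poly}(|U|,\di)$ for the union-and-$\minimal$, and for a $\playerTwo$-vertex I would intersect the successor sets one at a time so that every intermediate set is again an antichain in $U$, keeping the per-vertex cost $\mathrm{poly}(\deg(v),|U|,\di)$ rather than the naive $|U|^{\deg(v)}$. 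Multiplying by $\bigO(|V|)$ iterations gives a total running time polynomial in $|V|$, $|E|$, $\di$ and $|U|$.

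Items~2 and~3 then follow by reading off the size of $U$. When $\di$ is part of the input, $|U|=((|V|-1)\maxW+2)^{\di}$ is exponential, which yields item~2; when $\di$ is fixed, $|U|$ is a polynomial in $|V|$ and $\maxW$, so the running time is polynomial in $|V|$, $|E|$ and the numeric value $\maxW$, that is pseudo-polynomial, and polynomial when the weights are encoded in unary, which is item~3. A secondary point I would verify is that the antichain primitives (union, componentwise maximum, $\minimal$) are implemented with cost genuinely polynomial in $|U|$; this is routine but relies on the incremental treatment of $\playerTwo$-vertices above to avoid the spurious blow-up in the out-degree.
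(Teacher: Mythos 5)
Your proposal follows essentially the same route as the paper: the same step-indexed fixpoint iteration with antichains (singletons in the lexicographic case), the same loop-removal lemma to bound both the number of iterations and the magnitude of profiles (this is the paper's Proposition~\ref{prop:noCycle}, which the paper indeed proves by a cycle-removal transformation on strategy trees, exactly the strategy-level argument you flag as the main obstacle), and the same incremental, minimize-after-each-intersection treatment of $\playerTwo$-vertices to avoid the out-degree blow-up (the paper's Proposition~\ref{prop:complexityInterComponent}). The only cosmetic difference is that you count antichains inside a finite grid directly, whereas the paper phrases the same bounds via generator-based formulas for upward-closed sets following~\cite{DelzannoR00}.
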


Statement~\ref{item:resEnsure1} is obtained by Theorem~\ref{thm:complexityLexico}, Statements~\ref{item:resEnsure2} and~\ref{item:resEnsure3} are proved by Theorem~\ref{thm:complexityComponent}.

A strategy $\sigma_1$ of $\playerOne$ from $v$ is said \emph{Pareto-optimal}
from $v$ if $\sigma_1$ ensures $\tup{x}$ for some $\tup{x} \in \pareto{v}$. If we want to explicitly specify the element $\tup{x}$ of the Pareto frontier which is ensured by the Pareto-optimal strategy we say that the strategy $\sigma_1$ is \emph{ $\tup{x}$-Pareto-optimal} from $v$.
Finally, a strategy $\sigma_1$ of $\playerOne$ from $v$ is said \emph{lexico-optimal} if it ensures the only  $\tup{x} \in \upValue{v}$.

In Section~\ref{section:correctStrat}, we show how to obtain \emph{(i)} a $\tup{x}$-Pareto-optimal strategy from $v_0$ for each $\tup{x} \in \pareto{v_0}$ and \emph{(ii)} a lexico-optimal strategy from $v_0$ which is positional. Notice that, as in Example~\ref{ex:memoIsRequired}, Pareto-optimal strategies sometimes require finite-memory.

\subsubsection{Constrained Existence}
We are also interested in deciding, given a cost profile $\tup{x}$, whether there exists a strategy $\sigma_1$ of $\playerOne$ from $v_0$ that ensures $\tup{x}$. We call this decision problem the \emph{constrained existence problem} (CE problem).

\begin{definition}[Constrained Existence Problem -- CE Problem]
\label{prob-constProb}
Given an initialized game $\initDGame$ and $\tup{x} \in \N^\di$, does there exist a strategy $\sigma_1 \in \stratSet{1}{v_0}$ such that for all strategies $\sigma_2  \in \stratSet{2}{v_0}$,
$ \dCost(\outcome{\sigma_1}{\sigma_2}{v_0}) \lesssim \tup{x}?$
\end{definition}

The complexity results of this problem are summarized in the following theorem which is restated and discussed in Section~\ref{section:constrainedExistence}.

\begin{theorem}
If $\lesssim$ is the lexicographic order, the \CE problem is solved in $\Poly$.\label{item:resConstrained1}
If $\lesssim$ is the componentwise order, the  \CE problem is $\PSPACE$-complete.\label{item:resConstrained3}
\end{theorem}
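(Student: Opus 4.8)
The plan is to prove the two complexity claims separately, since the lexicographic and componentwise cases have very different behaviors.

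\textbf{The lexicographic case ($\Poly$).} For the lexicographic order the upper value $\upValue{v_0}$ is a single well-defined cost profile. The first step is to invoke the fixpoint algorithm announced in the contributions (and formalized in Section~\ref{section:ensuredValues}) to compute $\upValue{v_0}$ in polynomial time; this is essentially Statement~\ref{item:resEnsure1} of the preceding theorem. Given a target profile $\tup{x} \in \N^\di$, I would then argue that a lexico-optimal strategy ensures $\upValue{v_0}$, and that $\playerOne$ can ensure $\tup{x}$ if and only if $\upValue{v_0} \leqL \tup{x}$. The ``if'' direction follows because a lexico-optimal strategy ensures $\upValue{v_0}$ and, by transitivity of $\leqL$, also ensures any $\tup{x}$ with $\upValue{v_0} \leqL \tup{x}$. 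The ``only if'' direction follows from the definition of $\upValue{v_0}$ as the (unique, by totality of $\leqL$) minimal ensurable value: if some strategy ensured a profile strictly below $\upValue{v_0}$ in the lexicographic order, that would contradict minimality. Hence the $\CE$ problem reduces to one polynomial-time fixpoint computation followed by a single comparison of two $\di$-tuples, which is clearly polynomial, giving membership in $\Poly$.

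\textbf{The componentwise case ($\PSPACE$-complete).} Here the two directions (membership and hardness) must be handled independently. For \emph{membership in $\PSPACE$}, I would not compute the full Pareto frontier (which is only exponential by Statement~\ref{item:resEnsure2}), since that would be too costly; instead I would design an alternating polynomial-time procedure, exploiting $\APTime = \PSPACE$. The idea is to play the game from $v_0$, letting $\playerOne$ and $\playerTwo$ alternate their moves while maintaining the running partial cost profile $\dCost(h)$ for the current history $h$. The point to verify is that the relevant partial sums stay polynomially bounded: as soon as some component of the accumulated cost exceeds $x_i$, the profile can no longer be $\leqC \tup{x}$, so plays may be truncated once a component surpasses its threshold, keeping each coordinate in $\{0,\ldots,x_i\}$ and hence storable in polynomial space. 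I would argue that $\playerOne$ can win this alternating reachability-with-bounded-cost game exactly when a strategy ensuring $\tup{x}$ exists, which places the problem in $\APTime = \PSPACE$.

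For \emph{$\PSPACE$-hardness}, the natural route is a reduction from the two-player generalized (qualitative) reachability games of Fijalkow and Horn~\cite{FijalkowH13}, which are $\PSPACE$-complete. In that model $\playerOne$ must reach each of several target sets $T_1,\ldots,T_\di$ in some order. I would encode each target set $T_i$ as a separate dimension: assign weight $\tup{0}$ or small weights on most edges and arrange that the $i$th component of the accumulated cost stays bounded precisely when $T_i$ has been visited, with a single shared sink target $\targetSet$ into which plays eventually funnel. Choosing the threshold $\tup{x}$ appropriately (one bound per dimension) makes ``$\playerOne$ ensures $\tup{x}$ componentwise'' equivalent to ``$\playerOne$ visits every $T_i$''. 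The main obstacle I anticipate is the faithfulness of this encoding: multi-weighted reachability has a single target set whose first visit freezes all costs simultaneously, whereas generalized reachability tracks several targets reached at different times. Reconciling these requires a careful gadget so that the componentwise cost faithfully records which of the $T_i$ have been reached before the play enters $\targetSet$, and ensuring $\playerTwo$ cannot exploit the single-target semantics to escape; getting this gadget exactly right is where the real work of the hardness proof lies.
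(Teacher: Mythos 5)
Your lexicographic argument is correct and coincides with the paper's proof: compute $\upValue{v_0}$ with the polynomial-time fixpoint algorithm and accept iff $\upValue{v_0} \leqL \tup{x}$, justified exactly by minimality and totality of $\leqL$. The componentwise case, however, has two genuine gaps. For membership, truncating a play once a coordinate exceeds its threshold bounds the \emph{space} needed to store the accumulated cost, but not the \emph{running time} of the alternating simulation: weights may be $0$ (the paper's own constructions use $(0,0)$-labelled edges), so a play can cycle forever without exceeding any threshold and without reaching $\targetSet$; and even with strictly positive weights the number of rounds before truncation is of order $\max_i x_i$, which is exponential in the binary encoding of $\tup{x}$. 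An alternating machine that uses polynomial space but exponential (or unbounded) time only yields membership in alternating polynomial space, i.e.\ $\textsc{ExpTime}$, not $\APTime = \PSPACE$. The missing ingredient is precisely the paper's Proposition~\ref{prop:noCycle}: if $\playerOne$ can ensure $\tup{x}$ at all, he can ensure it while reaching $\targetSet$ within $|V|$ steps, so the alternating simulation may be cut off after $|V|$ moves; only with this cutoff does the machine run in polynomial time.

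For hardness, you propose reducing from generalized reachability games but explicitly leave the key gadget unconstructed, and the obstacle you yourself identify is the actual crux: in a multi-weighted reachability game all dimensions freeze simultaneously at the first visit to the single target set, and with nonnegative additive weights there is no reset or discount mechanism by which ``the $i$th accumulated cost stays below its bound'' can record the point event ``$T_i$ was visited earlier in the play.'' So what you have is an approach with its essential step missing, not a proof. The paper takes a different and much more economical route: a reduction from \textsc{Quantified Subset-Sum}, in which assigning $x_k = 1$ costs $(a_k,0)$ and assigning $x_k = 0$ costs $(0,a_k)$, so that $\playerOne$ can ensure $\left(T, \sum_{1\leq i \leq n} a_i - T\right)$ if and only if the quantified formula is true. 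This construction works already with $\di = 2$ and a single target vertex, so it shows hardness even for fixed dimension --- something your reduction, which needs one dimension per target set, could not provide even if the gadget were completed.
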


We conclude this section by showing that memory may be required by $\playerOne$ in order to ensure a given cost profile.  A more sophisticated example, in which an exponential memory is needed, is provided in~\cite[Section 3.3.1]{BrihayeGMR23}.

\begin{example}
\label{ex:memoIsRequired}
    We consider the game such that its arena is a restriction of the arena given in Figure~\ref{fig:runningEx}. This restricted arena is inside the dotted rectangle. For clarity, we assume that the arena is only composed by vertices $v_0,v_1,v_2,v_4,v_6,v_7$ and $v_9$ and their associated edges. We prove that with the componentwise order $\leqC$, memory for $\playerOne$ is required to ensure the cost profile $(8,8)$. There are only two positional strategies of $\playerOne$: $\sigma_1$ defined such that $\sigma_1(v_4) = v_6$ and $\tau_1$ defined such that $\tau_1(v_4) = v_7$. For all the other vertices, $\playerOne$ has no choice. With $\sigma_1$, if $\playerTwo$ chooses $v_1$ from $v_0$, the resulting cost profile is $(10,6)$. In the same way,  with $\tau_1$, if $\playerTwo$ chooses $v_2$ from $v_0$, the resulting cost profile is $(6,10)$. This proves that $\playerOne$ cannot ensure $(8,8)$ from $v_0$ with a positional strategy. This is nevertheless possible if $\playerOne$ plays a finite-memory strategy. Indeed, by taking into account the past choice of $\playerTwo$, $\playerOne$ is able to ensure $(8,8)$: if $\playerTwo$ chooses $v_1$ (resp. $v_2$) from $v_0$ then, $\playerOne$ should choose $v_7$ (resp. $v_6$) from $v_4$ resulting in a cost profile of $(8,8)$ in both cases.
\end{example}


\section{Ensured Values}
\label{section:ensuredValues}

This section is devoted to the computation of the sets $\minimal(\ensure{v})$ for all $v \in V$. In Section~\ref{section:algo}, we provide a fixpoint algorithm which computes these sets. In Section~\ref{section:complexity}, we study the time complexity of the algorithm both for the lexicographic and the componentwise orders.
Finally, in Section~\ref{section:correctStrat}, we synthesize lexico and Pareto-optimal strategies.  

\subsection{Fixpoint Algorithm}
\label{section:algo}

Our algorithm that computes the sets $\minimal(\ensure{v})$ for all $v \in V$ shares the key idea of some classical shortest path algorithms. First, for each $v \in V$, we compute the set of cost profiles that $\playerOne$ ensures from $v$ in $k$ steps. Then, once all these sets are computed, we compute the sets of cost profiles that can be ensured by $\playerOne$ from each vertex but in $k+1$ steps.  And so on, until the sets of cost profiles are no longer updated, meaning that we have reached a fixpoint.

For each $k \in \N$ and each $v \in V$, we define the set $\iterEnsure{k}{v}$ as the set of cost profiles that can be ensured by $\playerOne$ within $k$ steps.
Formally, $\iterEnsure{k}{v} = \{ \tup{x} \in \NInf^\di \mid \exists \sigma_1 \in \stratSet{1}{v} \text{ st. } \forall\sigma_2 \in \stratSet{2}{v} \, , \dCost(\outcome{\sigma_1}{\sigma_2}{v}) \lesssim \tup{x} \, \wedge \, \lengthF{\outcome{\sigma_1}{\sigma_2}{v}} \leq k \}$\footnote{To lighten the notations, we omit the mention of $\lesssim$ in subscript.},
where for all $\rho = \rho_0\rho_1\ldots \in \play$, $\lengthF{\rho} = k $ if $k$ is the least index such that $\rho_k \in \targetSet$ and $\lengthF{\rho} = - \infty$ otherwise.

  Note that the sets $\iterEnsure{k}{v}$ are upward closed and that they are infinite sets except if $\iterEnsure{k}{v} \allowbreak = \{ \tup{\infty} \}$. This is the reason why, in the algorithm, we only store sets of minimal elements denoted by $\iter{k}{v}$.
Thus, the correctness of the algorithm relies on the property that for all $k \in \N$ and all $v \in V$, $\minimal(\iterEnsure{k}{v}) = \iter{k}{v}$.

The fixpoint algorithm is provided by Algorithm~\ref{algo:fixPoint} in which, if $X$ is a set of cost profiles, and $v,v' \in V$, $X + \dWeight(v,v') = \{ \tup{x} + \dWeight(v,v') \mid \tup{x} \in X \}$. For the moment, do not pay attention to Lines~\ref{lineAlgo:stratStart} to~\ref{lineAlgo:stratEnd}, we come back to them later.

\begin{algorithm}
\caption{Fixpoint algorithm}
    \label{algo:fixPoint}
    
\SetAlgoNoEnd
\lFor{$v \in \targetSet$}
    {$\iter{0}{v} = \{ \mathbf{0} \}$ }
\lFor{$v \not \in \targetSet$}
    {$\iter{0}{v} = \{ \mathbf{\infty} \}$ }
\;
\Repeat{$\iter{k+1}{v} = \iter{k}{v}$ for all $v \in V$}
    {
        \For{$v \in V$\label{lineAlgo:beginningFirstForLoop}} 
        {

            \lIf{ $v \in \targetSet$}{$\iter{k+1}{v} = \{ \mathbf{0}$\}\;}            
            \ElseIf{$v \in V_1$}
                {
                    $\iter{k+1}{v} = \displaystyle\minimal\left(\bigcup_{v' \in \successor(v)} \uparrow \iter{k}{v'} + \dWeight(v,v')\right)$ \; \label{lineAlgo:union}

                    \For{$\tup{x} \in \iter{k+1}{v}$\label{lineAlgo:stratStart}}
                        {
                            \lIf{$\tup{x} \in \iter{k}{v}$}{$f^{k+1}_v(\tup{x})  = f^{k}_v(\tup{x})$}
                            
                            \Else{$f^{k+1}_v(\tup{x}) =  (v', \tup{x'})$ where $v'$ and $\tup{x'}$ are such that $v' \in \successor(v)$, $\tup{x} = \tup{x'} + \dWeight(v,v')$ and $\tup{x'} \in \iter{k}{v'}$ \; \label{lineAlgo:stratEnd}
                        }
                    }
                }
            \;
            \ElseIf{$v \in V_2$}
                {
                    $\iter{k+1}{v} = \displaystyle\minimal\left(\bigcap_{v' \in \successor(v)} \uparrow \iter{k}{v'} + \dWeight(v,v')\right)$ \; \label{lineAlgo:inter}
                    }
                
        }
        
    }
    
\end{algorithm}

\begin{table}[ht!]
    \centering
   \scalebox{0.77}{ \begin{tabular}{|c|c|c|c|c|c|c|c|c|c|}
        \hline
         &  $\lesssim$ & $v_0$ & $v_1$,$v_2$ & $v_3$ & $v_4$ & $v_5$ & $v_6$,$v_7$,$v_{10}$ & $v_8$ & $v_9$ \\ \hline \hline
         $\iter{*}{\cdot}$ & $\leqL$ &
         $\{(8,8)\}$&
         $\{(4,6)\}$ &
         $\{(4,4)\}$&
         $\{(3,5)\}$& $\{(3,3)\}$ & $\{(1,1)\}$ & $\{(2,2)\}$ & $\{(0,0)\}$\\
         \cline{2-10}& $\leqC$& $\{(8,8)\}$ &
         $\{(6,4), (4,6) \}$ &
         $\{(4,4)\}$&
         $\{(5,3),(3,5)\}$& $\{(3,3)\}$ & $\{(1,1)\}$ & $ \{(2,2)\}$& $\{(0,0)\}$  \\ 
         \hline
    \end{tabular}}
    \caption{Fixpoint of the fixpoint algorithm reached at step $k^*=4$.}
    \label{tab:fixpoint}
\end{table}

\begin{example}
\label{ex:algo}
We now explain how the fixpoint algorithm runs on Example~\ref{ex:runningEx1}. Table~\ref{tab:fixpoint} represents the fixpoint of the fixpoint algorithm both for the lexicographic and componentwise orders. Remark that the fixpoint is reached with $k^* = 4$, while the algorithm takes one more step in order to check that $\iter{4}{v} = \iter{5}{v}$ for all $v \in V$. We only focus on some relevant steps of the algorithm with the componentwise order $\leqC$.

Let us first assume that the first step is computed and is such that  $\iter{1}{v_9} = \{(0,0)\}$ since $v_9 \in \targetSet$,  $\iter{1}{v} = \{ (1,1) \}$ if $v \in \{ v_6, v_7, v_{10} \}$ and  $\iter{1}{v} = \{ (\infty, \infty) \}$ for all other vertices. We now focus on the computation of $\iter{2}{v_4}$. By Algorithm~\ref{algo:fixPoint}, $\iter{2}{v_4} = \minimal ( \uparrow \iter{1}{v_6} + (4,2) \, \cup \uparrow \iter{1}{v_7} + (2,4) ) = \minimal( \uparrow \{ (5,3) \} \, \cup \uparrow \{ (3,5) \} ) = \{ (5,3), (3,5) \}$.

We now assume: $\iter{3}{v_0} = \{ (\infty, \infty) \}$, $\iter{3}{v_1}= \iter{3}{v_2}= \iter{3}{v_3} = \{ (4,6), (6,4) \}$, $\iter{3}{v_4} = \{(5,3),(3,5)\}$ and $\iter{3}{v_5} = \{ (3,3) \}$. We compute $\iter{4}{v_0}$ which is equal to $ \minimal( \uparrow \{ (4,6),(6,4) \} + (4,2) \, \cap \uparrow \{ (4,6), (6,4)\} + (2,4) \, \cap \uparrow \{ (4,6), (6,4) \} + (1,1) ) = \minimal( \uparrow \{ (8,8), (10,6) \} \, \cap \uparrow \{ (6,10), (8,8) \} \, \cap \uparrow \{ (5,7), (7,5) \} ) = \minimal( \uparrow \{ (8,8) \} \, \cap \uparrow \{ (5,7),(7,5) \}) = \{ (8,8) \}$. Finally, we compute $\iter{4}{v_3} = \minimal( \uparrow \{(6,4),(4,6)\} \,\cup \uparrow \{ (4,4) \}) = \minimal ( \{ (6,4), (4,6), (4,4) \}) = \{(4,4)\}$.
\end{example}

\subsubsection{Termination}
\label{section:algoTermination}

We focus on the termination of the fixpoint algorithm. 

\begin{proposition}
\label{prop:algoTerminatesSteps}
The fixpoint algorithm terminates in less than $|V| + 1$ steps.
\end{proposition}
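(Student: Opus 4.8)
The plan is to show that the fixpoint algorithm stabilizes after at most $|V|$ effective updates, so that termination is detected by step $|V|+1$. The guiding intuition is the one stated just before the proposition: $\iter{k}{v} = \minimal(\iterEnsure{k}{v})$, the minimal cost profiles $\playerOne$ can ensure from $v$ using plays that reach $\targetSet$ within $k$ steps. Since a play consistent with a strategy that reaches $\targetSet$ from $v$ can always be assumed to reach it without repeating a vertex before the first visit to $\targetSet$ (any loop before reaching the target only adds nonnegative weight, hence only produces a larger, non-minimal cost profile), no minimal ensured cost profile needs more than $|V|-1$ edges, i.e.\ more than $|V|$ vertices. I would therefore first argue that the families $\iterEnsure{k}{v}$ are monotone in $k$: clearly $\iterEnsure{k}{v} \subseteq \iterEnsure{k+1}{v}$, since a play reaching $\targetSet$ within $k$ steps also reaches it within $k+1$ steps. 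Consequently, passing to minimal elements, the upward-closed sets $\uparrow\iter{k}{v}$ form a decreasing (w.r.t.\ reverse inclusion of upward closures) chain.

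Next I would establish the key stabilization claim: for every $v \in V$ we have $\iterEnsure{|V|}{v} = \iterEnsure{k}{v}$ for all $k \geq |V|$, equivalently $\iter{|V|}{v} = \iter{k}{v}$. The argument is that any cost profile $\tup{x}$ ensured by $\playerOne$ at all is ensured via a strategy whose outcomes reach $\targetSet$ along loop-free prefixes, and such prefixes have length at most $|V|-1 < |V|$; hence if $\tup{x} \in \ensure{v}$ is minimal then $\tup{x} \in \iterEnsure{|V|}{v}$ already. To make this precise I would take a strategy $\sigma_1$ witnessing a minimal ensured profile and, for each play against $\playerTwo$, cut out cycles occurring strictly before the first visit to $\targetSet$; because weights are in $\N^\di$ and costs accumulate componentwise, cutting a cycle can only decrease (componentwise, hence also lexicographically) the resulting cost, so the pruned behavior still ensures $\tup{x}$ and now reaches $\targetSet$ within $|V|$ steps. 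This shows $\minimal(\ensure{v}) \subseteq \iterEnsure{|V|}{v}$, and combined with monotonicity gives $\iter{k}{v} = \iter{|V|}{v}$ for all $k \geq |V|$.

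Finally I would connect this to the step count of Algorithm~\ref{algo:fixPoint}. Once $\iter{k}{v}$ has stabilized to its final value $\iter{|V|}{v}$ for all $v$, the very next iteration recomputes the same sets, so the termination test $\iter{k+1}{v} = \iter{k}{v}$ for all $v \in V$ succeeds. Since stabilization occurs by index $|V|$, the loop performs the confirming iteration at index $|V|+1$ and halts, giving the claimed bound of less than $|V|+1$ steps. I would present this by combining the equality $\iter{|V|}{v} = \iter{|V|+1}{v}$ (for all $v$) with the observation that each update rule (the $\minimal$ of a union over successors for $V_1$ vertices and of an intersection for $V_2$ vertices, plus the base cases) is a deterministic function of the previous-step sets, so equality at one step propagates forward.

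The main obstacle I expect is the loop-cutting step: one must argue carefully that removing a cycle before the first visit to $\targetSet$ yields behavior still realizable by a (possibly positional or finite-memory) strategy of $\playerOne$ that genuinely \emph{ensures} the same profile against \emph{all} of $\playerTwo$'s responses, rather than merely producing a single shorter outcome. The clean way to handle this is to reason at the level of the ensured-value characterization $\iter{k}{v} = \minimal(\iterEnsure{k}{v})$ and the recursive structure of the update (the $\minimal$/$\uparrow$/$\bigcap$/$\bigcup$ operators), proving the stabilization purely from the monotonicity of $\iterEnsure{k}{v}$ in $k$ together with the fact that a fresh minimal profile can only be introduced at step $k$ if a genuinely new vertex is added to some ensuring prefix, which can happen at most $|V|$ times. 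This shifts the delicate combinatorial bookkeeping onto the already-established invariant rather than onto an ad hoc strategy surgery.
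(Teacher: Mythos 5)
Your proof skeleton is the same as the paper's: monotonicity $\iterEnsure{k}{v} \subseteq \iterEnsure{k+1}{v}$, stabilization of these sets by step $|V|$, and one extra iteration for the algorithm to detect the fixpoint (this is exactly the paper's Proposition~\ref{prop:fixPointReach} plus the concluding observation). The problem is the stabilization step, which is the entire mathematical content of the proposition, and which you never actually prove. Your first attempt --- cutting cycles play by play --- does not work, as you yourself concede: pruning a cycle in one outcome modifies the strategy on a shared history and you must still \emph{ensure} $\tup{x}$ against \emph{all} of $\playerTwo$'s responses. Your fallback, that ``a fresh minimal profile can only be introduced at step $k$ if a genuinely new vertex is added to some ensuring prefix, which can happen at most $|V|$ times,'' is a restatement of the claim, not an argument: ``ensuring prefix'' and ``new vertex'' are undefined, nothing excludes a new minimal profile arising at step $k+1$ by recombining vertices already used, and a backward-tracing attempt runs into trouble precisely at $\playerTwo$ vertices, where the update is an intersection and one only gets an inequality $\tup{x'} + \dWeight(v,v') \lesssim \tup{x}$ for each successor (Lemma~\ref{lemma:propOnIter}), not an equality along a single witness path.

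The missing idea is the paper's Proposition~\ref{prop:noCycle}, proved by surgery on the \emph{strategy tree} $\stratTree{\sigma_1}$ of all outcomes consistent with $\sigma_1$. There, cycle removal is done globally: whenever some branch contains a cycle whose first vertex is owned by $\playerOne$, that cycle is cut by rerouting every node of the tree extending it (the dotted-edge shortcut), which yields a genuine new strategy and can only decrease costs since weights are in $\N^\di$. The second, equally essential half is showing that once no $\playerOne$-initiated cycle remains, no cycle at all remains: a cycle beginning at a $\playerTwo$ vertex either consists only of $\playerTwo$ vertices --- then $\playerTwo$ could repeat it forever, contradicting that every branch reaches $\targetSet$ --- or contains a $\playerOne$ vertex, which produces a $\playerOne$-initiated cycle, contradiction. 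Only then does one get a strategy $\sigma'_1$ with $\lengthF{\outcome{\sigma'_1}{\sigma_2}{v}} \leq |V|$ for all $\sigma_2$, hence $k^* \leq |V|$ and termination within $|V|+1$ iterations. Without this (or an equivalent) argument, your proposal asserts the bound rather than proving it.
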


The proof of this proposition relies on  Propositions~\ref{prop:noCycle} and~\ref{prop:fixPointReach}. Proposition~\ref{prop:noCycle} is interesting on its own. It states that if there exists a strategy $\sigma_1$ of $\playerOne$ which ensures a cost profile $\tup{x} \in \N^\di$ from $v \in V$ then, there exists another strategy $\sigma'_1$ of $\playerOne$ which also ensures $\tup{x}$  from $v$ but such that the number of edges between $v$ and the first occurrence of a vertex in $\targetSet$ is less than or equal to $|V|$, and this regardless of the behavior of $\playerTwo$.

\newcommand{\restateNoCycle}{Given a game $\dGame$, a vertex $v \in V$ and a cost profile $\tup{x} \in \N^\di$, if there exists a strategy $\sigma_1$ of $\playerOne$ such that for all strategies $\sigma_2$ of $\playerTwo$ we have that $\dCost(\outcome{\sigma_1}{\sigma_2}{v}) \lesssim \tup{x}$ then, there exists $\sigma'_1$ of $\playerOne$ such that  for all $\sigma_2$ of $\playerTwo$ we have: \emph{(i)} $\dCost(\outcome{\sigma'_1}{\sigma_2}{v}) \lesssim \tup{x}$ and \emph{(ii)} $\lengthF{\outcome{\sigma'_1}{\sigma_2}{v}} \leq |V|$.}
\begin{proposition}
\label{prop:noCycle}
\restateNoCycle
\end{proposition}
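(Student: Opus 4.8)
The plan is to prove this by a cycle-removal argument on strategies. An ensuring strategy may force the play along detours before reaching $\targetSet$, but since every weight lies in $\N^\di$ and is therefore non-negative, excising a cycle cannot increase any coordinate of the cost; so I want to trim all cycles occurring before the first visit to $\targetSet$ and thereby reach the target along a simple path, whose length is automatically at most $|V|$.

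First I would record a preliminary reduction. Because $\tup{x} \in \N^\di$ is finite and $\sigma_1$ ensures it, every outcome $\outcome{\sigma_1}{\sigma_2}{v}$ must reach $\targetSet$, for otherwise its cost would be $\tup{\infty} \not\lesssim \tup{x}$. Consider the tree of all histories consistent with $\sigma_1$ that have no proper prefix ending in $\targetSet$ (i.e.\ pruned at the first target visit). This tree is finitely branching, and every branch is finite since each outcome reaches $\targetSet$; by König's lemma it is finite, so there is a uniform bound $N$ with $\lengthF{\outcome{\sigma_1}{\sigma_2}{v}} \le N$ for all $\sigma_2$. The goal is then to lower this bound from $N$ to $|V|$.

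The key local operation is loop excision, together with the remark that it is cost-non-increasing. If a history $h = h_0 \cdots h_m$ first reaching $\targetSet$ at $h_m$ has $m > |V|$, then by the pigeonhole principle two of its non-target vertices coincide, say $h_p = h_q$ with $p < q$. Since $(h_q,h_{q+1}) \in E$ and $h_p = h_q$, the shortcut $h_0 \cdots h_p h_{q+1} \cdots h_m$ is again a valid path reaching $\targetSet$, and because $\dWeight \geq \tup{0}$ we obtain $\dCost(h_0\cdots h_p h_{q+1}\cdots h_m) \leqC \dCost(h)$. As $\tup{a} \leqC \tup{b}$ entails $\tup{a} \lesssim \tup{b}$ for each of the two orders, trimming the cycle keeps the cost $\lesssim \tup{x}$. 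Thus, at the level of individual plays, detours are always removable while preserving the cost bound, and a simple target-reaching path has length at most $|V|$.

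The hard part will be turning this single-play operation into a \emph{strategy} that works against every $\sigma_2$: a cycle may be closed by a move of $\playerTwo$, and $\sigma_1$ carries memory, so replaying it at the first occurrence of a repeated vertex need not reproduce the continuation used at the second occurrence. To handle this I would build $\sigma'_1$ as a finite-memory strategy whose memory is the set $S$ of already-visited vertices, equivalently by playing the auxiliary reachability game on states $(v,S)$ in which any edge revisiting a non-target vertex of $S$ is redirected to a losing sink; there, every target-reaching play is simple and hence of length at most $|V|$. The crux is to show $\playerOne$ wins this auxiliary game, i.e.\ can force $\targetSet$ along a simple path with cost $\lesssim \tup{x}$, and I would argue this by contradiction using determinacy: a winning $\playerTwo$ strategy, projected back to $\arena$ (the auxiliary state is a function of the $\arena$-history, so the projection is well defined) and played against $\sigma_1$, yields an $\arena$-outcome that either reaches $\targetSet$ along a simple path of cost exceeding $\tup{x}$ — contradicting that $\sigma_1$ ensures $\tup{x}$ — or is driven into the sink by revisiting a vertex. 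The genuinely delicate step, and the one I expect to be the main obstacle, is the latter case: reconciling the memory of $\sigma_1$ with cycle repetition, that is, upgrading a single forced revisit into an outcome that $\playerTwo$ can pump forever so as to avoid $\targetSet$ entirely, which again contradicts that $\sigma_1$ ensures the finite profile $\tup{x}$. Establishing this extraction carefully is what makes the excision realizable at the (necessarily $\playerOne$-owned) exit vertex and delivers $\sigma'_1$ with $\dCost(\outcome{\sigma'_1}{\sigma_2}{v}) \lesssim \tup{x}$ and $\lengthF{\outcome{\sigma'_1}{\sigma_2}{v}} \le |V|$ for all $\sigma_2$.
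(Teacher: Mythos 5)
Your preliminaries are sound (finiteness of the pruned tree via K\"onig's lemma, cost-monotonicity of cycle excision, and the reformulation as an auxiliary game on pairs $(v,S)$), and the statement you reduce everything to --- that $\playerOne$ wins that auxiliary game --- is indeed equivalent to the proposition. But the crux, which you yourself flag as the main obstacle, does not go through as described, and the failure is not a technicality. In your sink case there are really three sub-cases, and your pumping argument handles only one of them. First, the revisit may be closed by a move of $\sigma_1$ itself, i.e.\ $\playerOne$ re-enters a visited vertex. Then the auxiliary play falling into the sink is perfectly consistent with both ``$\tau$ wins the auxiliary game'' and ``$\sigma_1$ ensures $\tup{x}$'', and no behaviour of $\playerTwo$ can be extracted to contradict anything: the contradiction must instead come from \emph{modifying} $\sigma_1$ (excising the cycle by a strategy surgery), which lies outside your determinacy-plus-projection framework. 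Second, even when the cycle is closed by $\playerTwo$, if it contains a vertex of $\playerOne$ then pumping fails for exactly the reason you mention: $\sigma_1$ carries memory, so on the second traversal its choices may differ and the cycle need not repeat; a single forced revisit cannot be upgraded into an outcome avoiding $\targetSet$ forever. Only when the cycle is entirely $\playerTwo$-owned does your pumping argument actually work.

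The paper's proof shows how to dispose of the two problematic sub-cases, and in which order. It works directly on the strategy tree $\stratTree{\sigma_1}$: it first removes, one by one, all cycles that start at a node owned by $\playerOne$ --- the subtree rooted at the first occurrence is replaced by the subtree rooted at the second occurrence, a legitimate strategy modification that only lowers costs since weights are non-negative --- and only then argues that no cycle at all can remain. A remaining cycle would start at a $\playerTwo$ node; if all its vertices belong to $\playerTwo$ then, since at $\playerTwo$ nodes the tree contains \emph{all} successors, the cycle can be replayed arbitrarily often inside the tree, contradicting the finiteness of the branches; and if it contains a $\playerOne$ vertex, then $\playerTwo$ can replay the cycle up to the first such vertex (again because all intermediate choices are $\playerTwo$'s), creating a branch with a repeated $\playerOne$ vertex, contradicting that all $\playerOne$-cycles were already removed. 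This last step is precisely the replacement for your missing pumping argument, and it needs both the tree structure (all $\playerTwo$ successors present) and the prior elimination of $\playerOne$-cycles. To complete your proof you would have to import this two-stage surgery, at which point you would essentially be reproducing the paper's argument rather than giving an alternative route to it.
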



\textbf{Intuition of the proof of Proposition~\ref{prop:noCycle}.} The intuition of the proof is illustrated in Figure~\ref{fig:winningStratTree}. This tree represents all the consistent plays with the strategy $\sigma_1$ of $\playerOne$ and all the possible strategies of $\playerTwo$. Notice that since for all strategies of $\playerTwo$ the target set is reached, all branches of this tree can be assumed to be finite. The main idea is that we remove successively all cycles in the branches of the tree. That ensures that the height of the tree is less than $|V|$.

If there exists a branch with a cycle beginning with a node owned by $\playerOne$, as the one between the two hatched nodes, $\playerOne$ can directly choose to follow the dotted edge. By considering this new strategy, and so by removing this cycle, the cost profiles of the branches of the new obtained tree are less than or equal to those of the old tree. Once all such kind of cycles are removed, we conclude that there is no more cycle in the tree.

Indeed, the only possibility  is that there remains a cycle beginning with a node owned by $\playerTwo$, as the black nodes. In this case, either \emph{(i)} all nodes between the black nodes are owned by$\playerTwo$ or \emph{(ii)} there exists at least a node owned by $\playerOne$ between the two black nodes, as the node in gray. If situation \emph{(i)} occurs there should exist an infinite branch of the tree in which this cycle is repeated infinitely often, but it is impossible because the target set is assumed to be reached along all branches. If it is the situation \emph{(ii)} that occurs, there should exist a branch in which there is a cycle beginning with a node owned by $\playerOne$, as in the figure. But we assumed that this was no longer the case.

Finally, at the end of the procedure, we obtain a tree from which we recover a strategy $\sigma'_1$ of $\playerOne$ from $v$ such that for all strategies $\sigma_2$ of $\playerTwo$ from $v$, $\dCost(\outcome{\sigma'_1}{\sigma_2}{v}) \lesssim \tup{x}$ and $\lengthF{\outcome{\sigma'_1}{\sigma_2}{v}} \leq |V|$.

\begin{figure}[ht]
\begin{center}
    \scalebox{0.8}{\begin{tikzpicture}
    \node[draw, circle, minimum width=0.5cm] (v0) at (0,0){};
    \node[draw, minimum width=0.5cm, minimum height=0.5cm] (v1) at (1,0){};
    \draw[->,thick] (v0) -- (v1);
    
    \node[draw, circle, minimum width=0.5cm] (v21) at (2,-1){};
    \draw[->,thick] (v1) to (v21);

    \node[draw, circle, minimum width=0.5cm, pattern=north west lines] (v31) at (3,-1){};
    \draw[->,thick] (v21) to (v31);

    \node[draw, minimum height=0.5cm, minimum width=0.5cm] (v41) at (4,-1){};
    \draw[->,thick] (v31) to (v41);

    \node[draw, circle, minimum width=0.5cm] (v51) at (5,-1.75){};
    \node[draw, circle, minimum width=0.5cm] (v52) at (5,-1){};
    \node[draw, circle, minimum width=0.5cm] (v53) at (5,-0.25){};
    \draw[->,thick] (v41) to (v51);
    \draw[->,thick] (v41) to (v52);
    \draw[->,thick] (v41) to (v53);

    \node[draw, circle, minimum width=0.5cm, pattern=north west lines] (v61) at (6,-1.75){};
    \draw[->,thick] (v51) to (v61);
    
    \node[draw,circle, accepting, minimum width=0.5cm] (s1) at (8,-1.75){};
     \node[draw,circle,  minimum width=0.5cm] (v71) at (7,-1.75){};
    \draw[->,thick] (v61) to (v71);
    \draw[->,thick] (v71) to (s1);

    \draw[->,thick,dotted] (v31)  to [bend right=70] (v71);

     \node[draw,circle, accepting, minimum width=0.5cm] (s2) at (6,-1){};
     \draw[->,thick] (v52) to (s2);

     \node[draw,circle, accepting, minimum width=0.5cm] (s3) at (6,-0.25){};
     \draw[->,thick] (v53) to (s3);

    \node[draw, minimum width=0.5cm, minimum height=0.5cm, fill=black, opacity=1] (v51bis) at (2,1){};
    \draw[->,thick] (v41) to (v51);
    \draw[->,thick] (v1) to (v51bis);

    \node[draw, circle, minimum width=0.5cm] (v61) at (3,0.5){};
  \node[draw, circle, minimum width=0.5cm, fill=gray, opacity=0.7] (v62) at (3,1.5){};
  \draw[->,thick] (v51bis) to (v61);
  \draw[->,thick] (v51bis) to (v62);

\node[draw,circle, accepting, minimum width=0.5cm] (s4) at (4,0.5){};
\draw[->,thick] (v61) to (s4);

\node[draw, circle, minimum width=0.5cm] (v71) at (4,1.5){};
\draw[->,thick] (v62) to (v71);

\node[draw, minimum height=0.5cm, minimum width=0.5cm] (v81) at (5,1.5){};
\draw[->,thick] (v71) to (v81);

\node[draw, minimum height=0.5cm, minimum width=0.5cm, fill=black, opacity=1] (v91) at (6,1.5){};
\draw[->,thick] (v81) to (v91);

\node[draw,circle, accepting, minimum width=0.5cm] (s5) at (6,2.5){};
\draw[->,thick] (v81) to (s5);

\node[draw, circle, minimum width=0.5cm, fill=gray, opacity=0.7] (v102) at (7,2){};
\draw[->,thick] (v91) to (v102);
\node[draw, circle, minimum width=0.5cm] (v101) at (7,1){};
\draw[->,thick] (v91) to (v101);

\node[draw,circle, accepting, minimum width=0.5cm] (s6) at (8,1){};
\draw[->,thick] (v101) to (s6);

\node[draw,circle, accepting, minimum width=0.5cm] (s7) at (8,2){};
\draw[->,thick] (v102) to (s7);

\end{tikzpicture}
}
\end{center}
\caption{A tree $\stratTree{\sigma_1}$, associated with a strategy $\sigma_1$ of $\playerOne$, which represents all consistent plays with $\sigma_1$ whatever the behavior of $\playerTwo$.}
\label{fig:winningStratTree}
\end{figure}

Let us point out that Proposition~\ref{prop:noCycle} does not imply that $\sigma'_1$ is positional. Indeed, in Example~\ref{ex:memoIsRequired}, the finite-memory strategy is the only strategy that ensures the cost profile $(8,8)$, it satisfies conditions \emph{(i)} and \emph{(ii)} of Proposition~\ref{prop:noCycle} but requires memory.

\begin{proposition}  
\label{prop:fixPointReach}  
We have:
        \emph{(i)} for all $k \in \N$ and for all $v \in V$, $\iterEnsure{k}{v} \subseteq \iterEnsure{k+1}{v}$; and
        \emph{(ii)} there exists $k^* \leq |V|$, such that for all $v \in V$ and for all $\ell \in \N$, $\iterEnsure{k^*+ \ell}{v} = \iterEnsure{k^*}{v}$.
\end{proposition}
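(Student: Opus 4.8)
The plan is to prove the two statements separately: statement \emph{(i)} is immediate from the definitions, while statement \emph{(ii)} rests entirely on Proposition~\ref{prop:noCycle}.

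For \emph{(i)}, I would argue directly: if $\tup{x} \in \iterEnsure{k}{v}$, the witnessing strategy $\sigma_1$ guarantees, against every $\sigma_2$, both $\dCost(\outcome{\sigma_1}{\sigma_2}{v}) \lesssim \tup{x}$ and $\lengthF{\outcome{\sigma_1}{\sigma_2}{v}} \leq k \leq k+1$, so the same strategy witnesses $\tup{x} \in \iterEnsure{k+1}{v}$. This monotonicity already gives $\iterEnsure{|V|}{v} \subseteq \iterEnsure{|V|+\ell}{v}$ for all $\ell$, so for \emph{(ii)} it suffices to prove the reverse inclusion $\iterEnsure{|V|+\ell}{v} \subseteq \iterEnsure{|V|}{v}$ for every $v$ and every $\ell \in \N$; then $k^* = |V|$ witnesses the claim. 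Note also that $\tup{\infty}$ lies in both initial sets $\iter{0}{v}$, hence, by \emph{(i)}, $\tup{\infty} \in \iterEnsure{k}{v}$ for every $k$ and $v$.

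The key auxiliary step is a structural remark: for all $k$ and $v$, each element of $\minimal(\iterEnsure{k}{v})$ is either finite (lies in $\N^\di$) or equals $\tup{\infty}$, with no mixed profile in between. Indeed, if some component $x_i$ of $\tup{x} \in \iterEnsure{k}{v}$ is finite, then the witnessing strategy forces the target to be reached within $k$ steps against every $\sigma_2$, so along every consistent play all components of $\dCost$ are bounded by $k \cdot \maxW$; the profile $\tup{y}$ obtained from $\tup{x}$ by replacing every $\infty$-entry with $k \cdot \maxW$ is therefore also ensured and satisfies $\tup{y} \lesssim \tup{x}$ (for both orders). If $\tup{x}$ is minimal this forces $\tup{y} = \tup{x}$, i.e.\ $\tup{x} \in \N^\di$. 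Consequently $\minimal(\iterEnsure{k}{v}) \subseteq \N^\di \cup \{\tup{\infty}\}$, and $\tup{\infty}$ is minimal only when $\iterEnsure{k}{v} = \{\tup{\infty}\}$.

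With this in hand I would establish $\iterEnsure{|V|+\ell}{v} \subseteq \iterEnsure{|V|}{v}$ by checking that every minimal element of the left-hand side belongs to the right-hand side, both sets being upward closed. If $\tup{\infty}$ is that minimal element, then $\iterEnsure{|V|+\ell}{v} = \{\tup{\infty}\}$, and since $\tup{\infty} \in \iterEnsure{|V|}{v} \subseteq \iterEnsure{|V|+\ell}{v}$, both sets coincide. If instead $\tup{x}$ is a finite minimal element, then as above every play consistent with the witnessing strategy reaches the target within $|V|+\ell$ steps; the truncation is thus immaterial and the strategy ensures $\dCost(\outcome{\sigma_1}{\sigma_2}{v}) \lesssim \tup{x}$ with no length restriction, i.e.\ $\tup{x} \in \ensure{v}$. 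Proposition~\ref{prop:noCycle} then supplies a strategy $\sigma'_1$ still ensuring $\tup{x}$ but with $\lengthF{\outcome{\sigma'_1}{\sigma_2}{v}} \leq |V|$ against every $\sigma_2$, which is exactly $\tup{x} \in \iterEnsure{|V|}{v}$. This closes the reverse inclusion and hence \emph{(ii)}.

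The step I expect to be the main obstacle is the reduction from arbitrary profiles to finite ones: one must argue carefully that a profile with even a single finite component forces the target to be reached on \emph{every} outcome, so that the truncated cost coincides with the true cost and Proposition~\ref{prop:noCycle} becomes applicable, while the all-infinite profile $\tup{\infty}$ is handled by the separate degenerate case. The remaining bookkeeping --- monotonicity and the passage between a set and its minimal elements via upward closure --- is routine.
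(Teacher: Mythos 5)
Your overall route is the same as the paper's: the paper disposes of this proposition with a one-line remark that \emph{(i)} holds by the definition of $\iterEnsure{k}{v}$ and \emph{(ii)} by Proposition~\ref{prop:noCycle}, and your proof is exactly that argument with the details filled in — the same witnessing strategy gives monotonicity for \emph{(i)}, and for \emph{(ii)} every element of $\iterEnsure{|V|+\ell}{v}$ is traced back to $\iterEnsure{|V|}{v}$ by splitting into the degenerate profile $\tup{\infty}$ and finite profiles, to which Proposition~\ref{prop:noCycle} applies because a profile with a finite component forces the target to be reached on every consistent outcome, making the length truncation immaterial. That structure is sound, including the well-foundedness argument hidden in ``check the minimal elements of two upward closed sets.''

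One step is false as written, though: in the lexicographic case, the profile $\tup{y}$ obtained from a mixed ensured profile $\tup{x}$ by replacing every $\infty$-entry with $k\cdot\maxW$ need not be ensured. Take $\di = 2$ and $\tup{x} = (\infty,0)$, which is ensured exactly when the strategy forces the target within $k$ steps, and suppose some consistent play $\rho$ reaches the target at step $k$ with every edge carrying weight $\maxW$ on the first component and with some positive cost $c$ on the second; then $\dCost(\rho) = (k\cdot\maxW, c) \leqL \tup{x}$ holds (the first component is finite), but $\dCost(\rho) \leqL \tup{y} = (k\cdot\maxW, 0)$ fails, since the first components tie and the comparison falls to the second, where $c > 0$. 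The repair is immediate: replace the $\infty$-entries by $k\cdot\maxW + 1$, or any strict upper bound on the achievable coordinates; then the comparison is strict at the first replaced coordinate, so $\tup{y}$ is ensured and $\tup{y} \strictLessL \tup{x}$, and your auxiliary claim that minimal elements lie in $\N^\di \cup \{\tup{\infty}\}$ holds for both orders (the claim itself was true, only the constant was off). A second, cosmetic point: $\tup{\infty}$ does not lie in $\iter{0}{v}$ when $v \in \targetSet$, since that set is $\{\tup{0}\}$; what you actually need — and what holds under the paper's intended reading of the constraint $\lengthF{\rho} \leq k$ as a truncation of the cost at $k$ steps — is that $\tup{\infty} \in \iterEnsure{k}{v}$ for every $k$ and $v$, because any strategy ensures it trivially.
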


Properties stated in Proposition~\ref{prop:fixPointReach} hold by definition of $\iterEnsure{k}{v}$ and Proposition~\ref{prop:noCycle}. Moreover, the step $k^*$ is a particular step of the algorithm  that we call the \emph{fixpoint} of the algorithm. Notice that even if the fixpoint is reached at step $k^*$, the algorithm needs one more step in order to check that the fixpoint is reached. In the remaining part of this document, we write $\iterEnsure{*}{v}$ (resp. $\iter{*}{v}$) instead of $\iterEnsure{k^*}{v}$ (resp. $\iter{k^*}{v}$).

\subsubsection{Correctness}
\label{section:correctCompute}
The fixpoint algorithm (Algorithm~\ref{algo:fixPoint}) exactly computes the sets $\minimal(\ensure{v})$ for all $v \in V$, \IE for all $v \in V$, $\minimal(\ensure{v}) = \iter{*}{v}$. This is a direct consequence of Proposition~\ref{prop:ensureI}.

\newcommand{\restatePropEnsureI}{For all $k \in \N$ and all $v \in V$,
$ \minimal(\iterEnsure{k}{v}) = \iter{k}{v}.$}

\begin{proposition}
\label{prop:ensureI}
\restatePropEnsureI
\end{proposition}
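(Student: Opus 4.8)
The plan is to prove the identity $\minimal(\iterEnsure{k}{v}) = \iter{k}{v}$ by induction on $k \in \N$, simultaneously for all $v \in V$. Throughout I would use the observation stated just before Algorithm~\ref{algo:fixPoint} that each $\iterEnsure{k}{v}$ is upward closed; since both the lexicographic and the componentwise order are well-founded on $\NInf^\di$, every element of an upward-closed set dominates a minimal one, so $\iterEnsure{k}{v} = \uparrow\minimal(\iterEnsure{k}{v})$. As $\iter{k}{v}$ is by construction an antichain (it is the image of a $\minimal(\cdot)$ operation), proving $\iterEnsure{k}{v} = \uparrow\iter{k}{v}$ is equivalent to the claim, and this is the reformulation I would carry through the induction. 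For the base case $k=0$: if $v \in \targetSet$ the target is reached immediately, so $\dCost(\outcome{\sigma_1}{\sigma_2}{v}) = \tup{0}$ and $\lengthF{\cdot} = 0$ for every pair of strategies, giving $\iterEnsure{0}{v} = \NInf^\di = \uparrow\{\tup{0}\} = \uparrow\iter{0}{v}$; if $v \notin \targetSet$ no play reaches $\targetSet$ within $0$ steps, so the only ensurable profile is $\tup{\infty}$ and $\iterEnsure{0}{v} = \{\tup{\infty}\} = \uparrow\iter{0}{v}$, matching the initialization.

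The heart of the argument is a one-step decomposition lemma that mirrors the body of the \textbf{repeat}-loop: for $v \notin \targetSet$,
\begin{align*}
v \in V_1 &:\quad \iterEnsure{k+1}{v} = \bigcup_{v' \in \successor(v)} \bigl(\iterEnsure{k}{v'} + \dWeight(v,v')\bigr),\\
v \in V_2 &:\quad \iterEnsure{k+1}{v} = \bigcap_{v' \in \successor(v)} \bigl(\iterEnsure{k}{v'} + \dWeight(v,v')\bigr),
\end{align*}
while for $v \in \targetSet$ one has $\iterEnsure{k+1}{v} = \NInf^\di$. The intuition is the standard one for reachability: from a vertex he owns, $\playerOne$ chooses the most favourable successor (hence a union), whereas from a vertex owned by $\playerTwo$ he must be ready for every successor the opponent may pick (hence an intersection). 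The two shifts are that a profile ensured from $v'$ in $k$ steps becomes, after prefixing the edge $(v,v')$, a profile ensured from $v$ in $k+1$ steps with $\dWeight(v,v')$ added to every component, while the first-occurrence index increases by one. The decisive point, which I would prove carefully, is that strategies split along the first move: after $v \to v'$ the histories $v\,v'$ are pairwise distinct, so $\playerOne$'s continuation may be chosen independently for each $v'$; this is what lets the $V_2$ case reduce to the intersection over successors, and conversely lets an ensuring strategy from $v$ be reassembled from per-successor ensuring strategies.

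Granting the lemma, the inductive step is purely algebraic. By the induction hypothesis $\iterEnsure{k}{v'} = \uparrow\iter{k}{v'}$, and since $\dWeight(v,v') \in \N^\di$ is finite, translation commutes with upward closure, so $\uparrow\iter{k}{v'} + \dWeight(v,v') = \uparrow\bigl(\iter{k}{v'} + \dWeight(v,v')\bigr)$. Substituting into the decomposition shows that $\iterEnsure{k+1}{v}$ is exactly the union (for $v \in V_1$) or the intersection (for $v \in V_2$) of the sets $\uparrow\iter{k}{v'} + \dWeight(v,v')$ appearing on Lines~\ref{lineAlgo:union} and~\ref{lineAlgo:inter}; being a union or intersection of upward-closed sets it is again upward closed, and taking $\minimal(\cdot)$ of both sides yields precisely $\iter{k+1}{v}$, while the target case gives $\minimal(\NInf^\di) = \{\tup{0}\} = \iter{k+1}{v}$.

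I expect the main obstacle to be the one-step decomposition itself, and within it the two inclusions for the $V_2$ (opponent) case together with the careful treatment of the length constraint and of the $\tup{\infty}$/never-reached plays: one must check that capping at $k+1$ steps from $v$ corresponds exactly to capping at $k$ steps from each successor, so that the $\lengthF{\cdot} \leq k$ bookkeeping stays consistent and $\tup{\infty}$ sits in every set as the upward-closure top. The reassembly of an ensuring strategy from independent per-branch strategies is conceptually easy but needs to be written with care to respect the definition of a strategy as a function on histories.
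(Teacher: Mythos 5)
Your plan coincides with the paper's own proof: the paper likewise proceeds by induction on $k$, carries the induction hypothesis in the upward-closure form $\iterEnsure{k}{v} = \uparrow\iter{k}{v}$, and establishes exactly your one-step decomposition (union over successors at $V_1$ vertices, intersection at $V_2$ vertices, target vertices handled separately) by restricting the ensuring strategy after its first move for one inclusion and reassembling per-successor ensuring strategies into a strategy from $v$ for the other, with the same $\lengthF{\cdot}$ bookkeeping you describe. The only cosmetic difference is that the paper writes the decomposition with $\uparrow\iter{k}{v'} + \dWeight(v,v')$ where you write $\iterEnsure{k}{v'} + \dWeight(v,v')$; these are equal by the induction hypothesis.
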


\subsection{Time Complexity}
\label{section:complexity}
In this section we provide the time complexity of the fixpoint algorithm. The algorithm runs in polynomial time for the lexicographic order  and in exponential time for the componentwise order. In this latter case, if $\di$ is fixed, the algorithm is pseudo-polynomial, \IE  polynomial if the weights are encoded in unary. 

\newcommand{\restateThmComplexityLexico}{If $\lesssim$ is the lexicographic order, the fixpoint algorihtm runs in time polynomial in $|V|$ and $\di$.}

\begin{theorem}
\label{thm:complexityLexico}
\restateThmComplexityLexico
\end{theorem}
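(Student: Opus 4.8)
The plan is to exploit the fact that $\leqL$ is a \emph{total} order, which forces every set manipulated by the algorithm to remain a singleton, so that no combinatorial blow-up can occur. First I would record the structural observation that for a total order every nonempty subset of $\NInf^\di$ has exactly one minimal element. Combined with the correctness statement (Proposition~\ref{prop:ensureI}), which gives $\iter{k}{v} = \minimal(\iterEnsure{k}{v})$, this shows that for the lexicographic order every $\iter{k}{v}$ is either $\{\tup{\infty}\}$ or a single finite cost profile. Hence throughout the execution the algorithm stores exactly one tuple per vertex, and the sizes of all the sets appearing in Lines~\ref{lineAlgo:union} and~\ref{lineAlgo:inter} are controlled by the out-degree of $v$ alone.

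Next I would show that the two non-trivial updates reduce to elementary $\min/\max$ computations. For a total order, the upward closure $\uparrow\{\tup{a}\}$ of a single tuple is the ``ray'' $\{\tup{x} \mid \tup{a} \leqL \tup{x}\}$. Consequently, for $v \in V_1$ the union $\bigcup_{v' \in \successor(v)} \uparrow \iter{k}{v'} + \dWeight(v,v')$ is again a ray, whose unique minimal element is the $\leqL$-\emph{minimum} of the at most $|\successor(v)|$ tuples $\iter{k}{v'} + \dWeight(v,v')$; and for $v \in V_2$ the intersection $\bigcap_{v' \in \successor(v)} \uparrow \iter{k}{v'} + \dWeight(v,v')$ is the ray whose unique minimal element is their $\leqL$-\emph{maximum}. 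In both cases computing $\iter{k+1}{v}$ amounts to forming at most $|\successor(v)| \le |V|$ sums of $\di$-tuples and selecting a $\leqL$-extremum, i.e. $O(|V|)$ lexicographic comparisons, each costing $O(\di)$. The strategy bookkeeping on Lines~\ref{lineAlgo:stratStart}--\ref{lineAlgo:stratEnd} concerns only the single tuple retained and adds $O(\di)$ per vertex.

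It then remains to bound the bit-size of the integers handled and to count the iterations. Since $\iter{k}{v}$ encodes what $\playerOne$ can ensure within $k$ steps and $k \le k^* \le |V|$ at the fixpoint, every finite component of a stored tuple is at most $|V|\cdot\maxW$ (a shortest finite witness has length at most $|V|$, by Proposition~\ref{prop:noCycle}); thus each number has polynomially many bits, so additions and comparisons stay polynomial. By Proposition~\ref{prop:algoTerminatesSteps} the outer loop runs at most $|V|+1$ times, the inner loop ranges over the $|V|$ vertices, and each vertex costs $O(|V|\cdot\di)$ arithmetic operations on polynomially bounded integers. Multiplying, the whole algorithm runs in time $O(|V|^3\,\di)$ up to the polynomial cost of integer arithmetic, which is polynomial in $|V|$ and $\di$.

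The crux — and the single place where totality of $\leqL$ is indispensable — is the singleton property of the first paragraph: it is exactly what keeps the sets $\iter{k}{v}$ from growing, and its failure in the componentwise case is what will later make that setting exponential. Once this is in hand, everything reduces to the routine $\min/\max$ and counting arguments sketched above, so I do not expect any genuine difficulty beyond making the ``ray'' description of $\uparrow$ precise and checking that $\tup{\infty}$ is handled consistently as an absorbing element in the sums and comparisons.
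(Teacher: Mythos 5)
Your proposal is correct and follows essentially the same route as the paper: totality of $\leqL$ makes every $\iter{k}{v}$ a singleton, so Line~\ref{lineAlgo:union} reduces to a $\leqL$-minimum and Line~\ref{lineAlgo:inter} to a $\leqL$-maximum (the paper's Propositions~\ref{prop:unionIsAMinLexico} and~\ref{prop:interIsAMaxLexico}, via the ray identity $\uparrow\{\tup{x}\}\cap\uparrow\{\tup{y}\}=\uparrow\{\max_{\leqL}\{\tup{x},\tup{y}\}\}$), and combining with the $|V|+1$ bound on iterations from Proposition~\ref{prop:algoTerminatesSteps} yields the same $\bigO(|V|^3\cdot\di)$ count. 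Your extra remark bounding stored components by $|V|\cdot\maxW$ to control the cost of integer arithmetic is a detail the paper leaves implicit, but it does not change the argument.
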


\newcommand{\restateThmComplexityComponent}{
If $\lesssim$ is the componentwise order, the fixpoint algorithm runs in time polynomial in $\maxW$ and $|V|$ and exponential in $\di$.
}

\begin{theorem}
\label{thm:complexityComponent}
\restateThmComplexityComponent
\end{theorem}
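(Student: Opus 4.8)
The plan is to bound the running time as the product of three quantities: the number of iterations of the outer loop, the number of vertices treated per iteration, and the cost of the set operations performed at each vertex. The first two are immediate: by Proposition~\ref{prop:algoTerminatesSteps} the algorithm performs at most $|V|+1$ iterations, and each iteration runs the inner loop once over the $|V|$ vertices. The whole difficulty is therefore to control the cost of the union/intersection-and-minimization steps (Lines~\ref{lineAlgo:union} and~\ref{lineAlgo:inter}), and this reduces to bounding the size of the antichains $\iter{k}{v}$ that the algorithm stores.

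First I would establish a bound on the magnitude of the cost profiles occurring in the computation. By Proposition~\ref{prop:noCycle} (together with the fact that the fixpoint is reached within $|V|$ steps), every finite minimal element of $\iterEnsure{k}{v}$ is ensured by reaching $\targetSet$ within at most $|V|$ edges; hence each of its coordinates is at most $M := |V|\cdot\maxW$. Consequently every finite element of $\iter{k}{v}$ lives in the grid $\{0,1,\ldots,M\}^\di$, the only other possibility being $\iter{k}{v}=\{\tup{\infty}\}$, which cannot mix with finite elements since $\tup{\infty}$ would then fail to be minimal. The key combinatorial step is then the antichain bound. Since $\iter{k}{v}$ is, by construction, a set of pairwise $\leqC$-incomparable elements, projecting onto the first $\di-1$ coordinates is injective on $\iter{k}{v}$: two elements agreeing on those coordinates would differ only in the last one and hence be comparable. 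Therefore $|\iter{k}{v}| \leq (M+1)^{\di-1} =: S$, which is polynomial in $|V|$ and $\maxW$ and exponential in $\di$.

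It remains to verify that each set operation costs at most a polynomial in $S$, $|V|$ and $\di$, each arithmetic step acting on numbers of $\bigO(\log M)$ bits. For a vertex $v\in V_1$ the union in Line~\ref{lineAlgo:union} ranges over at most $|V|$ successors and produces at most $|V|\cdot S$ candidate profiles, whose minimal elements are extracted by $\bigO((|V|S)^2\di)$ pairwise comparisons. For a vertex $v\in V_2$ the intersection in Line~\ref{lineAlgo:inter} must be computed \emph{incrementally} to avoid an exponential blowup: the minimal elements of $\uparrow A \cap \uparrow B$ are the minimal componentwise maxima of the pairs in $A\times B$, so intersecting two antichains of size $\leq S$ produces $\leq S^2$ candidates that minimize back to an antichain of size $\leq S$; folding this operation over the at most $|V|$ successors stays polynomial in $S$, $|V|$ and $\di$. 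The fixpoint test ending each iteration is a comparison of antichains of size $\leq S$, also polynomial.

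Multiplying the three quantities yields a bound of the form $\mathrm{poly}(|V|,\di)\cdot S^{\bigO(1)}$; since $S=(|V|\maxW+1)^{\di-1}$ is polynomial in $|V|$ and $\maxW$ for fixed $\di$ and exponential in $\di$, this is exactly the claimed complexity (and, with $\di$ fixed, the pseudo-polynomial bound). The main obstacle is the incremental-intersection argument for $V_2$-vertices: one must ensure that interleaving minimization with the pairwise maxima never lets an intermediate antichain exceed $S$, which is precisely what the projection bound guarantees; a naive intersection of the $|V|$ upward-closed sets at once would instead give an $S^{|V|}$ blowup and break the polynomial dependence on $|V|$.
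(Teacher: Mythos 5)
Your proof is correct and follows essentially the same route as the paper: both bound the number of iterations by $|V|+1$, bound the stored antichains polynomially in $|V|$ and $\maxW$ but exponentially in $\di$ (the paper uses the grid bound $(\maxW\cdot|V|)^\di$, you sharpen it slightly to $(\maxW\cdot|V|+1)^{\di-1}$ via the projection argument), and\,---\,crucially\,---\,compute the intersection at $\playerTwo$-vertices incrementally, re-minimizing after each pairwise intersection to avoid the $S^{|V|}$ blowup, exactly as in the paper's Proposition~\ref{prop:complexityInterComponent}. The only difference is presentational: the paper manipulates upward-closed sets through the generator/formula representation of~\cite{DelzannoR00}, whereas you perform the same union, translation, intersection and minimization operations directly on antichains with elementary pairwise comparisons.
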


Theorem~\ref{thm:complexityLexico} relies on the fact that Line~\ref{lineAlgo:union} and Line~\ref{lineAlgo:inter} can be performed in polynomial time.  Indeed, in the lexicographic case, for all $k\in \N$ and all $v \in V$, $\iter{k}{v}$ is a singleton. Thus these operations amounts to computing  a minimum or a maximum between at most $|V|$ values. Theorem~\ref{thm:complexityComponent} can be obtained thanks to representations of upward closed sets and operations on them provided in~\cite{DelzannoR00}.

\subsection{Synthesis of Lexico-optimal and Pareto-optimal Strategies}
\label{section:correctStrat}

To this point, we have only explained the computation of the ensured values and we have not yet explained how lexico and Pareto-optimal strategies are recovered from the algorithm. This is the reason of the presence of Lines~\ref{lineAlgo:stratStart} to~\ref{lineAlgo:stratEnd} in Algorithm~\ref{algo:fixPoint}. Notice that in Line~\ref{lineAlgo:stratEnd}, we are allowed to assume that $\tup{x'}$ is in $ \iter{k}{v'}$ instead of $\uparrow \iter{k}{v'}$ because for all $k \in \N$, for all $v \in V_1 \backslash \targetSet$, $\iter{k+1}{v} = \minimal \left(\bigcup_{v' \in \successor(v)} \iter{k}{v} + \dWeight(v,v') \right).$

Roughly speaking, the idea behind the functions $f^k_v$ is the following. At each step $k \geq 1$ of the algorithm and for all vertices $v \in V_1 \backslash \targetSet$, we have computed the set $\iter{k}{v}$. At that point, we know that given $\tup{x} \in \iter{k}{v}$, $\playerOne$ can ensure a cost profile of $\tup{x}$ from $v$ in at most $k$ steps. The role of the function $f^{k}_v$ is to keep in memory which next vertex, $v' \in \successor(v)$, $\playerOne$ should choose and what is the cost profile $\tup{x'} = \tup{x} - \dWeight(v,v')$ which is ensured from $v'$ in at most $k-1$ steps. If different such successors exist one of them is chosen arbitrarily. 

In other words, $f^{k}_v$ provides information about how $\playerOne$ should behave locally in $v$ if he wants to ensure one of the cost profile $\tup{x} \in \iter{k}{v}$ from $v$ in at most $k$ steps. In this section, we explain how, from this local information, we recover a global strategy which is $\tup{x}$-Pareto optimal from $v$ (resp. lexico-optimal from $v$) for some $v \in V$ and some $\tup{x} \in \iter{*}{v}\backslash \{ \tup{\infty}\}$, if $\lesssim$ is the componentwise order (resp. the lexicographic order). 

 We introduce some additional notations. Since for all $k \in \N$ and all $v \in V$, $f^k_v: \iter{k}{v} \longrightarrow V \times \NInf^\di$, if $(v', \tup{x'}) = f^k_v(\tup{x})$ for some $\tup{x} \in \iter{k}{v}$ then, we write $f^k_v(\tup{x})[1] = v'$ and $f^k_v(\tup{x})[2] = \tup{x'}$. Moreover, for all $v \in V$, we write $f^*_v$ instead of $f^{k^*}_v$. Finally, if $X$ is a set of cost profiles, $\min_{\leqL}(X) = \{ \tup{x} \in X \mid \forall \tup{y} \in X,\,  (\tup{y} \leqL \tup{x} \implies \tup{y} = \tup{x}) \}$.

For all $u \in V$ and all $\tup{c} \in \iter{*}{u}\backslash \{ \tup{\infty}\}$, we define a strategy $\sigma^*_1 \in \stratSet{1}{u}$. The aim of this strategy is to ensure $\tup{c}$ from $u$ by exploiting the functions $f^*_v$. The intuition is as follows. If the past history is $hv$ with $v \in V_1$, $\playerOne$ has to take into account the accumulated partial costs $\dCost(hv)$ up to $v$ in order the make adequately his next choice to ensure $\tup{c}$ at the end of the play. For this reason, he selects some $\tup{x} \in \iter{*}{v}$ such that $\tup{x} \lesssim \tup{c} - \dCost(hv)$ and follows the next vertex dictated by $f^*_v(\tup{x})[1]$. 

\begin{definition}
\label{def:optiStrat}
Given $u \in V$ and $\tup{c} \in \iter{*}{u}\backslash \{ \tup{\infty} \}$, we define a strategy $\sigma^*_1 \in \stratSet{1}{u}$ such that for all $hv \in \hist_1(u)$, let $\cover(hv) = \{ \tup{x'} \in \iter{*}{v} \mid \tup{x'} \lesssim \tup{c} - \dCost(hv)\, \wedge \, \tup{x'} \leqL \tup{c} - \dCost(hv) \}$,
$$\sigma^*_1(hv) = \begin{cases} 
v' & \text{ for some } v' \in \successor(v), \text{ if } \cover(hv) = \emptyset \\
f^*_v(\tup{x})[1] & \text{ where } \tup{x} = \min_{\leqL} \cover(hv), \text{ if } \cover(hv) \neq \emptyset \end{cases}.$$
\end{definition}

\begin{remark}
\label{rem:technicalRem}
For some technical issues, when we have to select a representative in a set of incomparable elements, the $\leqL$ order is used in the definitions of $\cover(hv)$ and of the strategy. Nevertheless, Definition~\ref{def:optiStrat} holds both for the lexicographic and the componentwise orders.
\end{remark}

For all $u \in V$ and $\tup{c} \in \iter{*}{u}\backslash \{ \tup{\infty} \}$, the strategy $\sigma^*_1$ defined in Definition~\ref{def:optiStrat} ensures $\tup{c}$ from $u$. In particular, $\sigma^*_1$ is lexico-optimal and $\tup{c}$-Pareto-optimal from $u$.

\newcommand{\restateThmOptiStrat}{Given $u \in V$ and $\tup{c} \in \iter{*}{u}\backslash \{ \tup{\infty} \}$, the strategy $\sigma^*_1\in \stratSet{1}{u}$ defined in Definition~\ref{def:optiStrat} is such that for all $\sigma_2 \in \stratSet{2}{u}$,
    $ \dCost(\outcome{\sigma^*_1}{\sigma_2}{u}) \lesssim \tup{c}.$}

\begin{theorem}
    \label{thm:optiStrat}
    \restateThmOptiStrat
\end{theorem}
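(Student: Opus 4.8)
The plan is to prove the inequality by a single induction on the length of the prefixes of the outcome, maintaining a \emph{budget} invariant along the play. Fix $\sigma_2 \in \stratSet{2}{u}$ and write $\rho = \outcome{\sigma^*_1}{\sigma_2}{u}$ with prefixes $\rho_{\leq j} = v_0 \cdots v_j$ (where $v_0 = u$). The invariant I would carry is the following: as long as $\targetSet$ has not yet been visited, there is a witness $\tup{w}_j \in \iter{*}{v_j}$ with $\dCost(\rho_{\leq j}) + \tup{w}_j \lesssim \tup{c}$, equivalently $\tup{w}_j \lesssim \tup{c} - \dCost(\rho_{\leq j})$. The base case $j=0$ holds with $\tup{w}_0 = \tup{c}$, since $\tup{c} \in \iter{*}{u}$ and $\dCost(u) = \tup{0}$.

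For the inductive step at a not-yet-winning prefix ending in $v_j \notin \targetSet$, I distinguish the owner of $v_j$. If $v_j \in V_1$, the witness $\tup{w}_j$ already certifies $\cover(\rho_{\leq j}) \neq \emptyset$: indeed $\tup{w}_j \lesssim \tup{c} - \dCost(\rho_{\leq j})$, and since $\leqC$ implies $\leqL$ (an immediate consequence of the definitions), the second clause $\tup{w}_j \leqL \tup{c} - \dCost(\rho_{\leq j})$ holds as well, so $\tup{w}_j \in \cover(\rho_{\leq j})$. Hence $\sigma^*_1$ plays $v_{j+1} = f^*_{v_j}(\tup{x})[1]$ for $\tup{x} = \min_{\leqL}\cover(\rho_{\leq j})$, and by the defining property of $f^*_{v_j}$ we have $\tup{x'} := f^*_{v_j}(\tup{x})[2] \in \iter{*}{v_{j+1}}$ with $\tup{x} = \tup{x'} + \dWeight(v_j,v_{j+1})$. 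Taking $\tup{w}_{j+1} = \tup{x'}$ and using that $\lesssim$ is compatible with addition, $\dCost(\rho_{\leq j+1}) + \tup{x'} = \dCost(\rho_{\leq j}) + \tup{x} \lesssim \tup{c}$, which is the invariant at $j+1$. If $v_j \in V_2$, then $\sigma_2$ chooses some $v_{j+1} \in \successor(v_j)$, and I use the intersection rule of the algorithm: at the fixpoint $\tup{w}_j \in \iterEnsure{*}{v_j} = \bigcap_{v' \in \successor(v_j)} \uparrow(\iter{*}{v'} + \dWeight(v_j,v'))$, so in particular for $v' = v_{j+1}$ there is $\tup{z} \in \iter{*}{v_{j+1}}$ with $\tup{z} + \dWeight(v_j,v_{j+1}) \lesssim \tup{w}_j$. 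Setting $\tup{w}_{j+1} = \tup{z}$ and again adding $\dCost(\rho_{\leq j})$, the invariant is preserved. In both cases, should $v_{j+1} \in \targetSet$, the invariant forces $\tup{w}_{j+1} = \tup{0}$ (as $\iter{*}{v_{j+1}} = \{\tup{0}\}$), whence $\dCost(\rho_{\leq j+1}) \lesssim \tup{c}$, i.e. the final cost profile is bounded by $\tup{c}$, which is exactly the statement.

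It remains to guarantee that $\targetSet$ is actually reached, since otherwise $\dCost(\rho) = \tup{\infty} \not\lesssim \tup{c}$. For this I would refine the witnesses with their first-occurrence rank: for $\tup{w}_j$ let $n_j$ be the least $k$ with $\tup{w}_j \in \iter{k}{v_j}$. In the $V_1$ case $\tup{x'}$ is produced by $f^*_{v_j} = f^{n}_{v_j}$ at the step $n$ where the strategy's witness first entered, so the associated next witness lies in $\iter{n-1}{v_{j+1}}$, and in the $V_2$ case the element $\tup{z}$ can be extracted from $\iter{n_j - 1}{v_{j+1}}$; either way one obtains a witness of strictly smaller rank. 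Since ranks are non-negative integers bounded by $k^* \leq |V|$, and rank $0$ is attained only at a target vertex (as $\iter{0}{v} = \{\tup{\infty}\}$ off $\targetSet$ while the invariant forces $\tup{w}_j \neq \tup{\infty}$ because $\tup{c}$ is finite), the play must enter $\targetSet$, giving $\lengthF{\rho} \leq k^*$.

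I expect the delicate point to be reconciling this rank-decrease with the strategy's re-selection of $\min_{\leqL}\cover$ at $\playerOne$-vertices: the witness actually used by $\sigma^*_1$ at $v_{j+1}$ is the $\leqL$-least cover element, which need not coincide with the rank-decreasing witness $\tup{x'}$ carried by the induction, and in the presence of zero-weight edges a careless argument could in principle allow the rank to climb back up. The clean way around this is to observe that $f^*_v$ always realizes a witness along a \emph{minimal-length} certificate, so following it never traverses a zero-gain cycle, and to combine this with the fact that, by the cost invariant, $\dCost(\rho_{\leq j}) \leqC \tup{c}$ stays inside the finite box $\prod_i \{0,\dots,c_i\}$; together these confine the play to finitely many configurations and drive it into $\targetSet$. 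Making this interaction precise — rather than the budget bookkeeping, which is routine — is the crux of the proof.
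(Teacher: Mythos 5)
Your budget invariant establishes the cost bound correctly, and it is essentially the paper's own mechanism (Proposition~\ref{prop:labeling}, phrased there as a labeling of the strategy tree rather than along a single outcome): the carried witness certifies $\cover(\rho_{\leq j}) \neq \emptyset$ at $\playerOne$-vertices because $\leqC$ implies $\leqL$, the step at $V_1$ is Lemma~\ref{lemma:correctStratMin1}, and the step at $V_2$ is Corollary~\ref{cor:propOnFixpoint}. Be aware, though, that even the ``defining property'' you invoke for $f^*_v$ --- that $f^*_v(\tup{x})[2] \in \iter{*}{v'}$ rather than merely $\iter{n-1}{v'}$ for the step $n$ at which the algorithm recorded the pair --- is not automatic: the paper needs the monotonicity statement (Lemma~\ref{lemma:monotonicity}) to promote membership at step $n-1$ to membership at the fixpoint.

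The genuine gap is termination, and while your diagnosis of the difficulty is accurate, your proposed fix does not close it. First, your claim that at a $\playerTwo$-vertex ``either way one obtains a witness of strictly smaller rank'' is false if the witness must remain in $\iter{*}{v_{j+1}}$, as your invariant requires: the element $\tup{z} \in \iter{n_j-1}{v_{j+1}}$ need not be minimal at the fixpoint, and a fixpoint-minimal $\tup{z'} \lesssim \tup{z}$ need not have smaller rank. What actually holds is only a disjunction (the paper's Lemma~\ref{lemma:correctStratMax}): either the witness strictly decreases in $\leqL$, or its first-occurrence rank strictly decreases. Second, the re-selection problem at $\playerOne$-vertices that you flag is real, and ``minimal-length certificates'' plus confinement to the finite box $\prod_i \{0,\dots,c_i\}$ is not a proof: a finite configuration space does not preclude an infinite play, it only shows that an infinite play must eventually cycle through configurations along zero-cost edges --- and since $\sigma^*_1$'s choice depends only on the configuration $(v,\dCost(hv))$, ruling out exactly such zero-cost cycling \emph{is} the problem, not a consequence of finiteness. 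The paper's resolution is a two-phase well-founded argument that your proposal is missing: \emph{(i)} the witness values along any consistent play are $\leqL$-non-increasing (Invariant~3 of Proposition~\ref{prop:labeling}; re-selecting $\min_{\leqL}\cover$ can only move $\leqL$-downwards), and since $\leqL$ is a well-order on $\N^\di$ bounded below by $\tup{0}$, they eventually stabilize; \emph{(ii)} once the value is frozen, the value-decrease branch of Lemma~\ref{lemma:correctStratMax} is unavailable, so the rank must drop at $\playerTwo$-vertices, and it always drops at $\playerOne$-vertices (Lemma~\ref{lemma:correctStratMin2}); a strictly decreasing sequence of naturals cannot be infinite, so every outcome reaches $\targetSet$ and the budget invariant then yields $\dCost(\outcome{\sigma^*_1}{\sigma_2}{u}) \lesssim \tup{c}$. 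Without this interleaving of the $\leqL$-decrease and the rank-decrease, your induction does not exclude that the play loops forever and has cost $\tup{\infty}$.
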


Although the strategy defined in Definition~\ref{def:optiStrat} is a lexico-optimal strategy from $u$, it requires finite-memory. However, for the lexicographic order, positional strategies are sufficient.

\newcommand{\restateStratOptiLexicoPosition}{ If $\lesssim$ is the lexicographic order, for $u \in V$ and $\tup{c} \in \iter{*}{u} \backslash \{ \tup{\infty} \}$, the strategy $\vartheta^*_1$ defined as: for all $hv \in \hist_1(u)$,
    $\vartheta^*_1(hv) = f^*_v(\tup{x})[1]$ where $\tup{x}$ is the unique cost profile in $\iter{*}{v}$, is a positional lexico-optimal strategy from $u$.}

\begin{proposition}
\label{prop:stratOptiLexicoPositional}
    \restateStratOptiLexicoPosition
\end{proposition}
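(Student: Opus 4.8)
The plan is to show that for the lexicographic order, the memoryless strategy $\vartheta^*_1$ that always selects the successor dictated by $f^*_v$ evaluated at the \emph{unique} cost profile in $\iter{*}{v}$ ensures $\tup{c}$ from $u$. The key simplification, which I would invoke first, is that for $\leqL$ every set $\iter{k}{v}$ is a singleton: this is exactly the fact used to prove Theorem~\ref{thm:complexityLexico}, since $\leqL$ is a total order and $\minimal$ of a set with respect to a total order is a single element. Consequently $\iter{*}{v} = \{ \upValue{v} \}$ for each $v$, the choice of representative in $\cover(hv)$ is vacuous, and the ``select some $\tup{x} = \min_{\leqL}\cover(hv)$'' clause in Definition~\ref{def:optiStrat} degenerates. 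The heart of the argument is to compare $\vartheta^*_1$ with the finite-memory strategy $\sigma^*_1$ of Definition~\ref{def:optiStrat}, which is already known to ensure $\tup{c}$ by Theorem~\ref{thm:optiStrat}, and to prove that in the lexicographic case the two strategies make the same local choice at every relevant $\playerOne$-vertex, so that $\vartheta^*_1$ inherits the ensuring guarantee.

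The main step I would carry out is the following invariant. Fix a strategy $\sigma_2$ of $\playerTwo$ and let $\rho = \outcome{\vartheta^*_1}{\sigma_2}{u}$. I would show by induction along $\rho$ that at each prefix $hv$ with $v \in V_1 \setminus \targetSet$, the accumulated cost satisfies $\dCost(hv) + \upValue{v} \leqL \tup{c}$, equivalently $\upValue{v} \leqL \tup{c} - \dCost(hv)$. The base case $h = \varepsilon$, $v = u$ holds because $\upValue{u} = \tup{c}$ (here $\tup{c}$ is the unique element of $\iter{*}{u} \setminus \{\tup{\infty}\}$). For the inductive step at a $\playerOne$-vertex, the definition of $f^*_v$ guarantees that $f^*_v(\upValue{v}) = (v', \upValue{v'})$ with $\upValue{v} = \upValue{v'} + \dWeight(v,v')$, so moving to $v' = \vartheta^*_1(hv)$ preserves the invariant: $\dCost(hvv') + \upValue{v'} = \dCost(hv) + \dWeight(v,v') + \upValue{v'} = \dCost(hv) + \upValue{v} \leqL \tup{c}$. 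For a $\playerTwo$-vertex $v$, I would use that $\iter{*}{v}$ is computed by the $\bigcap$-rule at Line~\ref{lineAlgo:inter}, which forces $\upValue{v} \geqL \upValue{v'} + \dWeight(v,v')$ for \emph{every} successor $v'$; hence whatever $\sigma_2$ plays, the invariant is maintained. Because the fixpoint guarantees a reaching strategy of length at most $|V|$ (Proposition~\ref{prop:noCycle}, Proposition~\ref{prop:fixPointReach}) and the chosen cost profiles are finite, $\rho$ reaches $\targetSet$ at some index $\ell$, where $\upValue{\rho_\ell} = \tup{0}$; the invariant at that point reads $\dCost(\rho_{\leq \ell}) = \dCost(\rho) \leqL \tup{c}$, which is exactly the desired conclusion.

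The delicate point, and the one I expect to be the main obstacle, is ensuring that $\vartheta^*_1$ actually drives the play \emph{into} $\targetSet$ rather than stalling at $\tup{\infty}$-valued vertices or cycling. Positionality means $\vartheta^*_1$ cannot use the accumulated-cost bookkeeping that $\sigma^*_1$ relies on, so I must argue that following $f^*_v$ at the single value $\upValue{v}$ never visits a vertex $w$ with $\upValue{w} = \tup{\infty}$ (the invariant rules this out, since $\upValue{w} \leqL \tup{c} - \dCost(hw)$ and $\tup{c}$ is finite) and that no infinite non-reaching cycle is produced. The cleanest way to close this is a potential-decrease argument: along $\rho$ the finite quantity $\dCost(hv) + \upValue{v}$ is constant at $\playerOne$-moves and non-increasing at $\playerTwo$-moves in the lexicographic sense, while strictly the target-distance component must decrease; combined with the bounded-length reaching guarantee from Proposition~\ref{prop:noCycle}, this forbids an infinite play avoiding $\targetSet$. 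I would formalize this by noting that $\lengthF{\cdot}$ of the optimal witnesses strictly decreases when passing from $v$ to the $f^*_v$-chosen successor, using the fact established in Section~\ref{section:correctStrat} that $\tup{x'} \in \iter{*}{v'}$ is ensured in strictly fewer steps than $\tup{x} \in \iter{*}{v}$.

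Once reaching is secured, the lexicographic optimality is immediate: $\vartheta^*_1$ ensures $\tup{c} = \upValue{u}$, and since $\upValue{u}$ is by definition the $\leqL$-minimum of $\ensureL{u}$, the strategy is lexico-optimal from $u$. I would close by remarking that this contrasts with the componentwise case of Example~\ref{ex:memoIsRequired}, where the $\iter{*}{v}$ sets are genuinely non-singleton and the representative-selection in $\cover(hv)$ truly depends on the history, so positionality fails there—explaining why the singleton property of $\leqL$ is precisely what makes $\vartheta^*_1$ well-defined and memoryless.
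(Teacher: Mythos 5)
Your overall architecture coincides with the paper's. The paper proves Proposition~\ref{prop:stratOptiLexicoPositional} by constructing a labeling function on the strategy tree $\stratTree{\vartheta^*_1}$ (a lexicographic analogue of Proposition~\ref{prop:labeling}); by the singleton property of $\iter{*}{v}$ under $\leqL$ this labeling is forced to be $hv \mapsto \upValue{v}$, so its invariants are exactly your two facts: $\upValue{v} \leqL \tup{c} - \dCost(hv)$ along every consistent play, preserved at $\playerOne$-vertices via Lemma~\ref{lemma:correctStratMin1} ($\upValue{v} = \upValue{v'} + \dWeight(v,v')$ for the $f^*_v$-chosen successor) and at $\playerTwo$-vertices via Corollary~\ref{cor:propOnFixpoint} ($\upValue{v'} + \dWeight(v,v') \leqL \upValue{v}$ for every successor). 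Once every consistent play is shown to reach $\targetSet$, the invariant at the first target vertex gives $\dCost(\rho) \leqL \tup{c}$, as you say.

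The gap is in your termination argument, precisely at $\playerTwo$-moves. The only strict-decrease mechanism you invoke is that the first-occurrence index $\firstOcc{\tup{x}}{v}$ (the iteration at which $\tup{x}$ first enters $\iter{k}{v}$) strictly decreases when passing from $v$ to the $f^*_v$-chosen successor; that is Lemma~\ref{lemma:correctStratMin2}, and it applies only at $\playerOne$-vertices. At a $\playerTwo$-vertex the adversary may move to any successor $v'$, and nothing forces $\firstOcc{\upValue{v'}}{v'} < \firstOcc{\upValue{v}}{v}$: the fixpoint value at $v$ is determined by the worst successor, so $\playerTwo$ can move to a successor whose (smaller) value appeared much later in the iteration, increasing the index. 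Consequently ``potential non-increasing everywhere, index decreasing at $\playerOne$-moves'' does not forbid an infinite play; in particular it says nothing about a play that eventually stays in $\playerTwo$-vertices (e.g.\ a zero-weight cycle owned by $\playerTwo$ along which $\upValue{\cdot}$ would be constant). Your appeal to Proposition~\ref{prop:noCycle} cannot close this, because that proposition only asserts the existence of \emph{some} strategy reaching $\targetSet$ within $|V|$ steps; it says nothing about the plays produced by the specific positional strategy $\vartheta^*_1$. The missing idea is the paper's Lemma~\ref{lemma:correctStratMax}: at a $\playerTwo$-vertex, whenever $\upValue{v'} + \dWeight(v,v') \leqL \upValue{v}$, then either $\upValue{v'} \strictLessL \upValue{v}$ or $\firstOcc{\upValue{v'}}{v'} < \firstOcc{\upValue{v}}{v}$. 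With this dichotomy the argument closes as in Theorem~\ref{thm:optiStrat}: along a hypothetical infinite play the value sequence is non-increasing w.r.t.\ $\leqL$ and bounded below by $\tup{0}$, hence stabilizes (the lexicographic order on $\N^\di$ is well-founded); after stabilization every move, whether of $\playerOne$ or of $\playerTwo$, strictly decreases the index, which is impossible. Add this lemma (or an equivalent argument) and your proof is complete; without it, the reaching step genuinely fails.
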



\section{Constrained Existence}
\label{section:constrainedExistence}

Finally, we focus on the constrained existence problem (\CE problem). 

\begin{theorem}
\label{thm:constrainedResLexico}
If $\lesssim$ is the lexicographic order, the \CE problem is solved in $\Poly$.
\end{theorem}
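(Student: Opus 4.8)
The plan is to reduce the constrained existence problem for the lexicographic order to the already-established computation of the upper value. Recall that for the lexicographic order $\leqL$ is a total order, so the set of minimal elements of $\ensureL{v_0}$ is the singleton $\{\upValue{v_0}\}$. The key observation is that $\playerOne$ can ensure a given cost profile $\tup{x} \in \N^\di$ from $v_0$ if and only if $\upValue{v_0} \leqL \tup{x}$. Indeed, if some strategy ensures $\tup{x}$, then $\tup{x} \in \ensureL{v_0}$, hence $\upValue{v_0} \leqL \tup{x}$ by minimality; conversely, the lexico-optimal strategy $\vartheta^*_1$ of Proposition~\ref{prop:stratOptiLexicoPositional} ensures $\upValue{v_0}$, and since $\dCost(\outcome{\vartheta^*_1}{\sigma_2}{v_0}) \leqL \upValue{v_0} \leqL \tup{x}$ for every $\sigma_2$, the same strategy ensures $\tup{x}$ by transitivity of $\leqL$.

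Given this equivalence, the algorithm is immediate: run the fixpoint algorithm (Algorithm~\ref{algo:fixPoint}) with $\lesssim$ instantiated as $\leqL$ to compute $\iter{*}{v_0} = \{\upValue{v_0}\}$, then output \emph{yes} precisely when $\upValue{v_0} \leqL \tup{x}$, which is a single lexicographic comparison of two $\di$-tuples. The correctness follows from Proposition~\ref{prop:ensureI} (the algorithm indeed computes $\minimal(\ensureL{v_0})$, which here is $\{\upValue{v_0}\}$) together with the equivalence above.

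For the complexity claim, I would invoke Theorem~\ref{thm:complexityLexico}, which states that the fixpoint algorithm runs in time polynomial in $|V|$ and $\di$ under the lexicographic order. The final comparison $\upValue{v_0} \leqL \tup{x}$ costs $\bigO(\di)$ and so is absorbed into this bound. Hence the whole procedure is polynomial, placing the \CE problem in $\Poly$.

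I do not expect any genuine obstacle here, since all the heavy lifting has been done upstream: Proposition~\ref{prop:ensureI} supplies correctness of the fixpoint computation, Theorem~\ref{thm:complexityLexico} supplies the polynomial running time, and the totality of $\leqL$ (noted in the preliminaries) is what guarantees that a single threshold $\upValue{v_0}$ characterizes ensurability. The only point requiring a small amount of care is the converse direction of the equivalence, namely verifying that ensuring the minimal value $\upValue{v_0}$ automatically yields ensuring every $\leqL$-larger $\tup{x}$; this is precisely where transitivity of the total order $\leqL$ is used, and it is the reason the lexicographic case collapses to a clean threshold test rather than the more involved reasoning needed for the componentwise (Pareto) case.
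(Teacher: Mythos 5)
Your proof is correct and follows essentially the same route as the paper, which disposes of this theorem in one line: the \CE problem reduces to computing $\upValue{v_0}$ via the fixpoint algorithm (polynomial by Theorem~\ref{thm:complexityLexico}) and comparing it to $\tup{x}$ under $\leqL$. Your elaboration of the threshold equivalence (ensurability of $\tup{x}$ iff $\upValue{v_0} \leqL \tup{x}$, using totality of $\leqL$ and upward closure of the ensured set) is exactly the reasoning the paper leaves implicit in the word ``immediate.''
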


Theorem~\ref{thm:constrainedResLexico} is immediate since, in the lexicographic case, we can compute the upper value $\upValue{v_0}$ in polynomial time (Theorem~\ref{thm:complexityLexico}).

\begin{theorem}
\label{thm:constrainedResComponent}
    If $\lesssim$ is the componentwise order, the \CE problem is $\PSPACE$-complete.
\end{theorem}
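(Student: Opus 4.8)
The plan is to establish the two inclusions separately: membership of the \CE problem in $\PSPACE$ and $\PSPACE$-hardness, the latter being where the real work lies.

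For membership I would exploit the characterization $\PSPACE = \APTime$ and describe an alternating machine running in polynomial time. A configuration stores the current vertex $v$, the cost $\tup{c}$ accumulated so far, and a step counter. By Proposition~\ref{prop:noCycle}, $\playerOne$ can ensure $\tup{x}$ iff he can do so along plays reaching $\targetSet$ within $|V|$ steps; hence the counter may be capped at $|V|$ and the alternation depth is polynomial. Likewise each component of $\tup{c}$ need only be tracked up to $x_i+1$ (once a component exceeds $x_i$ it never decreases, as weights are non-negative, so a single overflow flag suffices), so a configuration has polynomial size. The machine branches existentially on successors at vertices of $V_1$ and universally at vertices of $V_2$, adds $\dWeight(v,v')$ to $\tup{c}$ at each move, and, upon first reaching a vertex of $\targetSet$, accepts iff $\tup{c} \leqC \tup{x}$ (rejecting if the counter exceeds $|V|$ before $\targetSet$ is met). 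This machine accepts exactly when $\playerOne$ can force reaching $\targetSet$ with accumulated cost $\leqC \tup{x}$, which by Proposition~\ref{prop:noCycle} is equivalent to a positive instance; since it runs in alternating polynomial time, the \CE problem lies in $\PSPACE$.

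For hardness I would reduce from \textsc{TQBF}. Given $\Phi = Q_1 x_1 \cdots Q_n x_n\, \phi$ with $\phi = C_1 \wedge \cdots \wedge C_m$ in CNF, I build a game on $\di = m$ dimensions whose arena is a line of variable gadgets $u_1, \ldots, u_n$ followed by the single target $\targetSet = \{f\}$. Vertex $u_k$ belongs to $\playerOne$ if $Q_k = \exists$ and to $\playerTwo$ if $Q_k = \forall$, and has a ``true'' and a ``false'' outgoing edge leading to $u_{k+1}$ (and $u_n$ to $f$). The weight of a branch of $u_k$ in dimension $j$ is $0$ if the corresponding literal of $x_k$ occurs in $C_j$ and $1$ otherwise, so that along the unique play determined by the players' choices the total cost in dimension $j$ equals $n$ minus the number of variables whose chosen value satisfies a literal of $C_j$. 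Consequently $C_j$ is satisfied iff this cost is at most $n-1$, and I set the threshold to $\tup{x} = (n-1, \ldots, n-1)$. Every play reaches $f$, so costs are finite, and $\playerOne$ controls exactly the existential choices while $\playerTwo$ controls the universal ones, taken in prefix order; hence $\playerOne$ can ensure $\dCost(\cdot) \leqC \tup{x}$ against all strategies of $\playerTwo$ iff $\Phi$ is true. Since the construction is polynomial ($m$ dimensions, weights in $\{0,1\}$), this shows $\PSPACE$-hardness.

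I expect the delicate point to be the hardness encoding: the additive, history-independent weights must capture the disjunction ``some literal of $C_j$ holds'', which the ``$n-1$ threshold'' trick achieves by counting the non-satisfying variables. On the membership side the only genuinely needed ingredient beyond a routine alternating search is Proposition~\ref{prop:noCycle}, which bounds the relevant play length (hence the recursion depth and the stored costs) by $|V|$; without it neither termination nor the polynomial configuration size would be guaranteed. Finally, note that the exponential dependence on $\di$ in Theorem~\ref{thm:complexityComponent} is consistent with this hardness, since $\di$ is part of the input here.
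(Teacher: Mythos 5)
Your proof is correct, and your membership argument is essentially the paper's: an alternating polynomial-time search whose recursion depth is bounded via Proposition~\ref{prop:noCycle}, giving $\APTime = \PSPACE$ membership (your cost-capping refinement with overflow flags is harmless but not needed, since costs accumulated along a path of length at most $|V|$ have polynomial bit-size anyway). The hardness part, however, takes a genuinely different route. You reduce from \textsc{TQBF} in CNF, using one dimension per clause and weights in $\{0,1\}$, with the ``$n-1$ threshold'' counting trick to express each clausal disjunction. The paper instead reduces from \textsc{Quantified Subset-Sum}, encoding the equality $\sum_i x_i a_i = T$ through the two inequalities hidden in a componentwise comparison with $(T, \sum_i a_i - T)$, and therefore needs only $\di = 2$ dimensions and $|\targetSet| = 1$. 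The two reductions are complementary: yours shows hardness already for unary ($0/1$) weights but requires $\di$ to be unbounded (part of the input), while the paper's shows hardness already for fixed $\di = 2$ but necessarily uses binary-encoded weights --- unavoidable, since by Theorem~\ref{thm:complexityComponent} the problem is pseudo-polynomial when $\di$ is fixed. Be aware that the paper's fixed-dimension strengthening is not a luxury: it is exactly what is reused later in Section~\ref{section:permissiveMultiStrat}, where the $\PSPACE$-hardness of the MCE1 problem (Proposition~\ref{prop:constrainedComponentPermi}) is obtained by a reduction from the \CE problem restricted to $2$-weighted games; your version of the theorem would not support that application. Finally, one small repair to your gadget: in this paper an arena has $E \subseteq V \times V$, so you cannot place two parallel ``true''/``false'' edges from $u_k$ to $u_{k+1}$; insert intermediate vertices $u_k^1, u_k^0$ carrying the two weight vectors (exactly as in the paper's Figure~\ref{fig:hardnessGame}), which changes nothing else in your argument.
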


\textbf{$\PSPACE$-easiness.}
Proposition~\ref{prop:noCycle}
allows us to prove that the \CE problem with the componentwise order is in $\APTime$. The alternating Turing machine works as follows: all vertices of the game owned by $\playerOne$ (resp. $\playerTwo$) correspond to disjunctive states (resp. conjunctive states). A path of length $|V|$ is accepted if and only if, \emph{(i)} the target set is reached along that path and \emph{(ii)} the sum of the weights until an element of the target set is $\leqC \tup{x}$. If such a path exists, there exists a strategy of $\playerOne$ that ensures the cost profile $\tup{x}$.  This procedure is done in polynomial time and since $\APTime = \PSPACE$, we get the result.

\textbf{$\PSPACE$-hardness.} The hardness part of Theorem~\ref{thm:constrainedResComponent} is based on a polynomial reduction from the \textsc{Quantified Subset-Sum} problem, proved $\PSPACE$-complete~\cite[Lemma 4]{Travers06}. This problem is defined as follows. Given a set of natural numbers $N = \{a_1,\ldots, a_n\}$ and a threshold $T \in \N$, we ask if the formula $\Psi = \exists x_1 \in \{ 0, 1 \} \, \forall x_2 \in \{ 0, 1 \} \, \exists x_3 \in \{ 0, 1 \} \ldots \exists x_n \in \{0,1 \},\, \sum_{1 \leq i \leq n} x_i a_i = T$ is true.

In the same spirit as for the QBF problem~\cite{Sipser}, the \textsc{Quantified Subset-Sum} problem can be seen as a two-player game in which two players (Player $\exists$ and Player $\forall$) take turn in order to assign a value to the variables $x_1$, \ldots, $x_n$: Player~$\exists$ (resp. Player~$\forall$) chooses the value of the variables under an existential quantifier (resp. universal quantifier). When a player assigns a value  $1$ to a variable $x_k$, $1 \leq k \leq n$, this player selects the natural number $a_k$ and he does not select it if $x_k$ is assigned to $0$. The goal of Player~$\exists$ is that the sum of the selected natural numbers is exactly equal to $T$ while the goal of Player~$\forall$ is to avoid that. Thus, with this point of view, the formula $\Psi$ is true if and only if Player~$\exists$ has a winning strategy.

\begin{figure}[ht]
    \centering
    \scalebox{0.9}{\begin{tikzpicture}
    \node[draw,circle, inner sep=2pt] (x1) at (0,0){$x_1$};
\node[draw, circle, inner sep=2pt] (y1) at (1.5,1){$x^1_1$};
\node[draw, circle, inner sep=2pt] (n1) at (1.5,-1){$x^0_1$};
\node[draw] (x2) at (3,0){$x_2$};
\node[draw, circle, inner sep=2pt] (y2) at (4.5,1){$x^1_2$};
\node[draw, circle, inner sep=2pt] (n2) at (4.5,-1){$x^0_2$};
\node[draw, circle, inner sep=2pt] (x3) at (6,0){$x_3$};
\node (dots1) at (7.5,0){$\ldots$};

\node[draw,circle, inner sep=2pt] (xn) at (9,0){$x_n$};
\node[draw, circle, inner sep=2pt] (yn) at (10.5,1){$x^1_n$};
\node[draw, circle, inner sep=2pt] (nn) at (10.5,-1){$x^0_n$};
\node[draw, circle, inner sep=3pt, accepting] (target) at (12,0){$y$};

\draw[->,thick] (x1) to node[inner sep=0pt, fill=white]{$(a_1,0)$} (y1);
\draw[->,thick] (x1) to node[inner sep=0pt, fill=white]{$(0, a_1)$} (n1);
\draw[->,thick] (n1) to (x2);
\draw[->,thick] (y1) to (x2);
\draw[->,thick] (x2) to node[inner sep=0pt, fill=white]{$(a_2,0)$} (y2);
\draw[->,thick] (x2) to node[inner sep=0pt, fill=white]{$(0,a_2)$} (n2);

\draw[->,thick] (n2) to (x3);
\draw[->,thick] (y2) to (x3);
\draw[->,thick] (xn) to node[inner sep=0pt, fill=white]{$(a_n,0)$} (yn);
\draw[->,thick] (xn) to node[inner sep=0pt, fill=white]{$(0,a_n)$} (nn);

\draw[->,thick] (nn) to (target);
\draw[->,thick] (yn) to (target);

\draw[->,thick] (target) to[loop right] (target);
    \end{tikzpicture}}
    \caption{Initialized game used in the reduction for the $\PSPACE$-hardness. Edges with no label are assumed to be labeled by $(0,0)$.}
    \label{fig:hardnessGame}
\end{figure}

 In order to encode the equality presents in the Quantified Subset-Sum problem, we use the two inequality constraints in a two-player two-weighted game. The arena of the game is given in Figure~\ref{fig:hardnessGame}, $\playerOne$ aka Player~$\exists$ (resp. $\playerTwo$ aka Player~$\forall$) owns the rounded (resp. rectangular)  vertices corresponding to variables under an existential (resp. universal) quantifier. The target set is only composed of the vertex $y$.  When a player assigns the value $1$ to a variable $x_k$, $1 \leq k \leq n$, the resulting weight of this choice is $(a_k,0)$, while if he assigns the value $0$, the weight is $(0,a_k)$. In this way, if we sum all those weights, we have on the first component the sum of the selected natural numbers and on the second component the sum of the not selected natural numbers. Thereby, there exists a strategy of $\playerOne$ that ensures the cost profile $(T, \sum_{1\leq i \leq n} a_i - T)$ if and only if the formula $\Psi$ is true.

Notice that as $\playerOne$ can consider the previous assignations of variables $x_1, \ldots, x_{k-1}$ to  choose the assignation of a variable $x_k$ to $0$ or $1$, the resulting strategy needs finite-memory.


\section{Permissiveness of Multi-strategies}
\label{section:permissiveMultiStrat}

Even when a strategy that ensures some cost is synthesized, its implementation may fail. This can be due to the occurrence of errors; for example, the action prescribed by the strategy may be unavailable. Synthesizing \emph{robust strategies} against such perturbations is therefore essential.
To address these robustness issues, the classic notion of a player’s strategy can be replaced by the notion of  \emph{multi-strategies}: a multi-strategy for $\playerOne$ prescribes a set of allowed possible actions when it is $\playerOne$'s turn to play (see, for example, \cite{BJW02,BDMR09}), instead of a single action. Thus, once a multi-strategy is fixed for each player, there are several paths in the game graph that are consistent with these multi-strategies from a given initial vertex.
 In this setting, we aim at synthesizing the most permissive multi-strategies. 

Intuitively, a multi-strategy is more \emph{permissive} than another if the first allows more behaviors than the second. The permissiveness of multi-strategies may be compared in different ways. A~qualitative view of permissiveness is studied in \cite{BJW02}, where a multi-strategy is more permissive than another if the set of resulting plays includes those of the second multi-strategy. A~quantitative view is addressed in \cite{BDMR09} via the notion of \emph{penalty} of multi-strategies, where a cost is associated with each edge not chosen by the multi-strategy. Thus, the penalty of a multi-strategy is the highest sum of blocked edges along a play consistent with the multi-strategy. We follow this latter approach in this document.

These notions of permissiveness and penalty raise different problems. Some  of them deal with the existence of a multi-strategy of $\playerOne$ under fixed constraints.
 
\begin{description}
    \item[MCE1] problem aims to ensure both some costs and some penalty, \IE given two thresholds $c$ and $p$, does there exist a multi-strategy $\multiStrat$ for $\playerOne$ such that \emph{(i)} the worst cost of a play consistent with $\multiStrat$ is less than or equal to $c$ and \emph{(ii)} the penalty of $\multiStrat$ is less than or equal to $p$?
    \item[MCE2] problem focuses first on optimizing the costs and only then on optimizing the penalty, \IE given two thresholds $c$ and $p$, does there exist a multi-strategy $\multiStrat$ of $\playerOne$ such that (the worst cost of a play consistent with $\multiStrat$, the penalty of $\multiStrat$) $\leqL (c,p)$? 
    \item[MCE3] problem takes the reverse point of view by optimizing first the penalty and then the costs, \IE given two thresholds $p$ and $c$, does there exist a multi-strategy $\multiStrat$ of $\playerOne$ such that (the penalty of $\multiStrat$, the worst cost of a play consistent with $\multiStrat$) $\leqL (p,c)$? 
\end{description}

Other problems consider the existence of an optimal multi-strategy of $\playerOne$ (regarding its worst ensured cost)  or a most permissive multi-strategy of $\playerOne$. The problems of our interrest are the following:

\begin{description}
    \item[MEV1] problem computes the minimal set of pairs, w.r.t. the component-wise order, $(c,p) \in \N \times \N$ such that the answer to the MCE1 problem is yes.
    \item[MEV2] problem computes the minimal pair, w.r.t. the lexicographic order, $(c,p)$ such that the answer to the MCE2 problem is yes.
    \item[MEV3] problem computes the minimal pair, w.r.t. the lexicographic order, $(p,c)$ such that the answer to the MCE3 problem is yes.
\end{description}

\paragraph*{Contributions w.r.t. Permissiveness of Multi-strategies.} 
In the remainder of this paper, we show how the results obtained for multi-weighted reachability games can be exploited to derive solutions to the above-mentioned problems. In Section~\ref{section:fromMultiToPermi}, a multi-weighted reachability game with two dimensions is built from a quantitative reachability game. Roughly speaking the arena of the $2$-weighted reachability game explicitly represents each possible $\playerOne$'s choice by using a multi-strategy from a vertex $v$, \IE  each subset of successors of $v$. Moreover, the two dimensions of weights on a given edge allow to keep track both the original weight of this edge and the penalty resulting from a $\playerOne$'s choice in this new arena. This construction allows obtaining a useful correspondence between the penalty and the worst cost ensured by a multi-strategy in the quantitative reachability game and the (two-dimensional) cost ensured by a (simple) strategy in the associated multi-weighted reachability game. Thanks to this result and results about multi-weighted reachability games, we prove in Section~\ref{section:constrainedExistencePermi} that the MCE1 problem is $\PSPACE$-complete while the MCE2 and MCE3 problems belong to $\NP$. Finally, in Section~\ref{section:ensuredValuesPermi}, we explain how the pairs of values computed by the MEV1 problem (resp. MEV2 and MEV3 problems) can be computed thanks to an algorithm whose execution time is exponential (resp. thanks to an algorithm that makes a polynomial number of calls to a decision problem in $\NP$).

\paragraph*{Related Works w.r.t. Permissiveness of Multi-strategies.}
 Permissiveness of multi-strategies may be compared in different ways. We mention, in a non-exhaustive way, some related works. In \cite{BJW02}, permissiveness in parity games is studied by considering a qualitative view of permissiveness. Roughly speaking a multi-strategy is more permissive than another one if the set of plays consistent with the first one contains the set of plays consistent with the second one. Unfortunately, there does not necessarily exist a most permissive strategy with this view of permissiveness.

The quantitative view of permissiveness explained above and which we decide to follow is defined in \cite{BDMR09}. Several penalty measures and games are used, and the complexity of computing the most permissive strategies in this context is given. More general parity objectives are then studied in \cite{BouyerMOU11}.  Moreover, penalties are also used in~\cite{GoeminneM25} in order to define and study permissive equilibria in multiplayer reachability games. Other methods have explored permissiveness in two-player games using templates to concisely represent multiple strategies in graph games \cite{AnandNayakSchmuck}. The same approach is employed for the synthesis of secure equilibria in multiplayer games  \cite{NayakS24}.

These issues of permissiveness are also studied in other types of games: stochastic games~\cite{DragerFK0U15} and timed games~\cite{BouyerFM15,ClementJMM20}.

\subsection{Preliminaries}

\paragraph{Quantitative Reachability Games} In the rest of the document we aim at solving problems related to $1-$weighted reachability games, that we now call \emph{quantitative reachability games}. In order to avoid any confusion about notation in the rest of this document, we write $\game = (\arena, \targetSet, \cost)$ and $\edgeCost: E \longrightarrow \NInf$ instead of $\game_1$ and $\weight_1$ respectively. Recall that in this setting $\playerOne$ only wants to minimize the accumulated costs, given par $\edgeCost$, until reaching the target set $\targetSet \subseteq V$. No matter what $\playerTwo$ does.

\paragraph{Multi-strategies} A \emph{multi-strategy} of $\playerOne$ from a vertex $v$ is a function $\multiStrat_1 : \hist_1(v) \longrightarrow 2^V \setminus \{ \emptyset \}$ that assigns to each history $hu \in \hist_1(v)$ a non-empty set of vertices $\multiActions \subseteq V$ such that for all $u' \in \multiActions$, $(u,u') \in E$. 
Notice that a (simple) strategy $\sigma_1$ from $v$ can be seen as a multi-strategy $\multiStrat_1$ from $v$ where, for all $hu \in \hist_1(v)$, $\multiStrat_1(hu)$ is the singleton $\{ \sigma_1(hu) \}$. 
The set of multi-strategies of $\playerOne$ from $v$ is denoted by $\msSet{1}{v}$.

Given a multi-strategy $\multiStrat_1$ of $\playerOne$ and $v \in V$, $\multiOutcome{\multiStrat_1}{v}$ is the set of plays beginning in $v$ that are both consistent with the choices dictated by the multi-strategy $\multiStrat_1$ and with all the possible behaviors of $\playerTwo$.
Before formally defining this set of plays, we define the set of finite prefixes of such plays, written $\multiOutcomeHistory{\multiStrat_1}{v}$, as follows:
\begin{itemize}
    \item $v \in \multiOutcomeHistory{\multiStrat_1}{v}$
    \item for each history $h \in \multiOutcomeHistory{\multiStrat_1}{v}$:
    \begin{itemize}
        \item if $\last(h) \in V_1$, then for all $u \in \multiStrat_1(h)$, $hu \in \multiOutcomeHistory{\multiStrat_1}{v}$;
        \item if $\last(h) \in V_2$, then for all $u \in \successor(\last(h))$, $hu \in \multiOutcomeHistory{\multiStrat_1}{v}$.
    \end{itemize}
\end{itemize}

It follows that a play is part of set $\multiOutcome{\multiStrat_1}{v}$  if and only if all its finite prefixes are part of set $\multiOutcomeHistory{\multiStrat_1}{v}$. 

The definition of the set $\multiOutcome{\multiStrat_1}{v}$, allows to compute what is the worst cost that this multi-strategy can ensure from $v$ whatever the behavior of $\playerTwo$. This \emph{worst ensured cost} of $\multiStrat_1$ is written $\wCost(\multiOutcome{\multiStrat_1}{v})$ and is formally defined as $$\wCost(\multiOutcome{\multiStrat_1}{v}) = \sup \{ \cost(\rho) \mid \rho \in \multiOutcome{\multiStrat_1}{v}\}.$$

Finally, given an initialized quantitative reachability game $(\game,v_0)$, a multi-strategy $\multiStrat_1$ of $\playerOne$ is called a \emph{winning multi-strategy}  in $(\game,v_0)$ if all plays beginning in $v_0$ that are consistent with $\multiStrat_1$ are winning for $\playerOne$,  \IE  for all $\rho = \rho_0\rho_1 \ldots \in \multiOutcome{\multiStrat_1}{v_0} $, there exists $n \in \N$ such that $\rho_n \in \targetSet$. In particular, we have that $\wCost(\multiOutcome{\multiStrat_1}{v_0}) < +\infty$.

\begin{example}
\label{ex:runningExPermi}
Let us consider the quantitative reachability game $\game$ whose arena $\arena = (V_1,V_2, E, \edgeCost)$ is shown in Figure~\ref{fig:quantitativeReachGame}.
In this example, $V_1 = \{v_0,v_1,v_2,v_3,v_6,\}$, $V_2= \{v_4,v_5,v_7,v_8\}$ and $\targetSet = \{ v_5 \}$.
The weights of the edges are given by the numbers to the left of the $\mid$ symbol, \emph{e.g.,} $\edgeCost(v_0,v_8)=1$ and $\edgeCost(v_1,v_2)=2$. Do not pay attention to the numbers to the right of the $\mid$ symbol for the moment. For all edges without a label, we assume that the weight is equal to $1$, \emph{e.g.,} $\edgeCost(v_1,v_3) = 1$.
\begin{figure}[ht!]
\centering
    \begin{tikzpicture}
    \node[draw, circle, inner sep=3pt] (v0) at (0,0){$v_0$};
    \node[draw,circle, inner sep=3pt] (v1) at (2,-1){$v_1$};
    \node[draw,circle, inner sep=3pt] (v2) at (4,0){$v_2$};
    \node[draw,circle, inner sep=3pt] (v3) at (4,-2){$v_3$};
    \node[draw, inner sep=3pt] (v4) at (6,-1){$v_4$};
    \node[draw, inner sep=3pt,accepting] (v5) at (8.5,-1){$v_5$};
    \node[draw,circle, inner sep=3pt] (v6) at (6,1){$v_6$};
    \node[draw, inner sep=3pt] (v7) at (8.5,1){$v_7$};
    \node[draw, inner sep=3pt] (v8) at (2,1){$v_8$};

    \draw[->,thick] (v0) to (v1);
    \draw[->,thick] (v0) to node[fill=white, inner sep=0pt]{$1\mid 2$} (v8);
    \draw[->,thick] (v1) to node[fill=white, inner sep=0pt]{$2\mid 1$} (v2);
    \draw[->,thick] (v1) to  (v3);
    \draw[->,thick] (v8) to  (v1);
    \draw[->,thick] (v2) to (v4);
    \draw[->,thick] (v3) to node[fill=white, inner sep=0pt]{$2\mid 1$} (v4);
    \draw[->,thick] (v4) to (v6);
    \draw[->,thick] (v6) to (v5);
    \draw[->,thick] (v6) to node[fill=white, inner sep=0pt]{$10 \mid 10$} (v7);
    \draw[->,thick] (v7) to (v5);
    \draw[->,thick] (v4) to (v5);
    \draw[->,thick] (v8) to [loop right] (v8);
    \draw[->,thick] (v5) to [loop right] (v5);
    \end{tikzpicture}
    \caption{Example of a quantitative reachability game. The target set is $\targetSet = \{v_5\}$. The weight (resp. penalty) of an edge is given by the number to the left (resp. right) of the $\mid$ symbol. An edge without any label is assumed to be labeled by $1 \mid 1$. }
    \label{fig:quantitativeReachGame}

\end{figure}

Let us consider the multi-strategy $\multiStrat_1$ of $\playerOne$ defined as follows: $\multiStrat_1(v_0) = \{ v_1 \}$;
for all $hv_1 \in \hist(v_0)$, $\multiStrat_1(hv_1) = \{ v_2,v_3\}$;
for all $hv_2$ and $hv_3$ in $\hist(v_0)$, $\multiStrat_1(hv_2) = \multiStrat(hv_3) = \{ v_4 \}$; and
for all $hv_6 \in \hist(v_0)$, $\multiStrat_1(hv_6) = \{ v_5\}$. This multi-strategy is shown in Figure~\ref{fig:quantitativeReachGameMultiStrat}: blue edges are selected by $\multiStrat_1$ while dotted edges are blocked by $\multiStrat_1$. We have that $\multiOutcome{\multiStrat_1}{v_0} = \{  v_0v_1v_2v_4v_6v_5^\omega, v_0v_1v_2v_4v_5^\omega, \allowbreak v_0v_1v_3v_4v_6v_5^\omega, v_0v_1v_3v_4v_5^\omega  \}$.  It follows that $\wCost(\multiOutcome{\multiStrat_1}{v_0}) = \cost(v_0v_1v_2v_4v_6v_5^\omega) \allowbreak= 6$.

\begin{figure}[ht!]
\centering
    \begin{tikzpicture}
    \node[draw, circle, inner sep=3pt] (v0) at (0,0){$v_0$};
    \node[draw,circle, inner sep=3pt] (v1) at (2,-1){$v_1$};
    \node[draw,circle, inner sep=3pt] (v2) at (4,0){$v_2$};
    \node[draw,circle, inner sep=3pt] (v3) at (4,-2){$v_3$};
    \node[draw, inner sep=3pt] (v4) at (6,-1){$v_4$};
    \node[draw, inner sep=3pt,accepting] (v5) at (8.5,-1){$v_5$};
    \node[draw,circle, inner sep=3pt] (v6) at (6,1){$v_6$};
    \node[draw, inner sep=3pt] (v7) at (8.5,1){$v_7$};
    \node[draw, inner sep=3pt] (v8) at (2,1){$v_8$};

    \draw[->,ultra thick,blue] (v0) to (v1);
    \draw[->,thick,dotted] (v0) to node[fill=white, inner sep=0pt]{$1\mid 2$} (v8);
    \draw[->,ultra thick, blue] (v1) to node[fill=white, inner sep=0pt]{\color{black}$2\mid 1$} (v2);
    \draw[->,ultra thick, blue] (v1) to  (v3);
    \draw[->,thick] (v8) to  (v1);
    \draw[->,ultra thick, blue] (v2) to (v4);
    \draw[->,ultra thick, blue] (v3) to node[fill=white, inner sep=0pt]{\color{black}$2\mid 1$} (v4);
    \draw[->,thick] (v4) to (v6);
    \draw[->,ultra thick,blue] (v6) to (v5);
    \draw[->,thick, dotted] (v6) to node[fill=white, inner sep=0pt]{$10 \mid 10$} (v7);
    \draw[->, thick] (v7) to (v5);
    \draw[->,thick] (v4) to (v5);
    \draw[->,thick] (v8) to [loop right] (v8);
    \draw[->,thick] (v5) to [loop right] (v5);
    \end{tikzpicture}
    \caption{The solid edges whose source vertex is owned by $\playerOne$, \IE  blue edges, represent a multi-strategy of $\playerOne$.}
    \label{fig:quantitativeReachGameMultiStrat}

\end{figure}
\end{example}

\paragraph{Permissiveness and Penalties}
Given a quantitative reachability game, our aim is to find a trade-off between a multi-strategy with the least possible worst ensured cost (an \emph{optimal multi-strategy}) and a multi-strategy which allows as many behaviors of $\playerOne$ as possible (a \emph{most permissive multi-strategy}). 

The permissiveness of multi-strategies may be compared in different ways. We here use the concept of penalty of a multi-strategy already defined in a two-player zero-sum setting in~\cite{BDMR09}.
This penalty depends on weights associated with edges not chosen by the multi-strategy, \IE blocked edges, and we prefer a multi-strategy with a penalty as small as possible.
In order to define the penalties properly,  we equip the game with a penalty function $\edgePenal: E \longrightarrow \N$ assigning a non-negative penalty to each edge. Given an edge $(v,v')\in E$ such that $v \in V_1$, $\playerOne$ obtains a penalty of $\edgePenal(v,v')$ if he does not select $v'$ from $v$ in his multi-strategy.
Moreover, such penalties are accumulated along a play.
Formally, given a multi-strategy $\multiStrat_1$ of $\playerOne$ from $v$, we first define the \emph{penalty of $\playerOne$ w.r.t.~$\multiStrat_1$} along a play $\rho = \rho_0\rho_1 \cdots \in \play(v)$, denoted by $\penal_{\multiStrat_1}(\rho)$, by induction on the length of its prefixes:
\begin{itemize}
    \item $\penal_{\multiStrat_1}(\varepsilon) = 0 $ where $\varepsilon$ denotes the empty prefix;
    \item for $h = \rho_0\cdots\rho_k$,
    $\displaystyle \penal_{\multiStrat_1}(hv) =
    \begin{cases}\penal_{\multiStrat_1}(h) + \sum_{v' \in \successor(v) \setminus \multiStrat_1(hv)} \edgePenal(v,v') & \text{ if } v \in V_1 
    \\ \penal_{\multiStrat_1}(h) & \text{ otherwise}
    \end{cases}$;
    \item $\penal_{\multiStrat_1}(\rho) = \lim_{k \rightarrow +\infty} \penal_{\multiStrat_1}(\rho_0\cdots\rho_k)$. Since this is a non-decreasing sequence of natural numbers, this limit is either a natural number or $+\infty$.
\end{itemize}

Finally, the penalty of a multi-strategy $\multiStrat_1$ from $v$, written  $\penal(\multiOutcome{\multiStrat_1}{v})$, is the worst penalty of the plays consistent with $\multiStrat_1$, \IE $\penal(\multiOutcome{\multiStrat_1}{v}) = \sup \{ \penal_{\multiStrat_1}(\rho) \mid \rho \in \multiOutcome{\multiStrat_1}{v}\}$.

In the remaining part of this document, we  assume that a quantitative reachability game is always equipped with a penalty function $\edgePenal$.

\begin{example}
   Let us consider the multi-strategy $\multiStrat_1$ defined in Example~\ref{ex:runningExPermi}. In this example, the penalty of a blocked edge $e \in E$ is given by the number to the right of the $\mid$ symbol in the label of the vertex $e$. In order to compute the penalty of $\multiStrat_1$, we first compute the penalties of plays in $\multiOutcome{\multiStrat_1}{v_0}$: $\penal_{\multiStrat_1}(v_0v_1v_2v_4v_6v_5^\omega)= 12$, $\penal_{\multiStrat_1}(v_0v_1v_2v_4v_5^\omega) = 2$, \allowbreak $\penal_{\multiStrat_1}(v_0v_1v_3v_4v_6v_5^\omega) = 12$
     and $\penal_{\multiStrat_1}(v_0v_1v_3v_4v_5^\omega) = 2$. It follows that $\penal(\multiOutcome{\multiStrat_1}{v}) = 12$.
\end{example}

In the quantitative reachability game provided in Example~\ref{ex:runningExPermi}, the least possible worst ensured cost is $6$ and the least possible penalty to ensure this cost is $12$. This cost and this penalty are achievable thanks to $\multiStrat_1$ provided in Figure~\ref{fig:quantitativeReachGameMultiStrat}. However, it is possible to obtain a better penalty to the detriment of the worst ensured cost. Obviously, the strategy that always allows all possible successors has a penalty of $0$ but the play $v_0v_8^\omega$ is consistent with such a multi-strategy and is not winning for $\playerOne$. In light of this, we prefer to look for a multi-strategy which is as permissive as possible, \IE with the least possible penalty, but such that all consistent plays are winning for $\playerOne$. In Example~\ref{ex:runningExPermi},  the least penalty of a winning multi-strategy is equal to $2$ and the least possible worst ensured cost of a multi-strategy that ensures this penalty is equal to $16$. This can be achieve thanks to the multi-strategy $\multiStrat'_1$ of $\playerOne$ such that $\multiStrat'_1$ is equal to $\multiStrat_1$ except for histories that ends in $v_6$ from which both $v_5$ and $v_7$ are selected by $\multiStrat'_1$.

This example highlights that if $\playerOne$ wants to find a winning multi-strategy that minimizes both the worst ensured cost and the penalty, he has to find a compromise. As in the first part of this paper, we deal with this trade-off  \emph{(i)} by ranking \emph{a priori} these two values according to a priority order and then comparing them thanks to a lexicographic order or \emph{(ii)} by comparing the worst ensured cost and the penalty component by component and by choosing \emph{a posteriori} which pair of worst ensured cost and penalty we prefer to obtain.

\paragraph{Studied Problems} In the same vein as the first part of this paper we consider two kind of problems: the \emph{ensured values by  multi-strategies problems} and the \emph{constrained existence of  multi-strategies problems}. Before formally defining these problems, we need to introduce some notations.

Given a vertex $v\in V$ and a multi-strategy $\multiStrat_1$ of $\playerOne$ from $v$, $$\CP(\multiOutcome{\multiStrat_1}{v}) = (\wCost(\multiOutcome{\multiStrat_1}{v}),\penal(\multiOutcome{\multiStrat_1}{v}))  \text{ and }  \PC(\multiOutcome{\multiStrat_1}{v}) = (\penal(\multiOutcome{\multiStrat_1}{v}), \wCost(\multiOutcome{\multiStrat_1}{v})).$$

We consider what are the worst ensured costs and penalties that can be ensured by a multi-strategy of $\playerOne$ from a given vertex $v$. Those costs and penalties may be compared in three different ways with \emph{(i)}  a componentwise order, \emph{(ii)} a lexicographic order that gives priority to the worst ensured cost and \emph{(iii)} a lexicographic order that gives priority to the penalty. For that reason, for all $v \in V$, we define $\msEnsure^{\kappa}_{\lesssim}(v)$ where $\kappa$ is either $\PC$ or $\CP$ and $\lesssim$ is either $\leqC$ or $\leqL$:\footnote{These conventions are followed in the remaining part  of this document.}

$$ \msEnsure^{\kappa}_{\lesssim}(v) = \{ (x,y) \in \NInf \times \NInf \mid \exists \multiStrat_1 \in \msSet{1}{v} \text{ st. } \kappa(\multiOutcome{\multiStrat_1}{v}) \lesssim (x,y) \}. $$

Additionally, $\msPareto(v) = \minimal (\msEnsure^{\CP}_{\leqC}(v))$,  $\cVal(v) = \minimal (\msEnsure^{\CP}_{\leqL}(v)) $ and $\pVal(v) = \minimal (\msEnsure^{\PC}_{\leqL}(v)) $.
As for (simple) strategies, given a pair $(x,y) \in \NInf \times \NInf$, we say that a multi-strategy $\multiStrat_1$ of $\playerOne $ ensures $(x,y)$ from $v$ if $\kappa(\multiOutcome{\multiStrat_1}{v}) \lesssim (x,y)$, where $\kappa$ is either $\CP$ or $\PC$ and $\lesssim$ is either $\leqC$ or $\leqL$. \\

Given an initialized quantitative reachability game $(\game,v_0)$, the ensured values by multi-strategies problems listed below involve computing $\msPareto(v_0)$, $\cVal(v_0)$ and $\pVal(v_0)$. 

\begin{definition}[Ensured Values by Multi-strategies Problems - MEV Problems]
Let $(\game, v_0)$ be an initialized quantitative reachability game. We distinguish three problems:
\begin{enumerate}
    \item (MEV1) Computing the Pareto frontier $\msPareto(v_0)$.
    \item (MEV2) Computing $\cVal(v_0)$.
    \item (MEV3) Computing $\pVal(v_0)$.
\end{enumerate} 
\end{definition}

\begin{theorem}
Given an initialized quantitative reachability game $(\game,v_0)$,

\begin{enumerate}    
    \item The set $\msPareto(v_0)$ can be computed in exponential time. \label{label:MEV1}
    \item The values $\cVal(v_0)$ and $\pVal(v_0)$ can be computed thanks to an algorithm that makes a polynomial number of calls to a decision problem in $\NP$.\label{label:MEV23}
\end{enumerate}
\end{theorem}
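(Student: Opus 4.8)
The plan is to reduce both statements to the two-weighted machinery of Section~\ref{section:ensuredValues} through the extended game $\extendedGame$ announced in Section~\ref{section:fromMultiToPermi}. Recall the idea of that construction: in $\extendedGame$ every vertex $v\in V_1$ is replaced by a gadget that offers one move per non-empty set $A\subseteq\successor(v)$; selecting $A$ mimics the decision of a multi-strategy to allow exactly the edges towards $A$, and the corresponding edge of $\extendedGame$ carries the two-dimensional weight consisting of the original cost $\edgeCost$ on the first component and the accumulated penalty $\sum_{v'\in\successor(v)\setminus A}\edgePenal(v,v')$ of the blocked edges on the second. The correspondence I would establish (the technical heart of Section~\ref{section:fromMultiToPermi}) is that a multi-strategy $\multiStrat_1$ ensures $(c,p)$ from $v_0$ with respect to $\CP$ (resp.\ $\PC$) in $(\game,v_0)$ if and only if the induced simple strategy ensures the cost profile $(c,p)$ (resp.\ $(p,c)$) from $v_0$ in $(\extendedGame,v_0)$. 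Granting it, $\msEnsure^{\CP}_{\leqC}(v_0)$ is exactly the componentwise ensured-value set of $\extendedGame$, so $\msPareto(v_0)$ is the Pareto frontier of $\extendedGame$, while $\cVal(v_0)$ and $\pVal(v_0)$ are the lexicographic upper values of $\extendedGame$ (with the two weight components swapped for $\pVal$).

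For Statement~\ref{label:MEV1} I would then simply compute the componentwise Pareto frontier of $\extendedGame$. The arena of $\extendedGame$ has at most $|V|\cdot 2^{|V|}$ vertices and its largest weight is bounded by $|E|\cdot\max_e\{\edgeCost(e),\edgePenal(e)\}$, which is polynomial in the original input. Since $\extendedGame$ has the fixed dimension $\di=2$, Theorem~\ref{thm:complexityComponent} computes its Pareto frontier in time polynomial in $\maxW$ and in the number of vertices of $\extendedGame$ (this is exactly Statement~\ref{item:resEnsure3}). Because that number of vertices is exponential in $|V|$, the overall running time is exponential, as claimed.

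For Statement~\ref{label:MEV23} I would avoid constructing the exponentially large $\extendedGame$ explicitly and instead locate the single pairs $\cVal(v_0)$ and $\pVal(v_0)$ by binary search, using as oracle the decision problems MCE2 and MCE3, which belong to $\NP$ (Section~\ref{section:constrainedExistencePermi}). To get $\cVal(v_0)=(c^\ast,p^\ast)$, I first binary-search the least $c$ for which MCE2 answers yes on thresholds $(c,P_{\max})$, where $P_{\max}$ is the largest attainable finite penalty; since $(\wCost,\penal)\leqL(c,P_{\max})$ is equivalent to $\wCost\le c$, this returns the minimal ensurable worst cost $c^\ast$. Fixing $c^\ast$, I then binary-search the least $p$ for which MCE2 answers yes on $(c^\ast,p)$, which by minimality of $c^\ast$ returns $p^\ast$. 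The pair $\pVal(v_0)$ is obtained symmetrically from MCE3, optimising the penalty first.

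The only point to verify is that the two search ranges are not too wide. By Proposition~\ref{prop:noCycle} applied to $\extendedGame$ (equivalently, because allowing every successor once the target is reached yields a winning multi-strategy of finite cost and penalty), the optimal values are finite and bounded above by a quantity whose bit-length is polynomial in the input, so each binary search terminates after polynomially many oracle calls and the total number of calls to an $\NP$ problem is polynomial. The genuinely delicate part of the argument is not this assembly but the construction of $\extendedGame$ and the proof of the correspondence lemma: one must check that the branching of a multi-strategy --- several allowed successors, each of which $\playerTwo$ may subsequently exploit --- is faithfully simulated by a single move in $\extendedGame$, and, crucially, that the worst ensured cost (measured only up to the target) and the penalty (measured over the whole play) are recovered exactly as the two components of one two-weighted cost profile. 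Once the correspondence is in place, both complexity statements follow as above.
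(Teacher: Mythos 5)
Your proposal is correct and follows essentially the same route as the paper: Statement~\ref{label:MEV1} is obtained by running the fixpoint algorithm of Theorem~\ref{thm:complexityComponent} (with fixed $\di=2$) on the exponentially larger extended game via the correspondence of Proposition~\ref{prop:correspondanceExtended}, and Statement~\ref{label:MEV23} by a two-stage binary search using MCE2/MCE3 as $\NP$ oracles, with the search ranges bounded polynomially in bit-length thanks to the cycle-removal argument (the paper packages this as Corollary~\ref{cor:noCycleExtended} and Lemma~\ref{lemma:upperBoundLexico}, with explicit bounds $b_1$ and $b_2$ playing the role of your $P_{\max}$). The only part you defer --- the correspondence lemma --- is likewise invoked as a standing proposition in the paper's own proof, so there is no gap in structure.
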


Statement~\ref{label:MEV1} is obtained by Proposition~\ref{prop:paretoFrontierPermi} and Statement~\ref{label:MEV23} is restated and proved as Proposition~\ref{prop:valuesPermi}. \\

We also consider other closely related problems. Given an initialized quantitative reachability game $(\game, v_0)$ and a pair $(x,y) \in \N \times \N$, we would like to decide whether there exists a multi-strategy of $\playerOne$ which ensures $(x,y)$ from $v_0$. This leads to three variants of this problem that we call constrained existence of multi-strategies problems.

\begin{definition}[Constrained Existence of Multi-strategies Problems - MCE Problems]
Let $(\game, v_0)$ be an initialized quantitative reachability game and $(x,y) \in \N \times \N$. We distinguish three problems:

\begin{itemize}
    \item (MCE1) Does there exist a multi-strategy $\multiStrat_1$ such that $\CP(\multiOutcome{\multiStrat_1}{v_0}) \leqC (x,y)$?
    \item (MCE2) Does there exist a multi-strategy $\multiStrat_1$ such that $ \CP(\multiOutcome{\multiStrat_1}{v_0}) \leqL (x,y)$?
    \item (MCE3) Does there exist a multi-strategy $\multiStrat_1$ such that $\PC(\multiOutcome{\multiStrat_1}{v_0}) \leqL (x,y)$?
\end{itemize}
\end{definition}

\begin{remark}
    Notice that we are looking for winning multi-strategies, this is the reason why the upper bound corresponding to the worst ensured cost is assumed to be a natural number. Moreover, if there exists a multi-strategy such that its worst ensured cost is less than or equal to $c$, there exists another multi-strategy with worst ensured cost less than or equal to $c$ and such that its penalty is finite. Indeed, once an element of the target set is reached, $\playerOne$ can select all possible successors without modifying the costs of the consistent plays. This is the reason why the upper bound corresponding to the penalty is assumed to be a natural number.
\end{remark}

\begin{theorem}
    Given an initialized quantitative reachability game $(\game,v_0)$,
    \begin{enumerate}
        \item The MCE1 problem is $\PSPACE$-complete. \label{label:MCE1}
        \item The MCE2 and MCE3 problems belong to $\NP$.\label{label:MCE23}
    \end{enumerate}
\end{theorem}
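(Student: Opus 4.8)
The plan is to read the three MCE problems through the correspondence of Section~\ref{section:fromMultiToPermi} between multi-strategies in $(\game,v_0)$ and simple strategies in the associated $2$-weighted game $\extendedGame$, whose first dimension records the worst ensured cost and whose second dimension records the penalty. Under this correspondence MCE1 is exactly the componentwise \CE problem on $\extendedGame$, while MCE2 (resp. MCE3) is the lexicographic \CE problem on $\extendedGame$ with the cost (resp. penalty) dimension ranked first. The recurring difficulty is that $\extendedGame$ is exponential in $\game$, since it spells out every non-empty subset of successors of each $\playerOne$-vertex; one therefore cannot simply plug $|\extendedGame|$ into Theorems~\ref{thm:constrainedResComponent} and~\ref{thm:constrainedResLexico}, and the whole point is to keep every bound expressed in the size of $\game$.

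For membership of MCE1 in $\PSPACE$ I would avoid $\extendedGame$ altogether and show directly that the problem is in $\APTime=\PSPACE$. First I would prove a multi-strategy analogue of Proposition~\ref{prop:noCycle}: if some multi-strategy $\multiStrat_1$ satisfies $\CP(\multiOutcome{\multiStrat_1}{v_0})\leqC(x,y)$, then one does so while every consistent play reaches $\targetSet$ within $|V|$ steps, obtained by collapsing the outcome tree at any original vertex revisited along a branch, an operation that only decreases the accumulated cost and penalty. The alternating machine then simulates a play of length at most $|V|$: at a vertex $v$ of $\playerOne$ it existentially guesses a non-empty $\multiActions\subseteq\successor(v)$, adds $\sum_{v'\in\successor(v)\setminus\multiActions}\edgePenal(v,v')$ to a penalty counter, and universally branches on the chosen successor in $\multiActions$ while adding its cost; at a vertex of $\playerTwo$ it universally branches on all successors; a branch accepts as soon as $\targetSet$ is reached with cost $\leq x$ and penalty $\leq y$. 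Both counters stay polynomially bounded, so the machine runs in polynomial time and $\APTime=\PSPACE$ yields the upper bound.

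For $\PSPACE$-hardness of MCE1 I would reduce from \textsc{Quantified Subset-Sum}, reusing the skeleton of Figure~\ref{fig:hardnessGame} but encoding the two required inequalities with the cost and the penalty dimensions and taking the target pair $(T,\sum_i a_i-T)$. Each variable $x_k$ gets a gadget in which the value $1$ contributes $(a_k,0)$ and the value $0$ contributes $(0,a_k)$ to (current play cost, current play penalty); for an existential $x_k$ the choice is realized by $\playerOne$ blocking the unused edge, where the blocked $1$-edge carries penalty $a_k$ and the $0$-edge carries penalty $0$. The crux, which I expect to be the main obstacle, is the universal variables: since penalties are incurred only by $\playerOne$, a move of $\playerTwo$ cannot by itself feed the penalty dimension. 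I would resolve this with a forced-blocking gadget where, on the $x_k=0$ branch of a universal variable, I attach a losing trap whose blocking edge carries penalty $a_k$, so that to stay winning $\playerOne$ is compelled to block it and hence pays $a_k$ on every play through that branch. The verification to carry out is twofold: replacing an existential ``allow both successors'' by committing to $1$ only blocks a penalty-$0$ edge and removes plays, hence never increases either objective, so one may assume every existential variable is committed; consequently every consistent play satisfies the invariant $\text{cost}+\text{penalty}=\sum_i a_i$, and combined with $\wCost(\multiOutcome{\multiStrat_1}{v_0})\leq T$ and $\penal(\multiOutcome{\multiStrat_1}{v_0})\leq\sum_i a_i-T$ this forces each play to have cost exactly $T$, i.e. $\sum_k x_k a_k=T$ along every play, which is precisely the truth of the formula.

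Finally, for MCE2 and MCE3 I would exploit that the lexicographic order makes positional strategies suffice. By Proposition~\ref{prop:stratOptiLexicoPositional} a lexico-optimal simple strategy in $\extendedGame$ is positional, and a positional strategy of $\extendedGame$ selects at each original $\playerOne$-vertex a single fixed subset of successors; it therefore corresponds to a \emph{positional} multi-strategy of $\game$. Since $\leqL$ is total, MCE2 (resp. MCE3) answers yes iff the lexicographically minimal value $\cVal(v_0)$ (resp. $\pVal(v_0)$) is $\leqL(x,y)$, and this optimum is attained by a positional multi-strategy. The $\NP$ algorithm then guesses such a positional multi-strategy, namely one subset per vertex of $\playerOne$, which is a polynomial certificate, and verifies in polynomial time its worst ensured cost and its penalty — each a longest-path/worst-cycle computation in the induced graph, using the $|V|$-step bound — before comparing the resulting pair to $(x,y)$ under the appropriate lexicographic order. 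As both the certificate and the verification are polynomial, MCE2 and MCE3 lie in $\NP$.
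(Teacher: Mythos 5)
Your proposal is correct, and on two of the three pieces it essentially coincides with the paper; the genuine divergence is in the hardness proof for MCE1. For membership in $\PSPACE$, the paper runs the same alternating polynomial-time machine (existential guessing of a successor subset at $\playerOne$-vertices, universal resolution of the allowed successors and of $\playerTwo$-moves, polynomially bounded step/cost/penalty counters); the only difference is how the step bound is justified: the paper gets it through the extended game via Corollary~\ref{cor:noCycleExtended} (bound $2\cdot|V|$), whereas you prove a multi-strategy analogue of Proposition~\ref{prop:noCycle} directly on $\game$. Your lemma is provable by the same tree surgery, but beware that collapsing cycles that start at $\playerTwo$-vertices needs the same two-phase argument as in the paper's proof of Proposition~\ref{prop:noCycle}; deriving the bound from the extended game is actually the cheaper route since that work is already done. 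For $\PSPACE$-hardness of MCE1 you reduce directly from \textsc{Quantified Subset-Sum}, encoding an existential variable by $\playerOne$'s own blocking (value $1$: take the cost-$a_k$ edge and block the penalty-$0$ edge; value $0$: block the penalty-$a_k$ edge) and a universal variable by a forced-blocking trap on its $0$-branch, then using the invariant $\mathrm{cost}+\mathrm{penalty}=\sum_i a_i$ together with the two thresholds to encode the equality; your WLOG step (an existential vertex may be assumed committed, since committing only removes plays and blocks a penalty-$0$ edge) is exactly the verification needed and it goes through. The paper instead reduces from the CE problem in $2$-weighted reachability games (Theorem~\ref{thm:constrainedResComponent}, already known $\PSPACE$-complete for $d=2$), replacing every edge of weight $(c_1,c_2)$ by the gadget of Figure~\ref{fig:hardnessPermi}, where an intermediate $\playerOne$-vertex must block an edge of penalty $c_2$ leading to a losing trap $\perp$. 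Your reduction is, in effect, the composition of the paper's two reductions inlined into a single one: self-contained, but less modular, since the paper's gadget works uniformly for arbitrary $2$-weighted instances and reuses the hardness already established. For MCE2/MCE3 your argument is the paper's: by Propositions~\ref{prop:correspondanceExtended} and~\ref{prop:stratOptiLexicoPositional} a positional strategy of the extended game suffices, and by Remark~\ref{rem:polynomialNumberOfStates} it is a polynomial certificate (one subset per $\playerOne$-vertex); the paper performs the polynomial-time verification by running the fixpoint algorithm of Theorem~\ref{thm:complexityLexico} on the polynomial-size restriction of the extended game, which is interchangeable with your direct longest-path computation.

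One minor imprecision worth flagging: you assert that the optimum $\cVal(v_0)$ is \emph{attained by a positional multi-strategy} of $\game$. Strictly speaking, the penalty in $\game$ keeps accumulating after $\targetSet$ is reached, so a literally positional multi-strategy may pay for blocked edges forever and have a penalty strictly larger than the value ensured by the corresponding positional strategy of the extended game. The correct certificate is ``positional before $\targetSet$, allow every successor afterwards'' (this is precisely the point of the paper's remark following the MCE definitions), and that is also what your until-$\targetSet$, $|V|$-step-bounded verification actually computes; so the $\NP$ bound is unaffected, but the statement should be phrased that way.
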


Statement~\ref{label:MCE1} is proved by Proposition~\ref{prop:constrainedComponentPermi}, while Statement~\ref{label:MCE23}
is obtained thanks to Proposition~\ref{prop:constrainedLexicoPermi}.

\begin{remark}
    Notice that for the componentwise order, since we do not impose any preference on the components, the philosophy behind \emph{(i)} computing the minimal set of ensured values with $\kappa = \PC$  instead of  $\kappa = \CP$ and \emph{(ii)} deciding the MCE1 problem with $\PC$ instead of $\CP$ is the same.
\end{remark}


\subsection{From Multi-weighted Reachability to Permissiveness, and vice versa}
\label{section:fromMultiToPermi}

In this section we introduce the notion of \emph{extended game of a quantitative reachability game}. This is a 2-weighted reachability game, as studied in the first part of this paper, which enjoys some useful properties about correspondence between the worst ensured cost and the penalty of multi-strategies in the initial quantitative reachability game and the cost profile of (simple) strategies in its associated extended game. This property is exploited in Section~\ref{section:constrainedExistencePermi} and Section~\ref{section:ensuredValuesPermi} in order to solve the MEV problems and the MCE problems thanks to results obtained for multi-weighted reachability games in the first part of this paper.

\paragraph{Extended Game of a Quantitative Reachability Game} Given an initialized reachability game $(\game,v_0)$, we first explain how we can build a 2-weighted reachability game $(\mathcal{X}_\kappa,v_0)$ which is an extended game of $(\game,v_0)$. The index $\kappa$ is either $\CP$ when the first component of weights in the $2$-weighted reachability game represents the weight of an edge and the second component represents the penalty of an edge or $\PC$ when these two components are permuted. This construction is taken from Bouyer \emph{at al.}~\cite{BDMR09}. A formal definition of the associated extended game of a quantitative reachability game is provided in Appendix~\ref{appendix:permissiveMultiStrat}. \\

We now provide the intuition for the construction of $\extendedGame_{\CP}$ as we only have to permute the components of the weights in order to obtain $\extendedGame_{\PC}$. Given a vertex $v \in V_1$, we make the $\playerOne$'s choices explicit from $v$. Such a choice corresponds to a subset of successors of the vertex $v$ as illustrated in Figure~\ref{fig:transfoV1} in the particular case where $v$ has three successors $x$, $y$ and $z$. As previously, we follow the convention that a rounded vertex is a vertex of $\playerOne$, a rectangular vertex is a vertex of $\playerTwo$ and a diamond vertex is either a vertex of $\playerOne$ or a vertex of $\playerTwo$.

For example, on one hand, if $\playerOne$ choices to select only the vertex $y$ (and therefore blocks vertices $x$ and $z$), the penalty of this choice corresponds to the sum of the penalties of edges $(v,x)$ and $(v,z)$, \IE $p_1+p_3$. This corresponds to the second component of the label of the edge $(v,(v,\{y\}))$. Then $\playerTwo$ has no choice and moves to vertex $y$ with a weight of $(\edgeCost(v,y),0)$.

On the other hands, if $\playerOne$ choices to select two vertices, vertices $x$ and $y$ for example, the corresponding penalty is only $p_3$ but  $\playerTwo$ has to resolve the non-determinism caused by $\playerOne$'s choice by either going to $x$ with weight $(\edgeCost(v,x),0)$ or to $y$ with weight $(\edgeCost(v,y),0)$.

\begin{figure}[ht!]
\centering
\scalebox{0.9}{\begin{tikzpicture}

    \node[draw,circle, inner sep=3pt] (v) at (0,0){$v$};
    \node[draw,diamond,inner sep=3pt] (x) at (2.5,1){$x$};
    \node[draw,diamond, inner sep=3pt] (y) at (2.5,0){$y$};
    \node[draw,diamond, inner sep=3pt] (z) at (2.5,-1){$z$};
    \draw[->,thick] (v) to[bend left] node[fill=white, inner sep=0pt] {$c_1\mid p_1$} (x);
    \draw[->,thick] (v) to node[fill=white, inner sep=0pt]{$c_2 \mid p_2$} (y);
    \draw[->,thick] (v) to[bend right] node[fill=white, inner sep=0pt] {$c_3 \mid p_3$} (z);

    \draw[->, ultra thick] (3.5,0) to (5.5,0);

    \node[draw,circle, inner sep=3pt] (vbis) at (6.5,0){$v$};
    \node[draw,inner sep=3pt] (vx) at (10.5,3){$v,\{x\}$};
    \node[draw,inner sep=3pt] (vxy) at (10.5,2){$v,\{x,y\}$};
    \node[draw,inner sep=3pt] (vxz) at (10.5,1){$v, \{x,z\}$};
    \node[draw,inner sep=3pt] (vxyz) at (10.5,0){$v,\{x,y,z\}$};
    \node[draw,inner sep=3pt] (vy) at (10.5,-1){$v,\{y\}$};
    \node[draw,inner sep=3pt] (vyz) at (10.5,-2){$v,\{y,z\}$};
    \node[draw,inner sep=3pt] (vz) at (10.5,-3){$v,\{z\}$};

    \node[draw,diamond,inner sep=3pt] (xbis) at (14.5,2){$x$};
    \node[draw,diamond, inner sep=3pt] (ybis) at (14.5,0){$y$};
    \node[draw,diamond, inner sep=3pt] (zbis) at (14.5,-2){$z$};

    \draw[->,thick] (vbis) to[bend left=30] node[fill=white, inner sep=0pt]{\scriptsize$(0,p_2+p_3)$} (vx);
    \draw[->,thick] (vbis) to[bend left=20] node[fill=white, inner sep=0pt]{\scriptsize$(0,p_3)$} (vxy);
    \draw[->,thick] (vbis) to[bend left=10] node[fill=white, inner sep=0pt]{\scriptsize$(0,p_2)$} (vxz);
    \draw[->,thick] (vbis) to node[fill=white, inner sep=0pt]{\scriptsize$(0,0)$} (vxyz);
    \draw[->,thick] (vbis) to[bend right=10] node[fill=white, inner sep=0pt]{\scriptsize$(0,p_1+p_3)$} (vy);
    \draw[->,thick] (vbis) to[bend right=20] node[fill=white, inner sep=0pt]{\scriptsize$(0,p_1)$} (vyz);
    \draw[->,thick] (vbis) to[bend right=30] node[fill=white, inner sep=0pt]{\scriptsize$(0,p_1+p_2)$} (vz);

    \draw[->,thick] (vx) to[bend left=10] node[fill=white,inner sep=0pt,pos=0.75]{\scriptsize$(c_1,0)$} (xbis);
    \draw[->,thick] (vxy) to node[fill=white,inner sep=0pt,pos=0.75]{\scriptsize$(c_1,0)$} (xbis);
    \draw[->,thick] (vxy) to node[fill=white,inner sep=0pt,pos=0.75]{\scriptsize$(c_2,0)$} (ybis);
    \draw[->,thick] (vxz) to[bend right=10] node[fill=white,inner sep=0pt,pos=0.75]{\scriptsize$(c_1,0)$} (xbis);
    \draw[->,thick] (vxz) to node[fill=white,inner sep=0pt,pos=0.8]{\scriptsize$(c_3,0)$} (zbis);
    \draw[->,thick] (vxyz) to[bend right=20] node[fill=white,inner sep=0pt,pos=0.75]{\scriptsize$(c_1,0)$} (xbis);
    \draw[->,thick] (vxyz) to node[fill=white,inner sep=0pt,pos=0.75]{\scriptsize$(c_2,0)$} (ybis);
    \draw[->,thick] (vxyz) to[bend right=10] node[fill=white,inner sep=0pt,pos=0.8]{\scriptsize$(c_3,0)$} (zbis);
    \draw[->,thick] (vy) to[bend right=10] node[fill=white,inner sep=0pt,pos=0.75]{\scriptsize$(c_2,0)$} (ybis);
    \draw[->,thick] (vyz) to[bend right=20] node[fill=white,inner sep=0pt,pos=0.75]{\scriptsize$(c_2,0)$} (ybis);
    \draw[->,thick] (vyz) to node[fill=white,inner sep=0pt,pos=0.75]{\scriptsize$(c_3,0)$} (zbis);
    \draw[->,thick] (vz) to[bend right=10] node[fill=white,inner sep=0pt,pos=0.75]{\scriptsize$(c_3,0)$} (zbis);

    \end{tikzpicture}
    }
    \caption{Transformation of a vertex of $\playerOne$ in a quantitative reachability game into its corresponding vertex in the extended game $\extendedGame_{\CP}$.}
    \label{fig:transfoV1}
    \end{figure}

Given $v \in V_2$, no transformation is needed in the associated extended game. Notice that since we consider the penalty of edges blocked by $\playerOne$ we can assume that the penalty of an edge whose source is owned by $\playerTwo$ is equal to $0$. This is illustrated in Figure~\ref{fig:transfoV2} in the particular case where $v$ has three successors $x$, $y$ and $z$.

Finally, given $v\in V$, $v$ is part of the target set in $\game$, if and only if, its corresponding vertex is part of the target set in $\extendedGame_\kappa$.\\

\begin{figure}[ht!]
\centering
\scalebox{0.9}{\begin{tikzpicture}

    \node[draw, inner sep=3pt] (v) at (0,0){$v$};
    \node[draw,diamond,inner sep=3pt] (x) at (2.5,1){$x$};
    \node[draw,diamond, inner sep=3pt] (y) at (2.5,0){$y$};
    \node[draw,diamond, inner sep=3pt] (z) at (2.5,-1){$z$};
    \draw[->,thick] (v) to[bend left] node[fill=white, inner sep=0pt] {$c_1\mid 0$} (x);
    \draw[->,thick] (v) to node[fill=white, inner sep=0pt]{$c_2 \mid 0$} (y);
    \draw[->,thick] (v) to[bend right] node[fill=white, inner sep=0pt] {$c_3 \mid 0$} (z);

    \draw[->, ultra thick] (3.5,0) to (5.5,0);

    \node[draw, inner sep=3pt] (vbis) at (6.5,0){$v$};
    \node[draw,diamond,inner sep=3pt] (xbis) at (9,1){$x$};
    \node[draw,diamond, inner sep=3pt] (ybis) at (9,0){$y$};
    \node[draw,diamond, inner sep=3pt] (zbis) at (9,-1){$z$};
    \draw[->,thick] (vbis) to[bend left] node[fill=white, inner sep=0pt] {$(c_1, 0)$} (xbis);
    \draw[->,thick] (vbis) to node[fill=white, inner sep=0pt]{$(c_2, 0)$} (ybis);
    \draw[->,thick] (vbis) to[bend right] node[fill=white, inner sep=0pt] {$(c_3, 0)$} (zbis);
    
    \end{tikzpicture}
    }
    \caption{Transformation of a vertex of $\playerTwo$ in a quantitative reachability game into its corresponding vertex in the extended game $\extendedGame_{\CP}$.}
    \label{fig:transfoV2}
    \end{figure}

An interesting property of the extended game $\extendedGame_\kappa$ of a quantitative reachability game $\game$, is that, given $(c,p) \in \N \times \N$, there exists a multi-strategy $\multiStrat_1$ of $\playerOne$ from $v$ in $\game$ such that its worst ensured cost is equal to $c$ and its penalty is equal to $p$, if and only if, there exists a strategy $\sigma_1$ of $\playerOne$ from $v$ in $\extendedGame_{\CP}$ (resp. in $\extendedGame_{\PC}$) which ensures $(c,p)$ (resp. $(p,c)$).


\begin{proposition}
\label{prop:correspondanceExtended}
    Let $\game$ be a quantitative reachability game and $\extendedGame_{\kappa}$ be its associated extended game, let $v \in V$ and let $(c,p) \in \N \times \N$,

    \begin{center}there exists $\multiStrat_1 \in \msSet{1}{v}$ such that $\kappa(\multiOutcome{\multiStrat_1}{v}) \lesssim \begin{cases} (c,p) & \text{ if } \kappa = \CP \\ (p,c) & \text{ if } \kappa = \PC\end{cases}$ \\
    if and only if\\
    there exists $\sigma_1 \in \stratSet{1}{v}$ such that for all $\sigma_2 \in \stratSet{2}{v}$, $\dCost^X(\outcome{\sigma_1}{\sigma_2}{v}) \lesssim \begin{cases} (c,p) & \text{ if } \kappa = \CP \\ (p,c) & \text{ if } \kappa = \PC\end{cases}$,
    \end{center}
    where $\lesssim$ is either $\leqC$ or $\leqL$.
\end{proposition}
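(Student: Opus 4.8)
The plan is to build an explicit, value-preserving dictionary between multi-strategies of $\playerOne$ in $\game$ and (simple) strategies of $\playerOne$ in $\extendedGame_{\kappa}$, and then to transport the two quantities of interest---the worst ensured cost and the penalty---across this dictionary. I treat $\kappa = \CP$ throughout, since the case $\kappa = \PC$ is identical after permuting the two weight components. First I would fix the translation of strategies: to a multi-strategy $\multiStrat_1 \in \msSet{1}{v}$ I associate the strategy $\sigma_1 \in \stratSet{1}{v}$ that, whenever $\playerOne$ reaches a copy of a vertex $u \in V_1$ after a history $h$, plays the edge leading to the gadget vertex $(u, \multiStrat_1(hu))$; conversely, a simple strategy $\sigma_1$ in $\extendedGame_{\CP}$ selects at each such $u$ exactly one successor of the form $(u, A)$, which I read back as the allowed set $\multiStrat_1(hu) = A$. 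Because the gadget of Figure~\ref{fig:transfoV1} inserts between $u$ and its genuine successors a single $\playerTwo$-owned vertex $(u,A)$ whose outgoing moves are exactly the elements of $A$, this assignment is a bijection up to the natural identification of histories.

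I would make this identification precise through the projection $\pi$ that erases the gadget vertices $(u,A)$ from an extended history or play. The claim is that $\pi$ restricts to a bijection between $\multiOutcome{\multiStrat_1}{v}$ and the set $\{\pi(\outcome{\sigma_1}{\sigma_2}{v}) \mid \sigma_2 \in \stratSet{2}{v}\}$, and that it preserves reachability of $\targetSet$. The core bookkeeping step is then the value preservation along corresponding plays: along any extended play the edges $u \to (u,A)$ carry weight $(0,\sum_{u'\in\successor(u)\setminus A}\edgePenal(u,u'))$, the edges $(u,A)\to u'$ carry $(\edgeCost(u,u'),0)$, and the untouched $\playerTwo$-edges carry $(\edgeCost,0)$; summing these up to the first visit of $\targetSet$ yields, for $\rho=\pi(\rho^X)$, the per-play identity $\dCost^X(\rho^X)=(\cost(\rho),\penal_{\multiStrat_1}(\rho))$. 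This identity is the heart of the correspondence and I expect it to be routine once $\pi$ is set up.

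It then remains to lift this per-play identity to the worst-case quantities and to the chosen order. For the componentwise order the lift is immediate: $\sigma_1$ ensures $(c,p)$ exactly when every outcome satisfies $\dCost^X\leqC(c,p)$, i.e. every consistent $\game$-play has $\cost(\rho)\le c$ and $\penal_{\multiStrat_1}(\rho)\le p$; taking suprema over the bijectively matched plays this is precisely $\wCost(\multiOutcome{\multiStrat_1}{v})\le c$ and $\penal(\multiOutcome{\multiStrat_1}{v})\le p$, that is $\CP(\multiOutcome{\multiStrat_1}{v})\leqC(c,p)$, and the existential equivalence follows because the dictionary is a bijection.

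The delicate part, and the step I expect to be the main obstacle, is the lexicographic order, where the supremum does not distribute over $\leqL$: a per-play comparison forgives a large penalty on a play whose cost is strictly below $c$, whereas $\penal(\multiOutcome{\multiStrat_1}{v})$ is the supremum of the penalty over all consistent plays and does not forgive it. The forward implication is still clean---if $\CP(\multiOutcome{\multiStrat_1}{v})\leqL(c,p)$ then either $\wCost<c$, forcing every play's cost below $c$ and hence $\dCost^X\leqL(c,p)$ on every outcome, or $\wCost=c$ with $\penal\le p$, which again gives the per-play bound. For the converse I would \emph{not} keep the same witnessing strategy but instead select, among the strategies witnessing the extended-game side, one that is cost-lexico-optimal first and only then penalty-minimal, and argue via monotonicity of the worst ensured cost under permissiveness---blocking more edges can only lower the worst cost and cannot be avoided without sacrificing a branch that reaches $\targetSet$---that the associated multi-strategy's worst-case penalty can be brought within the required bound without raising the worst cost. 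Reconciling this sup-aggregation with the lexicographic tie-breaking is exactly where the real work lies, and I would treat it as the crux of the argument.
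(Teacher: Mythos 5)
Your dictionary between multi-strategies of $\game$ and simple strategies of $\extendedGame_{\CP}$, the erasing projection $\pi$, and the componentwise lift are essentially the argument the paper has in mind (the paper in fact gives no explicit proof of this proposition and treats it as immediate from the construction of the extended game). One repair is needed even there: the per-play identity $\dCost^X(\rho^X)=(\cost(\rho),\penal_{\multiStrat_1}(\rho))$ is false as stated, because $\dCost^X$ stops accumulating at the first visit of $\targetSet$ while $\penal_{\multiStrat_1}$ keeps accumulating forever afterwards. In the direction from $\extendedGame_{\CP}$ to $\game$ you must let the read-back multi-strategy allow \emph{all} successors once the target has been visited (only then do the second components agree), and in the direction from $\game$ to $\extendedGame_{\CP}$ you only get $\dCost^X(\rho^X)\leqC(\cost(\rho),\penal_{\multiStrat_1}(\rho))$, which suffices.

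The genuine gap is the lexicographic converse, exactly the step you flag as the crux and leave open, and your proposed repair (re-select a cost-lexico-optimal, then penalty-minimal witness, and invoke monotonicity of the worst cost under blocking) cannot close it: with $\CP(\multiOutcome{\multiStrat_1}{v})$ defined, as in the paper, as the pair of \emph{independent} suprema, the converse implication is simply false. Take $v_0\in V_2$ with successors $u_1,u_2\in V_1$, $\edgeCost(v_0,u_1)=4$, $\edgeCost(v_0,u_2)=5$; each $u_i$ has two successors, the target $t$ and a non-target sink $w$ (with self-loops on $t$ and $w$), all these edges of cost $0$, and $\edgePenal(u_1,w)=4$, $\edgePenal(u_2,w)=3$. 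Any multi-strategy with finite worst cost must block $w$ at both $u_1$ and $u_2$, giving $\CP(\multiOutcome{\multiStrat_1}{v_0})=(\sup\{4,5\},\sup\{4,3\})=(5,4)$, which is not $\leqL(5,3)$; every other multi-strategy has worst cost $\infty$. Yet in $\extendedGame_{\CP}$ the strategy selecting $(u_1,\{t\})$ and $(u_2,\{t\})$ has exactly two outcomes, with cost profiles $(4,4)$ and $(5,3)$, and both are $\leqL(5,3)$ because $4<5$. So the extended-game side holds while no multi-strategy meets the aggregate bound: the per-play lexicographic comparison forgives the penalty $4$ on the strictly cheaper branch, the supremum-based $\penal$ does not, and since this extended-game witness is the only one whose outcomes all reach $\targetSet$ (any chosen set containing $w$ lets $\playerTwo$ avoid the target), no smarter selection of witness exists. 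The equivalence for $\leqL$ is only true if the left-hand side is read per play, i.e., if the aggregation is the lexicographic maximum over consistent plays of the pairs $(\cost(\rho),\penal_{\multiStrat_1}(\rho))$ rather than the pair of suprema; under that reading your per-play correspondence concludes immediately, but under the paper's written definition the step you call the crux is not a provable step at all.
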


\begin{example} The initialized associated extended game $(\extendedGame_{\CP},v_0)$ of the quantitative reachability game $(\game,v_0)$ of Example~\ref{ex:runningExPermi} is provided in Figure~\ref{fig:extendedGame}. The strategy $\sigma_1$ of $\playerOne$ from $v_0$ defined as $\sigma_1(hv_0) = (v_0,\{v_1\})$, $\sigma_1(hv_1) = (v_1,\{v_2,v_3 \})$, $\sigma_1(hv_2) =  (v_2,\{v_4\})$, $\sigma_1(hv_3) = (v_3, \{v_4\})$ and $\sigma_1(hv_6) = (v_6, \{v_5\})$, depicted by the blue edges, corresponds to the multi-strategy $\multiStrat_1$ of Figure~\ref{fig:quantitativeReachGameMultiStrat}. Indeed, we have that for all $\sigma_2 \in \stratSet{2}{v_0}$, $\dCost^X(\outcome{\sigma_1}{\sigma_2}{v_0}) \leqC (6,12)$.

\begin{figure}[ht!]
    \centering
    \scalebox{1}{\begin{tikzpicture}[scale=0.9]

    \node[draw,circle, inner sep=3pt] (v0) at (0,0){$v_0$};
    \node[draw,inner sep=3pt] (v0v1) at (-2,-1.5){$v_0, \{v_1\}$};
    \node[draw,inner sep=3pt] (v0v1v8) at (0,-1.5){$v_0, \{v_1,v_8\}$};
    \node[draw,inner sep=3pt] (v0v8) at (2,-1.5){$v_0, \{v_8\}$};

    \draw[->,ultra thick, blue] (v0) to[bend right] node[fill=white, inner sep=0pt]{\scriptsize \color{black} $(0,2)$} (v0v1);
    \draw[->,thick] (v0) to node[fill=white, inner sep=0pt]{\scriptsize$(0,0)$} (v0v1v8);

    \draw[->,thick] (v0) to[bend left] node[fill=white, inner sep=0pt]{\scriptsize$(0,1)$} (v0v8);

    \node[draw,circle, inner sep=3pt] (v1) at (-2,-3){$v_1$};

    \node[draw, inner sep=3pt] (v8) at (2,-3){$v_8$};

    \draw[->,thick] (v0v1) to node[fill=white, inner sep=0pt]{\scriptsize$(1,0)$} (v1);
    \draw[->,thick] (v0v1v8) to node[fill=white, inner sep=0pt]{\scriptsize$(1,0)$} (v1);
    \draw[->,thick] (v0v1v8) to node[fill=white, inner sep=0pt]{\scriptsize$(1,0)$} (v1);
    \draw[->,thick] (v0v1v8) to node[fill=white, inner sep=0pt]{\scriptsize$(1,0)$} (v8);

    \draw[->,thick] (v0v8) to node[fill=white, inner sep=0pt]{\scriptsize$(1,0)$} (v8);
    \draw[->,thick] (v8) to[loop right] node[fill=white, inner sep=0pt]{\scriptsize$(1,0)$} (v8);
    \draw[->,thick] (v8) to node[fill=white, inner sep=0pt]{\scriptsize$(1,0)$} (v1);

    \node[draw, inner sep=3pt] (v1v2) at (-4,-4.5){$v_1, \{v_2\}$};
    \node[draw, inner sep=3pt] (v1v2v3) at (-2,-4.5){$v_1, \{v_2,v_3\}$};
    \node[draw, inner sep=3pt] (v1v3) at (0,-4.5){$v_1, \{v_3\}$};

    \draw[->,thick] (v1) to[bend right] node[fill=white, inner sep=0pt]{\scriptsize$(0,1)$} (v1v2);
    \draw[->,ultra thick, blue] (v1) to node[fill=white, inner sep=0pt]{\scriptsize \color{black} $(0,0)$} (v1v2v3);
     \draw[->,thick] (v1) to[bend left] node[fill=white, inner sep=0pt]{\scriptsize$(0,1)$} (v1v3);

     \node[draw,circle,inner sep = 3pt] (v2) at (-4,-6){$v_2$};
     \node[draw,circle, inner sep=3pt] (v3) at (0,-6){$v_3$};

    \draw[->,thick] (v1v2) to node[fill=white, inner sep=0pt]{\scriptsize$(2,0)$} (v2);
    \draw[->,thick] (v1v2v3)  to node[fill=white, inner sep=0pt]{\scriptsize$(2,0)$} (v2);
     \draw[->,thick] (v1v2v3)  to node[fill=white, inner sep=0pt]{\scriptsize$(1,0)$} (v3);
     \draw[->,thick] (v1v3)  to node[fill=white, inner sep=0pt]{\scriptsize$(1,0)$} (v3);

     \node[draw, inner sep=3pt] (v2v4) at (-4,-7.5){$v_2,\{v_4\}$};
     \node[draw,inner sep=3pt] (v3v4) at (0,-7.5){$v_3,\{v_4\}$};

    \draw[->,ultra thick,blue] (v2) to node[fill=white, inner sep=0pt]{\scriptsize \color{black}$(0,0)$} (v2v4);
    \draw[->,ultra thick,blue] (v3) to node[fill=white, inner sep=0pt]{\scriptsize \color{black}$(0,0)$} (v3v4);

    \node[draw,inner sep=3pt] (v4) at (-2,-9){$v_4$};
    \draw[->,thick] (v2v4) to node[fill=white, inner sep=0pt]{\scriptsize$(1,0)$} (v4);
    \draw[->,thick] (v3v4) to node[fill=white, inner sep=0pt]{\scriptsize$(2,0)$} (v4);

    \node[draw,circle, inner sep=3pt] (v6) at (6,-9){$v_6$};
    \node[draw,inner sep=3pt] (v6v7) at (8,-7.5){$v_6,\{v_7\}$};
    \node[draw,inner sep=3pt] (v6v5v7) at (6,-7.5){$v_6,\{v_5,v_7\}$};
    \node[draw,inner sep=3pt] (v6v5) at (4,-7.5){$v_6,\{v_5\}$};

    \draw[->,thick] (v4) to node[fill=white,inner sep=0pt]{\scriptsize$(1,0)$} (v6);
    \draw[->,thick] (v6) to node[fill=white,inner sep=0pt]{\scriptsize$(0,0)$} (v6v5v7);
    \draw[->,thick] (v6) to[bend right] node[fill=white,inner sep=0pt]{\scriptsize$(0,1)$} (v6v7);
    \draw[->,ultra thick, blue] (v6) to[bend left] node[fill=white,inner sep=0pt]{\scriptsize \color{black} $(0,10)$} (v6v5);

    \node[draw,inner sep=3pt, accepting] (v5) at (4,-6){$v_5$};
    \node[draw,inner sep=3pt] (v7) at (8,-6){$v_7$};

    \draw[->,thick] (v4) to[bend right=15] node[fill=white,inner sep=0pt]{\scriptsize$(1,0)$} (v5);
     \draw[->,thick] (v6v5) to node[fill=white,inner sep=0pt]{\scriptsize$(1,0)$} (v5);
      \draw[->,thick] (v6v5v7) to node[fill=white,inner sep=0pt]{\scriptsize$(1,0)$} (v5);

       \draw[->,thick] (v6v5v7) to node[fill=white,inner sep=0pt]{\scriptsize$(10,0)$} (v7);
        \draw[->,thick] (v6v7) to node[fill=white,inner sep=0pt]{\scriptsize$(10,0)$} (v7);

     \draw[->,thick] (v7) to node[fill=white,inner sep=0pt]{\scriptsize$(1,0)$} (v5);

      \draw[->,thick] (v5) to[loop above] node[fill=white,inner sep=0pt]{\scriptsize$(1,0)$} (v5);

    \end{tikzpicture}
    }
    \caption{Associated extended game $\extendedGame_{\CP}$ of the quantitative reachability game illustrated in Figure~\ref{fig:quantitativeReachGameMultiStrat}.}
    \label{fig:extendedGame}
\end{figure}
    
\end{example}

\begin{remark}
\label{rem:polynomialNumberOfStates}
    Notice that even if the construction of the associated extended game of a quantitative reachability game $\game$ leads to an exponential blow-up of the size of the game, the number of vertices owned by $\playerOne$ remains unchanged, and so polynomial.
\end{remark}


Let $\game$ be a quantitative reachability game   and  $\extendedGame_{\kappa}$ be its associated extended game.  
Given a pair $(x,y) \in \N \times \N$, by (a naive application of) Proposition~\ref{prop:noCycle}, we know that  if there exists a strategy  
$\sigma_1$ of $\playerOne$ in $\extendedGame_{\kappa}$  which ensures $(x,y)$, then there exists another strategy $\sigma'_1$ that also ensures $(x,y)$ and such that for all plays consistent with it, the target set is reached within $|V^X|$ steps, where $V^X$ is the set of vertices in $\extendedGame_{\kappa}$ and so $|V^X|$ is exponential in $|V|$. 

However, a closer look at  the construction of the extended game allows to show that  the target set is in fact reached within at most $2\cdot|V|$  steps. Indeed,
let us recall that Proposition~\ref{prop:noCycle} is obtained by removing adequately cycles along plays consistent with $\sigma_1$. Moreover, given a play in the extended game, it is made up of a succession of vertices from $\game$ and vertices specific to $\extendedGame_{\kappa}$. Notice there is always at most one such later kind of vertex between two vertices from $\game$.
Finally, if there is a  cycle in a play of the extended game, there exists a cycle, in the same play, between two vertices that were initially in $\game$.
For these reasons, the target set is reached within at most $2\cdot|V|$  steps along all consistent plays with $\sigma'_1$. 
Those observations lead to the following result.

\begin{corollary}[of Proposition~\ref{prop:noCycle}]
\label{cor:noCycleExtended}
    Let $\game$ be a quantitative reachability game, $v \in V$ be a vertex  and  $(x,y) \in \N \times \N$. In the associated extended game of $\game$:  if there exists a strategy $\sigma_1$ of $\playerOne$ such that for all strategies $\sigma_2$ of $\playerTwo$, $ \dCost^X(\outcome{\sigma_1}{\sigma_2}{v}) \lesssim (x,y)$, then there exists a strategy $\sigma'_1$ of $\playerOne$ such that for all strategies $\sigma_2$ of $\playerTwo$  we have \emph{(i)} $\dCost^X(\outcome{\sigma_1}{\sigma_2}{v}) \lesssim (x,y)$ and \emph{(ii)}  $\lengthF{\outcome{\sigma_1}{\sigma_2}{v}} \leq 2\cdot|V|$.
\end{corollary}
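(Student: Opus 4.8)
The plan is to revisit the cycle-removal procedure underlying Proposition~\ref{prop:noCycle} and apply it verbatim inside the extended game $\extendedGame_{\kappa}$, and only then to replace the crude length bound $|V^X|$ by a refined count that exploits the particular shape of $\extendedGame_\kappa$. Concretely, I would start from the hypothesised strategy $\sigma_1$ of $\playerOne$ ensuring $(x,y)$ and run the argument of Proposition~\ref{prop:noCycle} on the strategy tree $\stratTree{\sigma_1}$: successively deleting every cycle in every branch yields a strategy $\sigma'_1$ such that, for each $\sigma_2$, the outcome $\outcome{\sigma'_1}{\sigma_2}{v}$ still satisfies $\dCost^X(\outcome{\sigma'_1}{\sigma_2}{v}) \lesssim (x,y)$ and, moreover, visits \emph{no} vertex of $V^X$ twice before meeting $\targetSet$. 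Property~(i) of the corollary is then inherited directly from Proposition~\ref{prop:noCycle}, so all the work goes into bounding $\lengthF{\outcome{\sigma'_1}{\sigma_2}{v}}$.

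For the length bound I would exploit the two-layered structure of $\extendedGame_\kappa$ coming from the gadgets in Figures~\ref{fig:transfoV1} and~\ref{fig:transfoV2}. I would partition $V^X$ into the \emph{game vertices}, namely the copies of the vertices of $\game$, and the \emph{gadget vertices}, namely the auxiliary vertices of the form $(w,A)$ with $A$ a set of successors. The decisive observations are that (a) a gadget vertex $(w,A)$ is only ever entered from the single game vertex $w \in V_1$, so its occurrences in a play are always immediately preceded by $w$; and (b) along any consistent play, game and gadget vertices alternate with at most one gadget vertex between two consecutive game vertices---from a game vertex in $V_1$ one moves to a gadget vertex and then to a game vertex, whereas from a game vertex in $V_2$ one moves directly to a game vertex, and there are no gadget-to-gadget edges. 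Since the strategy produced in the first step has no repeated vertex at all, in particular no game vertex is repeated along a consistent play; observation~(a) moreover shows that controlling the game vertices alone would already have sufficed.

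Putting these together, I would argue as follows. Fix $\sigma_2$ and let $\rho' = \outcome{\sigma'_1}{\sigma_2}{v}$; write $m$ for the number of game vertices occurring along $\rho'$ strictly before the first target vertex (the target vertices of $\extendedGame_\kappa$ being game vertices themselves). Since no game vertex repeats, $m \leq |V|$. By observation~(b) each such game vertex is followed by at most one gadget vertex before the next game vertex, so the gadget vertices number at most $m$ as well. Hence the first target vertex occurs after at most $m + m \leq 2\cdot|V|$ edges, giving $\lengthF{\rho'} \leq 2\cdot|V|$ and establishing~(ii).

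The main obstacle I anticipate is not the counting itself but making the structural claims~(a) and~(b) fully rigorous against the formal definition of $\extendedGame_\kappa$ in Appendix~\ref{appendix:permissiveMultiStrat}: one must check that every edge of $\extendedGame_\kappa$ indeed goes game-to-gadget, gadget-to-game, or game-to-game according to whether the source is a $\playerOne$-game vertex, a gadget vertex, or a $\playerTwo$-game vertex, with no gadget-to-gadget edges, and that $\targetSet$ in $\extendedGame_\kappa$ is composed solely of game vertices. Once this case analysis on the gadget construction is pinned down, the refinement of the bound from $|V^X|$ to $2\cdot|V|$ is exactly the elementary alternation count above.
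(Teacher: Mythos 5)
Your proposal is correct and follows essentially the same route as the paper: apply the cycle-removal procedure behind Proposition~\ref{prop:noCycle} inside the extended game (preserving the cost bound), then exploit the layered structure of $\extendedGame_\kappa$ --- at most one gadget vertex between consecutive game vertices, targets being game vertices --- to refine the length bound from $|V^X|$ to $2\cdot|V|$. Your explicit counting of game versus gadget positions is just a more detailed rendering of the paper's sketch, which argues identically that any cycle in a consistent play induces a cycle on the original vertices of $\game$ and that gadget vertices can only interleave singly.
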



\subsection{Constrained Existence of Multi-strategies}
\label{section:constrainedExistencePermi}
This section is devoted to the proofs of complexity results related to the constrained existence of multi-strategies problems. We first prove that the MCE1 problem is $\PSPACE$-complete (Proposition~\ref{prop:constrainedComponentPermi}) and then that the MCE2 and MCE3 problems belong to $\NP$ (Proposition~\ref{prop:constrainedLexicoPermi}).

\begin{proposition}
\label{prop:constrainedComponentPermi}
    The MCE1 problem is $\PSPACE$-complete.
\end{proposition}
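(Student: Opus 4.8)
The plan is to prove $\PSPACE$-completeness by establishing membership and hardness separately, leveraging the correspondence with multi-weighted reachability games provided by Proposition~\ref{prop:correspondanceExtended} together with Corollary~\ref{cor:noCycleExtended}.

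\textbf{$\PSPACE$-easiness.} First I would invoke Proposition~\ref{prop:correspondanceExtended} to reduce the MCE1 problem to the componentwise \CE problem in the associated extended game $\extendedGame_{\CP}$: there exists a multi-strategy $\multiStrat_1$ with $\CP(\multiOutcome{\multiStrat_1}{v_0}) \leqC (x,y)$ if and only if there exists a strategy $\sigma_1$ of $\playerOne$ in $\extendedGame_{\CP}$ ensuring $(x,y)$ with respect to $\leqC$. The naive approach would be to appeal directly to Theorem~\ref{thm:constrainedResComponent} ($\PSPACE$-completeness of the componentwise \CE problem), but this is delicate because the size of $\extendedGame_{\CP}$ is \emph{exponential} in $|V|$ (Remark~\ref{rem:polynomialNumberOfStates}), so a black-box reduction would only yield an $\textsc{ExpSpace}$ bound. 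The key is to use Corollary~\ref{cor:noCycleExtended} instead: if a suitable strategy exists, then one exists along which the target set is reached within $2\cdot|V|$ steps, which is \emph{polynomial} in the size of the original game. I would therefore design an alternating Turing machine that explores plays in $\extendedGame_{\CP}$ directly, where $\playerOne$'s vertices (disjunctive/existential states) and $\playerTwo$'s vertices (conjunctive/universal states) alternate; crucially the extended-game vertices of the form $(v,\multiActions)$ need not be materialized in full, since from a vertex $v \in V_1$ the machine simply guesses a nonempty subset $\multiActions \subseteq \successor(v)$ (describable in polynomial space), charges the corresponding penalty on the second component, and lets the universal branching resolve the choice among $\multiActions$. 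A branching path of length at most $2\cdot|V|$ is accepted if and only if it reaches $\targetSet$ with accumulated cost profile $\leqC (x,y)$. This runs in alternating polynomial time, and since $\APTime = \PSPACE$, membership follows. This adaptation of the $\APTime$ argument underlying Theorem~\ref{thm:constrainedResComponent}, combined with the polynomial step-bound from Corollary~\ref{cor:noCycleExtended}, is the main obstacle I anticipate, as it is where the exponential blow-up must be circumvented.

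\textbf{$\PSPACE$-hardness.} For hardness I would again reduce from the \textsc{Quantified Subset-Sum} problem, reusing the gadget from the proof of Theorem~\ref{thm:constrainedResComponent}. However, the reduction must produce a \emph{quantitative} reachability game (a single weight per edge plus a penalty function) rather than a $2$-weighted game, since the MCE1 problem is stated over quantitative reachability games. The natural idea is to encode the two dimensions used in Figure~\ref{fig:hardnessGame} (the "selected" sum on one coordinate and the "unselected" sum on the other) into the pair (worst ensured cost, penalty) of a multi-strategy: the cost coordinate should track the selected sum $\sum_i x_i a_i$, while the penalty coordinate should track $\sum_i(1-x_i)a_i$ through the weights of edges that $\playerOne$ blocks. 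Concretely, at each existentially-quantified variable vertex I would give $\playerOne$ two outgoing edges (assign $1$ or assign $0$); choosing the value $1$ costs $a_k$ on the weight while the blocked edge contributes $a_k$ to the penalty, and symmetrically for $0$, so that a \emph{winning} single-successor multi-strategy encodes an assignment with cost equal to the selected sum and penalty equal to the unselected sum. Universally-quantified variables must instead be controlled by $\playerTwo$, which requires care, since in a multi-strategy setting $\playerOne$ owns the branching; I would route these through $\playerTwo$-vertices so that $\playerTwo$ resolves the assignment, mirroring the $\forall$ player. Then the formula $\Psi$ is true if and only if there exists a multi-strategy with $\CP(\multiOutcome{\multiStrat_1}{v_0}) \leqC (T, \sum_{1 \leq i \leq n} a_i - T)$, establishing the reduction.

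Combining the two parts yields that the MCE1 problem is $\PSPACE$-complete. I expect the easiness direction to be the genuinely subtle one, precisely because the correspondence game is exponentially large and one must exploit the refined polynomial step-bound of Corollary~\ref{cor:noCycleExtended} together with an on-the-fly alternating exploration rather than a direct appeal to Theorem~\ref{thm:constrainedResComponent}.
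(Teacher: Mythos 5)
Your $\PSPACE$-easiness argument is correct and is essentially the paper's own proof: an alternating Turing machine whose existential states correspond to $\playerOne$'s vertices of the original game (guessing a nonempty subset of successors and charging its penalty), whose universal states both resolve the choice inside the guessed subset and simulate $\playerTwo$'s moves, with the exploration cut off after $2\cdot|V|$ steps thanks to Corollary~\ref{cor:noCycleExtended}, accepting iff the target is reached with accumulated cost and penalty below $(x,y)$; since $\APTime=\PSPACE$, membership follows.

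The hardness half, however, has a genuine gap, located exactly at the point you flagged as "requiring care". In a quantitative reachability game, penalties are accrued \emph{only} at $\playerOne$'s vertices, from the edges that $\playerOne$ blocks; $\playerTwo$'s moves never generate any penalty. So if the vertex of a universally quantified variable $x_k$ is simply owned by $\playerTwo$, then when $\playerTwo$ assigns $x_k=0$ the value $a_k$ is recorded nowhere: the penalty coordinate no longer equals the unselected sum, the identity $\text{cost}+\text{penalty}=\sum_i a_i$ on which the equality-forcing argument rests breaks down, and the two inequalities no longer force the selected sum to equal $T$. Concretely, take $a_1=a_2=1$, $T=1$, with $x_1$ existential and $x_2$ universal: the formula $\exists x_1 \forall x_2,\ x_1+x_2=1$ is false, yet the multi-strategy selecting only the $0$-edge at $x_1$ has worst cost $1\leq T$ and penalty $1\leq \sum_i a_i - T$, so your reduction would answer yes. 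This is precisely why the paper does not reduce from \textsc{Quantified Subset-Sum} at all: it reduces from the \CE problem in $2$-weighted reachability games (already $\PSPACE$-hard for $d=2$ by Theorem~\ref{thm:constrainedResComponent}) and replaces \emph{every} edge of weight $(c_1,c_2)$ --- regardless of which player owns its source --- by the gadget of Figure~\ref{fig:hardnessPermi}, which inserts a fresh intermediate $\playerOne$-vertex from which a non-target trap $\perp$ is reachable with penalty $c_2$. Any winning multi-strategy is \emph{forced} to block the edge to $\perp$, so the second weight component is converted into a penalty on every consistent play, including those following $\playerTwo$'s moves; this construction also disposes of the hedging issue your sketch leaves untreated (your converse direction only discusses single-successor multi-strategies, but a multi-strategy may allow both values of an existential variable, and with your direct encoding the hedged $0$-branches again break the identity). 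Your reduction could be repaired by inserting this forced-blocking gadget on the universal $0$-edges and handling hedging explicitly, but as written it is unsound.
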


\textbf{$\PSPACE$-easiness} Thanks to Corollary~\ref{cor:noCycleExtended} we are able to prove that the MCE1 problem belongs to $\APTime$ and since $\APTime = \PSPACE$ we get the result. The alternating Turing machine works as follows. The states of the alternating Turing machine are split between existential states and universal states. Existential states correspond to $\playerOne$'s states in the quantitative reachability game. From these states a subset of successors is non-deterministically guessed. Universal states have two roles: either they resolve the non-determinism caused by a choice of an existential state or they correspond to a vertex of $\playerTwo$ in the quantitative reachability game. Three  polynomial counters are used in order to keep track information about the execution of the algorithm: \emph{(i)} a first counter, upper-bounded by $2\cdot |V|$,  keeps track the number of states visited along the execution of the algorithm; \emph{(ii)} a second counter, upper-bounded by $x$, accumulates the costs along the execution; and \emph{(iii)} a third counter, upper-bounded by $y$, accumulates the penalties (caused by choices of existential states) along the execution. A path of lenght $2 \cdot |V|$ is accepted, if and only if, \emph{(i)} the target set is reached along that path and \emph{(ii)} the values of the counters are less than or equal to their upper bound when the target set is reached. \\

\textbf{$\PSPACE$-hardness} The hardness part of Proposition~\ref{prop:constrainedComponentPermi} is due to a polynomial reduction from the Constrained Existence problem in multi-weighted reachability games (see Definition~\ref{prob-constProb}) which is proved $\PSPACE$-complete (Theorem~\ref{thm:constrainedResComponent})\footnote{Notice that this result holds even when $d=2$.}. Let $(\game_2,v_0)$ be an initialized $2$-weighted reachability game. Let us sketch out the construction of the corresponding quantitative reachability game $(\game',v_0)$ equipped with a penalty function.  This construction is inspired by a similar one provided in~\cite{BDMR09}.

Each vertex in $\game$ becomes a vertex in $\game'$ and is owned by the same player. Moreover, the target set is the same in both games.
Finally, an edge $(v,v')$ labeled by $(c_1,c_2)$ in $\game$ is transformed into a gadget in $\game'$, as shown in Figure~\ref{fig:hardnessPermi} where a diamond represents either a $\playerOne$'s vertex or a $\playerTwo$'s vertex.

\begin{figure}[ht!]
\centering
\scalebox{0.9}{\begin{tikzpicture}

    \node[draw,diamond, inner sep=4.5pt] (v) at (0,0){$v$};
    \node[draw,diamond, inner sep=3pt] (y) at (2.5,0){$v'$};
    
    \draw[->,thick] (v) to node[fill=white, inner sep=0pt]{$(c_1,c_2)$} (y);
    
    \draw[->, ultra thick] (3.5,0) to (5.5,0);

    \node[draw,diamond, inner sep=4.5pt] (vbis) at (6.5,0){$v$};

    \node[draw,circle, inner sep=7pt] (int) at (9,0){};

    \node[draw, diamond, inner sep=3pt] (v'bis) at (11.5,0){$v'$};

    \node[draw,circle, inner sep=3pt] (perp) at (9,2){$\perp$};
    
    \draw[->,thick] (vbis) to node[fill=white, inner sep=1pt]{$c_1 \mid 0$} (int);
    \draw[->,thick] (int) to node[fill=white, inner sep=1pt]{$0 \mid 0$} (v'bis);
    \draw[->,thick] (int) to node[fill=white, inner sep=1pt]{$0 \mid c_2$} (perp);

    \draw[->,thick] (perp) to [loop right] node[fill=white,inner sep=1pt]{$0 \mid 0$}(perp);
    \end{tikzpicture}
    }
    \caption{Transformation of an edge in  a $2$-weighted reachability game to a corresponding gadget in a quantitative reachability game.}
    \label{fig:hardnessPermi}
    \end{figure}

In view of this reduction, given $(x,y) \in \N \times \N$, there exists a (simple) strategy $\sigma_1$ of $\playerOne$ in $(\game,v_0)$ such that for all strategies $\sigma_2$ of $\playerTwo$, $\dCost(\outcome{\sigma_1}{\sigma_2}{v_0}) \leqC (x,y)$ if and only if there exists a multi-strategy $\multiStrat_1$ of $\playerOne$ in $(\game',v_0)$ such that $\CP(\multiOutcome{\multiStrat_1}{v_0}) \leqC (x,y)$.

\begin{proposition}
\label{prop:constrainedLexicoPermi}
    The MCE2 et MCE3 problems belong to $\NP$.
\end{proposition}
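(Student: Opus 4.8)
I prove that the MCE2 problem belongs to $\NP$; the argument for MCE3 is entirely symmetric, replacing $\extendedGame_{\CP}$ by $\extendedGame_{\PC}$ and exchanging the roles of the worst ensured cost and the penalty. The plan is to reduce MCE2, via Proposition~\ref{prop:correspondanceExtended}, to the lexicographic constrained existence problem in the extended game, and then to circumvent the exponential size of that game by exploiting two facts: the extended game has only polynomially many $\playerOne$ vertices (Remark~\ref{rem:polynomialNumberOfStates}), and the lexicographic order is positionally determined (Proposition~\ref{prop:stratOptiLexicoPositional}). Concretely, by Proposition~\ref{prop:correspondanceExtended} there is a multi-strategy $\multiStrat_1 \in \msSet{1}{v_0}$ with $\CP(\multiOutcome{\multiStrat_1}{v_0}) \leqL (x,y)$ if and only if there is a strategy $\sigma_1 \in \stratSet{1}{v_0}$ of $\playerOne$ in $\extendedGame_{\CP}$ ensuring $(x,y)$, \IE with $\dCost^X(\outcome{\sigma_1}{\sigma_2}{v_0}) \leqL (x,y)$ for every $\sigma_2$; it thus suffices to place the latter problem in $\NP$.

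First I would observe that a positional witness always suffices: if some strategy ensures $\leqL (x,y)$, then $(x,y) \in \ensureL{v_0}$, hence $\upValue{v_0} \leqL (x,y)$, and by Proposition~\ref{prop:stratOptiLexicoPositional} the positional lexico-optimal strategy ensures $\upValue{v_0}$, hence ensures $\leqL (x,y)$ as well. The $\NP$ procedure therefore guesses a positional strategy $\sigma_1$ in $\extendedGame_{\CP}$. Even though $\extendedGame_{\CP}$ has exponentially many vertices, by Remark~\ref{rem:polynomialNumberOfStates} its $\playerOne$ vertices are exactly those of $V_1$, and at each $v \in V_1$ a positional strategy just picks one outgoing edge, \IE one subset $S_v$ of the successors of $v$ in $\game$. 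Each such choice is encoded with at most $|V|$ bits, so the guessed strategy has a certificate of size $\bigO(|V|^2)$.

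It remains to verify in polynomial time that the guessed $\sigma_1$ ensures $\leqL (x,y)$. Fixing $\sigma_1$ collapses $\extendedGame_{\CP}$ to the subgraph $\mathcal{Y}$ made of the original vertices together with the single vertex $(v, S_v)$ for each $v \in V_1$; this subgraph has $\bigO(|V|)$ vertices and in it only $\playerTwo$ has genuine choices. On $\mathcal{Y}$ one computes the lexicographically largest cost profile that $\playerTwo$ can force from $v_0$: since by Corollary~\ref{cor:noCycleExtended} it is enough to consider plays reaching the target within $2\cdot|V|$ steps, this value is obtained by a backward value-iteration of $2\cdot|V|$ rounds over the polynomial-size graph $\mathcal{Y}$ (a lexicographic reachability-cost computation, polynomial in the spirit of Theorem~\ref{thm:complexityLexico}). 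The machine accepts if and only if this profile is $\leqL (x,y)$; in particular it rejects whenever $\playerTwo$ can force the target to be avoided, as the cost component is then $\infty$. Both the guess and the verification are polynomial, so the procedure runs in nondeterministic polynomial time.

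Correctness then follows from the two reductions: completeness from the positional-witness argument above, and soundness because an accepted certificate yields a strategy ensuring $\leqL (x,y)$ in $\extendedGame_{\CP}$, which by Proposition~\ref{prop:correspondanceExtended} provides the desired multi-strategy. The crux of the argument, and the only real obstacle, is to never manipulate the exponential-size extended game explicitly: everything relies on the fact that a positional strategy is pinned down by $\playerOne$'s polynomially many vertices, and that once it is fixed the residual arena $\mathcal{Y}$ is polynomial, keeping the evaluation of the worst-case lexicographic profile polynomial.
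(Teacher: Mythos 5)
Your proof is correct and follows essentially the same route as the paper: reduce MCE2 to lexicographic constrained existence in $\extendedGame_{\CP}$ via Proposition~\ref{prop:correspondanceExtended}, guess a positional strategy (justified by Proposition~\ref{prop:stratOptiLexicoPositional} and encoded polynomially thanks to Remark~\ref{rem:polynomialNumberOfStates}), and verify it by computing the lexicographic upper value on the polynomial-size restricted game, handling MCE3 symmetrically with $\extendedGame_{\PC}$. Your explicit $2\cdot|V|$-round value iteration for the verification step is just an unfolding of the paper's direct appeal to Theorem~\ref{thm:complexityLexico}, so the two arguments coincide.
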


\begin{proof}Let $(\game,v_0)$ be an initialized quantitative reachability game and let $(x,y) \in \N \times \N$. Thanks to Proposition~\ref{prop:correspondanceExtended}, we know that deciding the MCE2 problem amounts to finding a strategy $\sigma_1$ of $\playerOne$ from $v_0$ in the initialized extended game $(\extendedGame_{\CP},v_0)$ associated with $(\game,v_0)$ such that for all strategies $\sigma_2$ of $\playerTwo$, we have that $\dCost^X(\outcome{\sigma_1}{\sigma_2}{v_0}) \leqL (x,y)$. Moreover, thanks to Proposition~\ref{prop:stratOptiLexicoPositional} it is sufficient to guess a memoryless strategy $\sigma'_1$. In view of Remark~\ref{rem:polynomialNumberOfStates}, it amounts to guessing a subset of successors for each vertex $v \in V_1$. There is a polynomial number of such vertices since they are vertices in $(\game,v_0)$. Once $\sigma'_1$ is fixed, we can restrict $(\extendedGame_{\CP},v_0)$ to a multi-weighted reachability game in which each vertex owned by $\playerOne$ has only one successor. Finally, we compute $\upValue{v_0}$ in this restricted game in polynomial time (by Theorem~\ref{thm:complexityLexico}) as this restricted game has a polynomial size. We conclude: $\sigma'_1$ ensures $(x,y)$ if and only if $\upValue{v_0} \leqL (x,y)$.

The MCE3 problem can be decided exactly in the way by considering the initialized extended game $(\extendedGame_{\PC},v_0)$ instead of $(\extendedGame_{\CP},v_0)$
\end{proof}


\subsection{Ensured Values by Multi-strategies}
\label{section:ensuredValuesPermi}

In this section we explain why, given a quantitative reachability game $(\game,v_0)$,  the set $\msPareto(v_0)$ can be computed in exponential time (Proposition~\ref{prop:paretoFrontierPermi}) while the values $\cVal(v_0)$ and $\pVal(v_0)$ can be computed thanks to an algorithm that makes a polynomial number of calls to a decision problem in $\NP$ (Proposition~\ref{prop:valuesPermi}). 
For all these results we exploit the fact that computing $\msPareto(v_0)$ (resp. $\cVal(v_0)$ and $\pVal(v_0)$) returns the same set of pairs (resp. the same pair) if the computation is performed in $(\game,v_0)$ or in its associated extended game $(\extendedGame_{\kappa},v_0)$ (by Proposition~\ref{prop:correspondanceExtended}).

\begin{proposition}
\label{prop:paretoFrontierPermi}
Given an initialized quantitative reachability game $(\game,v_0)$, the set $\msPareto(v_0)$ can be computed in exponential time.
\end{proposition}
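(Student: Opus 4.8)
The plan is to reduce the computation of $\msPareto(v_0)$ in the quantitative reachability game $(\game,v_0)$ to the computation of a Pareto frontier in its associated extended game, and then to invoke the fixpoint algorithm already developed for $2$-weighted reachability games in the first part of the paper. First I would build the extended game $(\extendedGame_{\CP},v_0)$ as described in Section~\ref{section:fromMultiToPermi}. Recall from Remark~\ref{rem:polynomialNumberOfStates} that this construction induces an exponential blow-up in the number of vertices, while remaining a well-defined $2$-weighted reachability game to which all the machinery of Section~\ref{section:ensuredValues} applies.

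The crucial step is to identify $\msPareto(v_0)$, which by definition equals $\minimal(\msEnsure^{\CP}_{\leqC}(v_0))$ computed in $\game$, with the Pareto frontier $\pareto{v_0}$ computed in $\extendedGame_{\CP}$ for the componentwise order. This follows from Proposition~\ref{prop:correspondanceExtended}: for every $(c,p) \in \N \times \N$, there is a multi-strategy of $\playerOne$ in $\game$ with $\CP(\multiOutcome{\multiStrat_1}{v_0}) \leqC (c,p)$ if and only if there is a (simple) strategy of $\playerOne$ in $\extendedGame_{\CP}$ ensuring $(c,p)$. Hence the two upward-closed sets $\msEnsure^{\CP}_{\leqC}(v_0)$ and $\ensureC{v_0}$ (the latter computed in $\extendedGame_{\CP}$) agree on $\N \times \N$. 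Since we restrict attention to winning multi-strategies, for which the worst ensured cost is finite and, once the target is reached, the penalty can always be made finite by allowing all successors, every componentwise-minimal element is finite; therefore the two frontiers coincide.

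Once this identification is established, I would run the fixpoint algorithm of Section~\ref{section:algo} on $\extendedGame_{\CP}$ to extract its Pareto frontier. The remaining work is a complexity bookkeeping, and here the finer bound matters: by Theorem~\ref{thm:complexityComponent}, because the number of dimensions is fixed at $\di = 2$, the algorithm runs in time polynomial in $|V^X|$ and in the largest weight $\maxW^X$ occurring in $\extendedGame_{\CP}$ (cf.\ Statement~\ref{item:resEnsure3}). Now $|V^X|$ is exponential in $|V|$, whereas $\maxW^X$ is a sum of at most $|V|$ original penalties and therefore stays polynomial in the size of $(\game,v_0)$. Combining a polynomial dependence on $\maxW^X$ with an exponential vertex count yields an overall running time exponential in the size of $(\game,v_0)$, as claimed. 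It is worth stressing that one must use the $\di$-fixed, pseudo-polynomial estimate and \emph{not} merely the coarse exponential bound of Statement~\ref{item:resEnsure2}: feeding an exponential-size game into an algorithm whose cost is exponential in the input would give a double-exponential bound, which we cannot afford.

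The main obstacle I expect is not the algorithmic call, which is essentially a black box, but the careful justification that the minimal elements truly coincide across the two games, in particular the handling of the $\tup{\infty}$ boundary: one must argue that no spurious minimal element with an infinite component is introduced or lost when passing between $\msEnsure^{\CP}_{\leqC}(v_0)$, defined over $\NInf \times \NInf$, and the correspondence of Proposition~\ref{prop:correspondanceExtended}, which is stated only for pairs in $\N \times \N$. The second delicate point is the explicit bound on $\maxW^X$; one must confirm that, although a single edge of $\extendedGame_{\CP}$ carries a penalty equal to a sum over its blocked successors, this sum is bounded by $|V|$ times the maximal original penalty and hence does not spoil the pseudo-polynomial estimate that drives the exponential-time conclusion.
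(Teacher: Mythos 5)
Your proposal is correct and takes essentially the same route as the paper: the paper's proof is precisely to apply Theorem~\ref{thm:complexityComponent} to the associated extended game $(\extendedGame_{\CP},v_0)$, using Proposition~\ref{prop:correspondanceExtended} to identify $\msPareto(v_0)$ with the Pareto frontier of that $2$-weighted game. Your two added precisions --- that one must invoke the bound polynomial in $\maxW^X$ and $|V^X|$ with $\di=2$ fixed (rather than a black-box exponential bound, which would yield double-exponential time), and that the correspondence of minimal elements must be checked at the finite/infinite boundary --- merely make explicit what the paper's one-line argument leaves implicit.
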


This result is a direct consequence of Theorem~\ref{thm:complexityComponent} applied to the initialized associated extended game $(\extendedGame_{\CP},v_0)$ of $(\game,v_0)$.

\begin{proposition}
\label{prop:valuesPermi}
Given an initialized quantitative reachability game $(\game,v_0)$, the values $\cVal(v_0)$ and $\pVal(v_0)$ can be computed thanks to an algorithm that makes a polynomial number of calls to a decision problem in $\NP$.
\end{proposition}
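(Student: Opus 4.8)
The plan is to reduce everything to the associated extended games and then to locate the two lexicographically minimal pairs by binary search on each coordinate, using an $\NP$ oracle. By Proposition~\ref{prop:correspondanceExtended}, computing $\cVal(v_0)$ (resp. $\pVal(v_0)$) in $(\game,v_0)$ is the same as computing the lexicographically minimal cost profile that a simple strategy of $\playerOne$ can ensure in $(\extendedGame_{\CP},v_0)$ (resp. $(\extendedGame_{\PC},v_0)$). Since $\leqL$ is total, each of $\cVal(v_0)=(c^*,p^*)$ and $\pVal(v_0)=(p^*,c^*)$ is a single pair, so it suffices to determine its two components one after the other.

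The decision problem I would call the oracle on is, given a pair $(x,y)\in\N\times\N$: does there exist a \emph{winning} multi-strategy $\multiStrat_1$ of $\playerOne$ from $v_0$ with $\CP(\multiOutcome{\multiStrat_1}{v_0})\leqL(x,y)$ (and symmetrically with $\PC$ in $\extendedGame_{\PC}$)? This problem is in $\NP$ by exactly the argument of Proposition~\ref{prop:constrainedLexicoPermi}: by Proposition~\ref{prop:stratOptiLexicoPositional} a positional strategy in $\extendedGame_{\kappa}$ suffices for the lexicographic order, and by Remark~\ref{rem:polynomialNumberOfStates} a positional strategy is simply a choice of one successor for each of the polynomially many vertices of $\playerOne$; after guessing it, one restricts $\extendedGame_{\kappa}$ to a game of polynomial size whose $\playerOne$-vertices are deterministic, computes $\upValue{v_0}$ in polynomial time (Theorem~\ref{thm:complexityLexico}), and checks that its cost component is finite (the multi-strategy is winning) and that $\upValue{v_0}\leqL(x,y)$. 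To bound the search ranges I would invoke Corollary~\ref{cor:noCycleExtended}: any ensurable pair is ensurable by a strategy reaching the target within $2\cdot|V|$ steps, so the accumulated cost is at most $2\cdot|V|\cdot\maxW$ and the accumulated penalty at most $2\cdot|V|^2\cdot\maxW$. Both relevant components are therefore bounded by a value $B$ polynomial in $|V|$ and $\maxW$, hence of polynomially many bits, and a binary search over $\{0,\dots,B\}$ terminates after $\bigO(\log B)$, i.e. polynomially many, oracle calls.

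For $\cVal(v_0)$ I first binary-search the smallest $c$ for which the oracle answers yes on $(c,B)$; because $\leqL$ compares the cost first and $B$ dominates every achievable penalty, this returns exactly $c^*$. Fixing $c=c^*$, I then binary-search the smallest $p$ giving a yes answer on $(c^*,p)$: since no winning multi-strategy beats $c^*$ on the first coordinate, a yes answer now forces worst cost $=c^*$ and penalty $\le p$, so the least such $p$ is $p^*$. For $\pVal(v_0)$ the same two-phase scheme is run in $\extendedGame_{\PC}$, first on the penalty coordinate and then on the cost coordinate.

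\textbf{The delicate point is precisely this $\pVal$ case.} When the penalty is the primary coordinate, a small first component no longer guarantees that the play reaches the target, so the plain MCE3 decision of Proposition~\ref{prop:constrainedLexicoPermi} would wrongly answer yes on account of the all-successors multi-strategy, which has penalty $0$ but need not be winning. This is why the oracle must \emph{explicitly enforce winningness} (the finiteness check on the cost component described above); with that check in place, the binary search on the penalty coordinate genuinely returns the least penalty of a \emph{winning} multi-strategy, after which fixing $p=p^*$ and searching the cost coordinate returns the least worst cost compatible with it. In the cost-first case $\cVal$ no such issue arises, since finiteness of the primary (cost) component already implies that the target is reached. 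Checking that this winning-enforced variant remains in $\NP$, and that the finiteness check is consistent with the definition of $\msEnsure^{\PC}_{\leqL}$, is the main thing to get right.
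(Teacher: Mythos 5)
Your proposal is correct and follows essentially the same route as the paper: reduce to the extended games via Proposition~\ref{prop:correspondanceExtended}, bound both coordinates using Corollary~\ref{cor:noCycleExtended} (the paper packages exactly these bounds as Lemma~\ref{lemma:upperBoundLexico}), and run two successive binary searches per value, each oracle call being an MCE2/MCE3 instance that is in $\NP$ by Proposition~\ref{prop:constrainedLexicoPermi}.

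One clarification on your ``delicate point'': the anomaly you describe afflicts only the literal reading of MCE3 stated over multi-strategies in the original game, whereas the $\NP$ procedure of Proposition~\ref{prop:constrainedLexicoPermi} --- and hence the oracle actually invoked in the binary search --- works inside $\extendedGame_{\PC}$, where any play that never reaches the target has cost profile $(\infty,\infty)$ by the reachability semantics of multi-weighted games; consequently $\upValue{v_0} \leqL (x,y)$ with $(x,y) \in \N \times \N$ already fails for every non-winning guessed strategy, so your explicit finiteness check is redundant (though harmless, and it does flag a real looseness in the paper's definitions, since $\msEnsure^{\PC}_{\leqL}$ as literally written does not exclude non-winning multi-strategies such as the all-successors one).
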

We first state an intermediate result which holds thanks to Corollary~\ref{cor:noCycleExtended}. 

\begin{lemma}
\label{lemma:upperBoundLexico}
     Let $(\game,v_0)$ be an initialized quantitative reachability game and $(\extendedGame_{\kappa},v_0)$  be its associated initialized extended game, let $(x,y) \in \N \times \N$, if there exists a strategy $\sigma_1$ of $\playerOne$ such that for all strategies $\sigma_2$ of $\playerTwo$, we have that $\dCost^X(\outcome{\sigma_1}{\sigma_2}{v_0}) \leqL (x,y)$, then there exists a strategy $\sigma'_1$ of $\playerOne$ such that for all strategies $\sigma_2$ of $\playerTwo$ we have that $$\dCost^X(\outcome{\sigma_1}{\sigma_2}{v_0}) \leqL \begin{cases} \min_{\leqL} \{ (x,y), (b_1, b_2)  \} &\text{ if } \kappa = \CP \\ \min_{\leqL} \{ (x,y), (b_2, b_1)  \} & \text{ if } \kappa = \PC \end{cases} $$ where $b_1=2\cdot |V| \cdot \max_{e \in E} \edgeCost(e)$ and $b_2=2 \cdot |V| \cdot  \max_{v \in V} \sum_{v' \in \successor(v)} \edgePenal(v,v')$.
\end{lemma}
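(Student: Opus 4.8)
The plan is to reduce the statement to Corollary~\ref{cor:noCycleExtended} together with a crude length-based bound on the cost profiles of short plays. I will treat the case $\kappa = \CP$ in detail; the case $\kappa = \PC$ is symmetric, obtained by permuting the two weight components, so that the bound $(b_1,b_2)$ becomes $(b_2,b_1)$.

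First I would invoke Corollary~\ref{cor:noCycleExtended}. Since $(x,y) \in \N \times \N$ is finite and the hypothesis gives a strategy $\sigma_1$ with $\dCost^X(\outcome{\sigma_1}{\sigma_2}{v_0}) \leqL (x,y)$ for every $\sigma_2$, the corollary (stated for a generic $\lesssim$, hence applicable to $\leqL$) produces a strategy $\sigma'_1$ such that, for all $\sigma_2$, both \emph{(i)} $\dCost^X(\outcome{\sigma'_1}{\sigma_2}{v_0}) \leqL (x,y)$ and \emph{(ii)} $\lengthF{\outcome{\sigma'_1}{\sigma_2}{v_0}} \leq 2\cdot|V|$. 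Note that finiteness of $(x,y)$ forces the target to actually be reached along every consistent play (otherwise some component would be $\infty$ and the profile could not be $\leqL (x,y)$), so the length bound in \emph{(ii)} genuinely controls the number of edges traversed before reaching $\targetSet$.

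The core step is then to convert the length bound into the profile bound $(b_1,b_2)$. Along any play consistent with $\sigma'_1$, at most $2\cdot|V|$ edges are crossed before $\targetSet$. In $\extendedGame_{\CP}$ the first component of every edge weight accumulates an original cost $\edgeCost(\cdot)$ while the second accumulates a penalty; consequently each edge adds at most $\max_{e \in E}\edgeCost(e)$ to the first component and at most $\max_{v \in V} \sum_{v' \in \successor(v)} \edgePenal(v,v')$ to the second. Summing over the at most $2\cdot|V|$ edges yields $\dCost^X(\outcome{\sigma'_1}{\sigma_2}{v_0}) \leqC (b_1,b_2)$ for every $\sigma_2$. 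Since the componentwise order refines the lexicographic one (if $\tup{u} \leqC \tup{v}$ then, taking the first index where they differ, $\tup{u} \leqL \tup{v}$), this immediately gives $\dCost^X(\outcome{\sigma'_1}{\sigma_2}{v_0}) \leqL (b_1,b_2)$.

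Finally I would combine the two lexicographic upper bounds. The strategy $\sigma'_1$ guarantees simultaneously $\dCost^X(\outcome{\sigma'_1}{\sigma_2}{v_0}) \leqL (x,y)$ and $\dCost^X(\outcome{\sigma'_1}{\sigma_2}{v_0}) \leqL (b_1,b_2)$ for all $\sigma_2$. Because $\leqL$ is a total order, $\min_{\leqL}\{(x,y),(b_1,b_2)\}$ is exactly one of these two pairs, and the outcome's profile lies $\leqL$ below both of them; hence it lies below their lexicographic minimum, which is the claimed bound. The case $\kappa = \PC$ follows identically with $(b_1,b_2)$ replaced by $(b_2,b_1)$. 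I do not expect a serious obstacle here: the only point demanding care is verifying that the two weight components of the extended game accumulate the original cost and the penalty \emph{separately}, so that the per-edge increments $\max_{e}\edgeCost(e)$ and $\max_{v}\sum_{v'}\edgePenal(v,v')$ are legitimate; everything else is the total-order bookkeeping for $\min_{\leqL}$.
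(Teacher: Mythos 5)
Your proof is correct and follows exactly the route the paper intends: the paper justifies Lemma~\ref{lemma:upperBoundLexico} only by the remark that it ``holds thanks to Corollary~\ref{cor:noCycleExtended}'', and your argument is precisely the fleshed-out version of that remark (apply the corollary to get the $2\cdot|V|$ length bound, observe that every edge of $\extendedGame_\kappa$ contributes at most $\max_{e}\edgeCost(e)$ to the cost component and at most $\max_{v}\sum_{v'}\edgePenal(v,v')$ to the penalty component, pass from $\leqC$ to $\leqL$, and take the lexicographic minimum of the two bounds). Your additional care about finiteness of $(x,y)$ forcing the target to be reached, and the componentwise-refines-lexicographic step, are exactly the details the paper leaves implicit.
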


Let us now move on the proof of Proposition~\ref{prop:valuesPermi}.

\begin{proof}[Proof of Proposition~\ref{prop:valuesPermi}]
    Let $(\game,v_0)$ be an initialized quantitative reachability game and  $(\extendedGame_{\CP},v_0)$ be its initialized extended game.
    
    Let $b_1=2\cdot |V| \cdot \max_{e \in E} \edgeCost(e)$ and $b_2=2 \cdot |V| \cdot  \max_{v \in V} \sum_{v' \in \successor(v)} \edgePenal(v,v')$. Let us compute $\cVal(v_0)$. Due to Lemma~\ref{lemma:upperBoundLexico} the first step amounts to finding the least $x^*$ such that the MCE2 problem is true with $(x,y) = (x^*,b_2)$. This value can be found thanks to a binary search between $0$ and $b_1$  by deciding a polynomial number of times the MCE2 problem.  Once that value $x^*$ is  found, we repeat the procedure in order to find the least $y^*$ such that the MCE2 problem is true with $(x,y) = (x^*,y^*)$. As previously this value is obtained that to a binary search between $0$ and $b_2$. This means deciding the MCE2 problem a polynomial number of times. At the end of this procedure, we obtain  $\cVal(v_0) = (x^*,y^*)$. 

    The procedure to compute $\pVal(v_0)$ is similar except that we consider the initialized extended game $(\extendedGame_{\PC},v_0)$, we decide a polynomial number of times the MCE3 problem and once $(x^*,y^*)$ are obtained we have that $\pVal(v_0) = (y^*,x^*)$. 
    
\end{proof}

\nocite{*}
\bibliographystyle{fundam}
\bibliography{biblio}


\appendix


\section{Additional Contents of Section~\ref{section:ensuredValues}: \nameref{section:ensuredValues}}
\label{appendix:ensuredValues}

\subsection{Fixpoint Algorithm}
\subsubsection{Termination}
\label{appendix:termination}

\begin{restate}{Proposition}{\ref{prop:noCycle}}
    \restateNoCycle
\end{restate}

The proof of Proposition~\ref{prop:noCycle} relies on the notion of strategy tree that we introduce hereunder.

\textbf{Strategy tree.} Given a game $\dGame$, $\tree$ is a tree rooted at $v$ for some $v \in V$ if \emph{(i)} $\tree$ is a subset of non-empty histories of $\dGame$, \IE $\tree \subseteq \hist(v)$, \emph{(ii)} $v \in \tree$ and \emph{(iii)} if $tu \in \tree$ then, $t \in \tree$. All $t \in \tree$ are called \emph{nodes} of the tree and the particular node $v$ is called the \emph{root} of the tree. As for histories in a game, for all $tu \in \tree$, $\last(tu) = u$. The \emph{depth} of a node $t \in \tree$, written $\depth(t)$, is equal to $|t|$ and its \emph{height}, denoted by $\height(t)$, is given by $ \sup \{ |\last(t)t'| \mid t' \in V^* \text{ and } tt' \in \tree \}$. The height of the tree corresponds to the height of its root. A node $t \in \tree$ is called a \emph{leaf} if $\height(t) =0$. We denote by $\tree_{\restriction t}$, the subtree of $\tree$ rooted at $t$ for some $t \in \tree$, that is the set of non-empty histories such that $t' \in \tree_{\restriction t}$  if and only if $ t'= tw$ for some $w \in V^*$. Finally, a (finite or infinite) \emph{branch} of the tree is a (finite or infinite) sequence of nodes $n_0n_1\ldots$ such that for all $k \in \N$, $(\last(n_k),\last(n_{k+1})) \in E$. The cost of an infinite branch is defined similarly as the cost a play: $\cost_i(n_0n_1\ldots)= \sum_{k=0}^{\ell-1} \weight_i(\last(n_k),\last(n_{k+1}))$ if $\ell$ is the least index such that $\last(n_\ell) \in \targetSet$ and $\cost_i(n_0n_1\ldots)= \infty$ otherwise. This definition may be easily adapted if the branch is finite.

When we fix a strategy $\sigma_1 \in \stratSet{1}{v}$ for some $v \in V$, we can see all the possible outcomes consistent with a strategy of $\playerTwo$ as a tree consistent with $\sigma_1$.
Given $\sigma_1 \in \stratSet{1}{v}$, the \emph{strategy tree $\stratTree{\sigma_1}$ of} $\sigma_1$ is such that: \emph{(i)} the root  of the tree is $v$, \emph{(ii)} for all $t \in \stratTree{\sigma_1}$, if $\last(t) \in \targetSet$ then, for all $t' \in V^+$, $tt' \not \in \stratTree{\sigma_1}$.
Otherwise,  if $\last(t) \in V_1 \backslash \targetSet$ then, $tv' \in \stratTree{\sigma_1}$ with $v' =\sigma_1(t)$. Else if $\last(t) \in V_2 \backslash \targetSet$,  for all $v' \in \successor(\last(t))$ we have $tv' \in \stratTree{\sigma_1}$. 

In the same way, a tree $\tree$ which satisfies the following conditions allows to define a strategy $\sigma_{\tree}$ of $\playerOne$. For all $t \in \tree$,
\begin{itemize}
    \item if $\last(t) = u \in \targetSet$, then there is no $u' \in V$ such that $tu'  \in \tree$;
    \item if $\last(t) = u \in V_2 \backslash \targetSet$ then, for all $u' \in \successor(u)$, $ tu' \in \tree$;
    \item if $\last(t) = u \in V_1 \backslash \targetSet$ then there exists a unique $u' \in \successor(u)$ such that $tu' \in \tree$; and $\sigma_{\tree}(t) = u'$.    
\end{itemize}

Notice that in this way, $\sigma_{\tree} \in \stratSet{1}{v}$ is not well defined on histories which are not consistent with $\sigma_{\tree}$ and some $\sigma_2 \in \stratSet{2}{v}$. This is not a problem for our purpose, we may assume that for such histories $h \in \hist_1(v)$, $\sigma_{\tree}(h) = v'$ for some arbitrary (fixed) $v' \in \successor(\last(h))$. We call this well defined strategy the \emph{strategy associated with} $\tree$.

\begin{example}
\label{ex:runningEx2}
    We illustrate the notion of strategy tree by considering the game described in Example~\ref{ex:runningEx1}. Let us recall that the game arena is given in Figure~\ref{fig:runningEx}.

    We define a strategy $\sigma_1$ of $\playerOne$ from $v_0$ as, for all $hv \in \hist_1(v_0)$:  $\sigma(hv) = v'$ with $v'=v_4$ if $ v \in \{ v_1,v_2,v_3\}$, $v' = v_6$ if  $hv \in \{ v_0v_2v_4, v_0v_3v_4\}$, $ v'= v_7$ if $hv = v_0v_1v_4$, $ v' = v_9$ if $v \in \{ v_6,v_7,v_9\}$,  $v' = v_8$ if  $v= v_5$ and $v' = v_{10}$  if  $v= v_8$. This strategy is a finite-memory strategy since the choice made in $v_4$ depends on the past history: if it crossed vertices $v_2$ or $v_3$ the next vertex is $v_6$ while it is $v_7$ if it crossed $v_1$.  We also define a positional strategy $\sigma_2$ of $\playerTwo$ from $v_0$ as, $\sigma_2(v_0) = v_2$ and $\sigma_2(v_{10}) = v_9$.  The outcome of the strategy profile $(\sigma_1,\sigma_2)$ from $v_0$ is $\outcome{\sigma_1}{\sigma_2}{v_0} = v_0v_2v_4v_6v_9^\omega$ and its cost profile is $\dCost(\outcome{\sigma_1}{\sigma_2}{v_0}) = (8,8)$.
    
    The strategy tree of $\sigma_1$ is \begin{multline*}     
    \stratTree{\sigma_1}  = \{v_0, v_0v_1, v_0v_2, v_0v_3, v_0v_1v_4, v_0v_2v_4, v_0v_3v_4, v_0v_1v_4v_7, \\v_0v_2v_4v_6, v_0v_3v_4v_6, v_0v_1v_4v_7v_9,  v_0v_2v_4v_6v_9, v_0v_3v_4v_6v_9\} \end{multline*} and is drawn in Figure~\ref{fig:stratTreeRunningEx}. The root of this tree is the node $n_0=v_0$ and there are three leaves $v_0v_1v_4v_7v_9,  v_0v_2v_4v_6v_9,$ and $ v_0v_3v_4v_6v_9$. The height of the root, $\height(v_0)$, is equal to $4$ and the depth of the node $n' = v_0v_2v_4$, $\depth(n')$, is equal to $2$.
\end{example}

\begin{figure}[ht]
    \centering
    \begin{tikzpicture}
        \node[draw, inner sep=3pt] (v0) at (0,0){$v_0$};
        \node[draw, rounded corners=6pt] (v0v1) at (1.5,1){$v_0v_1$};
        \node[draw, rounded corners=6pt] (v0v2) at (1.5,0){$v_0v_2$};
        \node[draw, rounded corners=6pt] (v0v3) at (1.5,-1){$v_0v_3$};

        \draw[->,thick] (v0) to[bend left] (v0v1);
        \draw[->,thick] (v0) to (v0v2);
        \draw[->,thick] (v0) to[bend right] (v0v3);

        \node[draw, rounded corners=6pt] (v0v1v4) at (3.5,1){$v_0v_1v_4$};
        \node[draw, rounded corners=6pt] (v0v1v4v7) at (6,1){$v_0v_1v_4v_7$};
        \node[draw, rounded corners=6pt] (v0v1v4v7v9) at (8.5,1){$v_0v_1v_4v_7v_9$};
        \draw[->,thick] (v0v1) to (v0v1v4);
        \draw[->,thick] (v0v1v4) to (v0v1v4v7);
        \draw[->,thick] (v0v1v4v7) to (v0v1v4v7v9);

        \node[draw, rounded corners=6pt] (v0v2v4) at (3.5,0){$v_0v_2v_4$};
        \node[draw, rounded corners=6pt] (v0v2v4v7) at (6,0){$v_0v_2v_4v_6$};
        \node[draw, rounded corners=6pt] (v0v2v4v7v9) at (8.5,0){$v_0v_2v_4v_6v_9$};
        \draw[->,thick] (v0v2) to (v0v2v4);
        \draw[->,thick] (v0v2v4) to (v0v2v4v7);
        \draw[->,thick] (v0v2v4v7) to (v0v2v4v7v9);

        \node[draw, rounded corners=6pt] (v0v3v4) at (3.5,-1){$v_0v_3v_4$};
        \node[draw, rounded corners=6pt] (v0v3v4v7) at (6,-1){$v_0v_3v_4v_6$};
        \node[draw, rounded corners=6pt] (v0v3v4v7v9) at (8.5,-1){$v_0v_3v_4v_6v_9$};
        \draw[->,thick] (v0v3) to (v0v3v4);
        \draw[->,thick] (v0v3v4) to (v0v3v4v7);
        \draw[->,thick] (v0v3v4v7) to (v0v3v4v7v9);

    \end{tikzpicture}
    \caption{Strategy tree $\stratTree{\sigma_1}$ of the strategy $\sigma_1$ defined in Example~\ref{ex:runningEx2}.}
    \label{fig:stratTreeRunningEx}
\end{figure}



\begin{proof}[Proof of Proposition~\ref{prop:noCycle}]
Let $\sigma_1$ be a strategy of $\playerOne$ from $v$ such that for all strategies $\sigma_2$ of $\playerTwo$ the target set $\targetSet$ is reached along $\outcome{\sigma_1}{\sigma_2}{v}$. Let us assume that the strategy tree of $\sigma_1$, $\stratTree{\sigma_1}$, is given by that of Figure~\ref{fig:winningStratTree}. As the target set is reached for all strategies of $\playerTwo$, all branches of the tree are finite. Formally, $\height(v) \neq \infty $ and  all branches $n_0 \ldots n_k$ end in a leaf with $\last(n_k) \in \targetSet$ and $\dCost(n_0\ldots n_k) \lesssim \tup{x}$. Notice that since $\stratTree{\sigma_1}$ is rooted at $v$, $n_0 = v$.\\

We remove cycles in the tree one by one: we begin with $\tree_0 = \stratTree{\sigma_1}$ and at each step of the procedure these two properties are preserved, \emph{(i)} the height of the root is finite  and \emph{(ii)} for all (finite) branches $n_0\ldots n_k$ of the tree, $\dCost(n_0 \ldots n_k) \lesssim \tup{x}$. Cycles are removed until there are none left and we obtain a tree $\tree_{\alpha^*}$, for some $\alpha^* \in \N$, which respects \emph{(i)} and \emph{(ii)}. Moreover, because there is no more cycle, the height of $\tree_{\alpha^*}$ is less than $|V|$.
Finally, from $\tree_{\alpha^*}$ we recover a strategy $\sigma'_1$ such that for all $\sigma_2 \in \stratSet{2}{v}$, we have that $\dCost(\outcome{\sigma'_1}{\sigma_2}{v}) \lesssim \tup{x}$ and $\lengthF{\outcome{\sigma'_1}{\sigma_2}{v}} \leq |V|$.\\

More precisely, let us assume that we want to build $\tree_{\alpha+1}$ from $\tree_{\alpha}$. There still exists a branch $ n_0\ldots n_k \ldots n_\ell \ldots n_m$ in $\tree_{\alpha}$ such that 
\begin{equation}
     \text{\emph{(i) }} \exists k, \ell \in \N, 0 \leq k < \ell \leq m, \text{ such that } \last(n_k)= \last(n_\ell) \text{ and \emph{(ii)} } \last(n_k) \in V_1 . \label{eq:treeAlphaCondition} 
\end{equation}

This is for example the case of the hatched nodes in Figure~\ref{fig:winningStratTree}. In this situation, $\playerOne$ generates an unnecessary cycle since the target set is not reached along this cycle and the cost profile increase along the cycle.  Thus $\playerOne$ can avoid this unnecessary cycle by directly choosing the doted edge.

We have to build a new tree $\tree_{\alpha+1}$ from $\tree_{\alpha}$. Let us recall that, while branches are sequences of nodes, trees are sets of non-empty histories. Thus, for all $w \in \tree_{\alpha}$, $w \in \hist(v)$. 
The new tree $\tree_{\alpha + 1}$ is obtained as follows: for all $w \in \tree_{\alpha}$:
\begin{itemize}
    \item if $w = \last(n_0) \ldots \last(n_{k}) \ldots \last(n_\ell) w_{k+1} \ldots w_n$ for some $n \in \N$ then, the cycle \\ $\last(n_k) \ldots \last(n_\ell)$ is removed and so
    $\last(n_0)\ldots  \last(n_k)w_{k+1} \ldots w_n  \in \tree_{\alpha +1}$;
    \item else, $w \in \tree_{\alpha+1}$.
\end{itemize}

We now consider $\tree_{\alpha^*}$ in which there is no more branch that satisfies conditions~\eqref{eq:treeAlphaCondition}. We prove that there is no branch $n_0\ldots n_k \ldots n_\ell \ldots n_m$ of $\tree_{\alpha^*}$ such that \emph{(i)} $\exists k, \ell \in \N$, $ 0 \leq k < \ell \leq m $, such that $\last(n_k) = \last(n_\ell)$ and \emph{(ii)} $\last(n_k) \in V_2$.

Let us assume the contrary in order to obtain a contradiction.
We consider the following two cases:

\begin{itemize}
    \item if for all $k < \xi < \ell $, $\last(n_\xi) \in V_2$ then,  $\last(n_0) \ldots (\last(n_{k}) \ldots \last(n_{\ell-1}))^j$ should be nodes of $\tree_{\alpha^*}$ for all $j \in \N$. This is in contradiction with the fact that the height of the root of $\tree_{\alpha^*}$ is finite.
    \item Otherwise, we consider the least index $\xi \in \N$ such that $k < \xi < \ell $ and $\last(n_\xi) \in V_1$. This is for example the case of the gray node between the two black nodes in Figure~\ref{fig:winningStratTree}.
    
    In this case, the node $t = \last(n_0) \ldots \last(n_k) \ldots \last(n_\xi) \ldots \last(n_\ell) \ldots \last(n_{\xi})$ should be a node of $\tree_{\alpha^*}$. Thus, there is  at least one branch in the tree which has $t$ as a prefix. Which contradicts the assumption that there is no more cycle generates by $\playerOne$ in $\tree_{\alpha^*}$.
\end{itemize}

Since we have removed all cycles of the finite branches of $\stratTree{\sigma_1}$. We have that the height of $\tree_{\alpha^*}$ is less than $|V|$ and the cost profiles of the branches may only decrease, because the weights on the edges are natural numbers. It means that for all branches $n_0 \ldots n_k$ of $\tree_{\alpha^*}$, $\dCost(n_0\ldots n_k) \lesssim \tup{x}.$

To conclude, we recover from $\tree_{\alpha^*}$ the strategy $\sigma'_1$ associated with $\tree_{\alpha^*}$ as explained in the paragraph about strategy tree at the beginning of Appendix~\ref{appendix:termination}. For all $\sigma_2 \in \stratSet{2}{v}$, we have \emph{(i)} $\dCost(\outcome{\sigma'_1}{\sigma_2}{v}) \lesssim \tup{x}$ and \emph{(ii)} $\lengthF{\outcome{\sigma'_1}{\sigma_2}{v}} \leq |V|$. 
\end{proof}

\subsubsection{Correctness}

Some of the arguments of our proofs rely on the following lemma and its corollary.

\begin{lemma}
\label{lemma:propOnIter}
For all $k \in \N$,
\begin{enumerate}
    
    \item for all $v \in V_1 \backslash \targetSet$, for all $\tup{x} \in \iter{k+1}{v}$, there exist $v' \in \successor(v)$ and $\tup{x} \in \iter{k}{v'}$ such that $\tup{x} = \tup{x'} + \dWeight(v,v')$. \label{item:propOnIter2}
    \item for all $v \in V_2 \backslash \targetSet$, for all $\tup{x} \in \iter{k+1}{v}$, for all $v' \in \successor(v)$, there exists $\tup{x'} \in \iter{k}{v'}$ such that $\tup{x'} + \dWeight(v,v') \lesssim \tup{x}$. \label{item:propOnIter3}
\end{enumerate}
\end{lemma}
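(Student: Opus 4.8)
The plan is to unfold the defining recursions of Algorithm~\ref{algo:fixPoint} and read off both statements directly: each is a consequence of what it means for a profile to belong to a union (for a $V_1$ vertex) or to an intersection (for a $V_2$ vertex), together with two elementary facts about the orders involved. I would begin by recording these two facts, valid for both $\leqC$ and $\leqL$. First, both orders are translation-invariant in the sense that $\tup{y} \lesssim \tup{z}$ implies $\tup{y} + \dWeight(v,v') \lesssim \tup{z} + \dWeight(v,v')$. Second, any minimal element of an upward-closed set $\uparrow S$ already lies in $S$: if $\tup{x} \in \uparrow S$ then some $\tup{s} \in S$ satisfies $\tup{s} \lesssim \tup{x}$, and minimality of $\tup{x}$ forces $\tup{s} = \tup{x}$. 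Both are immediate, and neither requires a case split on the order.

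For Statement~\ref{item:propOnIter2}, I would fix $v \in V_1 \setminus \targetSet$ and $\tup{x} \in \iter{k+1}{v}$, and use the simplified form of the recursion noted just before Definition~\ref{def:optiStrat}, namely $\iter{k+1}{v} = \minimal\left(\bigcup_{v' \in \successor(v)} \iter{k}{v'} + \dWeight(v,v')\right)$. Since minimal elements are members of the set, $\tup{x}$ lies in $\bigcup_{v' \in \successor(v)} \iter{k}{v'} + \dWeight(v,v')$, so there is a successor $v'$ with $\tup{x} \in \iter{k}{v'} + \dWeight(v,v')$, which is exactly the existence of $\tup{x'} \in \iter{k}{v'}$ with $\tup{x} = \tup{x'} + \dWeight(v,v')$. (Working instead from the closed form on Line~\ref{lineAlgo:union}, one would first note that $\bigcup_{v'} \uparrow \iter{k}{v'} + \dWeight(v,v') = \uparrow\left(\bigcup_{v'} \iter{k}{v'} + \dWeight(v,v')\right)$ by translation-invariance, and then apply the ``minimal elements lie in the generator'' fact to land back in $\bigcup_{v'} \iter{k}{v'} + \dWeight(v,v')$; both routes give the same conclusion.)

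For Statement~\ref{item:propOnIter3}, I would fix $v \in V_2 \setminus \targetSet$ and $\tup{x} \in \iter{k+1}{v}$. By Line~\ref{lineAlgo:inter}, $\tup{x}$ belongs to $\bigcap_{v' \in \successor(v)} \uparrow \iter{k}{v'} + \dWeight(v,v')$, hence to each term. Fixing an arbitrary $v' \in \successor(v)$, membership of $\tup{x}$ in $\uparrow \iter{k}{v'} + \dWeight(v,v')$ means $\tup{x} = \tup{z} + \dWeight(v,v')$ for some $\tup{z} \in \uparrow \iter{k}{v'}$, so some $\tup{x'} \in \iter{k}{v'}$ satisfies $\tup{x'} \lesssim \tup{z}$. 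Translation-invariance then yields $\tup{x'} + \dWeight(v,v') \lesssim \tup{z} + \dWeight(v,v') = \tup{x}$, which is the claim.

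There is no genuine obstacle here; the lemma is essentially bookkeeping on the recursion. The only point worth watching is the asymmetry between the two statements, which reflects the asymmetry between union and intersection: for a $V_1$ vertex one needs an \emph{exact} decomposition $\tup{x} = \tup{x'} + \dWeight(v,v')$, which is why one passes through the non-closed form of the union (so that the extracted $\tup{x'}$ is a genuine element of $\iter{k}{v'}$ rather than merely a lower bound); for a $V_2$ vertex only the inequality $\lesssim$ is required, and it falls straight out of the upward closure via translation-invariance.
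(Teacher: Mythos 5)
Your proof is correct in substance, and in fact the paper offers nothing to compare it against: Lemma~\ref{lemma:propOnIter} is stated in the appendix without proof, as is the companion fact you invoke (Lemma~\ref{lemma:propOnIterUnion}, equivalently the remark before Definition~\ref{def:optiStrat}). Your two main routes — for $V_1$, the simplified union recursion plus the fact that minimal elements are members; for $V_2$, membership in each term of the intersection plus translation-invariance — are exactly the bookkeeping the authors leave implicit, and both are valid for $\leqC$ and $\leqL$ alike.

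One caveat, on the parenthetical alternative for Statement~\ref{item:propOnIter2}: the asserted identity $\bigcup_{v'} \uparrow \iter{k}{v'} + \dWeight(v,v') = \uparrow\bigl(\bigcup_{v'} \iter{k}{v'} + \dWeight(v,v')\bigr)$ holds for the componentwise order but is \emph{false} for the lexicographic order, because a translate of an upward-closed set need not be upward closed under $\leqL$. Concretely, with $\di = 2$, the set $\uparrow\{(0,5)\} + (0,3)$ contains $(0,8)$ but not $(1,0)$, even though $(0,8) \leqL (1,0)$: the point is that $(1,0)$ cannot be written as $\tup{z} + (0,3)$ with $\tup{z} \in \NInf^2$, since subtraction leaves the nonnegative orthant. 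Translation-invariance only gives the inclusion of the union into the upward closure, not the converse. The conclusion you need survives, but via the domination argument you already use in the $V_2$ case: writing $S = \bigcup_{v'} \iter{k}{v'} + \dWeight(v,v')$ and $U = \bigcup_{v'} \uparrow \iter{k}{v'} + \dWeight(v,v')$, one has $S \subseteq U$ and every element of $U$ dominates an element of $S$ (by translation-invariance), so any minimal element of $U$ must equal an element of $S$. This coinitiality argument, rather than a set equality, is the correct repair — and it is also, in effect, the missing proof of Lemma~\ref{lemma:propOnIterUnion} itself, so it makes your first route self-contained rather than dependent on an unproven assertion of the paper.
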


\begin{corollary}
\label{cor:propOnFixpoint}
For all $v \in V$,
\begin{itemize}
    \item If $v \in V_1 \backslash \targetSet$ then, for all $\tup{x} \in \iter{*}{v}$, there exist $v' \in \successor(v)$ and $\tup{x} \in \iter{*}{v'}$ such that $\tup{x} = \tup{x'} + \dWeight(v,v')$. 
    \item If $v \in V_2 \backslash \targetSet$ then, for all $\tup{x} \in \iter{*}{v}$, for all $v' \in \successor(v)$, there exists $\tup{x'} \in \iter{*}{v'}$ such that $\tup{x'} + \dWeight(v,v') \lesssim \tup{x}$.
\end{itemize}
\end{corollary}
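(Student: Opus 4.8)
The plan is to obtain this corollary as an immediate instantiation of Lemma~\ref{lemma:propOnIter} at the fixpoint index $k = k^*$. The one fact I need beyond the lemma is the defining property of the fixpoint, recorded in Proposition~\ref{prop:fixPointReach}: the sets stabilize at step $k^*$, so that $\iter{k^*+1}{v} = \iter{k^*}{v}$ for every $v \in V$, and by convention $\iter{*}{v}$ denotes $\iter{k^*}{v}$. The key idea is that this equality lets me read $\iter{*}{v}$ simultaneously as $\iter{k^*}{v}$ and as $\iter{k^*+1}{v}$, which is precisely what converts the ``step $k{+}1$ in terms of step $k$'' relation of the lemma into the desired ``fixpoint in terms of fixpoint'' relation.

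For the first item, I would take $v \in V_1 \backslash \targetSet$ and $\tup{x} \in \iter{*}{v}$. Since $\iter{*}{v} = \iter{k^*+1}{v}$, the first statement of Lemma~\ref{lemma:propOnIter}, applied with $k = k^*$, provides $v' \in \successor(v)$ and $\tup{x'} \in \iter{k^*}{v'}$ with $\tup{x} = \tup{x'} + \dWeight(v,v')$. Because $\iter{k^*}{v'} = \iter{*}{v'}$, this witness $\tup{x'}$ already lies in $\iter{*}{v'}$, giving exactly the claimed conclusion.

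For the second item, I would take $v \in V_2 \backslash \targetSet$, $\tup{x} \in \iter{*}{v}$, and fix an arbitrary $v' \in \successor(v)$. Using $\iter{*}{v} = \iter{k^*+1}{v}$ again, the second statement of Lemma~\ref{lemma:propOnIter} with $k = k^*$ yields $\tup{x'} \in \iter{k^*}{v'}$ with $\tup{x'} + \dWeight(v,v') \lesssim \tup{x}$; and $\iter{k^*}{v'} = \iter{*}{v'}$ places $\tup{x'}$ in $\iter{*}{v'}$, as required.

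I do not expect any genuine obstacle here, since the statement is a corollary rather than a theorem: the only point demanding care is the index bookkeeping, namely justifying that $\iter{k^*}{\cdot}$ and $\iter{k^*+1}{\cdot}$ coincide so that the lemma's relation between consecutive steps can be self-applied at the fixpoint. This is guaranteed by Proposition~\ref{prop:fixPointReach}, and once it is invoked the rest is a verbatim rewriting of the lemma with $k$ specialized to $k^*$.
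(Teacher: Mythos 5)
Your proof is correct and matches the paper's (implicit) argument: the paper states the corollary without proof precisely because it follows from Lemma~\ref{lemma:propOnIter} instantiated at $k = k^*$, combined with the stabilization of the sets at the fixpoint, which is exactly your reasoning. The only pedantic remark is that Proposition~\ref{prop:fixPointReach} is phrased for the sets $\iterEnsure{k}{v}$ rather than $\iter{k}{v}$, so to justify $\iter{k^*+1}{v} = \iter{k^*}{v}$ you should pass through Proposition~\ref{prop:ensureI} (i.e., $\iter{k}{v} = \minimal(\iterEnsure{k}{v})$) or invoke the algorithm's termination condition, but this is an immediate step.
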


This section is devoted to the proof of the following theorem.

\begin{theorem}
\label{thm:correctness}
For all $v \in V$,
 $\minimal(\ensure{v}) = \iter{*}{v}.$
\end{theorem}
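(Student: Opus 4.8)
The plan is to factor the statement through the already-computed sets $\iterEnsure{*}{v}$. Since Proposition~\ref{prop:ensureI} instantiated at $k=k^*$ gives $\minimal(\iterEnsure{*}{v}) = \iter{*}{v}$, it suffices to prove that $\ensure{v}$ and $\iterEnsure{*}{v}$ have the same minimal elements, i.e. $\minimal(\ensure{v}) = \minimal(\iterEnsure{*}{v})$. I would first record the easy inclusion $\iterEnsure{*}{v} \subseteq \ensure{v}$: any strategy witnessing $\tup{x} \in \iterEnsure{*}{v}$ also witnesses $\tup{x} \in \ensure{v}$, since dropping the length constraint $\lengthF{\cdot} \le k^*$ only weakens the requirement.

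The heart of the argument is the reverse direction at the level of minimal elements, $\minimal(\ensure{v}) \subseteq \iterEnsure{*}{v}$. First I would observe that every $\tup{x} \in \minimal(\ensure{v})$ lies in $\N^\di \cup \{\tup{\infty}\}$. Indeed, if a witnessing strategy $\sigma_1$ ensures a profile $\tup{x} \neq \tup{\infty}$, then $\targetSet$ is reached against every strategy of $\playerTwo$ (otherwise some outcome has profile $\tup{\infty}$, which is $\lesssim \tup{x}$ only when $\tup{x}=\tup{\infty}$ for both orders, as $\tup{\infty}$ is the top element); the strategy tree of $\sigma_1$ is then finitely branching with all branches finite, hence finite by König's lemma, so the outcomes are finitely many with bounded finite costs. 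Their componentwise supremum is a finite profile $\tup{m} \in \N^\di$ that $\sigma_1$ still ensures, with $\tup{m} \lesssim \tup{x}$; were $\tup{x}$ to have an infinite coordinate this domination would be strict, contradicting minimality. Thus a minimal $\tup{x}$ is either $\tup{\infty}$ — in which case $\playerOne$ cannot force reaching $\targetSet$, so $v \notin \targetSet$ and $\tup{\infty} \in \iterEnsure{0}{v} \subseteq \iterEnsure{*}{v}$ — or $\tup{x} \in \N^\di$, in which case Proposition~\ref{prop:noCycle} provides a strategy ensuring $\tup{x}$ while reaching $\targetSet$ within $|V|$ steps, i.e. $\tup{x} \in \iterEnsure{|V|}{v}$; by Proposition~\ref{prop:fixPointReach} the chain stabilises at $k^* \le |V|$, so $\iterEnsure{|V|}{v} = \iterEnsure{*}{v}$ and $\tup{x} \in \iterEnsure{*}{v}$.

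With the sandwich $\minimal(\ensure{v}) \subseteq \iterEnsure{*}{v} \subseteq \ensure{v}$ in hand, I would close by a standard well-foundedness manipulation, using that both orders make upward-closed subsets of $\NInf^\di$ generated by their minimal elements (Dickson's lemma for $\leqC$, and the fact that $\leqL$ is a well-order in the lexicographic case). For $\tup{z} \in \minimal(\ensure{v})$: any $\tup{w} \in \iterEnsure{*}{v}$ with $\tup{w} \lesssim \tup{z}$ also lies in $\ensure{v}$, so minimality of $\tup{z}$ forces $\tup{w}=\tup{z}$, whence $\tup{z} \in \minimal(\iterEnsure{*}{v})$. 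Conversely, for $\tup{y} \in \minimal(\iterEnsure{*}{v}) \subseteq \ensure{v}$, well-foundedness yields some $\tup{z} \in \minimal(\ensure{v})$ with $\tup{z} \lesssim \tup{y}$; since $\tup{z} \in \iterEnsure{*}{v}$ and $\tup{y}$ is minimal there, $\tup{z}=\tup{y}$, so $\tup{y} \in \minimal(\ensure{v})$. This gives $\minimal(\ensure{v}) = \minimal(\iterEnsure{*}{v}) = \iter{*}{v}$.

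The step I expect to be the main obstacle is the reduction from the unbounded-horizon notion $\ensure{v}$ to the bounded-horizon notion $\iterEnsure{|V|}{v}$ for finite profiles — but this is precisely what Proposition~\ref{prop:noCycle} supplies, so the genuinely delicate bookkeeping is confined to showing that minimal ensured profiles are never \emph{mixed} (partly finite, partly $\tup{\infty}$) and to treating the $\tup{\infty}$ profile via the base case of the algorithm.
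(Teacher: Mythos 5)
Your overall architecture is sound and is in fact \emph{more} explicit than the paper's own treatment: the paper disposes of this theorem in a single sentence, declaring it ``a direct consequence of Proposition~\ref{prop:ensureI}'', and silently relies on exactly the bridge you construct between the unbounded-horizon set $\ensure{v}$ and the bounded-horizon set $\iterEnsure{*}{v}$ (namely the trivial inclusion $\iterEnsure{*}{v} \subseteq \ensure{v}$, Proposition~\ref{prop:noCycle} for finite profiles, the stabilisation $k^* \leq |V|$ from Proposition~\ref{prop:fixPointReach}, and a minimal-elements sandwich). Your easy inclusion, the use of Proposition~\ref{prop:noCycle} for $\tup{x} \in \N^\di$, the handling of $\tup{\infty}$ through the base case, and the closing well-foundedness argument are all correct, and together they legitimately reduce the theorem to Proposition~\ref{prop:ensureI} instantiated at $k = k^*$.

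There is, however, one step that fails as written: the ``no mixed minimal profiles'' argument in the lexicographic case. You take $\tup{m}$ to be the \emph{componentwise} supremum of the finitely many outcome costs and claim $\tup{m} \lesssim \tup{x}$. This is true for $\leqC$, but false for $\leqL$: if the outcome costs are $(0,100,5)$ and $(1,0,50)$, both are $\leqL \tup{x} = (1,0,\infty)$, yet their componentwise supremum is $(1,100,50)$ and $\tup{x} \strictLessL (1,100,50)$, so no contradiction with the minimality of $\tup{x}$ is obtained this way. The repair is immediate and uses the same finiteness you extract from K\"onig's lemma: for $\leqL$, replace the componentwise supremum by the $\leqL$-maximum $\tup{m'}$ of the outcome costs. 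Since there are finitely many outcomes, $\tup{m'}$ is itself one of those costs, hence lies in $\N^\di$; every outcome cost is $\leqL \tup{m'}$, so $\sigma_1$ ensures $\tup{m'}$; and $\tup{m'} \leqL \tup{x}$ because each outcome cost is. Minimality of $\tup{x}$ then forces $\tup{x} = \tup{m'} \in \N^\di$, which is the desired contradiction with $\tup{x}$ having an infinite coordinate. With that one-line change (the componentwise case can stand as written), your proof goes through.
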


This is a direct consequence of Proposition~\ref{prop:ensureI}.

\begin{restate}{Proposition}{\ref{prop:ensureI}}
\restatePropEnsureI
\end{restate}

\begin{proof}
We proceed by induction on $\ell$. \underline{Base case} $\ell=0$, let $v \in V$. If $v \in \targetSet$ then, $\iterEnsure{0}{v} = \{ \tup{x} \in \NInf^\di \mid \tup{0} \lesssim \tup{x} \}$ and $\minimal(\iterEnsure{0}{v}) = \{ \tup{0} \}$ which is equal to $\iter{0}{v}$ by Algorithm~\ref{algo:fixPoint}. Else, if $v \not \in \targetSet$, $\iterEnsure{0}{v} = \{ \tup{\infty} \}$ and $\iter{0}{v} = \{ \tup{\infty} \}$.

Let us assume that the assertion is true for all $0 \leq \ell \leq k$ and let us prove it is still true for $\ell = k+1$.
In particular, the following equality holds:

\begin{equation}
    \minimal(\iterEnsure{k}{v}) = \iter{k}{v} \label{eq:inductionHypoEnsuredVal}
\end{equation}

Since $\iterEnsure{k}{v}$ is upward closed, we have that:
\begin{equation}
\iterEnsure{k}{v} =  \uparrow\iter{k}{v} \label{eq:inductionHypoEnsuredValUp}
\end{equation}

If $v \in \targetSet$, we have $\minimal(\iterEnsure{k}{v}) = \{ \tup{0} \} = \iter{k}{v}$ for all $k \in \N$. This is the reason why we assume $v \not\in \targetSet$  in the rest of the proof.

For all $v \not \in \targetSet$, we prove that $\iterEnsure{k+1}{v} = \begin{cases}\displaystyle \bigcup_{v' \in \successor(v)} \uparrow \iter{k}{v'} + \dWeight(v,v') & \text{ if } v\in V_1\\ \displaystyle \bigcap_{v' \in \successor(v)} \uparrow \iter{k}{v'} + \dWeight(v,v') & \text{ if } v \in V_2\end{cases}$. That proves that $\minimal(\iterEnsure{k+1}{v}) = \iter{k+1}{v}$.

\begin{itemize} 
\item We first prove the inclusion $\subseteq$. Let $\tup{x} \in \iterEnsure{k+1}{v}$. 

We know that there exists a strategy $\sigma_1^{k+1} \in \stratSet{1}{v}$ such that for all strategies $\sigma_2\in\stratSet{2}{v}$  we have that \begin{equation}\dCost(\outcome{\sigma_1^{k+1}}{\sigma_2}{v}) \lesssim \tup{x} \quad \text{ and } \quad \lengthF{\outcome{\sigma_1^{k+1}}{\sigma_2}{v}} \leq k+1. \label{eq:correctEnsure1}\end{equation}

\begin{itemize}
    \item If $v \in V_1$, let $v' = \sigma_1^{k+1}(v)$. We consider ${\sigma_1^{k+1}}_{\restriction v}: \hist_1(v') \longrightarrow V: hu \mapsto \sigma_1^{k+1}(vhu)$.

    We have for all $\sigma_2\in \stratSet{2}{v}$:

    \begin{align*}
        \dCost(\outcome{\sigma_1^{k+1}}{\sigma_2}{v}) &= \dCost(v \outcome{{\sigma_1^{k+1}}_{\restriction v}}{{\sigma_2}_{\restriction v}}{v'})\\
        &= \dWeight(v,v') + \dCost(\outcome{{\sigma_1^{k+1}}_{\restriction v}}{{\sigma_2}_{\restriction v}}{v'}) & (v \not \in \targetSet)
    \end{align*}

    Thus in particular, for all $\sigma_2 \in \stratSet{2}{v'}$:

    $$ \dWeight(v,v') + \dCost(\outcome{{\sigma_1^{k+1}}_{\restriction v}}{{\sigma_2}}{v'}) \lesssim \tup{x} $$

    and 

    $$ \lengthF{\outcome{{\sigma_1^{k+1}}_{\restriction v}}{{\sigma_2}}{v'}} = \lengthF{\outcome{{\sigma_1^{k+1}}}{{\sigma_2}}{v}} - 1 \leq k. $$

    Meaning that $\tup{x} - \dWeight(v,v') \in \iterEnsure{k}{v'}$.
    By~Equation~\eqref{eq:inductionHypoEnsuredValUp}, $\iterEnsure{k}{v'} = \uparrow\iter{k}{v'}$. It follows that $x \in \uparrow\iter{k}{v'} + \dWeight(v,v')$ and we obtain the result we were looking for: $\tup{x} \in \bigcup_{v' \in \successor(v)} \uparrow\iter{k}{v'} + \dWeight(v,v')$.

    \item If $v \in V_2$, for all $v' \in \successor(v)$ and for all strategies $\sigma_2 \in \stratSet{2}{v'}$, we have by Equation~\eqref{eq:correctEnsure1}:

    \begin{align*}
        \dCost(v \outcome{{\sigma_1^{k}}_{\restriction v}}{\sigma_2}{v'}) &= \dWeight(v,v') +  \dCost(\outcome{{\sigma_1^{k}}_{\restriction v}}{\sigma_2}{v'}) & (v \not \in \targetSet) \\
        &\lesssim \tup{x}
    \end{align*}

    and, since $v\not \in \targetSet$, 

    $$\lengthF{\outcome{{\sigma_1^{k}}_{\restriction v}}{\sigma_2}{v'}} = \lengthF{\outcome{\sigma_1^{k+1}}{\sigma_2}{v}} -1 \leq k. $$

    It follows that for all $v' \in \successor(v)$, $\tup{x} - \dWeight(v,v') \in \iterEnsure{k}{v'} = \uparrow \iter{k}{v'}$, by Equation~\eqref{eq:inductionHypoEnsuredValUp}. Thus we conclude that $\tup{x} \in \bigcap_{v' \in \successor(v)} \uparrow \iter{k}{v'} + \dWeight(v,v')$.
    
    \end{itemize}
    \item We now prove the inclusion $\supseteq$. 

    \begin{itemize}
        \item If $v \in V_1$, let $\tup{x} \in \displaystyle \bigcup_{v' \in \successor(v)} \uparrow \iter{k}{v'} + \dWeight(v,v')$. It means that there exists $\tup{y} \in \uparrow\iter{k}{v'} $ such that $\tup{x} = \tup{y} +\dWeight(v,v')$ for some $v' \in \successor(v)$.
        
        By Equation~\eqref{eq:inductionHypoEnsuredValUp}, $\tup{y} \in \iterEnsure{k}{v'}$, thus there exists $\sigma_1^k \in \stratSet{1}{v'}$ such that for all $\sigma_2 \in \stratSet{2}{v'}$ we have:
        \begin{equation}
            \dCost(\outcome{\sigma_1^k}{\sigma_2}{v'}) \lesssim \tup{y} \quad \text{ and } \quad \lengthF{\outcome{\sigma_1^k}{\sigma_2}{v'}} \leq k.\label{eq:correctEnsure2}
        \end{equation} 

        We consider $\sigma_1^{k+1} \in \stratSet{1}{v}$ defined as $$\sigma_1^{k+1}(vh) = \begin{cases} v' & \text{ if } h \text{ is the empty history, \IE } vh=v \\ \sigma_1^{k}(h) & \text{ otherwise} \end{cases}.$$

        Let $\sigma_2 \in \stratSet{2}{v}$, 

        \begin{align*}
            \dCost(\outcome{\sigma_1^{k+1}}{\sigma_2}{v}) &= \dWeight(v,v') + \dCost( \outcome{{\sigma_1^{k+1}}_{\restriction v}}{{\sigma_2}_{\restriction v}}{v'}) & (v \not \in \targetSet)\\
            &=\dWeight(v,v') + \dCost( \outcome{{\sigma_1^{k}}}{{\sigma_2}_{\restriction v}}{v'}) \lesssim \dWeight(v,v') + \tup{y} & \text{(By Eq.~\eqref{eq:correctEnsure2})}
        \end{align*}

        Moreover, $\lengthF{\outcome{\sigma_1^{k+1}}{\sigma_2}{v}} = 1 + \lengthF{\outcome{\sigma_1^{k}}{{\sigma_2}_{\restriction v}}{v'}} \leq 1 + k$.

        We conclude that $\tup{x} \in \iterEnsure{k+1}{v}$.

        \item If $v \in V_2$, let $\tup{x} \in \displaystyle \bigcap_{v' \in \successor(v)} \uparrow \iter{k}{v'} + \dWeight(v,v')$.

        For all $v' \in \successor(v')$, there exists $\tup{y'} \in \uparrow \iter{k}{v'}$ such that $\tup{x} = \tup{y'} + \dWeight(v,v')$.         
        By Eq.~\eqref{eq:inductionHypoEnsuredValUp}, there exists $\sigma_1^{v'} \in \stratSet{1}{v'}$ such that for all $\sigma_2 \in \stratSet{2}{v'}$,

        \begin{equation*}
            \dCost(\outcome{\sigma_1^{v'}}{\sigma_2}{v'}) \lesssim \tup{y'} \quad \text{ and } \quad \lengthF{\outcome{\sigma_k^{v'}}{\sigma_2}{v'}} \leq k
        \end{equation*}

        Let us consider $\sigma^{k+1}_1 \in \stratSet{1}{v}$ defined as: $\sigma^{k+1}_1(vv'h) = \sigma^{v'}_1(v'h)$ for all $vv'h \in \hist_1(v)$. Let $\sigma_2 \in \stratSet{2}{v}$, if $\sigma_2(v) = v'$, we have:

        \begin{align*}
            \dCost(\outcome{\sigma_1^{k+1}}{\sigma_2}{v}) &= \dCost(v \outcome{{\sigma_1^{k+1}}_{\restriction v}}{{\sigma_2}_{\restriction_v}}{v'}) = \dCost(v \outcome{{\sigma_1^{v'}}}{{\sigma_2}_{\restriction_v}}{v'})\\
            &= \dWeight(v,v') +\dCost(\outcome{{\sigma_1^{v'}}}{{\sigma_2}_{\restriction_v}}{v'}) & (v\not \in \targetSet)\\
            &\lesssim \dWeight(v,v') + \tup{y'}.
        \end{align*}
        
        Moreover, $\lengthF{\outcome{\sigma_1^{k+1}}{\sigma_2}{v}} = \lengthF{v \outcome{{\sigma_1^{v'}}}{{\sigma_2}_{\restriction_v}}{v'}} = \lengthF{\outcome{{\sigma_1^{v'}}}{{\sigma_2}_{\restriction_v}}{v'}} + 1 \leq k +1 $. In conclusion, $\tup{x} \in \iterEnsure{k+1}{v}$. 
    \end{itemize}

\end{itemize}
\end{proof}

\subsection{Time Complexity}

Let us recall that $\maxW = \max \{ \weight_i(e) \mid 1 \leq i \leq \di \text{ and } e \in E \}$. We also explicitly restate a remark done in the main part of the paper (in Section~\ref{section:correctStrat}).

\begin{remark}
In Line~\ref{lineAlgo:stratEnd}, we are allowed to assume that $\tup{x'}$ is in $ \iter{k}{v'}$ instead of $\uparrow \iter{k}{v'}$ thanks to Lemma~\ref{lemma:propOnIterUnion} stated just after this remark.
\end{remark}

\begin{lemma}
    \label{lemma:propOnIterUnion}
    For all $k \in \N$, for all $v \in V_1 \backslash \targetSet$, $$\iter{k+1}{v} =  \displaystyle \minimal \left(\bigcup_{v' \in \successor(v)} \iter{k}{v} + \dWeight(v,v') \right).$$
\end{lemma}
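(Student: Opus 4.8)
The statement differs from the definition of $\iter{k+1}{v}$ given at Line~\ref{lineAlgo:union} only by the absence of the upward closures: each summand is $\iter{k}{v'} + \dWeight(v,v')$ rather than $\uparrow \iter{k}{v'} + \dWeight(v,v')$. Hence the whole task is to show that dropping these closures does not affect the set of minimal elements. The plan is to isolate a purely order-theoretic identity, namely $\minimal(\uparrow B) = \minimal(B)$ for every $B \subseteq \NInf^\di$, and then to reduce the set computed at Line~\ref{lineAlgo:union} to $\minimal(\uparrow B)$ with $B = \bigcup_{v' \in \successor(v)} \iter{k}{v'} + \dWeight(v,v')$.

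First I would record two elementary closure facts. The first is \emph{translation invariance}: for every $\tup{c} \in \N^\di$, both orders satisfy $\tup{x} \lesssim \tup{y}$ if and only if $\tup{x} + \tup{c} \lesssim \tup{y} + \tup{c}$, from which $(\uparrow X) + \tup{c} = \uparrow(X + \tup{c})$ follows for any $X$; taking $X = \iter{k}{v'}$ and $\tup{c} = \dWeight(v,v')$ rewrites each summand at Line~\ref{lineAlgo:union} as $\uparrow(\iter{k}{v'} + \dWeight(v,v'))$. The second is that upward closure commutes with unions, $\bigcup_i \uparrow A_i = \uparrow\big(\bigcup_i A_i\big)$, which is immediate from the definition of $\uparrow$. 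Combining the two, the union appearing at Line~\ref{lineAlgo:union} equals $\uparrow B$, so that $\iter{k+1}{v} = \minimal(\uparrow B)$.

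It then remains to prove $\minimal(\uparrow B) = \minimal(B)$, for which I would give a short double inclusion relying only on reflexivity, transitivity and antisymmetry of $\lesssim$. For the inclusion from left to right, any $\tup{x} \in \minimal(\uparrow B)$ has some $\tup{b} \in B$ with $\tup{b} \lesssim \tup{x}$; minimality in $\uparrow B$ forces $\tup{b} = \tup{x}$, so $\tup{x} \in B$, and any $\tup{b}' \in B$ below $\tup{x}$ likewise equals $\tup{x}$, giving $\tup{x} \in \minimal(B)$. For the reverse inclusion, if $\tup{x} \in \minimal(B)$ and $\tup{y} \in \uparrow B$ satisfies $\tup{y} \lesssim \tup{x}$, I choose $\tup{b} \in B$ with $\tup{b} \lesssim \tup{y}$; then $\tup{b} \lesssim \tup{x}$ with $\tup{b} \in B$ forces $\tup{b} = \tup{x}$, whence $\tup{x} \lesssim \tup{y} \lesssim \tup{x}$ and antisymmetry yields $\tup{y} = \tup{x}$. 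Chaining the equalities $\iter{k+1}{v} = \minimal(\uparrow B) = \minimal(B)$ then gives the claim.

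The argument is essentially routine; the only step requiring a little care is translation invariance for the lexicographic order (the componentwise case being immediate coordinatewise), since it is exactly this property that allows the shift $+\,\dWeight(v,v')$ to be commuted with $\uparrow$. Beyond keeping both orders $\leqC$ and $\leqL$ in view simultaneously, I do not anticipate a genuine obstacle.
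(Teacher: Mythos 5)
Your reduction to $\minimal(\uparrow B) = \minimal(B)$ and the double-inclusion proof of that identity are correct for any partial order, as is the commutation of $\uparrow$ with unions. The gap is exactly in the step you flagged as the delicate one: translation invariance. From the (true) equivalence $\tup{x} \lesssim \tup{y} \Leftrightarrow \tup{x} + \tup{c} \lesssim \tup{y} + \tup{c}$ you infer $(\uparrow X) + \tup{c} = \uparrow (X + \tup{c})$, but this inference is a non sequitur, and the identity is false for the lexicographic order. The inclusion $(\uparrow X) + \tup{c} \subseteq \uparrow(X + \tup{c})$ needs only monotonicity, but the converse requires every $\tup{z} \in \uparrow(X+\tup{c})$ to be expressible as $\tup{x} + \tup{c}$ with $\tup{x} \in \uparrow X$, i.e., $\tup{z}$ must dominate $\tup{c}$ componentwise so that $\tup{z} - \tup{c}$ exists in $\NInf^\di$. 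This is automatic for $\leqC$ but not for $\leqL$: take $\di = 2$, $X = \{(0,0)\}$, $\tup{c} = (0,1)$; then $(0,1) \leqL (1,0)$, so $(1,0)$ lies in the upward closure of $X + \tup{c} = \{(0,1)\}$, yet every element of $(\uparrow X) + \tup{c}$ has second coordinate at least $1$ (or $\infty$), so $(1,0) \notin (\uparrow X) + \tup{c}$. Hence in the lexicographic case the set computed at Line~\ref{lineAlgo:union} is in general a \emph{proper} subset of $\uparrow B$, and your chain of equalities breaks at its second link. Since the paper invokes this lemma precisely in the lexicographic setting (in the proof of Proposition~\ref{prop:unionIsAMinLexico}), that case cannot be discarded.

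The argument can be repaired by weakening the identity to a sandwich. Write $A = \bigcup_{v' \in \successor(v)} \left(\uparrow \iter{k}{v'}\right) + \dWeight(v,v')$ (the set whose minimal elements the algorithm takes) and $B = \bigcup_{v' \in \successor(v)} \iter{k}{v'} + \dWeight(v,v')$. The forward direction of translation monotonicity, which does hold for both orders, gives $B \subseteq A \subseteq \uparrow B$. It then suffices to prove: if $B \subseteq A \subseteq \uparrow B$, then $\minimal(A) = \minimal(B)$. The proof is verbatim your double-inclusion argument for $\minimal(\uparrow B) = \minimal(B)$ with $\uparrow B$ replaced by $A$: for $\tup{x} \in \minimal(A)$, pick $\tup{b} \in B$ with $\tup{b} \lesssim \tup{x}$ and use $B \subseteq A$; for $\tup{x} \in \minimal(B)$ and $\tup{y} \in A$ with $\tup{y} \lesssim \tup{x}$, pick $\tup{b} \in B$ with $\tup{b} \lesssim \tup{y}$ and conclude by antisymmetry. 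This covers $\leqC$ and $\leqL$ simultaneously and uses nothing about the orders beyond monotonicity of translation and the partial-order axioms.
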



\subsubsection{Lexicographic Order}

In this section we prove Theorem~\ref{thm:complexityLexico}.

\begin{restate}{Theorem}{\ref{thm:complexityLexico}}
    \restateThmComplexityLexico
\end{restate}

By abuse of notation, the only $\tup{x} \in \iter{k}{v}$ is denoted by $\iterUpValue{k}{v}$.

\begin{proposition}
\label{prop:unionIsAMinLexico}
If $v \in V_1 \backslash \targetSet$,
$\iter{k+1}{v} = \min_{\leqL} \{ \iterUpValue{k}{v'}  + \dWeight(v,v') \mid v' \in \successor(v)\}$.
\end{proposition}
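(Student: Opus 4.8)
The plan is to reduce the claim to Lemma~\ref{lemma:propOnIterUnion} together with two facts peculiar to the lexicographic order: that $\leqL$ is total, and that each $\iter{k}{v'}$ is then a singleton. The crucial observation is that, since in this setting $\lesssim$ \emph{is} $\leqL$, the operators $\minimal$ and $\min_{\leqL}$ are defined by literally the same formula and therefore coincide; hence it suffices to show that the set to which $\minimal$ is applied in Lemma~\ref{lemma:propOnIterUnion} equals the finite set $\{ \iterUpValue{k}{v'} + \dWeight(v,v') \mid v' \in \successor(v)\}$ appearing in the statement.

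First I would record that, for the lexicographic order, $\iter{k}{v'}$ is a singleton for every $k \in \N$ and every $v' \in V$, written $\iter{k}{v'} = \{ \iterUpValue{k}{v'} \}$. This holds because $\NInf$ is well ordered and the lexicographic order $\leqL$ is a total (indeed well-founded) order on $\NInf^\di$; consequently the nonempty upward-closed set $\iterEnsure{k}{v'}$ admits a least element, and its set of minimal elements $\iter{k}{v'} = \minimal(\iterEnsure{k}{v'})$ reduces to that single least element. (Alternatively this follows by a direct induction on $k$ using Algorithm~\ref{algo:fixPoint}.)

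Next I would invoke Lemma~\ref{lemma:propOnIterUnion}, which for $v \in V_1 \backslash \targetSet$ gives
\[
\iter{k+1}{v} = \minimal\!\left(\bigcup_{v' \in \successor(v)} \iter{k}{v'} + \dWeight(v,v')\right).
\]
Substituting the singletons $\iter{k}{v'} = \{ \iterUpValue{k}{v'} \}$ turns each set $\iter{k}{v'} + \dWeight(v,v')$ into the singleton $\{ \iterUpValue{k}{v'} + \dWeight(v,v') \}$, so that the union is exactly the finite, nonempty set $\{ \iterUpValue{k}{v'} + \dWeight(v,v') \mid v' \in \successor(v) \}$ (nonempty because $\successor(v) \neq \emptyset$). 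Applying $\minimal = \min_{\leqL}$ to both sides then yields the desired equality $\iter{k+1}{v} = \min_{\leqL}\{ \iterUpValue{k}{v'} + \dWeight(v,v') \mid v' \in \successor(v)\}$, both members being the singleton consisting of the $\leqL$-least element of that finite set.

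The argument is essentially bookkeeping, so I do not expect a genuine obstacle; the only point requiring care is the identification $\minimal = \min_{\leqL}$ together with the singleton property, i.e.\ making explicit that totality of $\leqL$ collapses the ``set of minimal elements'' to ``the minimum''. Once that is stated, Lemma~\ref{lemma:propOnIterUnion} supplies the remainder.
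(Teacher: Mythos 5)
Your proof is correct and follows essentially the same route as the paper: invoke Lemma~\ref{lemma:propOnIterUnion}, collapse each $\iter{k}{v'}$ to the singleton $\{\iterUpValue{k}{v'}\}$, and observe that $\minimal$ coincides with $\min_{\leqL}$ since $\leqL$ is total. The only difference is that you explicitly justify the singleton property (via well-foundedness of $\leqL$ on $\NInf^\di$), which the paper simply records as a notational convention before the proposition; this is a harmless, indeed welcome, addition.
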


\begin{proof}
    Let $v\in V_1 \backslash \targetSet$, 

    \begin{align*}
        \iter{k+1}{v} &= \displaystyle \minimal \left( \bigcup_{v' \in \successor(v)}  \uparrow \iter{k}{v'} + \dWeight(v,v') \right) & \text{ By Algorithm~\ref{algo:fixPoint}} \\
                      &= \displaystyle \minimal \left( \bigcup_{v' \in \successor(v)} \iter{k}{v'} + \dWeight(v,v') \right)  & \text{ By Lemma~\ref{lemma:propOnIterUnion}}   \\
                      &= \minimal( \{ \iterUpValue{k}{v'} + \dWeight(v,v') \mid v' \in \successor(v) \})\\
                      &= \min_{\leqL} \{ \iterUpValue{k}{v'} + \dWeight(v,v') \mid v' \in \successor(v) \} 
    \end{align*}
\end{proof}

\begin{proposition}
\label{prop:interIsAMaxLexico}
If $v \in V_2\backslash \targetSet$,
$\iter{k+1}{v} = \max_{\leqL} \{ \iterUpValue{k}{v'}  + \dWeight(v,v') \mid v' \in \successor(v)\}$.
\end{proposition}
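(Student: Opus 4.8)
The plan is to mirror the structure of the proof of Proposition~\ref{prop:unionIsAMinLexico}, starting from the definition of $\iter{k+1}{v}$ given by Algorithm~\ref{algo:fixPoint} for a vertex owned by $\playerTwo$, and then simplifying the intersection of upward closed sets by exploiting the fact that $\leqL$ is a total order. Note that, unlike the $V_1$ case, there is no analog of Lemma~\ref{lemma:propOnIterUnion} that lets us drop the $\uparrow$ immediately; instead the totality of $\leqL$ will make the whole expression collapse cleanly.

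First I would write, by Algorithm~\ref{algo:fixPoint}, since $v \in V_2 \backslash \targetSet$,
$$\iter{k+1}{v} = \minimal\left(\bigcap_{v' \in \successor(v)} \uparrow \iter{k}{v'} + \dWeight(v,v')\right).$$
In the lexicographic case each $\iter{k}{v'}$ is the singleton $\{\iterUpValue{k}{v'}\}$, so setting $\tup{a}_{v'} = \iterUpValue{k}{v'} + \dWeight(v,v')$ I would show that $\uparrow \iter{k}{v'} + \dWeight(v,v') = \uparrow\{\tup{a}_{v'}\}$. This rests on the translation invariance of $\leqL$: for all tuples $\tup{y}, \tup{b}, \tup{c}$ one has $\tup{b} \leqL \tup{y}$ if and only if $\tup{b} + \tup{c} \leqL \tup{y} + \tup{c}$, which is immediate from the definition of the lexicographic order since the first index where two tuples differ is unaffected by adding the same tuple to both.

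The key step is then to compute the intersection. Because $\leqL$ is a total order, a tuple $\tup{z}$ belongs to every $\uparrow\{\tup{a}_{v'}\}$ if and only if $\tup{a}_{v'} \leqL \tup{z}$ for all $v' \in \successor(v)$, which (again using totality) holds if and only if $\max_{\leqL}\{\tup{a}_{v'} \mid v' \in \successor(v)\} \leqL \tup{z}$. Hence
$$\bigcap_{v' \in \successor(v)} \uparrow\{\tup{a}_{v'}\} = \uparrow\left\{\max_{\leqL}\{\tup{a}_{v'} \mid v' \in \successor(v)\}\right\}.$$
Since the set of minimal elements of the upward closure of a single point is that point itself, applying $\minimal$ yields the singleton $\max_{\leqL}\{\iterUpValue{k}{v'} + \dWeight(v,v') \mid v' \in \successor(v)\}$, which is exactly the claimed value.

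The only mildly delicate point is the careful handling of the interaction between the translation by $\dWeight(v,v')$ and the lexicographic order, namely establishing translation invariance and justifying that intersecting upward closures of singletons collapses to the upward closure of their maximum. Both are direct consequences of the totality of $\leqL$ and require no heavy computation, so I expect the argument to be short.
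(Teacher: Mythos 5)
Your proof is correct and follows essentially the same route as the paper: starting from the algorithm's definition, rewriting each translated upward closure as the upward closure of the singleton $\{\iterUpValue{k}{v'} + \dWeight(v,v')\}$, collapsing the intersection to the upward closure of the lexicographic maximum (the paper states this as its Equation for two sets and applies it repeatedly, where you argue it directly for the finite intersection via totality), and finally applying $\minimal$. Your explicit justification of translation invariance of $\leqL$ is a detail the paper leaves implicit, but it does not change the structure of the argument.
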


\begin{proof}
We begin the proof by a remark: if $\tup{x}, \tup{y} \in \NInf^\di$ then, 
\begin{equation}
    \uparrow \{ \tup{x} \}\, \cap \, \uparrow \{ \tup{y} \} = \uparrow \{ \max_{\leqL} \{ \tup{x}, \tup{y} \} \} \label{eq:interLexico}
\end{equation}

    Let $v \in V_2 \backslash \targetSet$,

    \begin{align*}
    \iter{k+1}{v} &= \displaystyle \minimal \left( \bigcap_{v' \in \successor(v)}  \uparrow \iter{k}{v'} + \dWeight(v,v') \right) & \text{ By Algorithm~\ref{algo:fixPoint}} \\
    &= \minimal \left( \bigcap_{v' \in \successor(v)} \uparrow \{ \iterUpValue{k}{v'} + \dWeight(v,v') \} \right) \\
    &= \minimal( \max_{\leqL} \{ \iterUpValue{k}{v'} + \dWeight(v,v') \mid v' \in \successor(v) \}) & \text{ By Equation}~\eqref{eq:interLexico}\\
    &= \max_{\leqL} \{ \iterUpValue{k}{v'} + \dWeight(v,v') \mid v' \in \successor(v) \}
    \end{align*}
\end{proof}

\begin{proof}[Proof of Theorem~\ref{thm:complexityLexico}]
By Proposition~\ref{prop:algoTerminatesSteps}, Algorithm~\ref{algo:fixPoint} mainly consists in $(|V|+1) \cdot |V| \approx |V|^2$ operations of the type $\min_{\leqL} \{ \iterUpValue{k}{v'} + \dWeight(v,v') \mid v' \in \successor(v)\} $ or $\max_{\leqL} \{ \iterUpValue{k}{v'} + \dWeight(v,v') \mid v' \in \successor(v)\}$ which are done in $\bigO(|V|\cdot \di)$. It follows that the global complexity of the algorithm if the considered order is the lexicographic order is in $\bigO(|V|^3\cdot \di)$. 
\end{proof}


\subsubsection{Componentwise Order}

This section is devoted to the proof of Theorem~\ref{thm:complexityComponent}.

\begin{restate}{Theorem}{\ref{thm:complexityComponent}}
\restateThmComplexityComponent
\end{restate}

During the computation of the fixpoint algorithm, even if we have a finite representation of the infinite   sets $\uparrow \iter{k}{v}$ by only storing their minimal elements $\iter{k}{v}$, we need to explain how to manipulate them efficiently.
In particular, we explicit how given some accurate representation of  $\uparrow \iter{k}{v}$ and $\uparrow \iter{k}{v'}$ for some $k \in \N$ and $v,v' \in V$ we compute: \emph{(i)} the union $\uparrow \iter{k}{v}\, \cup \uparrow \iter{k}{v'}$, \emph{(ii)} the intersection $\uparrow \iter{k}{v}\, \cap \uparrow \iter{k}{v'}$, \emph{(iii)} the translation $\uparrow \iter{k}{v} + \dWeight(v,v')$ and \emph{(iv)}  the set of minimal elements $\minimal(\uparrow \iter{k}{v})$. Inspired by the approach explained in~\cite{DelzannoR00}, we use a part of the logic of upward closed sets in order to express the infinite sets $\uparrow \iter{k}{v}$ in a convenient way.\\

Let $\varLogSet = \{ \varLog_1, \ldots, \varLog_{\di}\}$ be a set of $\di$ variables, if $ \generator = \{ \tup{x^1}, \ldots , \tup{x^n}\}$ for some $n \in \N$ and $\tup{x^1}, \ldots , \tup{x^n} \in \NInf^\di$, we can express $\uparrow \generator$ as a formula $\phi$:

\begin{equation} \phi = \bigvee_{1 \leq i \leq n} (\varLog_1 \geq x^i_1) \wedge \ldots \wedge (\varLog_\di \geq x^i_\di).\label{eq:phi}\end{equation}

We define the size of the formula, denoted by $|\phi|$, by $n \cdot \di$. Additionally, the set $\generator$ is called the set of \emph{generators} of $\uparrow \generator$, or equivalently the set of generators of $\phi$. Thus $\generator$ allows to encode the formula in a succinct way. Notice that the fewer generators there are, the more succinct the formula to express $\uparrow\generator$ is.  Moreover, the set of tuples that evaluates formula $\phi$ to true are denoted by $\evalFormula{\phi}$, \IE
$\evalFormula{\phi} = \{ \tup{c} \in \NInf^\di \mid \bigvee_{1 \leq i \leq n } (c_1 \geq x^i_1) \wedge \ldots (c_\di \geq x^i_\di) \}.$
Thus, in particular, $\evalFormula{\phi} = \uparrow \generator$. Conversely, if we have a formula $\phi$ as in Equation~\eqref{eq:phi}, it represents an upward closed set $\evalFormula{\phi}$ and its set of generators is given by $\generatorFormula(\phi) = \{ \tup{x^1}, \ldots, \tup{x^n} \}$.

For each $\uparrow\iter{k}{v}$, we denote by $\phi(k,v)$ the corresponding formula.
In Proposition~\ref{prop:opOnClosedSets}, we explain how unions, intersections and translations of sets of the type $\uparrow \iter{k}{v}$ are done and what are the complexities of those operations. 

\begin{proposition}[\cite{DelzannoR00}]
\label{prop:opOnClosedSets}
Given two sets $\iter{k}{v} = \{ \tup{x^1}, \ldots , \tup{x^n}\}$, for some $n \in \N$, and $\iter{k}{v'} = \{ \tup{y^1}, \ldots , \tup{y^m}\}$, for some $m \in \N$, such that their upward closures are expressed respectively by $\phi(k,v)$ and $\phi(k,v')$, we have:

\begin{enumerate}
    \item \textbf{Union}: the set $ X = \uparrow \iter{k}{v}\, \cup \uparrow \iter{k}{v'}$ is expressed thanks to the formula $$\psi = \displaystyle \bigvee_{1\leq i \leq n+m } (\varLog_1 \geq z^i_1) \wedge \ldots \wedge ( \varLog_\di \geq z^i_\di) $$ where $\tup{z^i} = \tup{x^i}$ if $1 \leq i \leq n$ and $\tup{z^i} = \tup{y^{i-n}}$ if $n+1 \leq i \leq n+m$. Thus $|\psi| = |\phi(k,v)| +|\phi(k,v')|$ and this operation is done in $\bigO(|\phi(k,v)| +|\phi(k,v')|)$.\label{item:union}
    \item \textbf{Intersection}: the set $ X = \uparrow \iter{k}{v}\, \cap \uparrow \iter{k}{v'}$ is expressed thanks to the formula $$\psi = \displaystyle \bigvee_{1\leq i \leq n} \bigvee_{1 \leq j \leq m} (\varLog_1 \geq \max \{ x^i_1, y^j_1\}) \wedge \ldots \wedge ( \varLog_\di \geq \max \{ x^i_\di, y^j_\di \}). $$ Thus $|\psi| = |\phi(k,v)| \cdot |\phi(k,v')|$ and this operation is done in $\bigO(|\phi(k,v)| \cdot |\phi(k,v')|)$.\label{item:intersection}
    \item \textbf{Translation}: the set $ X = \uparrow \iter{k}{v} + \tup{c}$ is expressed thanks to the formula $$\psi = \displaystyle \bigvee_{1\leq i \leq n } (\varLog_1 \geq x^i_1 + c_1) \wedge \ldots \wedge ( \varLog_\di \geq x^i_\di + c_\di).$$ Thus $|\psi| = |\phi(k,v)|$ and this operation is done in $\bigO(|\phi(k,v)|)$.
    
\end{enumerate}

\end{proposition}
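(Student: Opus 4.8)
The plan is to handle the three operations one at a time; in each case I would write down the proposed formula $\psi$, check that $\evalFormula{\psi}$ really is the intended set, and then read off both $|\psi|$ and the running time by counting the number of disjuncts and the arithmetic performed per disjunct. The single fact used throughout is the defining property of the encoding recalled just before the statement: a formula built as in Equation~\eqref{eq:phi} from a generator set $\generator$ evaluates to $\uparrow \generator$ with respect to $\leqC$, so that a tuple $\tup{c}$ belongs to the represented set precisely when $\tup{c}$ dominates, componentwise, at least one generator.

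For the \textbf{union}, I would observe that a tuple satisfies the proposed $\psi$ exactly when it lies above one of the generators $\tup{x^1},\ldots,\tup{x^n}$ or above one of $\tup{y^1},\ldots,\tup{y^m}$, since these are exactly the disjuncts of $\psi$; this is literally the condition for membership in $\uparrow \iter{k}{v} \cup \uparrow \iter{k}{v'}$. As $\psi$ is obtained by concatenating the two generator lists it has $n+m$ disjuncts, giving $|\psi| = |\phi(k,v)| + |\phi(k,v')|$ and a construction in linear time. The \textbf{translation} is equally direct: a tuple $\tup{w}$ lies in $\uparrow \iter{k}{v} + \tup{c}$ iff $\tup{w} - \tup{c} \in \uparrow \iter{k}{v}$, i.e. iff $\tup{x^i} \leqC \tup{w} - \tup{c}$ for some $i$, i.e. iff $\tup{x^i} + \tup{c} \leqC \tup{w}$ for some $i$; hence the translated set is the upward closure of the shifted generators $\{\tup{x^i}+\tup{c}\}_i$, which is exactly what $\psi$ encodes, with the same number $n$ of generators and cost $\bigO(n\cdot\di)$ to add $\tup{c}$ to each.

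The one case carrying real content is the \textbf{intersection}, and its proof rests on the elementary remark that, for tuples in $\NInf^\di$, one has $\tup{x^i} \leqC \tup{c}$ and $\tup{y^j} \leqC \tup{c}$ simultaneously if and only if $c_\ell \geq \max\{x^i_\ell, y^j_\ell\}$ for every coordinate $\ell$ — that is, iff $\tup{c}$ dominates the componentwise maximum of $\tup{x^i}$ and $\tup{y^j}$. Granting this, $\tup{c} \in \uparrow \iter{k}{v} \cap \uparrow \iter{k}{v'}$ iff there exist $i,j$ with $\tup{x^i} \leqC \tup{c}$ and $\tup{y^j} \leqC \tup{c}$, which is precisely the double disjunction defining $\psi$; so $\evalFormula{\psi}$ is the intersection. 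Each of the $n\cdot m$ pairs $(i,j)$ yields one disjunct whose generator is obtained by $\di$ coordinatewise maxima, so the construction runs in time $\bigO(|\phi(k,v)|\cdot|\phi(k,v')|)$ and $\psi$ has $n\cdot m$ disjuncts, which is the asserted size.

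The main obstacle I anticipate is not the correctness argument, which is routine once the componentwise-maximum remark is isolated, but rather being precise about the intersection's multiplicative blow-up: unlike union and translation, intersection cannot in general be represented with a number of generators linear in the inputs, and it is exactly this product growth $n\cdot m$ in the number of generators (iterated across the fixpoint computation, and kept in check only by the size of antichains in the relevant finite grid) that underlies the exponential dependence on $\di$ recorded in Theorem~\ref{thm:complexityComponent}. Since the statement is attributed to~\cite{DelzannoR00}, I would also cross-check that the size convention $|\phi| = n\cdot\di$ used here matches theirs, so that the claimed equalities such as $|\psi| = |\phi(k,v)|\cdot|\phi(k,v')|$ hold on the nose rather than merely up to a factor of $\di$.
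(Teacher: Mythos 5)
Your proof is correct, but note that the paper itself offers no proof of this proposition: it is imported wholesale from~\cite{DelzannoR00} as a black box, so there is no internal argument to compare against. Your direct verification is exactly the natural one — membership in $\evalFormula{\psi}$ unwinds to componentwise domination of some generator, union corresponds to concatenating generator lists, translation to shifting each generator by $\tup{c}$ (using that $\tup{w} \in \uparrow \generator + \tup{c}$ iff $\tup{w} - \tup{c} \in \uparrow \generator$, which is legitimate since any such $\tup{w}$ dominates $\tup{c}$), and intersection to the key observation that $\tup{x^i} \leqC \tup{c}$ and $\tup{y^j} \leqC \tup{c}$ hold simultaneously iff $\tup{c}$ dominates the componentwise maximum of $\tup{x^i}$ and $\tup{y^j}$. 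Your closing remark about the size convention is also a genuine catch rather than pedantry: with the paper's convention $|\phi| = n \cdot \di$, the intersection formula has $n \cdot m$ disjuncts and hence size $n \cdot m \cdot \di$, whereas $|\phi(k,v)| \cdot |\phi(k,v')| = n \cdot m \cdot \di^2$, so the stated equality $|\psi| = |\phi(k,v)| \cdot |\phi(k,v')|$ only holds up to a factor of $\di$. Since $n \cdot m \cdot \di \leq n \cdot m \cdot \di^2$, this discrepancy is harmless for everything downstream (Propositions~\ref{prop:complexityUnionComponent} and~\ref{prop:complexityInterComponent} and Theorem~\ref{thm:complexityComponent} only need upper bounds), but strictly speaking the proposition's size claim for intersection should read as an inequality or be stated as $n \cdot m \cdot \di$.
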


Even if the sets $\iter{k}{v}$ and $\iter{k}{v'}$ are minimal, an union or an intersection as  described in Statements~\ref{item:union} and~\ref{item:intersection} in Proposition~\ref{prop:opOnClosedSets} may produce a formula $\psi$ such that set $\generatorFormula(\psi)$ is not minimal. Therefore we consider the minimization of a set of generators in order to obtain a (minimal) new set of generators that encodes a new formula $\phi'$ in such a way that $\evalFormula{\phi'} = \evalFormula{\phi}$ and $|\phi'|$ is as small as possible. Notice that the translation operation preserves the minimality of the set of generators.

\begin{proposition}[\cite{DelzannoR00}]
\label{prop:minimizationComplexity}
If an upward closed set $X$ is expressed by $\phi$ with $\generator  = \generatorFormula(\phi) $ and $X' = \minimal(X)$, then $\generator'= \minimal(\generator)$ can be computed in $\bigO(|\phi|^2)$.    
\end{proposition}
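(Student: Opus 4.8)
The plan is to reduce the computation to finding the minimal elements of the finite generating set $\generator$ under the componentwise order, and then to bound the cost of the naive pairwise-comparison procedure that does this. First I would establish the structural fact that makes the statement meaningful: for an upward closed set $X = \evalFormula{\phi} = \uparrow \generator$ with respect to $\leqC$, the set of minimal elements $X' = \minimal(X)$ coincides exactly with $\minimal(\generator)$, the minimal elements of the finite set $\generator$. This reduces an a priori statement about an infinite set $X$ to a purely finite computation on its generators.

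To prove this reduction I would argue both inclusions. The inclusion $\minimal(\generator) \subseteq \minimal(X)$ holds because if $\tup{x} \in \minimal(\generator)$ and $\tup{z} \in X$ satisfies $\tup{z} \leqC \tup{x}$, then some generator $\tup{x^j}$ satisfies $\tup{x^j} \leqC \tup{z} \leqC \tup{x}$; minimality of $\tup{x}$ in $\generator$ forces $\tup{x^j} = \tup{x}$, whence $\tup{z} = \tup{x}$ by antisymmetry. Conversely, any $\tup{m} \in \minimal(X)$ lies above some generator $\tup{x^i} \leqC \tup{m}$; since $\tup{x^i} \in X$ and $\tup{m}$ is minimal, $\tup{x^i} = \tup{m}$, so $\tup{m} \in \generator$, and a symmetric argument shows it is minimal within $\generator$. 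This confirms $\generator' = \minimal(\generator) = X'$, so computing $\minimal(\generator)$ indeed yields a minimal set of generators for $X$.

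Next I would describe the algorithm realizing $\minimal(\generator)$: iterate over all ordered pairs of distinct generators $(\tup{x^i}, \tup{x^j})$ and test whether $\tup{x^i} \leqC \tup{x^j}$; each such test is a componentwise comparison of two $\di$-tuples and costs $\bigO(\di)$. A generator $\tup{x^j}$ is discarded as soon as it is dominated by some other generator, and the surviving generators form $\minimal(\generator)$. Writing $n = |\generator|$ so that $|\phi| = n \cdot \di$, there are $\bigO(n^2)$ pairs and each comparison costs $\bigO(\di)$, giving a total of $\bigO(n^2 \di)$. Since $n^2 \di \leq (n \di)^2 = |\phi|^2$, the whole computation runs in $\bigO(|\phi|^2)$.

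I expect the complexity counting to be entirely routine; the only genuinely delicate point is the correctness reduction of the second paragraph, and in particular the direction $\minimal(X) \subseteq \generator$. The key subtlety there is that a minimal element of the upward closure cannot be strictly above any generator, which forces it to equal a generator — this is what pins the infinitely many candidate minimal elements of $X$ back into the finite set $\generator$. Once $\minimal(X) = \minimal(\generator)$ is in hand, the $\bigO(|\phi|^2)$ bound follows immediately from the pairwise-comparison count.
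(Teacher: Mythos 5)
Your proof is correct, but note that the paper does not actually prove this statement at all: Proposition~\ref{prop:minimizationComplexity} is imported as a black box from~\cite{DelzannoR00}, and the order-theoretic identity that your second paragraph establishes, namely $\minimal(\uparrow \generator) = \minimal(\generator)$ for the upward closed set $X = \uparrow \generator$, is stated separately and without proof as Remark~\ref{rem:minUpwardClosedSetIsMinGenerators}. So your proposal supplies a self-contained elementary argument where the paper relies on a citation, and both halves of it are sound: the inclusions $\minimal(\generator) \subseteq \minimal(X)$ and $\minimal(X) \subseteq \minimal(\generator)$ are correctly reduced to antisymmetry plus the defining property of the upward closure (every element of $X$ lies above some generator), and the counting $\bigO(n^2 \di) \subseteq \bigO\bigl((n \di)^2\bigr) = \bigO(|\phi|^2)$ is valid — in fact your bound $\bigO(n^2 \di)$ is slightly sharper than the one stated, since $|\phi|^2 = n^2 \di^2$. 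One small caveat on the algorithmic step: the rule ``discard $\tup{x^j}$ as soon as it is dominated by some other generator'' is only correct because $\generator$ is a genuine set, hence duplicate-free; if the generators were stored as a list with repetitions, two equal minimal tuples would each dominate the other and both would be discarded. An implementation should therefore deduplicate first or break ties by index (discard $\tup{x^j}$ only if some $\tup{x^i}$ with $i < j$ satisfies $\tup{x^i} \leqC \tup{x^j}$, or some $\tup{x^i}$ with $i > j$ is strictly below $\tup{x^j}$); this does not affect the stated complexity.
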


\begin{remark}
\label{rem:minUpwardClosedSetIsMinGenerators}
Notice that in the previous proposition, as $X$ is upward closed, $\generator' = \minimal(\generator) = \minimal(\uparrow \generator) = \minimal(X)$.
\end{remark}

The key idea in order to obtain an algorithm at most polynomial in $\maxW$ and $|V|$ and exponential in $\di$ is to ensure that the size of the formulae, and so their sets of generators, that represent the sets $\uparrow\iter{k}{v}$ do not grow too much. The size of such  a formula depends on the number of elements in $\iter{k}{v}$ and the number of dimensions $\di$. Since for all $k \in \N$, $\iterEnsure{k}{v} \subseteq \iterEnsure{k+1}{v} \subseteq \iter{|V|}{v}$ and $|\iterEnsure{|V|}{v}\backslash \{ \tup{\infty} \}| \leq (\maxW \cdot |V|)^\di $, the maximal size of a set $\iter{k}{v} = \minimal(\iterEnsure{k}{v})$ is also bounded by $(\maxW\cdot |V|)^\di$. Let $\maxAC = \maxW ^\di\cdot |V|^\di$ be this (rough) upper-bound.

\begin{proposition}
\label{prop:complexityUnionComponent}
For all $k \in \N$ and $v \in V_1\backslash \targetSet$, 
the operation $$\iter{k+1}{v} = \minimal\left(\bigcup_{v'\in \successor(v)} \uparrow \iter{k}{v'} + \dWeight(v,v')\right)$$ can be computed in $\bigO( \di^2\cdot \maxW^{2\di} \cdot |V|^{2\cdot\di + 2})$.
\end{proposition}

\begin{proof}
    Let $k \in \N$ and $v \in V_1 \backslash \targetSet$. For all $v' \in \successor(v)$, we denote by $\phi(k,v')$ the formulae that express the sets $\uparrow \iter{k}{v'}$. By hypothesis, for all $v' \in \successor(v)$, $|\phi(k,v')| \leq \maxAC \cdot \di$. 

    \begin{itemize}
    \item \textbf{Translations.} We first compute the sets $\uparrow \iter{k}{v'} + \dWeight(v,v')$ by computing their associated formulae that we denote by $\phi(k,v') + \dWeight(v,v')$. Notice that computing each of those formulae can be done in $\bigO(|\phi(k,v')|) = \bigO(\maxAC \cdot \di)$ and that the obtained formula has a size $|\phi(k,v') + \dWeight(v,v')| = |\phi(k,v')|$,  by Proposition~\ref{prop:opOnClosedSets}. In conclusion, the computation of all translations is done in $\bigO(|V| \cdot \maxAC \cdot \di)$.
    \item \textbf{Unions.} Let us denote by $\psi$ the formula that expresses $\displaystyle \bigcup_{v' \in \successor(v)} \uparrow \iter{k}{v'} + \dWeight(v,v')$ which can be obtained thanks to successive unions and by Proposition~\ref{prop:opOnClosedSets} in \begin{align*}\bigO( \sum_{v' \in \successor(v)} |\phi(k,v') +\dWeight(v,v')|) &= \bigO( \sum_{v' \in \successor(v)} |\phi(k,v') |) =  \bigO(\maxAC \cdot \di \cdot|V|).\end{align*}
    \item \textbf{Minimization of the set of generators of $\psi$.} It remains to compute thanks to $\psi$ the set $\minimal(\displaystyle \bigcup_{v' \in \successor(v)} \uparrow \iter{k}{v'} + \dWeight(v,v'))$ which corresponds to the minimization of the set of generators of $\psi$ by Remark~\ref{rem:minUpwardClosedSetIsMinGenerators}. This can be done in $\bigO(|\psi|^2) = \bigO(\maxAC^2\cdot d^2 \cdot |V|^2)$, by Proposition~\ref{prop:minimizationComplexity}.
    \end{itemize}

    In conclusion, the global complexity of computing $\iter{k+1}{v}$ for $v \in V_1 \backslash \targetSet$ is $\bigO(\maxW^{2\di}\cdot |V|^{2\di} \cdot \di^2 \cdot |V|^2) =  \bigO ( \di^2 \cdot \maxW^{2\di}\cdot |V|^{2\di +2})$.
\end{proof}

\begin{proposition}
\label{prop:complexityInterComponent}
For all $k \in \N$ and $v \in V_2 \backslash \targetSet$, the operation $$\iter{k+1}{v} = \minimal\left(\bigcap_{v'\in \successor(v)} \uparrow \iter{k}{v'} + \dWeight(v,v')\right)$$ can be computed in $\bigO(\di^4 \cdot 
\maxW^{4\di} \cdot |V|^{4\di + 1})$.
\end{proposition}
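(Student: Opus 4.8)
The plan is to mirror the proof of Proposition~\ref{prop:complexityUnionComponent}, decomposing the computation of $\iter{k+1}{v}$ for $v \in V_2 \backslash \targetSet$ into three phases: the translations $\uparrow \iter{k}{v'} + \dWeight(v,v')$, the intersection $\bigcap_{v' \in \successor(v)} \uparrow \iter{k}{v'} + \dWeight(v,v')$, and minimization. As in the union case, I would first compute all the translated formulae; by Proposition~\ref{prop:opOnClosedSets} each translation preserves the size $\leq \maxAC \cdot \di$ and is performed in $\bigO(\maxAC \cdot \di)$, so this phase costs $\bigO(|V| \cdot \maxAC \cdot \di)$, which will be negligible compared with the rest.

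The crucial difference from the union case is that intersection \emph{multiplies} the sizes of the formulae (Statement~\ref{item:intersection} of Proposition~\ref{prop:opOnClosedSets}) instead of adding them. Hence one cannot intersect all $m \leq |V|$ translated formulae and minimize only at the very end: the intermediate formula could reach size $(\maxAC \cdot \di)^{|V|}$, which is far beyond the required bound. Instead I would compute the intersection \emph{incrementally}, interleaving a minimization after every pairwise intersection. Concretely, writing the successors as $v'_1, \ldots, v'_m$, I set $\psi_1$ to the translated formula for $v'_1$, and for each $i \geq 2$ I intersect $\psi_{i-1}$ with the translated formula of $v'_i$ and immediately minimize the result (Proposition~\ref{prop:minimizationComplexity}) to obtain $\psi_i$.

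The invariant that makes this work is that every intermediate intersection is itself an upward closed set whose minimal elements have coordinates bounded by $\maxW \cdot |V|$: each generator is a coordinatewise maximum of generators coming from translated fixpoint sets, and those coordinates are already bounded by $\maxW \cdot |V|$ (this is exactly the bound underlying the definition of $\maxAC$). Consequently, after minimization each $\psi_i$ has at most $\maxAC$ generators, \IE $|\psi_i| \leq \maxAC \cdot \di$. At each of the $\bigO(|V|)$ steps the pairwise intersection then produces a formula of size at most $(\maxAC \cdot \di)^2$ in time $\bigO(\maxAC^2 \cdot \di^2)$, and its minimization costs $\bigO\bigl((\maxAC^2 \cdot \di^2)^2\bigr) = \bigO(\maxAC^4 \cdot \di^4)$, which dominates. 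Summing over all steps yields $\bigO(|V| \cdot \maxAC^4 \cdot \di^4)$, and substituting $\maxAC = \maxW^\di \cdot |V|^\di$ gives $\bigO(\di^4 \cdot \maxW^{4\di} \cdot |V|^{4\di + 1})$, as claimed.

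The main obstacle I anticipate is precisely justifying this bounded-size invariant, namely that minimizing after each intersection genuinely returns a set of at most $\maxAC$ generators even though the partial intersections are not themselves of the form $\iter{\cdot}{\cdot}$. This follows from the coordinate bound $\maxW \cdot |V|$ that confines all relevant generators to a box with at most $(\maxW \cdot |V|)^\di = \maxAC$ lattice points; once this is established, the remaining complexity bookkeeping is routine.
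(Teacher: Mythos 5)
Your proposal is correct and follows essentially the same route as the paper: the paper likewise replaces the one-shot intersection by an incremental loop that intersects one translated formula at a time and minimizes the set of generators after each pairwise intersection, with the per-step cost dominated by the $\bigO(\maxAC^4 \cdot \di^4)$ minimization, giving $\bigO(|V| \cdot \maxAC^4 \cdot \di^4) = \bigO(\di^4 \cdot \maxW^{4\di} \cdot |V|^{4\di+1})$ overall. The only difference is that you spell out the box-counting argument bounding intermediate generator sets by $\maxAC$, which the paper asserts implicitly.
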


\begin{proof}[Proof of Proposition~\ref{prop:complexityInterComponent}]

 Let $k \in \N$ and $v \in V_2 \backslash \targetSet$. For all $v' \in \successor(v)$, we denote by $\phi(k,v')$ the formulae that express the sets $\uparrow \iter{k}{v'}$. By hypothesis, for all $v' \in \successor(v)$, $|\phi(k,v')| \leq \maxAC \cdot \di$. The line~\ref{lineAlgo:inter} of Algorithm~\ref{algo:fixPoint} may be replaced by:

\begin{algorithm}
\SetAlgoNoEnd
$I = \iter{k}{w}$ for some $w \in \successor(v)$\; \label{algo:interBisBegin}
\For{$v'\in \successor(v)$}
{
    $J = \uparrow \iter{k}{v'} + \dWeight(v,v')$ \; \label{algo:interBis1}
    $I = \minimal(\uparrow I \cap J) \;$\label{algo:interBis2}
    
}
$\iter{k+1}{v} = I$\;\label{algo:interBisEnd}
\end{algorithm}

Let us analyze the complexity of those lines. We assume that $\phi(J)$ and  $\phi(\uparrow I)$ are the formulae that express the sets $J$  and $\uparrow I$ respectively. Thanks to the minimization of the set of generators of the corresponding formula in Line~\ref{algo:interBis2} , the formulae $\phi(J)$ and $\phi(\uparrow I)$ have a size at most equal to $\maxAC \cdot \di$.

\begin{itemize}
    \item \textbf{Complexity of Line~\ref{algo:interBis1}}: $\bigO(|\phi(k,v')|) = \bigO(\maxAC \cdot \di)$, by Proposition~\ref{prop:opOnClosedSets};
    \item \textbf{Complexity of Line~\ref{algo:interBis2}}: the intersection is done in $\bigO(\maxAC^2 \cdot \di^2)$, by Proposition~\ref{prop:opOnClosedSets}, and generates a formula $\phi(\uparrow I \cap J)$ of size at most $|\phi(\uparrow I \cap J)| \leq \maxAC^2 \cdot \di^2$. The minimization of the set of generators of $\uparrow I \cap J$ is done in $\bigO(\maxAC^4 \cdot \di^4)$, by Proposition~\ref{prop:minimizationComplexity}, and allows to encode a formula of size at most $\maxAC \cdot \di$ which also expresses $\uparrow I \cap J$.

\item The \textbf{global complexity of Lines~\ref{algo:interBisBegin} to~\ref{algo:interBisEnd}}, is $|V| \cdot ( \bigO(\maxAC \cdot \di) + \bigO(\maxAC^2\di^2) + \bigO(\maxAC^4 \cdot \di^4)) = \bigO(|V| \cdot \maxAC^4 \cdot \di^4) = \bigO(\di^4 \cdot \maxW^{4\di} \cdot |V|^{4\di+1})$. 
\end{itemize}
$ $
\end{proof}

\begin{proof}[Proof of Theorem~\ref{thm:complexityComponent}]
Since the algorithm terminates in less than $|V|+1$ steps (Proposition~\ref{prop:algoTerminatesSteps}), the fixpoint algorithm consists of about $|V|$ repetitions of the procedure between Line~\ref{lineAlgo:beginningFirstForLoop} and Line~\ref{lineAlgo:inter}. This procedure is a for loop on all the vertices of the game graph which computes essentially either an operation $\displaystyle\minimal\left(\bigcup_{v'\in \successor(v)} \uparrow \iter{k}{v'} + \dWeight(v,v')\right)$ or $\displaystyle\minimal\left(\bigcap_{v'\in \successor(v)} \uparrow \iter{k}{v'} + \dWeight(v,v')\right)$. Thus, by Proposition~\ref{prop:complexityUnionComponent} and Proposition~\ref{prop:complexityInterComponent} the complexity of the fixpoint algorithm is in $\cdot \bigO( |V|^2  \cdot  \max \{ \di^2 \cdot \maxW^{2\di}  \cdot |V|^{2\di + 2}\,,\,  \di^4 \cdot \maxW^{4\di} \cdot |V|^{4\di + 1} \}) =  \bigO( |V|^2 \cdot \di^4 \cdot\maxW^{4\di} \cdot |V|^{4\di + 1}) = \bigO( \di^4\cdot W^{4\di} \cdot |V|^{4\di + 3})$. 
\end{proof}

\subsection{Synthesis of Lexico-optimal and Pareto-optimal Strategies}

In this section, we prove Theorem~\ref{thm:optiStrat} and Proposition~\ref{prop:stratOptiLexicoPositional}.

\begin{restate}{Theorem}{\ref{thm:optiStrat}}
\restateThmOptiStrat
\end{restate}

To prove Theorem~\ref{thm:optiStrat}, we consider the strategy tree $\stratTree{\sigma^*_1}$ of $\sigma^*_1$ and introduce a labeling function of the tree nodes which allows to keep track some properties on these nodes. This labeling function and properties are detailed in the following proposition.

\begin{proposition}
\label{prop:labeling}
    For $u \in V$ and $\tup{c} \in \iter{*}{u} \backslash \{ \tup{\infty} \} $, if $\stratTree{\sigma^*_1}$ is the strategy tree of the strategy $\sigma^*_1$ as defined in Definition~\ref{def:optiStrat} then, there exists a labeling function $\tau: \stratTree{\sigma^*_1} \longrightarrow \N^\di$ such that, $\tau(u) = \tup{c} \in \iter{*}{u}$ and, for all $hv \in \stratTree{\sigma^*_1}$ such that $|hv| \geq 1$:
    \begin{enumerate}
        \item $\tau(hv) \in \iter{*}{v}$; \label{item:inv1Label}
        \item If $\last(h) \in V_1$ then, $(v, \tau(hv)) = f^*_{\last(h)}(\tau(h))$;\label{item:inv2Label}

        \item $\tau(hv) \leqL \tau(h) - \dWeight(\last(h),v)$; \label{item:inv3Label}
        \item $\tau(hv) = \min_{\leqL} \{ \tup{x'} \in \iter{*}{v} \mid \tup{x'} \lesssim \tup{c} - \dCost(hv)\, \wedge \, \tup{x'} \leqL \tup{c} - \dCost(hv) \}$. \label{item:inv4Label}
    \end{enumerate}
\end{proposition}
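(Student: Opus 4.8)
The plan is to \emph{define} $\tau$ directly by Property~\ref{item:inv4Label}: set $\tau(u) = \tup{c}$ at the root and $\tau(t) = \min_{\leqL} \cover(t)$ at every other node $t$ of $\stratTree{\sigma^*_1}$, and then verify the four properties by induction on the length of the history. The only thing needed for this definition to make sense is that $\cover(t) \neq \emptyset$ at every node $t$, so that $\min_{\leqL}$ exists; establishing this non-emptiness along the way is the real content of the induction. Throughout I would use that the orders are translation-invariant (so $\tup{a} \lesssim \tup{b} - \dWeight(w,v) \iff \tup{a} + \dWeight(w,v) \lesssim \tup{b}$) and the elementary implication $\tup{a} \leqC \tup{b} \Rightarrow \tup{a} \leqL \tup{b}$, which is what lets one argument cover both orders.

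For the base case $t = u$ we have $\dCost(u) = \tup{0}$, so $\cover(u) = \{ \tup{x'} \in \iter{*}{u} \mid \tup{x'} \lesssim \tup{c} \wedge \tup{x'} \leqL \tup{c} \}$; since $\tup{c} \in \iter{*}{u}$ lies in this set and $\iter{*}{u}$ is a $\lesssim$-antichain (a singleton in the lexicographic case), $\cover(u) = \{ \tup{c} \}$ and $\tau(u) = \tup{c}$, seeding the induction and matching Property~\ref{item:inv4Label}. For the inductive step at a node $h$ with $\last(h) = w$, I split on the owner of $w$. The case $w \in V_2 \setminus \targetSet$ is the easier one: each child is $hv$ with $v \in \successor(w)$, Property~\ref{item:inv2Label} is vacuous, and by Corollary~\ref{cor:propOnFixpoint} applied to $\tau(h) \in \iter{*}{w}$ there is some $\tup{x'} \in \iter{*}{v}$ with $\tup{x'} + \dWeight(w,v) \lesssim \tau(h)$. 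Combining this with the induction hypothesis $\tau(h) \lesssim \tup{c} - \dCost(h)$ and $\tau(h) \leqL \tup{c} - \dCost(h)$, and using $\dCost(hv) = \dCost(h) + \dWeight(w,v)$, shows $\tup{x'} \in \cover(hv)$, so $\cover(hv) \neq \emptyset$; Properties~\ref{item:inv1Label} and~\ref{item:inv4Label} are then immediate, and Property~\ref{item:inv3Label} follows from $\tau(hv) = \min_{\leqL} \cover(hv) \leqL \tup{x'} \leqL \tau(h) - \dWeight(w,v)$.

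The case $w \in V_1 \setminus \targetSet$ is where the work lies. Since $\cover(h) \neq \emptyset$ by induction, the strategy uses its second branch, so the unique child is $v = f^*_w(\tau(h))[1]$; writing $\tup{x'} = f^*_w(\tau(h))[2]$, the definition of $f^*$ (Line~\ref{lineAlgo:stratEnd}) gives $\tup{x'} \in \iter{*}{v}$ and $\tau(h) = \tup{x'} + \dWeight(w,v)$, and the same translation argument as above shows $\tup{x'} \in \cover(hv)$. The crux is to prove $\tup{x'}$ is exactly the $\leqL$-minimum of $\cover(hv)$, which is what Property~\ref{item:inv2Label} demands. For this I would take any $\tup{y} \in \cover(hv)$ and push it back through $w$: by the fixpoint form of Line~\ref{lineAlgo:union} (equivalently Lemma~\ref{lemma:propOnIterUnion}), $\uparrow \iter{*}{w} = \bigcup_{v'' \in \successor(w)} \uparrow \iter{*}{v''} + \dWeight(w,v'')$, so $\tup{y} + \dWeight(w,v) \in \uparrow \iter{*}{w}$ and there is $\tup{z} \in \iter{*}{w}$ with $\tup{z} \lesssim \tup{y} + \dWeight(w,v)$. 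Using $\leqC \Rightarrow \leqL$ one checks $\tup{z} \in \cover(h)$, whence $\tau(h) \leqL \tup{z} \leqL \tup{y} + \dWeight(w,v)$ by minimality of $\tau(h) = \min_{\leqL}\cover(h)$; subtracting $\dWeight(w,v)$ yields $\tup{x'} \leqL \tup{y}$. This gives $\tau(hv) = \tup{x'}$, so Property~\ref{item:inv2Label} (namely $(v,\tau(hv)) = f^*_w(\tau(h))$) holds, Property~\ref{item:inv3Label} holds with equality, and Properties~\ref{item:inv1Label} and~\ref{item:inv4Label} hold by construction.

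I expect the main obstacle to be precisely this pull-back argument in the $V_1$ case: it is the only place that genuinely uses the fixpoint equation for $\iter{*}{w}$ rather than just Corollary~\ref{cor:propOnFixpoint}, and it is delicate because it mixes the two orders — one must carry simultaneously the $\lesssim$-comparison coming from the upward-closed representation and the $\leqL$-comparison coming from the minimality of $\tau(h)$, reconciling them through $\leqC \Rightarrow \leqL$. The remaining bookkeeping is routine: finiteness of the labels, i.e.\ $\tau(t) \neq \tup{\infty}$, follows because every element of $\cover(t)$ is $\lesssim \tup{c} - \dCost(t) \in \N^\di$, and consistency at target vertices follows since there $\iter{*}{\last(h)} = \{ \tup{0} \}$ forces $\tau(h) = \tup{0}$.
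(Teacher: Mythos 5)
Your proof is correct and takes essentially the same approach as the paper's: the same induction on the depth of the strategy tree, with Corollary~\ref{cor:propOnFixpoint} handling the $\playerTwo$-nodes and the pull-back through the fixpoint union equation (Lemma~\ref{lemma:propOnIterUnion}) handling the $\playerOne$-nodes, the only difference being organizational — you define $\tau$ uniformly by Property~\ref{item:inv4Label} and derive Property~\ref{item:inv2Label} directly, whereas the paper defines $\tau$ at $\playerOne$-nodes via $f^*$ and proves Property~\ref{item:inv4Label} by the contrapositive of the very same pull-back argument. One minor citation slip: the fact that $f^*_{\last(h)}(\tau(h))[2] \in \iter{*}{v}$ is not literally given by Line~\ref{lineAlgo:stratEnd} (which only yields membership in $\iter{k^*-1}{v}$); it is exactly the paper's Lemma~\ref{lemma:correctStratMin1}, which your argument should invoke at that step.
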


\begin{remark}The same remark as in Remark~\ref{rem:technicalRem} is applicable in the context of Proposition~\ref{prop:labeling}. Even if the lexicographic order $\leqL$ is used in the statement of the properties of the labeling function $\tau$, Proposition~\ref{prop:labeling} holds both for the lexicographic order and the componentwise order.
\end{remark}

The intuition behind the properties on the labeling function $\tau$ in Proposition~\ref{prop:labeling} is the following one.
The first property ensures that the values of $\tau$ is one of the ensured value at the fixpoint in the set corresponding to the last vertex of the node. The second property ensures that the construction of $\tau$ is consistent with the strategy $\sigma^*_1$. The third property ensures that when we follow a branch of the tree, the value of $\tau$ decreases along it, this to guarantee that the target set is actually reached. The fourth condition ensures that when we follow a branch of the tree, at the end, the cost is below $\tup{c}$. 
The most important of them are summarized in Figure~\ref{fig:invariantLabeling}.

\begin{figure}[ht]
\centering
\scalebox{0.9}{
\begin{tikzpicture}
    \node[draw, circle, inner sep=2pt] (u0) at (0,0){$u$};
    \draw[->,thick] (u0) to (1,0);
    \node (dots1) at (1.3,0){$\ldots$};

    \node[draw,circle, inner sep=2pt] (hv) at (4,1){$hv$};
    \node[draw, inner sep=3pt] (hw) at (4,-1){$hw$};

    \node[draw, diamond, inner sep=0pt] (hvv') at (6,1){$hvv'$};
    \node[draw, diamond, inner sep=0pt] (hww1) at (6,-0.3){$hww_1$};
    \node[draw, diamond, inner sep=0pt] (hww2) at (6,-1.7){$hww_2$};

    \draw[->,thick] (hv) to (hvv');
    \draw[->,thick] (hw) to[bend left] (hww1);
    \draw[->,thick] (hw) to[bend right] (hww2);

    \draw[->,thick] (3,0.8) to[bend left] (hv);
    \draw[->,thick] (3,-0.8) to[bend right] (hw);
    
    \node (dots2) at (7,1){\ldots};
    \node (dots3) at (7,-0.3){\ldots};
    \node (dots4) at (7,-1.7){\ldots};
    \draw[->,thick] (hw) to (5,-1);
    \node (dots5) at (5.3,-1){$\ldots$};    

    \node [draw, dotted] (label1) at (0,1.5){\begin{minipage}{3cm} $\tau(u) = \tup{c} \in \iter{*}{u}$\end{minipage}};
    \draw[->, dotted] (u0) to (label1);

    \node[draw, dotted] (label2) at (0,-1.5){\begin{minipage}{4.8cm} 
    If $t$ is $v$ or $w$:
    $\tau(ht) \in \iter{*}{t}$\\
    $\tau(ht) \lesssim \tup{c} - \dCost(ht)$\\
    $\tau(ht) \leqL \tau(h) - \dWeight(\last(h),t)$\end{minipage}};
    \draw[->,dotted] (hw) to (4,-1.5) to (label2);

    \node[draw, dotted] (label3) at (10.5,1){\begin{minipage}{6cm} 
    $v' = f^*_v(\tup{x})[1]$ with $\tup{x} = \min_{\leqL} \cover(hv)$\\
    $\tau(hvv') \in \iter{*}{v'}$\\
    $\tau(hvv') \lesssim \tup{c} - \dCost(hvv')$\\
    $\tau(hvv') \leqL \tau(hv) - \dWeight(v,v')$\end{minipage}};
    \draw[->,dotted] (label3.north) to (10.5,2.5) to (6,2.5) to (hvv');

    \node[draw, dotted] (label4) at (10.5,-1.5){\begin{minipage}{6cm} 
     $\forall w' \in \successor(w):$
    $\tau(hvw') \in \iter{*}{w'}$\\
    $\tau(hvw') \lesssim \tup{c} - \dCost(hvw')$\\
    $\tau(hvw') \leqL \tau(hv) - \dWeight(v,w')$\end{minipage}};
    \draw[->, dotted] (label4.south) to (10.5,-2.8) to (6,-2.8) to (hww2.south);

\end{tikzpicture}
}
\caption{Labeling function associated with the strategy tree $\stratTree{\sigma^*_1}$}
\label{fig:invariantLabeling}
\end{figure}

In order to prove Proposition~\ref{prop:labeling}, we need some technical results about the sets $\iter{k}{v}$ and the functions $f^*_v$.  For all $v \not \in \targetSet$, for all $\tup{x} \in \iter{*}{v}\backslash \{ \infty \}$, we introduce the notation $\firstOcc{\tup{x}}{v}$ to denote the first index such that $\tup{x} \in \iter{\firstOcc{\tup{x}}{v}}{v}$. 

\begin{lemma}
\label{lemma:monotonicity}
For all $v \not \in \targetSet$, if  $\tup{x} \in \iter{*}{v}\backslash \{ \infty \} $ then, for all $\ell \geq \firstOcc{\tup{x}}{v}$, $\tup{x} \in \iter{\ell}{v}$. 
\end{lemma}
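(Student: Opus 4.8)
The plan is to exploit the monotonicity of the chain $(\iterEnsure{k}{v})_{k \in \N}$ together with the fact that every set in this chain is contained in the fixpoint set $\iterEnsure{*}{v}$, in which $\tup{x}$ is by assumption minimal. Recall that $\iter{k}{v} = \minimal(\iterEnsure{k}{v})$ for all $k$ (Proposition~\ref{prop:ensureI}) and that, by Proposition~\ref{prop:fixPointReach}, the sequence $(\iterEnsure{k}{v})_{k}$ is non-decreasing for inclusion and stabilises at step $k^* \leq |V|$. In particular $\iterEnsure{\ell}{v} \subseteq \iterEnsure{*}{v}$ for every $\ell \in \N$: for $\ell \leq k^*$ this is monotonicity, and for $\ell > k^*$ the two sets are equal. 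This uniform containment is the engine of the whole argument.

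First I would fix $\tup{x} \in \iter{*}{v} \setminus \{\tup{\infty}\}$ and $\ell \geq \firstOcc{\tup{x}}{v}$, and check that $\firstOcc{\tup{x}}{v}$ is well defined: since $\tup{x} \in \iter{*}{v} = \iter{k^*}{v}$, the set of indices $k$ with $\tup{x} \in \iter{k}{v}$ is non-empty, hence admits a least element. To prove $\tup{x} \in \iter{\ell}{v} = \minimal(\iterEnsure{\ell}{v})$, I would establish the two clauses of minimality separately: membership of $\tup{x}$ in $\iterEnsure{\ell}{v}$, and the absence of a strictly smaller competitor.

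For membership, by definition of $\firstOcc{\tup{x}}{v}$ we have $\tup{x} \in \iter{\firstOcc{\tup{x}}{v}}{v} \subseteq \iterEnsure{\firstOcc{\tup{x}}{v}}{v}$, and since $\ell \geq \firstOcc{\tup{x}}{v}$, monotonicity (Proposition~\ref{prop:fixPointReach}(i)) gives $\iterEnsure{\firstOcc{\tup{x}}{v}}{v} \subseteq \iterEnsure{\ell}{v}$; hence $\tup{x} \in \iterEnsure{\ell}{v}$. For minimality, suppose $\tup{y} \in \iterEnsure{\ell}{v}$ satisfies $\tup{y} \lesssim \tup{x}$. Since $\iterEnsure{\ell}{v} \subseteq \iterEnsure{*}{v}$, we obtain $\tup{y} \in \iterEnsure{*}{v}$, and as $\tup{x} \in \iter{*}{v} = \minimal(\iterEnsure{*}{v})$, the definition of $\minimal$ forces $\tup{y} = \tup{x}$. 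Thus no element of $\iterEnsure{\ell}{v}$ lies strictly below $\tup{x}$, so $\tup{x} \in \iter{\ell}{v}$, as required.

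I do not expect a genuine obstacle here: the statement is a direct consequence of Proposition~\ref{prop:fixPointReach} and the characterisation $\iter{k}{v} = \minimal(\iterEnsure{k}{v})$. The only point requiring care is to keep the two halves of minimality (membership versus non-existence of a smaller element) distinct, and to observe that $\iterEnsure{\ell}{v} \subseteq \iterEnsure{*}{v}$ holds uniformly in $\ell$, which is exactly what transports the minimality of $\tup{x}$ from the fixpoint back to every intermediate stage $\ell \geq \firstOcc{\tup{x}}{v}$. The restriction $\tup{x} \neq \tup{\infty}$ plays no role in the core argument but simply records the intended regime, since for $\tup{x} = \tup{\infty}$ the property may fail once a finite profile becomes ensurable.
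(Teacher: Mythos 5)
Your proof is correct and takes essentially the same route as the paper's: both rest on Proposition~\ref{prop:fixPointReach} (monotonicity and stabilisation of the chain $\iterEnsure{k}{v}$) and Proposition~\ref{prop:ensureI} ($\iter{k}{v} = \minimal(\iterEnsure{k}{v})$), and both transport the minimality of $\tup{x}$ in the fixpoint set $\iterEnsure{*}{v}$ back to the stage $\ell$. The only difference is presentational — the paper argues by contradiction (extracting a strictly smaller competitor $\tup{x^*} < \tup{x}$ in $\iterEnsure{\ell}{v}$ and pushing it into $\iterEnsure{*}{v}$), whereas you prove membership and minimality directly — which does not change the substance of the argument.
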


This lemma states that if a cost profile $\tup{x}$ is in the fixpoint $\iter{*}{v}$ for some $v$ then, this cost profile stays in $\iter{k}{v}$ from its first appearance to the fixpoint.

\begin{proof}
Let $v \not \in \targetSet$ and let $\tup{x} \in \iter{*}{v}\backslash \{ \infty \} $. In the rest of the proof we set $n = \firstOcc{\tup{x}}{v}$.

    To obtain a contradiction, let us assume that there exists $\ell$ such that $\ell > n$ such that $\tup{x} \not \in \iter{\ell}{v}$.

    Because $ \ell > n $ and by Proposition~\ref{prop:fixPointReach}, $\tup{x} \in \iterEnsure{\ell}{v}$. Since, by Proposition~\ref{prop:ensureI}, $ \tup{x} \not \in \iter{\ell}{v} = \minimal(\iterEnsure{\ell}{v})$, there exists $\tup{x^*} \in \iterEnsure{\ell}{v}$ such that $\tup{x^*} < \tup{x}$. Once again, by Proposition~\ref{prop:fixPointReach}, $\tup{x^*} \in \iterEnsure{*}{v}$. 
     But, we have assumed that $\tup{x} \in \iter{*}{v}$ and  by Theorem~\ref{thm:correctness} we have that $\iter{*}{v}= \minimal(\iterEnsure{*}{v})$. Thus $\tup{x^*} <\tup{x}$ and $\tup{x^*} \in \iterEnsure{*}{v}$ leads to a contradiction with the minimality of $\tup{x}$ in $\iterEnsure{*}{v}$. 
\end{proof}

\begin{lemma}
    \label{lemma:correctStratMin1}
    For all $v \in V_1 \backslash \targetSet$, for all $\tup{x} \in \iter{*}{v} \backslash \{ \infty \}$,
    if $(v', \tup{x'}) = f^*_v(\tup{x})$ then, $\tup{x'} \in \iter{*}{v'}$ and $\tup{x'} = \tup{x} - \dWeight(v,v')$.
\end{lemma}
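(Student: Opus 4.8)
The plan is to unwind the definition of $f^*_v$ at $\tup{x}$ by locating the step at which $\tup{x}$ first enters the iterates at $v$, and then to verify the two conclusions separately; the arithmetic identity $\tup{x'} = \tup{x} - \dWeight(v,v')$ will be essentially immediate, whereas the membership $\tup{x'} \in \iter{*}{v'}$ will require a minimality argument, which is the main obstacle.

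First I would set $n = \firstOcc{\tup{x}}{v}$, so that $\tup{x} \in \iter{n}{v}$ but $\tup{x} \notin \iter{n-1}{v}$. By the assignment performed in Lines~\ref{lineAlgo:stratStart}--\ref{lineAlgo:stratEnd} of Algorithm~\ref{algo:fixPoint}, at step $n$ the computation of $f^n_v(\tup{x})$ falls in the \textbf{Else} branch (precisely because $\tup{x} \notin \iter{n-1}{v}$), hence $f^n_v(\tup{x}) = (v', \tup{x'})$ with $v' \in \successor(v)$, $\tup{x} = \tup{x'} + \dWeight(v,v')$ and $\tup{x'} \in \iter{n-1}{v'}$. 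For every later step $k$ with $n \leq k \leq k^*$, Lemma~\ref{lemma:monotonicity} gives $\tup{x} \in \iter{k}{v}$, so the \textbf{If} branch applies and $f^k_v(\tup{x}) = f^{k-1}_v(\tup{x})$; by a trivial induction $f^*_v(\tup{x}) = f^{k^*}_v(\tup{x}) = f^n_v(\tup{x})$. This already yields $\tup{x'} = \tup{x} - \dWeight(v,v')$, the second claim of the lemma.

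It then remains to prove $\tup{x'} \in \iter{*}{v'}$. From $\tup{x'} \in \iter{n-1}{v'} = \minimal(\iterEnsure{n-1}{v'})$ and Proposition~\ref{prop:fixPointReach} we obtain $\tup{x'} \in \iterEnsure{n-1}{v'} \subseteq \iterEnsure{*}{v'}$, so by Theorem~\ref{thm:correctness} it suffices to show that $\tup{x'}$ is minimal in $\iterEnsure{*}{v'}$. I would argue by contradiction: suppose some $\tup{y} \in \iterEnsure{*}{v'}$ satisfies $\tup{y} < \tup{x'}$. Since $v \in V_1$ and $v' \in \successor(v)$, a strategy of $\playerOne$ that first moves to $v'$ and then ensures $\tup{y}$ from $v'$ ensures $\tup{y} + \dWeight(v,v')$ from $v$ in at most one additional step, whence $\tup{y} + \dWeight(v,v') \in \iterEnsure{k^*+1}{v} = \iterEnsure{*}{v}$, the equality holding because $k^*$ is the fixpoint. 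But $\tup{y} + \dWeight(v,v') < \tup{x'} + \dWeight(v,v') = \tup{x}$ then contradicts the minimality of $\tup{x}$ in $\iterEnsure{*}{v}$, which holds since $\tup{x} \in \iter{*}{v} = \minimal(\iterEnsure{*}{v})$ by Theorem~\ref{thm:correctness}. Hence no such $\tup{y}$ exists, $\tup{x'}$ is minimal in $\iterEnsure{*}{v'}$, so $\tup{x'} \in \iter{*}{v'}$, completing the proof.
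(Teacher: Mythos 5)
Your proof is correct and follows essentially the same strategy as the paper's: set $n = \firstOcc{\tup{x}}{v}$, read off the Else-branch assignment of Lines~\ref{lineAlgo:stratStart}--\ref{lineAlgo:stratEnd} to get $\tup{x'} \in \iter{n-1}{v'}$ and $\tup{x} = \tup{x'} + \dWeight(v,v')$, and then derive a contradiction by transferring a strictly smaller ensured value across the edge $(v,v')$ against the minimality of $\tup{x}$. The only (harmless) difference is in staging: you place the contradiction directly at the fixpoint, using $\iterEnsure{k^*+1}{v} = \iterEnsure{*}{v}$ from Proposition~\ref{prop:fixPointReach} and reserving Lemma~\ref{lemma:monotonicity} for the unwinding of $f^*_v$, whereas the paper runs the contradiction at an arbitrary intermediate level $\ell$ and uses Lemma~\ref{lemma:monotonicity} there to conclude $\tup{x} \in \iter{\ell+1}{v}$; both versions rest on the same one-step unfolding of $\iterEnsure{k+1}{v}$ for $v \in V_1 \setminus \targetSet$.
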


\begin{proof}
    In the proof we set $n = \firstOcc{\tup{x}}{v}$.

By construction and Proposition~\ref{prop:ensureI}, $\tup{x} = \tup{x'} + \dWeight(v,v')$, $\tup{x} \in \iter{n}{v} = \minimal(\iterEnsure{n}{v})$ and $\tup{x'} \in \iter{n-1}{v'} = \minimal(\iterEnsure{n-1}{v'})$.
The second part of the assertion is already proved. Let us prove the other one. \\

In order to obtain a contradiction, let us assume that there exists $\ell$ such that $n-1 < \ell \leq k^*$  and $\tup{x'} \not \in \iter{\ell}{v'}$.

Since $ n-1 < \ell$ and by Proposition~\ref{prop:fixPointReach}, we have that $\tup{x'}\in \iterEnsure{\ell}{v'}$. But as $\tup{x'} \not \in \iter{\ell}{v'}$ and $\iter{\ell}{v'} = \minimal(\iterEnsure{\ell}{v'})$ (by Proposition~\ref{prop:ensureI}),
that means that there exists $\tup{y'} \in \iterEnsure{\ell}{v'}$ such that $\tup{y'} < \tup{x'}$.
It follows that $\tup{y'} + \dWeight(v,v') < \tup{x'} + \dWeight(v,v') = \tup{x}$ and so $\tup{y'} + \dWeight(v,v') \in \iterEnsure{\ell+1}{v}$.

Because $n < \ell +1$, thanks to Lemma~\ref{lemma:monotonicity}, we have that $\tup{x} \in \iter{\ell+1}{v}$. Moreover, by Proposition~\ref{prop:ensureI}, $\iter{\ell+1}{v} = \minimal(\iterEnsure{\ell+1}{v})$. Thus because $\tup{y'} + \dWeight(v,v) < \tup{x}$ and  $\tup{y'} + \dWeight(v,v') \in \iterEnsure{\ell+1}{v}$, we obtain a contradiction with the fact that $\tup{x}$ is minimal in $\iterEnsure{\ell+1}{v}$. 
\end{proof}

We are now able to prove Proposition~\ref{prop:labeling}.

\begin{proof}[Proof of Proposition~\ref{prop:labeling}]

    Let $u \in V$ and $\tup{c} \in \iter{*}{u}\backslash \{ \tup{\infty} \}$. Let $\tree^* = \stratTree{\sigma^*_1}$. We define $\tau$ and prove Invariant~\eqref{item:inv1Label} to~\eqref{item:inv4Label} step by step, by induction on the length of $h \in \tree^*$.

    \textbf{Base case} If $h = uv$ for some $v \in V$.
    
    \begin{itemize}
    \item If $u \in V_1$: We define $\tau(uv) = f^*_u(\tau(u))[2] = f^*_u(\tup{c})[2]$. By construction $\tau(u) = \tup{c} \in \iter{*}{u}$ thus $f^*_u(\tau(u))$ is well defined. Since $u \in V_1$, $v = \sigma^*_1(u)$ and by definition of $\sigma^*_1$, $v = f^*_u(\tup{x})[1]$ where $\tup{x} = \min_{\leqL} \{ \tup{x'} \in \iter{*}{u} \mid \tup{x'} \lesssim \tup{c} - \dCost(u) \, \wedge \, \tup{x'} \leqL \tup{c} - \dCost(u) \} = \min_{\leqL} \{ \tup{x'} \in \iter{*}{u} \mid \tup{x'} \lesssim \tup{c}  \, \wedge \, \tup{x'} \leqL \tup{c}  \} = \min_{\leqL} \cover(u)$. Since $\tup{c} \in \cover(u)$, $\tup{x} \in \iter{*}{u}$. But $\tup{x}, \tup{c} \in \iter{*}{u} = \minimal (\ensure{u})$, by Theorem~\ref{thm:correctness}, implies $\tup{x} = \tup{c}$. It follows that $v = f^*_u(\tup{c})[1] = f^*_u(\tau(u))[1]$. Consequently, Invariants~\eqref{item:inv1Label} and~\eqref{item:inv2Label} are satisfied.

    Since $\tau(uv) = f^*_u(\tau(u))[2]$, by Lemma~\ref{lemma:correctStratMin1}, $\tau(uv) = \tau(u) - \dWeight(u,v)$. That implies Invariant~\eqref{item:inv3Label}. 

    Since $\tau(uv) = \tau(u) - \dWeight(u,v)$, $\tau(u) = \tup{c}$ and $\dWeight(u,v) = \dCost(uv)$, we have that $\tau(uv) \lesssim \tup{c} - \dCost(uv)$ and $\tau(uv) \leqL \tup{c} - \dCost(uv)$. Thus, $\tau(uv) \in \{ \tup{x'} \in \iter{*}{v} \mid \tup{x'} \lesssim \tup{c} - \dCost(uv) \, \wedge \, \tup{x'} \leqL \tup{c} - \dCost(uv) \} = \cover(uv)$. It remains to prove that $\tau(uv) = \min_{\leqL} \cover(uv)$. By contradiction, we assume that there exists $\tup{y} \in \iter{*}{v}$ such that \emph{(i)} $\tup{y} \lesssim \tup{c} - \dCost(uv)$, \emph{(ii)} $\tup{y} \leqL \tup{c} - \dCost(uv)$ and \emph{(iii)} $\tup{y} \strictLessL \tau(uv)$. Let us recall that $\tau(uv) = \tau(u) - \dWeight(u,v)$ and $\dWeight(u,v) = \dCost(uv)$. Therefore, by \emph{(i)}, we have that $\tup{y} \lesssim \tup{c} - \dWeight(u,v) = \tau(u) - \dWeight(u,v) = \tau(uv)$.
    Finally, as $\tup{y}, \tau(uv) \in \iter{*}{v} = \minimal(\ensure{v})$, by Theorem~\ref{thm:correctness}, $\tup{y} = \tau(uv)$ which is a contradiction with \emph{(iii)}. That concludes the proof of Invariant~\eqref{item:inv4Label}.

    \item If $u \in V_2$: we define $\tau(uv) = \tup{x}$ where $\tup{x} = \min_{\leqL} \{ \tup{x'} \in \iter{*}{v} \mid \tup{x'} \lesssim \tup{c} - \dCost(uv) \, \wedge \, \tup{x'} \leqL \tup{c} - \dCost(uv) \} = \min_{\leqL} \cover(uv)$.

    Since $u \in V_2$ and $\tau(u) = \tup{c} \in \iter{*}{u}$, by Corollary~\ref{cor:propOnFixpoint}, there exists $\tup{x''} \in \iter{*}{v}$ such that $\tup{x''} + \dWeight(u,v) \lesssim \tau(u) = \tup{c}$. That implies $\tup{x''} \leqL \tup{c} - \dCost(uv)$ as $\dCost(uv) = \dWeight(u,v)$. In particular, $\cover(uv) \neq \emptyset$ and so $\tau(uv) \in \iter{*}{v}$ and Invariants~\eqref{item:inv1Label} and~\eqref{item:inv4Label} hold. 

    Since $u \in V_2$, Invariant~\eqref{item:inv2Label} has not to be satisfied.

    It remains to prove Invariant~\eqref{item:inv3Label}. Since $\tau(uv) \in \cover(uv)$, $\tau(uv) \leqL \tup{c} - \dCost(uv) = \tau(u) - \dWeight(u,v)$.
    
    \end{itemize}

   \textbf{Induction Hypothesis} Let us assume that Invariant~\eqref{item:inv1Label} to~\eqref{item:inv4Label} hold for all $hv \in \tree^*$ such that $|hv|\leq k$.\\
   
    Let us now prove that for all $hvv' \in \tree^*$ such that $|hvv'| = k + 1$ these invariants are still satisfied.

   \begin{itemize}
       \item If $v \in V_1$: we define $\tau(hvv') = f^*_v(\tau(hv))[2]$. By induction hypothesis $\tau(hv) \in \iter{*}{v}$ thus $f^*_v(\tau(hv))$ is well defined. By definition of $\sigma^*_1$, we have that $v' = \sigma^*_1(hv) = f^*_v(\tup{x})[1]$  with $\tup{x} = \min_{\leqL} \{ \tup{x'} \in \iter{*}{v} \mid \tup{x'} \lesssim \tup{c} - \dCost(hv)\, \wedge \, \tup{x'} \leqL \tup{c} - \dCost(hv) \}$ and $\tup{x} = \tau(hv)$ by induction hypothesis.

       Moreover by Lemma~\ref{lemma:correctStratMin1}, $\tau(hvv') \in \iter{*}{v'}$. That proves Invariant~\eqref{item:inv1Label} and~\eqref{item:inv2Label}.

        Invariant~\eqref{item:inv3Label} is obtain thanks to the fact that $\tau(hvv') = \tau(hv) - \dWeight(v,v')$ (by Lemma~\ref{lemma:correctStratMin1}).

        It remains to prove Invariant~\eqref{item:inv4Label}. Thanks to the induction hypothesis we obtain:

        $$ \tau(hvv') = \tau(hv) - \dWeight(v,v') \leqL \tup{c} - \dCost(hv) - \dWeight(v,v') = \tup{c} - \dCost(hvv')$$

        and 

        $$ \tau(hvv') = \tau(hv) - \dWeight(v,v') \lesssim \tup{c} - \dCost(hv) - \dWeight(v,v') = \tup{c} - \dCost(hvv').$$

        Let us assume now that $\tau(hvv') \neq \min_{\leqL} \{ \tup{x'} \in \iter{*}{v'} \mid  \tup{x'} \lesssim \tup{c} - \dCost(hvv')\, \wedge \, \tup{x'} \leqL \tup{c} - \dCost(hvv') \}$. That means that there exists $\tup{y'} \in \iter{*}{v'}$ such that $\tup{y'} \lesssim \tup{c} - \dCost(hvv')$, $\tup{y'} \leqL \tup{c} - \dCost(hvv')$ and $\tup{y'} \strictLessL \tau(hvv')$.

        Then $\tup{y'} + \dWeight(v,v') \in \uparrow \iter{*}{v}$, thus there exists $\tup{z'} \in \iter{*}{v}$ such that $\tup{z'} \lesssim \tup{y'} + \dWeight(v,v')$. Since  $\tup{z'} \lesssim \tup{y'} + \dWeight(v,v')$ implies $\tup{z'} \leqL \tup{y'} + \dWeight(v,v')$, we have that $\tup{z'} \lesssim \tup{c}  - \dCost(hv)$ and $\tup{z'} \leqL \tup{c}  - \dCost(hv)$. Since $\tup{y'} \strictLessL \tau(hvv') = \tau(hv) - \dWeight(v,v')$, that leads to $\tup{z'} \strictLessL \tau(hv)$ which is a contradiction with the induction hypothesis $\tau(hv) = \min_{\leqL} \{ \tup{x'} \in \iter{*}{v} \mid \tup{x'} \lesssim \tup{c} - \dCost(hv)\, \wedge \, \tup{x'} \leqL \tup{c} - \dCost(hv) \}$.\\

        \item  If $v \in V_2$: we define $\tau(hvv') = \tup{x}$ where $\tup{x} = \min_{\leqL} \{ \tup{x'} \in \iter{*}{v'} \mid \tup{x'} \lesssim \tup{c} - \dCost(hvv') \, \wedge \, \tup{x'} \leqL \tup{c} - \dCost(hvv') \}$. Since $v \in V_2$ and $\tau(hv) \in \iter{*}{v}$, by Corollary~\ref{cor:propOnFixpoint}, there exists $\tup{x''} \in \iter{*}{v'}$ such that $ \tup{x''} + \dWeight(v,v') \lesssim \tau(hv)$. Which implies $ \tup{x''} + \dWeight(v,v') \leqL \tau(hv)$. Moreover, by induction hypothesis, $\tup{x''} \lesssim \tup{c} - \dCost(hv) - \dWeight(v,v') = \tup{c} - \dCost(hvv')$ and $\tup{x''} \leqL \tup{c} - \dCost(hvv')$. Therefore  $ \tau(hvv')$ and $\tup{x''}$ are in the set $\cover(hvv') = \{ \tup{x'} \in \iter{*}{v'} \mid \tup{x'} \lesssim \tup{c} - \dCost(hvv') \, \wedge \, \tup{x'} \leqL \tup{c} - \dCost(hvv') \}$. So, in particular $\tau(hvv') \in \iter{*}{v'}$ and Invariant~\eqref{item:inv1Label} is satisfied. Moreover, as $\tau(hvv')$ is the minimum of the elements of the set $\cover(hvv')$, we have that $\tau(hvv') \leqL \tup{x''} \leqL \tau(hv) - \dWeight(v,v')$. We can conclude that Invariants~\ref{item:inv3Label} and~\ref{item:inv4Label} are satisfied. As $v \in V_2$, the  second invariant has not to be satisfied.        
   \end{itemize}
\end{proof}

Before proving Theorem~\ref{thm:optiStrat} we still need two technical results. 

\begin{lemma}
    \label{lemma:correctStratMin2}
    For all $v \in V_1 \backslash \targetSet$, for all $\tup{x} \in \iter{*}{v} \backslash \{ \infty \}$,
    if $(v',\tup{x'}) = f^*_v(\tup{x})$ then, $\firstOcc{\tup{x'}}{v'} < \firstOcc{\tup{x}}{v}$.
\end{lemma}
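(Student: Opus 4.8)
The plan is to read off the value of $f^*_v(\tup{x})$ directly from the way the functions $f^k_v$ are constructed in Algorithm~\ref{algo:fixPoint}, and then to invoke the monotonicity property of Lemma~\ref{lemma:monotonicity}. First I would set $n = \firstOcc{\tup{x}}{v}$, the least index at which $\tup{x}$ appears in $\iter{n}{v}$. Since $v \notin \targetSet$ we have $\iter{0}{v} = \{ \tup{\infty} \}$, and because $\tup{x} \neq \tup{\infty}$ this forces $n \geq 1$; in particular $\tup{x} \in \iter{n}{v}$ while $\tup{x} \notin \iter{n-1}{v}$.

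Next I would argue that the value $f^*_v(\tup{x})$ is in fact frozen at step $n$, i.e.\ $f^*_v(\tup{x}) = f^{n}_v(\tup{x})$. By Lemma~\ref{lemma:monotonicity}, once $\tup{x}$ enters $\iter{n}{v}$ it remains in $\iter{k}{v}$ for every $k \geq n$. Hence, for each step $k$ with $n < k \leq k^*$, the test ``$\tup{x} \in \iter{k-1}{v}$'' in Lines~\ref{lineAlgo:stratStart}--\ref{lineAlgo:stratEnd} succeeds, so the algorithm executes the ``\textbf{If}'' branch and sets $f^{k}_v(\tup{x}) = f^{k-1}_v(\tup{x})$. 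A straightforward induction on $k$ then yields $f^{k^*}_v(\tup{x}) = f^n_v(\tup{x})$, that is $f^*_v(\tup{x}) = f^n_v(\tup{x})$.

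Finally I would examine step $n$ itself. Since $\tup{x} \in \iter{n}{v}$ but $\tup{x} \notin \iter{n-1}{v}$, the test fails and the ``\textbf{Else}'' branch is taken, so by construction $f^n_v(\tup{x}) = (v', \tup{x'})$ with $v' \in \successor(v)$, $\tup{x} = \tup{x'} + \dWeight(v,v')$ and $\tup{x'} \in \iter{n-1}{v'}$; this is exactly the pair $f^*_v(\tup{x})$. The membership $\tup{x'} \in \iter{n-1}{v'}$ then gives $\firstOcc{\tup{x'}}{v'} \leq n-1 < n = \firstOcc{\tup{x}}{v}$, which is the desired conclusion. I do not expect any genuine obstacle here: the only point requiring care is justifying that the value of $f$ does not change after step $n$, and that is precisely what Lemma~\ref{lemma:monotonicity} supplies.
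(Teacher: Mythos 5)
Your proof is correct and follows essentially the same route as the paper: the paper's own argument sets $n = \firstOcc{\tup{x}}{v}$ and asserts ``by construction'' that $\tup{x} = \tup{x'} + \dWeight(v,v')$, $\tup{x} \in \iter{n}{v}$ and $\tup{x'} \in \iter{n-1}{v'}$, whence $\firstOcc{\tup{x'}}{v'} \leq n-1 < n$. You have simply unpacked the phrase ``by construction'' — showing via Lemma~\ref{lemma:monotonicity} that the value of $f^k_v(\tup{x})$ is frozen from step $n$ onward, so $f^*_v(\tup{x})$ is exactly the pair assigned by the Else branch at step $n$ — which is a welcome but not different argument.
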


\begin{proof}
    We set $n = \firstOcc{\tup{x}}{v}$ and $n' =\firstOcc{\tup{x'}}{v'} $.

    By construction, we have that $\tup{x} = \tup{x'} + \dWeight(v,v')$, $\tup{x} \in \iter{n}{v}$ and $\tup{x'} \in \iter{n-1}{v'}$.

    Thus, $n' \leq n-1$ holds by definition of $\firstOcc{\tup{x'}}{v'}$.
\end{proof}

\begin{lemma}
\label{lemma:correctStratMax}
    For all $v \in V_2 \backslash \targetSet$, for all $\tup{x} \in \iter{*}{v} \backslash \{ \infty \}$, for all $v' \in \successor(v)$ and for all $\tup{x'} \in \iter{*}{v'}$ such that $\tup{x'} + \dWeight(v,v') \leqL \tup{x}$, either, \emph{(i)} $\tup{x'} \strictLessL \tup{x}$ or, \emph{(ii)} $\firstOcc{\tup{x'}}{v'} < \firstOcc{\tup{x}}{v}$.
\end{lemma}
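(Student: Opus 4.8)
The plan is to argue by contradiction against alternative (i). Suppose $\tup{x'} \strictLessL \tup{x}$ fails; since $\leqL$ is a total order this means $\tup{x} \leqL \tup{x'}$. First I would observe that, because the weight function takes values in $\N^\di$, we always have $\tup{x'} \leqL \tup{x'} + \dWeight(v,v')$: either $\dWeight(v,v') = \tup{0}$ and the two tuples coincide, or the first coordinate in which $\dWeight(v,v')$ is nonzero makes the sum strictly larger in the lexicographic sense. Combining this with the hypothesis $\tup{x'} + \dWeight(v,v') \leqL \tup{x}$ and with the assumption $\tup{x} \leqL \tup{x'}$, antisymmetry of $\leqL$ forces the chain of equalities $\tup{x'} = \tup{x'} + \dWeight(v,v') = \tup{x}$. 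In particular $\tup{x'} = \tup{x}$ and $\dWeight(v,v') = \tup{0}$, so it only remains to establish alternative (ii), namely $\firstOcc{\tup{x}}{v'} < \firstOcc{\tup{x}}{v}$.

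Set $n = \firstOcc{\tup{x}}{v}$; since $v \notin \targetSet$ and $\tup{x}$ is finite we have $\tup{x} \notin \iter{0}{v}$, hence $n \geq 1$. The crux is to exhibit the value $\tup{x}$ already one iteration earlier at $v'$. As $\tup{x} \in \iter{n}{v}$ and $v \in V_2 \backslash \targetSet$, Lemma~\ref{lemma:propOnIter}(2) applied at step $n$ yields, for the successor $v'$, some $\tup{z} \in \iter{n-1}{v'}$ with $\tup{z} + \dWeight(v,v') \lesssim \tup{x}$; since $\dWeight(v,v') = \tup{0}$ this reads $\tup{z} \lesssim \tup{x}$. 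Now $\tup{z} \in \iter{n-1}{v'} \subseteq \iterEnsure{n-1}{v'} \subseteq \ensure{v'}$ (the last inclusion being immediate, as dropping the length bound preserves ensurability), while $\tup{x} \in \iter{*}{v'} = \minimal(\ensure{v'})$ by Theorem~\ref{thm:correctness}. The minimality of $\tup{x}$ in $\ensure{v'}$ together with $\tup{z} \lesssim \tup{x}$ forces $\tup{z} = \tup{x}$, whence $\tup{x} \in \iter{n-1}{v'}$ and therefore $\firstOcc{\tup{x}}{v'} \leq n-1 < n = \firstOcc{\tup{x}}{v}$, which is exactly (ii).

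The main obstacle is the second alternative: in contrast with the $V_1$ case of Lemma~\ref{lemma:correctStratMin2}, here the traversed edge may carry weight $\tup{0}$, so no genuine drop in cost need occur and one must instead produce a strict decrease of the first-occurrence index. The essential ingredient is the interplay between the intersection defining $\iter{n}{v}$ at a $\playerTwo$-vertex (which guarantees that every successor, and in particular $v'$, already ensures $\tup{x}$ within $n-1$ steps up to upward closure) and the minimality of $\tup{x}$ in $\ensure{v'}$, which upgrades the merely ``$\lesssim$'' witness $\tup{z}$ into an exact occurrence of $\tup{x}$ in $\iter{n-1}{v'}$. Finally, as noted in Remark~\ref{rem:technicalRem}, every comparison used for membership and minimality is with respect to the ambient order $\lesssim$, whereas $\leqL$ and $\strictLessL$ only serve to break ties; the argument is thus uniform in whether $\lesssim$ is the componentwise or the lexicographic order.
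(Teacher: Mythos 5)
Your proof is correct and follows essentially the same route as the paper's: the failure of alternative \emph{(i)}, the hypothesis, and nonnegativity of the weights force $\tup{x'} = \tup{x}$ and $\dWeight(v,v') = \tup{0}$, after which one shows that $\tup{x}$ already occurs one iteration earlier at $v'$, giving \emph{(ii)}. The only (minor) difference is that the paper places $\tup{x}$ in $\iterEnsure{n-1}{v'}$ semantically via Proposition~\ref{prop:ensureI} and leaves the upgrade to membership in the minimal set $\iter{n-1}{v'}$ implicit, whereas you extract a witness $\tup{z} \in \iter{n-1}{v'}$ from Lemma~\ref{lemma:propOnIter} and make that upgrade explicit through minimality of $\tup{x}$ in $\ensure{v'}$ — a slightly more explicit rendering of the same argument.
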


\begin{proof}
    Let $v \in V_2 \backslash \targetSet$, $\tup{x} \in \iter{*}{v} \backslash \{ \infty \}$, $v' \in \successor(v)$ and $\tup{x'} \in \iter{*}{v'}$ such that $\tup{x'} + \dWeight(v,v') \leqL \tup{x}$.

    To obtain a contradiction, we assume that $\neg(\tup{x'} \strictLessL \tup{x})$ and $\firstOcc{\tup{x'}}{v'} \geq \firstOcc{\tup{x}}{v}$. Since $\tup{x'} \leqL \tup{x}$ by hypothesis, $\neg(\tup{x'} \strictLessL \tup{x})$ implies $\tup{x'} = \tup{x}$. Therefore, $\dWeight(v,v') = \tup{0}$.

    Let $n = \firstOcc{\tup{x}}{v}$, by definition $\tup{x} \in \iter{n}{v}$ and by Proposition~\ref{prop:ensureI}, $\tup{x} \in \iterEnsure{n}{v}$. Since $\dWeight(v,v') = 0$, $\tup{x'} =  \tup{x} \in \iterEnsure{n-1}{v'}$.
    In conclusion, the contradiction we were looking for is given by $\firstOcc{\tup{x'}}{v'} \leq n-1 < \firstOcc{\tup{x}}{v}$.
\end{proof}

We are now able to prove Theorem~\ref{thm:optiStrat}.  This proof exploit the notions of tree and strategy tree already defined in Appendix~\ref{appendix:termination}.

\begin{proof}[Proof of Theorem~\ref{thm:optiStrat}]
Let $u \in V$ and $ \tup{c} \in \iter{*}{u} \backslash \{ \tup{\infty} \}$. Let $\sigma^*_1 \in \stratSet{1}{u}$ as defined in Definition~\ref{def:optiStrat}. Let us consider the strategy tree $\stratTree{\sigma^*_1}$.

The first step of the proof is to prove that all branches of $\stratTree{\sigma^*_1}$ are finite and end with a node $n$ such that $\last(n) \in \targetSet$. 

Let us proceed by contradiction and assume that there exists a branch $b = n_0n_1n_2\ldots$ which is infinite. By Statement~\ref{item:inv3Label} of Proposition~\ref{prop:labeling}, we know that the sequence $(\tau(n_k))_{k \in \N}$ is non increasing w.r.t. $\leqL$ and it is lower bounded by $\tup{0}$. It follows that:

\begin{equation} \exists \xi \in \N \text{ such that } \forall \ell \in \N, \, \tau(n_\xi) = \tau(n_{\xi+\ell}).\label{eq:fixpointTau}\end{equation}

Either there exists $\ell \in \N$ such that $\last(n_{\xi + \ell})\in \targetSet$ which contradicts the fact that branch $b$ is infinite. Or, for all $\ell \in \N$, $$\firstOcc{\tau(n_{\xi + \ell +1})}{\last(n_{\xi+\ell+1})} < \firstOcc{\tau(n_{\xi + \ell })}{\last(n_{\xi+\ell})}. $$
Let $hv =  n_{\xi + \ell +1}$ then, $h = n_{\xi + \ell}$. 
\begin{itemize}
    \item If $\last(h) \in V_1$, by Statement~\ref{item:inv2Label} of Proposition~\ref{prop:labeling} we have $(v, \tau(hv)) = f^*_{\last(h)}(\tau(h))$. Moreover, by Lemma~\ref{lemma:correctStratMin2}, we obtain $\firstOcc{\tau(hv)}{v} < \firstOcc{\tau(h)}{\last(h)}$.
    \item If $\last(h) \in V_2$, by Statement~\ref{item:inv3Label} of Proposition~\ref{prop:labeling}, we have that $\tau(hv) + \dWeight(\last(h),v) \allowbreak \leqL \tau(h)$. Additionnaly, by Lemma~\ref{lemma:correctStratMax}, either $\tau(hv) \strictLessL \tau(h)$ which is assumed to be impossible by Eq.~\eqref{eq:fixpointTau} or $ \firstOcc{\tau(hv)}{v} < \firstOcc{\tau(h)}{\last(h)}$.
\end{itemize}

That means that the sequence $\left(\firstOcc{\tau(n_{\xi+\ell})}{\last(n_{\xi+\ell})}\right)_{\ell \in \N}$ is strictly decreasing w.r.t. the classical order $<$ on the natural numbers and is lower bounded by $0$. It follows that such an infinite branch cannot exist. \\

In what precedes, we proved that $\targetSet$ is reached whatever the behavior of $\playerTwo$, in particular each branch $b = n_0n_1 \ldots n_k$ ends in a node $n_k$ which is a leaf, and such that $\last(n_k) \in \targetSet$. Thus, $\tau(n_k) = 0$. Moreover, if $n_k = hv$, the cost of the branch corresponds to $\dCost(hv)$ and by Proposition~\ref{prop:labeling}, we have that $\tau(hv) \lesssim \tup{c} - \dCost(hv)$. That inequality implies that $\dCost(hv) \lesssim \tup{c}$.\end{proof}


In the first part of this section we proved, given $u \in V$ and $c \in \iter{*}{u} \backslash \{ \tup{\infty} \}$, how to obtain a startegy $\sigma^*_1$ of $\playerOne$ that ensures $\tup{c}$ from $u$ (see Definition~\ref{def:optiStrat} and Theorem~\ref{thm:optiStrat}). Thus, in particular, $\sigma^*_1$ is both a lexico-optimal strategy from $u$ and a $\tup{c}$-Pareto-optimal strategy from $u$. However, $\sigma^*_1$ requires finite-memory.

In the remainder of this section, we prove that if we consider the lexicographic order, the strategy $\vartheta^*_1$, given in Proposition~\ref{prop:stratOptiLexicoPositional}, is a positional lexico-optimal strategy from $u$.

\begin{restate}{Proposition}{\ref{prop:stratOptiLexicoPositional}}
\restateStratOptiLexicoPosition
\end{restate}

We now prove that the strategy $\vartheta^*_1$, as defined in Proposition~\ref{prop:stratOptiLexicoPositional}, is a lexico-optimal strategy from $u$. 
We proceed in the same way as we proved that $\sigma^*_1$ ensures $\tup{c}$ from $u$: we prove that  a labeling function of the strategy tree $\stratTree{\vartheta^*_1}$ exists  and has the same kind of  properties as in Proposition~\ref{prop:labeling}. From that follows, for the same arguments as these exploited in the proof of Theorem~\ref{thm:optiStrat}, that $\vartheta^*_1$ is a lexico-optimal strategy from $u$.

\begin{proposition}
If $\lesssim$ is the lexicographic order, for $u \in V$ and $\tup{c} \in \iter{*}{u}\backslash \{ \infty \}$, if $\stratTree{\tau^*_1}$ is the strategy tree of the strategy $\vartheta^*_1$ as defined in Proposition~\ref{prop:stratOptiLexicoPositional} then, there exists a labeling function $\tau: \stratTree{\vartheta^*_1} \longrightarrow \N^\di$ such that, $\tau(u) = \tup{c} \in \iter{*}{u}$ and, for all $hv \in \stratTree{\vartheta^*_1}$ such that $|hv| \geq 1$:
    \begin{enumerate}
        \item $\tau(hv) \in \iter{*}{v}$; \label{item:inv1LabelLexico}
        \item If $\last(h) \in V_1$ then, $(v,\tau(hv)) = f^*_{\last(h)}(\tau(h))$;\label{item:inv2LabelLexico}

        \item $\tau(hv) \leqL \tau(h) - \dWeight(\last(h),v)$; \label{item:inv3LabelLexico}
        \item $\tau(hv) \leqL \tup{c} - \dCost(hv) $. \label{item:inv4LabelLexico}
    \end{enumerate}
\end{proposition}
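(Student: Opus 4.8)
The plan is to reproduce the argument of Proposition~\ref{prop:labeling} while exploiting the crucial simplification particular to the lexicographic order: each set $\iter{*}{v}$ is a singleton (as observed in Section~\ref{section:ensuredValues}). I would define the labeling function $\tau$ by induction on the length of the nodes $h \in \stratTree{\vartheta^*_1}$. The root is labelled $\tau(u) = \tup{c}$, which is legitimate since $\tup{c}$ is \emph{the} unique element of $\iter{*}{u}$. For an already labelled node $hv$ and an edge to $hvv'$, I would split on the owner of $v$.

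If $v \in V_1 \backslash \targetSet$, I set $\tau(hvv') = f^*_v(\tau(hv))[2]$. By the induction hypothesis $\tau(hv) \in \iter{*}{v}$, so $f^*_v(\tau(hv))$ is well defined; since $\iter{*}{v}$ is a singleton, $\tau(hv)$ is its unique element, and hence $v' = \vartheta^*_1(hv) = f^*_v(\tau(hv))[1]$, which gives Invariant~\eqref{item:inv2LabelLexico} at once. Lemma~\ref{lemma:correctStratMin1} then yields both $\tau(hvv') \in \iter{*}{v'}$ (Invariant~\eqref{item:inv1LabelLexico}) and the equality $\tau(hvv') = \tau(hv) - \dWeight(v,v')$, from which Invariant~\eqref{item:inv3LabelLexico} follows. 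If $v \in V_2 \backslash \targetSet$, I instead define $\tau(hvv')$ to be the unique element of $\iter{*}{v'}$, so Invariant~\eqref{item:inv1LabelLexico} is immediate and Invariant~\eqref{item:inv2LabelLexico} is vacuous. For Invariant~\eqref{item:inv3LabelLexico} I would invoke Corollary~\ref{cor:propOnFixpoint}: as $\tau(hv) \in \iter{*}{v}$, for each $v' \in \successor(v)$ there is some $\tup{x'} \in \iter{*}{v'}$ with $\tup{x'} + \dWeight(v,v') \lesssim \tau(hv)$; the singleton property forces $\tup{x'} = \tau(hvv')$, so $\tau(hvv') + \dWeight(v,v') \lesssim \tau(hv)$, and since $\lesssim$ implies $\leqL$ this gives $\tau(hvv') \leqL \tau(hv) - \dWeight(v,v')$.

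Invariant~\eqref{item:inv4LabelLexico} would then be obtained uniformly in both cases by transitivity and the monotonicity of $\leqL$ under translation by a fixed vector: combining Invariant~\eqref{item:inv3LabelLexico} with the induction hypothesis $\tau(hv) \leqL \tup{c} - \dCost(hv)$ gives $\tau(hvv') \leqL \tau(hv) - \dWeight(v,v') \leqL \tup{c} - \dCost(hv) - \dWeight(v,v') = \tup{c} - \dCost(hvv')$. The base case $|hv| = 1$ is handled exactly as the inductive step with $h$ the empty history. I expect the only delicate points to be the $V_2$ case of Invariant~\eqref{item:inv3LabelLexico}, where one passes from the $\lesssim$-inequality furnished by Corollary~\ref{cor:propOnFixpoint} to the required $\leqL$-inequality using the singleton property, and the verification that $\leqL$ is preserved under subtracting a common weight, which is what makes Invariant~\eqref{item:inv4LabelLexico} propagate. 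It is worth emphasizing that, compared with Proposition~\ref{prop:labeling}, the whole $\min_{\leqL}\cover(hv)$ bookkeeping disappears: the singleton property removes any freedom in the definition of $\tau$, and this is precisely the structural reason why a positional strategy suffices in the lexicographic case.
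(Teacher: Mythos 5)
Your proposal is correct and follows essentially the same route as the paper's proof: induction on node length with a case split on the owner of $\last(h)$, defining $\tau(hvv')$ as $f^*_v(\tau(hv))[2]$ for $\playerOne$-vertices and as the unique element of $\iter{*}{v'}$ for $\playerTwo$-vertices, then invoking Lemma~\ref{lemma:correctStratMin1}, Corollary~\ref{cor:propOnFixpoint}, and the singleton property exactly as the paper does. The only cosmetic difference is that you fold the base case into the inductive step and derive Invariant~\eqref{item:inv4LabelLexico} uniformly, whereas the paper spells these out separately; the mathematical content is identical.
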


\begin{proof}
 Let $u \in V$ and $\tup{c} \in \iter{*}{u} \backslash \{ \tup{\infty} \}$. Let $\tree^* = \stratTree{\vartheta^*_1}$. We define $\tau$ and prove Invariant~\eqref{item:inv1LabelLexico} to~\eqref{item:inv4LabelLexico} step by step, by induction on the length of $h \in \tree^*$.

 Before beginning the proof, we recall that,  because we consider the lexicographic order, each time we consider some $\tup{x}, \tup{x'} \in \iter{*}{v}$ for some $v \in V$, we have that $\tup{x} = \tup{x'}$ since $\iter{*}{v}$ is a singleton.

\textbf{Base case} If $h = uv$ for some $v \in V$.

\begin{itemize}
    \item If $u \in V_1$: we define $\tau(uv)= f^*_u(\tau(u))[2]$. By hypothesis, $\tau(u) = \tup{c} \in \iter{*}{u}$ so $f^*_u(\tau(u))$ is well defined. Let us prove that the invariants are satisfied.
    \begin{itemize}
        \item Invariant~\eqref{item:inv1LabelLexico}. By Lemma~\ref{lemma:correctStratMin1}, as $\tau(uv) = f^*_u(\tau(u))[2]$, $\tau(uv) \in \iter{*}{v}$.
        \item Invariant~\eqref{item:inv2LabelLexico}. By construction of $\mathcal{T}^*$, $v = \vartheta^*_1(u)$ and by definition of $\vartheta^*_1$, $\vartheta^*_1(u) = f^*_1(\tup{x})[1]$ where $\tup{x}$ is the only cost profile in $\iter{*}{u}$. Thus, $\tup{x} = \tup{c}$. Finally, since $\tau(u) = \tup{c}$, we obtain that $v = f^*_u(\tau(u))[1]$.
        \item Invariants~\eqref{item:inv3LabelLexico} and~\eqref{item:inv4LabelLexico}. Since $\tau(uv) = f^*_u(\tau(u))[2]$, by Lemma~\ref{lemma:correctStratMin1}, $\tau(uv) = \tau(u) - \dWeight(u,v)$.
        Moreover, because $\tau(u) = \tup{c}$ and $\dWeight(u,v) = \dCost(uv)$, we also have that $\tau(uv) = \tup{c} - \dCost(uv)$.
    \end{itemize}
    \item If $u \in V_2$: we define $\tau(uv) = \tup{x}$ where $\tup{x}$ is the only cost profile in $\iter{*}{v}$. 
    Let us prove that the invariants are satisfied.
    \begin{itemize}
        \item Invariant~\ref{item:inv1LabelLexico}. We have that $\tau(uv) \in \iter{*}{v}$ by construction.
        \item Invariant~\ref{item:inv2LabelLexico}. It does not have to be satisfied since $u \in V_2$.
        \item Invariants~\ref{item:inv3LabelLexico} and~\ref{item:inv4LabelLexico}. We have that $v \in \successor(u)$ and $\tau(u) \in \iter{*}{u}$, thus by Corollary~\ref{cor:propOnFixpoint}, there exists $\tup{x'} \in \iter{*}{v}$ such that $\tup{x'} + \dWeight(u,v) \leqL \tau(u)$.
        But, $\tup{x},\tup{x'} \in \iter{*}{v}$ implies that $\tup{x} = \tup{x'}$ ($\iter{*}{v}$ is a singleton). Moreover $\tau(uv) = \tup{x}$, it follows that $\tau(uv) \leqL \tau(u) - \dWeight(u,v)$. Finally, since $\tau(u) = \tup{c}$ and $\dWeight(u,v) = \dCost(uv)$, we also obtain that $\tau(uv) \leqL \tup{c} - \dCost(uv)$.
    \end{itemize}
\end{itemize}

   \textbf{Induction Hypothesis} Let us assume that Invariant~\eqref{item:inv1LabelLexico} to~\eqref{item:inv4LabelLexico} hold for all $hv \in \tree^*$ such that $|hv|\leq k$.\\
   
    Let us now prove that for all $hvv' \in \tree^*$ such that $|hvv'| = k + 1$ those invariants are still satisfied.

    \begin{itemize}
        \item If $v \in V_1$: we define $\tau(hvv') = f^*_v(\tau(hv))[2]$. By induction hypothesis, $\tau(hv) \in \iter{*}{v}$, so $f^*_v(\tau(hv))$ is well defined. We have that $v' = \vartheta^*_1(hv)$ and by definition of $\vartheta^*_1$, $\vartheta^*_1(hv) = f^*_v(\tup{x})[1]$ with $\tup{x} \in \iter{*}{v}$. Since we consider the lexicographic order, $\iter{*}{v}$ is a singleton and $\tup{x} = \tau(hv)$. Moreover, by Lemma~\ref{lemma:correctStratMin1}, $\tau(hvv') \in \iter{*}{v'}$. It follows that Invariants~\eqref{item:inv1LabelLexico} and~\eqref{item:inv2LabelLexico} are satisfied.

        Invariant~\eqref{item:inv3LabelLexico} is obtained thanks to Lemma~\ref{lemma:correctStratMin1} and the fact that $\tau(hvv') = \tau(hv) - \dWeight(v,v')$.
        It remains to prove that $\tau(hvv') \leqL \tup{c} - \dCost(hvv')$ (Invariant~\ref{item:inv4LabelLexico}). By induction hypothesis, we know that $\tau(hv) \leqL \tup{c} - \dCost(hv)$. Since $\tau(hvv') = \tau(hv) - \dWeight(v,v')$, we have: $\tau(hvv') = \tau(hv)-\dWeight(v,v') \leqL \tup{c} - \dCost(hv) - \dWeight(v,v')$ by induction hypothesis. The fact that $\dCost(hv) - \dWeight(v,v') = \dCost(hvv')$ concludes the proof.\\

        \item if $v \in V_2$: we define $\tau(hvv') = \tup{x}$ where $\tup{x}$ is the only cost profile in $ \iter{*}{v'}$. Notice that in this way, $\tau(hvv') \in \iter{*}{v'}$ (Invariant~\ref{item:inv1LabelLexico}) is already satisfied. 

        Since $v \in V_2$, we do not have to check if Invariant~\eqref{item:inv2LabelLexico} holds.

        As by induction hypothesis $\tau(hv) \in \iter{*}{v}$, we have by Corollary~\ref{cor:propOnFixpoint} that there exists $\tup{x'} \in \iter{*}{v'}$ such that $\tup{x'} + \dWeight(v,v') \leqL \tau(hv)$. But since we consider the lexicographic order, the set $\iter{*}{v'}$ is a singleton, meaning that $\tup{x} = \tup{x'}$. The fact that $\tup{x} = \tau(hvv')$ allows to conclude that Invariant~\ref{item:inv3LabelLexico} is satisfied.

        By what we have just proved $\tau(hvv') \leqL \tau(hv) - \dWeight(v,v')$ and $\tau(hv) - \dWeight(v,v') \leqL \tup{c} - \dCost(hv) - \dWeight(v,v') = \tup{c} - \dCost(hvv')$ by induction hypothesis. It is exactly what Invariant~\eqref{item:inv4LabelLexico} states.     
    \end{itemize}
\end{proof}


\section{Additional content of Section~\ref{section:constrainedExistence}: Constrained Existence}
\label{app:constrainedExistence}


\begin{proposition}
\label{prop-PSPACEc}
If $\lesssim$ is the componentwise order, the constrained existence problem is $\PSPACE$-complete, even if $\di = 2$ and $|\targetSet| = 1$.
\end{proposition}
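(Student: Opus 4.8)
The plan is to observe that Proposition~\ref{prop-PSPACEc} is in fact already delivered by the proof of Theorem~\ref{thm:constrainedResComponent}, once one checks that both directions survive the restrictions $\di = 2$ and $|\targetSet| = 1$. So I would not build a new argument from scratch; instead I would reread each direction of that proof and certify that the witnessing construction already respects these two constraints.

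For membership in $\PSPACE$, I would reuse the alternating-polynomial-time procedure sketched for Theorem~\ref{thm:constrainedResComponent}. Its correctness rests solely on Proposition~\ref{prop:noCycle}: if $\playerOne$ can ensure $\tup{x}$, he can do so along plays that reach $\targetSet$ within $|V|$ steps. The alternating machine explores such a path, charging $\playerOne$'s vertices to disjunctive states and $\playerTwo$'s vertices to conjunctive states, while maintaining a step counter (bounded by $|V|$) together with one accumulated-cost counter per dimension; it accepts when the target is reached with every cost component $\leqC \tup{x}$. Neither fixing $\di = 2$ nor shrinking $\targetSet$ to a singleton inflates this procedure -- fewer dimensions mean fewer polynomial counters, and a singleton target makes the membership test $v \in \targetSet$ trivial -- so the machine still runs in $\APTime$, and $\APTime = \PSPACE$ yields easiness under the stated restrictions.

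For $\PSPACE$-hardness I would exhibit exactly the reduction from \textsc{Quantified Subset-Sum} used for Theorem~\ref{thm:constrainedResComponent}, and then simply inspect the arena of Figure~\ref{fig:hardnessGame}. The decisive observation is that this construction is already two-dimensional and single-targeted: every edge carries a weight of the form $(a_k,0)$, $(0,a_k)$, or $(0,0)$, so $\di = 2$, and the target set is the singleton $\{y\}$, so $|\targetSet| = 1$. The correctness argument is unchanged: encoding the equality $\sum_i x_i a_i = T$ by the two opposite inequalities ``first component $\le T$'' and ``second component $\le \sum_i a_i - T$'' shows that $\playerOne$ can ensure $(T,\sum_i a_i - T)$ with respect to $\leqC$ if and only if the formula $\Psi$ is true. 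Hence the instance $(\game_2,v_0)$ together with the cost profile $(T,\sum_i a_i - T)$ is a valid hardness witness already satisfying $\di = 2$ and $|\targetSet| = 1$.

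The step requiring the most care -- though it is light -- is making the encoding-of-equality argument airtight in the two-dimensional setting, since this is precisely where the componentwise order (as opposed to the lexicographic one) does essential work: the two components must simultaneously meet opposite thresholds, which forces the selected subset to sum to exactly $T$. I would therefore spell out that along any play reaching $y$ the two cost components sum to $\sum_i a_i$, so meeting both thresholds $\le T$ and $\le \sum_i a_i - T$ forces equality in each; conversely, a satisfying assignment strategy for $\Psi$ yields a $\playerOne$-strategy achieving these two bounds regardless of $\playerTwo$'s universal choices. With this verification in place, both bounds hold under the claimed restrictions and the proposition follows.
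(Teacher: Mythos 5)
Your proposal is correct and follows essentially the same route as the paper: the $\PSPACE$-easiness argument via Proposition~\ref{prop:noCycle} and an $\APTime$ machine, and the $\PSPACE$-hardness via the \textsc{Quantified Subset-Sum} reduction of Figure~\ref{fig:hardnessGame}, whose arena is already two-dimensional with target $\{y\}$, so the restrictions $\di=2$ and $|\targetSet|=1$ come for free. Your closing observation — that the two components of any play reaching $y$ sum to $\sum_i a_i$, so the componentwise bounds $\le T$ and $\le \sum_i a_i - T$ force exact equality — is precisely the equality-encoding step carried out in the paper's formal proof of the equivalence.
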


\subsubsection*{$\PSPACE$-easiness of the constrained existence problem} Proposition~\ref{prop:noCycle}
allows us to prove that the 
 constrained existence 
problem is in $\APTime$. The alternating Turing machine works as follows: all vertices of the game owned by $\playerOne$ (resp. $\playerTwo$) correspond to disjunctive states (resp. conjunctive states). A path of length $|V|$ is accepted if and only if, \emph{(i)} the target set is reached along that path and \emph{(ii)} the sum of the weights until a element of the target set is $\leqC \tup{x}$. If such a path exists, there exists a strategy of $\playerOne$ that ensures the cost profile $\tup{x}$.  This procedure is done in polynomial time and since $\APTime = \PSPACE$, we get the result.\\

The hardness of Proposition~\ref{prop-PSPACEc} is obtained thanks to a polynomial reduction from the \textsc{Quantified Subset-Sum} problem which is proved $\PSPACE$-complete~\cite[Lemma 4]{Travers06}. Although some intuition on the $\PSPACE$-hardness is provided in Section~\ref{section:constrainedExistence}, we provide hereunder a formal proof of this result.

\subsubsection*{$\PSPACE$-hardness of the  constrained existence problem}

\begin{proof}[Formal proof of the $\PSPACE$-hardness]

For all odd (resp. even) numbers $k$, $1\leq k \leq n$, we denote by $\valExists{k}: \{0,1\}^{k-1} \longrightarrow \{0,1\}$ (resp. $\valForall{k}:  \{0,1\}^{k-1} \longrightarrow \{0,1\}$) the valuation of the variable $x_k$ taking into account the valuation of previous variables $x_1, \ldots, x_{k-1}$. We assume that $\valExists{1}: \emptyset \longrightarrow \{0,1\}$. Given a sequence $\valExists{} = \valExists{1}, \valExists{3},  \ldots$ and a sequence $\valForall{} = \valForall{2}, \valForall{4}, \ldots$, we define the function $\nu_{(\valExists{},\valForall{})}: \{x_1,\ldots,x_n\} \longrightarrow \{0,1\}$ such that  $\nu_{(\valExists{},\valForall{})}(x_1) = \valExists{1}(\emptyset)$, $\nu_{(\valExists{},\valForall{})}(x_2) = \valForall{2}(\nu_{(\valExists{},\valForall{})}(x_1))$, $\nu_{(\valExists{},\valForall{})}(x_3) = \valExists{3}(\nu_{(\valExists{},\valForall{})}(x_1)\nu_{(\valExists{},\valForall{})}(x_2))$, $\ldots$ We also define the set $S(\valExists{},\valForall{}) = \{ p \mid \nu_{(\valExists{},\valForall{})}(x_p) = 1 \}$.\\

Thanks to these notations we rephrase the \textsc{Quantified Subset-Sum} problem as: does there exist a sequence of functions $\valExists{} = \valExists{1},\valExists{3},\ldots$  such that for all sequences $\valForall{} = \valForall{2}, \valForall{4}, \ldots$, $$\sum_{p \in S(\valExists{},\valForall{})} a_p = T?$$\\

 We now describe the reduction from the \textsc{Quantified Subset-Sum} problem to the constrained existence problem.

The $\arena_2 = (V_1,V_2,E, \dWeight)$ of the initialized $2$-weighted reachability game $(\game_2,v_0) = (\arena_2, \targetSet, \dCost)$ is given in Figure~\ref{fig:hardnessGame}.   Formally, the game is built as follows:
\begin{itemize}
    \item $V_1$ is composed by the following vertices: a vertex $y$, for each variable $x_p$ under an existential quantifier there is a vertex $x_p$ and finally for all $a_p \in I$ there are two vertices $x^0_p$ and $x^1_p$, ;
    \item $V_2$ is the set of vertices denoted by $x_p$ such that the variable $x_p$ is under an universal quantifier. Notice that in Figure~\ref{fig:hardnessGame} we assume that $n$ is odd, and so $x_n$ is under an existential quantifier.
    
    \item $E$ is composed of the edges of the form: \begin{itemize} \item  $(x_\ell, x^0_\ell)$ and $(x_\ell,x^1_\ell)$, for all $1\leq \ell \leq n$; \item $(x^1_\ell,x_{\ell+1})$ and $(x^0_\ell,x_{\ell+1})$, for all $1\leq \ell \leq n-1$;
    \item $(x^1_n,y)$, $(x^0_n,y)$ and $(y,y)$.
    \end{itemize}
    \item the weight function $\dWeight$ is defined as: \begin{itemize} \item $\dWeight(x_\ell,x^1_\ell) = (a_\ell,0)$ and $\dWeight(x_\ell,x^0_\ell) = (0, a_\ell)$, for all $1 \leq \ell \leq \di$; \item for all the other edges $e \in E$, $\dWeight(e) = (0,0)$.
    \end{itemize}
    \item $F = \{ y \}$;
    \item $v_0 = x_1$.
\end{itemize}

    We  prove the following equivalence.
    
    There exists a sequence $\valExists{} = \valExists{1}, \valExists{3}, \ldots $ such that for all sequences $\valForall{} = \valForall{2}, \valForall{4}, \ldots$, \\$\sum_{p \in S(\valExists{},\valForall{})} a_p = T$
    if and only if
    there exists a finite-memory strategy $\sigma_1$ of $\playerOne$ from $x_1$ such that for all strategies $\sigma_2$ of $\playerTwo$ from $x_1$, \\
    $\dCost(\outcome{\sigma_1}{\sigma_2}{x_1}) \leqC (T, \sum_{1 \leq p \leq n} a_p - T)$.\\

\textbf{Remark} Before proving this equivalence, let us notice that:
\begin{enumerate}
    \item \label{item-proofPSPACE1}For each sequence $\valExists{} = \valExists{1}, \valExists{3}, \ldots$ (resp. each sequence $\valForall{}= \valForall{2}, \valForall{4}, \ldots$),  there exists a corresponding finite-memory strategy $\sigma_1$ of $\playerOne$ (resp. $\sigma_2$ of $\playerTwo$) in $(\game_2,x_1)$;
    \item\label{item-proofPSPACE2} For each finite-memory strategy $\sigma_1$ of $\playerOne$ (resp. each strategy $\sigma_2$ of $\playerTwo$), there exists a corresponding sequence $\valExists{}= \valExists{1}, \valExists{3}, \ldots$ (resp. $\valForall{} = \valForall{2}, \valForall{4},\ldots$). 
    
\end{enumerate}

Construction of strategies of Statement~\ref{item-proofPSPACE1}. Let $\valExists{} = \valExists{1}, \valExists{3}, \ldots$ and $\valForall{} = \valForall{2}, \valForall{4}, \ldots$. We define $\sigma_1$: for all $1 \leq \ell \leq n$ such that $\ell$ is odd, $\sigma_1(x_1) = x^i_1$ if $\valExists{1}(\emptyset) = i$ and $\sigma_1(x_1v_1x_2v_2\ldots x_\ell) = x^i_\ell$ if $\valExists{\ell}(\overline{v}_1\overline{v}_2\ldots \overline{v}_{\ell-1}) = i$ with, for all $ 1 \leq p \leq \ell$, $v_p \in \{ x^1_p, x^0_p \}$ and $\overline{v}_p = 1$ if $v_p = x^1_p$ and $\overline{v}_p = 0$ otherwise.
The strategy $\sigma_2$ is defined exactly in the same way for all $1\leq \ell \leq n$ such that $\ell$ is even, except that $\valExists{\ell}$ is replaced by $\valForall{\ell}$. 

Construction of strategies of Statement~\ref{item-proofPSPACE2}. Let $\sigma_1$ be a finite-memory strategy of $\playerOne$ and $\sigma_2$ be a strategy of $\playerTwo$. We build $\valExists{}$ as follows:
$\valExists{1}(\emptyset) = i$ if $\sigma_1(x_1)= x^i_1$ and, for all $1 \leq \ell \leq n$ such that $\ell$ is odd, $\valExists{\ell}(\overline{v}_1\ldots\overline{v}_{\ell-1}) = i$ if $ \sigma_1(x_1v_1x_2v_2\ldots x_{\ell}) = x^i_\ell$ with for all $1 \leq p \leq \ell-1$, $\overline{v}_p \in \{0,1\}$ and $v_p = x^1_p$ if $\overline{v}_p = 1$ and $v_p = x^0_p$ otherwise.
The $\valForall{\ell}$ are defined exactly in the same way for all $1 \leq \ell \leq n$ such that $\ell$ is even and by replacing $\sigma_1$ by $\sigma_2$.\\

We come back to the proof of the equivalence.

\noindent$(\Rightarrow)$  We assume that there exists a sequence $\valExists{} = \valExists{1}, \valExists{3}, \ldots$ such that for all sequences $\valForall{} = \valForall{2}, \valForall{4}, \ldots $, $\sum_{p \in S(\valExists{},\valForall{})} a_p = T$. 

We consider $\sigma_1$ as defined previously (Remark, Statement~\ref{item-proofPSPACE1}). We have to prove that for all strategies $\sigma_2$ of $\playerTwo$ : $\dCost(\outcome{\sigma_1}{\sigma_2}{x_1}) \leqC (T, \sum_{1\leq p \leq n} a_p - T)$.

Let $\sigma_2$ be a strategy of $\playerTwo$. As explained in Remark, Statement~\ref{item-proofPSPACE2}, we have that $\sigma_2$ corresponds to some sequence $\valForall{} = \valForall{2}, \valForall{4}, \ldots$.
Thus, by construction of the game arena and by hypothesis we have:

\begin{itemize}
\item $\cost_1(\outcome{\sigma_1}{\sigma_2}{x_1}) = \sum_{p \in S(\valExists{},\valForall{}) } a_p = T$ and 
\item $\cost_2(\outcome{\sigma_1}{\sigma_2}{x_1}) = \sum_{p \not \in S(\valExists{},\valForall{})} a_p =  \sum_{1\leq p \leq n} a_p - \sum_{p \in S(\valExists{},\valForall{})} a_p = \sum_{1\leq p \leq n} a_p - T$.
\end{itemize}

\noindent$(\Leftarrow)$ Let us assume that there exists a finite-memory strategy $\sigma_1$ of $\playerOne$ such that for all strategies $\sigma_2$ of $\playerTwo$, $\dCost(\outcome{\sigma_1}{\sigma_2}{x_1}) \leqC (T, \sum_{1 \leq p \leq n} a_p - T)$.

We define the sequence $\valExists{} = \valExists{1}, \valExists{3}, \ldots$ as explained in Remark, Statement~\ref{item-proofPSPACE2}. Let $\valForall{} = \valForall{2}, \valForall{4}, \ldots$, we have to prove that $ \sum_{p \in S(\valExists{},\valForall{})} a_p = T$. 

By Remark, Statement~\ref{item-proofPSPACE1}, the sequence $\valForall{}$ corresponds to a strategy $\sigma_2$ of $\playerTwo$ in $(\game_2,x_1)$. It follows by hypothesis and construction of the game arena:
\begin{itemize}
    \item $
 \cost_1(\outcome{\sigma_1}{\sigma_2}{x_1})=  \sum_{p \in S(\valExists{},\valForall{})} a_p$ \item $\cost_1(\outcome{\sigma_1}{\sigma_2}{x_1}) \leq T$
 \end{itemize}

 and 

\begin{itemize}
\item $\cost_2(\outcome{\sigma_1}{\sigma_2}{x_1}) = \sum_{p \not \in S(\valExists{},\valForall{})} a_p = \sum_{1 \leq p \leq n} a_p - \sum_{p \in S(\valExists{},\valForall{})} a_p$ \item $\cost_2(\outcome{\sigma_1}{\sigma_2}{x_1}) \leq \sum_{1 \leq p \leq n} a_p -T.$
\end{itemize}   

Thus, we can conclude that $\sum_{p \in S(\valExists{},\valForall{})} a_p = T$.
\end{proof}


\section{Additional Contents of Section~\ref{section:permissiveMultiStrat}: \nameref{section:permissiveMultiStrat}}
\label{appendix:permissiveMultiStrat}

In this section we provide the formal definition of the  associated extended game of a quantitative reachability game as introduced in Section~\ref{section:fromMultiToPermi}.

\begin{definition}[Extended game] Let $\game = (\arena, \targetSet, \cost)$ be a quantitative reachability game such that $\arena = (V_1,V_2,E,\edgeCost)$, its associated extended game $\extendedGame_\kappa = (\arena^X_\kappa, \targetSet^X, \dCost^X)$, where $\arena^X_\kappa = (V^X_1,V^X_2,E^X, \dWeight^X)$, is a multi-weighted reachability game defined as follows:

\begin{itemize}
    \item for each $v \in V_1$ there exists a corresponding vertex $v^X \in V^X_1$ and there is no other kind of vertex in $V^X_1$;
    \item the set  of vertices $V^X_2$ comprises two types of vertices:  \emph{(i)} for each vertex $v \in V_2$, there exists a corresponding vertex $v^X$ in $V^X_2$ and \emph{(ii)} for each vertex $v \in V_1$ and each set $A \subseteq \successor(v) \setminus \{ \emptyset \}$, there exists an associated vertex $v^X_A$ in $V^X_2$; 
    \item the set of edges $E^X$ is made up of the following edges: 
    \begin{itemize}
        \item for each $v \in V_2$ and $v' \in V_1 \cup V_2$, $(v,v') \in E$ if and only if $(v^X,v'^X) ±\in E^X$ and $\dWeight^X(v^X,v'^X) = \begin{cases} (c,0) & \text{ if } \kappa = \CP \\ (0, c) & \text{ if } \kappa = \PC \end{cases}$, where $c = \edgeCost(v,v')$ ; 
        \item
    for each $v \in V_1$ and $A \subseteq \successor(v) \setminus \{ \emptyset \} $, let $v^X$ be the corresponding vertex of $v$ in $V^X_1$ and $v^X_A $ be the associated vertex of $v$ and $A$ in $V^X_2$:
    \begin{itemize} \item  $(v,v^X_A) \in E^X$ and  $\dWeight^X(v^X,v^X_A) = \begin{cases} (0,p) & \text{ if } \kappa = \CP \\ (p,0) & \text{ if } \kappa = \PC \end{cases}$, \\ where $p =  \sum_{u \in \successor(v) \setminus A } \edgePenal(v,u)$;
    \item  moreover for all $u \in A$, if $u^X$ is the corresponding vertex of $u$ in $V^X$, then $(v^X_A,u^X) \in E^X$ and $\dWeight^X(v^X_A,u^X) = \begin{cases} (c,0) & \text{ if } \kappa = \CP \\ (0,c) & \text{ if } \kappa = \PC \end{cases}$, where $c = \edgeCost(v,u)$;
    \end{itemize}
    \end{itemize}
    \item for each $v \in V$, let $v^X$ be its corresponding vertex in $V^X$, we have that $v \in \targetSet$ if and only if $v^X \in \targetSet^X$.
\end{itemize}
\end{definition}

\end{document}